\newtheorem{theorem}{Theorem}[section]
\newtheorem{definition}[theorem]{Definition}
\newtheorem{assumption}[theorem]{Assumption}
\newtheorem{lemma}[theorem]{Lemma}
\newtheorem{prop}[theorem]{Proposition}
\newtheorem{cor}[theorem]{Corollary}
\newtheorem*{que}{Question}
\theoremstyle{definition}
\theoremstyle{remark}
\newtheorem{remark}[theorem]{Remark}
\numberwithin{equation}{section}
\newcommand{\Hquad}{\hspace{0.5em}}
\newcommand{\mf}[1]{\mathfrak{#1}}                                            
\newcommand{\mi}[1]{\mathscr{#1}}                                             
\newcommand{\mc}[1]{\mathcal{#1}}                                             
\newcommand{\ms}[1]{\mathsf{#1}}                                              
\newcommand{\R}{\mathbb{R}}                                                   
\newcommand{\Sph}{\mathbb{S}}                                                 
\DeclareMathOperator{\tr}{tr}
\let\div\undefined
\DeclareMathOperator{\div}{div}
\newcommand{\Dm}{\breve{\ms{D}}}
\newcommand{\nablam}{\breve{\nabla}}
\newcommand{\Boxm}{\breve{\Box}_g}
\newcommand{\Rm}{\breve{R}}
\newcommand{\hm}{\breve{\mf{h}}}
\begin{document}

\title{A gauge-invariant unique continuation criterion for waves in asymptotically anti-de Sitter spacetimes}

\author{Athanasios Chatzikaleas}

\address{Westf{\"a}lische Wilhelms-Universit{\"a}t M{\"u}nster, Mathematical Institute, Einsteinstrasse 62, 48149 Munster, Germany} 
\email{achatzik@uni-muenster.de}

\author{Arick Shao}
\address{School of Mathematical Sciences\\
Queen Mary University of London\\
London E1 4NS\\
United Kingdom}
\email{a.shao@qmul.ac.uk}

\begin{abstract}
We reconsider the unique continuation property for a general class of tensorial Klein-Gordon equations of the form 
\begin{align*}
\Box_{g} \phi + \sigma \phi = \mathcal{G}(\phi,\nabla \phi) \text{,} \qquad \sigma \in \mathbb{R}
\end{align*}
on a large class of asymptotically anti-de-Sitter spacetimes.
In particular, we aim to generalize the previous results of Holzegel, McGill, and the second author \cite{Arick1, Arick2, Arick3} (which established the above-mentioned unique continuation property through novel Carleman estimates near the conformal boundary) in the following ways:
\begin{enumerate}
\item We replace the so-called \emph{null convexity criterion}---the key geometric assumption on the conformal boundary needed in \cite{Arick3} to establish the unique continuation properties---by a more general criterion that is also \emph{gauge invariant}.

\item Our new unique continuation property can be applied from a larger, more general class of domains on the conformal boundary.

\item Similar to \cite{Arick3}, we connect the failure of our generalized criterion to the existence of certain null geodesics near the conformal boundary.
These geodesics are closely related to the classical Alinhac-Baouendi counterexamples to unique continuation \cite{MR1363855}.
\end{enumerate}
Finally, our gauge-invariant criterion and Carleman estimate will constitute a key ingredient in proving unique continuation results for the full nonlinear Einstein-vacuum equations, which will be addressed in a forthcoming paper of Holzegel and the second author \cite{HolzegelShaoEinsteinPreparation}.
\end{abstract}

\maketitle
\tableofcontents
\noindent

\section{Introduction} \label{sec.intro}

The Anti-de Sitter (or AdS) spacetime $\left( \mc{M}_{\text{AdS}}, g_{\text{AdS}} \right)$ is the maximally symmetric solution to the Einstein-vacuum equations
\begin{align}\label{EinsteinVacuum}
	R_{ \alpha \beta} -\frac{1}{2} g _{\alpha \beta} R + \Lambda g_{\alpha \beta} = 0
\end{align}
with negative cosmological constant $\Lambda<0$.
With respect to polar coordinates,
\begin{align*}
(r, t, \omega) \in \mc{M}_{\text{AdS}} := \mathbb{R} \times [0,\infty) \times \mathbb{S}^{n-1} \text{,}
\end{align*}
this solution reads
\begin{align*}
	g_{\text{AdS}} (r, t, \omega) = \left( 1+r^2 \right)^{-1} d r^2 - \left( 1+r^2 \right) d t^2 + r^2 d \omega^2 \text{,}
\end{align*}
where $d\omega^2$ is the standard round metric on $\mathbb{S}^{n-1}$.

For our purposes, it is more convenient to introduce the coordinate $\rho$ given by
\begin{align*}
4 r := \rho^{-1} ( 2 + \rho ) ( 2 - \rho ) \text{.}
\end{align*}
With respect to the parameters $( t, \rho, \omega )$, AdS spacetime is described as
\begin{align}
\label{AdS} (t, \rho, \omega) \in \mc{M}_{\text{AdS}} &\approx \mathbb{R} \times \left( 0, 2 \right] \times \mathbb{S}^{n-1} \text{,} \\
\notag g_{ \text{AdS} } ( \rho, t, \omega ) &= \rho^{-2} \left[ d \rho^2 + ( - dt^2 + d \omega^2 ) - \frac{1}{2} \rho^2 ( dt^2 + d \omega^2 ) + \frac{1}{16} \rho^4 ( - dt^2 + d \omega^2 ) \right] \text{,}
\end{align}
which we refer to as its \emph{Fefferman-Graham form}.
Moreover, ignoring the factor $\rho^{-2}$ in \eqref{AdS}, we can then formally attach a boundary $( \mc{I}_{\text{AdS}}, \mathfrak{g}_{\text{AdS}} )$ to AdS spacetime:
\begin{align*}
\mc{I}_{\text{AdS}} := \{ \rho = 0 \} \approx \R \times \Sph^{n-1} \text{,} \qquad \mf{g}_{\text{AdS}} := - dt^2 + d \omega^2 \text{.}
\end{align*}
We refer to $( \mc{I}_{\text{AdS}}, \mf{g}_{\text{AdS}} )$ as the \emph{conformal boundary} of AdS spacetime.

We use the term \emph{asymptotically AdS} (aAdS) to refer to a class of spacetimes $( \mc{M}, g )$ that ``have a similar structure with a conformal boundary''.
More specifically, such spacetimes can, at least near its conformal boundary, be written in the form
\begin{align}\label{FGintro}
\mc{M} := ( 0, \rho_0 ) \times \mc{I} \text{,} \qquad g (\rho, x) := \rho^{-2} \left[ d\rho^2 + \mathfrak{g} (x) + \rho^2 \bar{\mathfrak{g}} (x) + \mc{O}(\rho^3) \right] \text{,}
\end{align}
for some Lorentzian manifold $( \mc{I}, \mathfrak{g} )$ and symmetric $(0,2)$-tensor $\bar{\mathfrak{g}}$.
Again, we call the specific form \eqref{FGintro} of the metric the \textit{Fefferman-Graham} gauge, and we refer to $( \mc{I}, \mf{g} )$ as the associated \emph{conformal boundary}.
In particular, observe that the conformal boundary is allowed to have arbitrary boundary topology and geometry.

AdS spacetime \eqref{AdS} is one example of an aAdS spacetime.
The AdS-Schwarzschild and AdS-Kerr families are also aAdS spacetimes, all with conformal boundary $( \mc{I}_{\text{AdS}}, \mathfrak{g}_{\text{AdS}} )$.
\footnote{See \cite{2020arXiv200807396S} for the AdS-Schwarzschild metrics expressed in Fefferman-Graham gauge.}

Aside from the fact that aAdS solutions of \eqref{EinsteinVacuum} are of interest in the context of general relativity, they have also received considerable attention due to the celebrated AdS/CFT correspondence \cite{0501128} brought to light by Maldacena \cite{MR1730135, MR1705508, MR1633016}.
Such a duality relates string theories within a universe with a negative cosmological constant (AdS) to conformally invariant quantum field theories (CFT) on the conformal boundary of the spacetime and has important applications \cite{MR1633012, MR2551709, 09090518, 12100890, 0501128}. 

\subsection{Unique continuation and previous results}

As the conformal boundary $\mc{I}$ is timelike, it fails to be a Cauchy hypersurface for the associated aAdS spacetime.
Consequently, the problem---inspired from AdS/CFT---of solving the Klein-Gordon equations
\begin{align}\label{KleinGordon}
	\Box_{g} \phi +\sigma \phi = \mc{G}(\phi,\nabla \phi) \text{,} \qquad \sigma \in \mathbb{R}
\end{align}
on a fixed aAdS background given Cauchy data (i.e., appropriately defined Dirichlet \emph{and} Neumann traces) on $\mc{I}$ may not be well-posed.

As a result, we consider instead a different but related mathematical problem:

\begin{que}
  Are solutions (whenever they exist) of the Klein-Gordon equation \eqref{KleinGordon} uniquely determined by the (Dirichlet and Neumann) boundary data on $\mc{I}$?
\end{que}

In terms of physics, the above can be interpreted as follows:

\begin{que}
	Is there a ``one-to-one correspondence'' between solutions of \eqref{KleinGordon} in the interior and an appropriate space of boundary data on the conformal boundary $\mc{I}$?
\end{que}

The standard unique continuation results for wave equations follow from the now-classical results of H\"ormander for general linear operators---see \cite[Chapter 28]{hor:lpdo4} and \cite{ler_robb:unique}.
There, the crucial criterion required is that the hypersurface from which one uniquely continues must be \emph{pseudoconvex}.
Unfortunately, these classical results fail to apply to aAdS settings, as the conformal boundary $\mc{I}$ (barely) fails to be pseudoconvex.
Thus, a more refined understanding of the near-boundary geometry is needed to deduce unique continuation properties.

\begin{remark}
For instance, one consequence of these new difficulties is that while the classical results can be localized to a sufficiently small neighborhood of a point, one must to prescribe data on a sufficiently large part of $\mc{I}$ in order to uniquely continue the solution. 
See the introduction of \cite{Arick1} for more extensive references and discussions of these issues.
\end{remark}

The first positive answer to the above questions was provided by Holzegel and the second author in \cite{Arick1}, but only in the case of a static conformal boundary.
This was later extended by the same authors, in \cite{Arick2}, to non-static boundaries.
In both \cite{Arick1, Arick2}, the key results were proved by establishing degenerate Carleman estimates near the conformal boundary.

The most current results in this direction, which further improved upon \cite{Arick1, Arick2}, are given by McGill and the second author in \cite{Arick3}.
Its main result can be roughly stated as follows:
 
\begin{theorem}[\cite{Arick3}] \label{PreThm2}
	Let $(\mc{M},g)$ be an aAdS spacetime expressed in the Fefferman-Graham gauge \eqref{FGintro}.
  Furthermore, assume the following:
	\begin{enumerate}
    \item $t$ is a time function on $\mc{I}$.

    \item There exists a constant $p>0$ such that $\mc{G}$, from \eqref{KleinGordon}, satisfies
		\begin{align*}
			\left| \mc{G} ( \psi, \nabla \psi) \right|^2 \lesssim  \rho^{3p} \left| \psi \right|^2 + \rho^{4+p} \left| \nabla \psi \right|^2 \text{.}
		\end{align*}

    \item $(\mc{M}, g)$ satisfies the following \emph{null convexity criterion}:
		\begin{enumerate}
			\item[(3a)] There exists some constant $A > 0$ such that the following lower bound holds for all tangent vectors $\mathfrak{X} \in T \mc{I}$ such that $\mathfrak{g} (\mathfrak{X}, \mathfrak{X}) = 0$:
		\begin{align}\label{PseudoConvexityCriterionIntro3}
			- \bar{\mathfrak{g}} ( \mathfrak{X}, \mathfrak{X} )  \geq A  \left( \mathfrak{X}t \right)^2. 
		\end{align}
  		\item[(3b)] There exists a constant $B \geq 0$ such that the following upper bound holds for all tangent vectors $\mathfrak{X} \in T \mc{I}$ with $\mathfrak{g}  ( \mathfrak{X}, \mathfrak{X}) = 0$:
		\begin{align}\label{PseudoConvexityCriterionIntro4}
			\left| \mathfrak{D}^2 t (\mathfrak{X}, \mathfrak{X}) \right| \leq B \left( \mathfrak{X} t \right)^2 \text{,}
		\end{align}	
		where $\mathfrak{D}^2 t$ denotes the Hessian of $t$ with respect to $\mathfrak{g}$.
		\end{enumerate} 
	\end{enumerate} 
  Then, if $\phi$ is any scalar or tensorial solution to the Klein--Gordon equation \eqref{KleinGordon} such that
  \begin{itemize}
  \item $\phi$ is compactly supported on each level set of $(\rho, t)$, and

  \item $\rho^{-\kappa} \phi \rightarrow 0$ in $C^1$ as $\rho \rightarrow 0$ over a sufficiently large timespan $\{t_0 \leq t \leq t_{1}\}$; where $\kappa$ depends on $g$, $t$, the rank of $\phi$, and $\sigma$; and where $t_1 - t_0$ depends on $A$ and $B$,
  \end{itemize}
	then $\phi = 0$ in a neighborhood in $\mc{M}$ of the boundary region $\{ t_0 \leq t \leq t_{1} \} \subseteq \mc{I}$.
\end{theorem}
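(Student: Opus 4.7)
The plan is to follow the by-now-standard route for unique continuation from a degenerate timelike boundary: establish a weighted Carleman estimate on a thin slab adjacent to $\mc{I}$, and then apply it to $\phi$. First, I would set up the near-boundary geometry in the Fefferman-Graham gauge \eqref{FGintro}, expanding $\Box_g$ in $(\rho, x)$-coordinates to isolate the $\rho$-degeneracy and to read off the appearances of $\mf{g}$ and $\bar{\mf{g}}$ in the sub-principal symbol. A conformal-type rescaling $\tilde\phi := \rho^{-\kappa}\phi$, with $\kappa$ determined by $\sigma$, the spacetime dimension, and the tensor rank, replaces \eqref{KleinGordon} by a wave equation for $\tilde\phi$ whose boundary limit is manifestly non-degenerate and for which the prescribed $C^1$ decay becomes a clean vanishing trace.

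The analytic heart is a Carleman estimate of the schematic form
\begin{align*}
\lambda \int e^{-2\lambda f}\, \rho^{a} \bigl( |\nabla\tilde\phi|_g^2 + \lambda^2 |\tilde\phi|^2 \bigr)\, \mr{dvol}_g \le C \int e^{-2\lambda f}\, \rho^{b}\, |(\Box_g + \sigma)\phi|^2 \, \mr{dvol}_g + \text{boundary terms,}
\end{align*}
valid on $\{0 < \rho < \rho_\star\} \cap \{t_0 \le t \le t_1\}$ for all $\lambda$ large, with a Carleman weight of the form $f = -\log\rho + \eta\, h(t)$, where $h$ is strictly convex and $\eta$ is a small auxiliary parameter. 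After conjugating $\Box_g + \sigma$ by $e^{-\lambda f}$, splitting into symmetric and antisymmetric parts, and integrating by parts, the principal commutator produces a quadratic form in $\nabla\tilde\phi$ whose positive-definiteness as $\rho \to 0$ is the only genuinely nontrivial requirement. This is precisely where the null convexity criterion intervenes: the boundary limit of the form reduces, for null $\mf{X} \in T\mc{I}$, to a multiple of $-\bar{\mf{g}}(\mf{X},\mf{X}) - C(\eta, h)\, \mf{D}^2 t (\mf{X}, \mf{X})$, so \eqref{PseudoConvexityCriterionIntro3} supplies the positive principal contribution and \eqref{PseudoConvexityCriterionIntro4} controls the error introduced by the convexifier $h$. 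Net positivity is then secured provided $t_1 - t_0$ is large enough in a manner dictated by $A$ and $B$, which is the origin of the quantitative timespan condition in the theorem.

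Once the Carleman estimate is in hand, the conclusion follows by a routine argument: plug $\phi$ in with a $\rho$-cutoff supported in $(0, \rho_\star)$; use the compact spatial support of $\phi$ to annihilate lateral boundary terms; use the hypothesis $\rho^{-\kappa}\phi \to 0$ in $C^1$ to annihilate the trace at $\rho = 0$; and absorb the nonlinearity via the assumed bound $|\mc{G}|^2 \lesssim \rho^{3p}|\phi|^2 + \rho^{4+p}|\nabla\phi|^2$, whose $\rho^{p}$-gain is calibrated precisely to beat the Carleman weight when $\lambda$ is large. Sending $\lambda \to \infty$ then forces $\tilde\phi$, and hence $\phi$, to vanish where the cutoff equals one, which yields vanishing in a full neighborhood of $\{t_0 \le t \le t_1\} \subseteq \mc{I}$. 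The hard part is the construction of the weight $f$ and the accompanying $\rho$-power bookkeeping: one must simultaneously obtain pseudoconvexity at a boundary where the classical H\"ormander framework fails, extract positivity from the purely tangential criterion \eqref{PseudoConvexityCriterionIntro3}--\eqref{PseudoConvexityCriterionIntro4}, and keep all weighted boundary integrals at $\rho = 0$ convergent under only the stated $C^1$ decay, all while tracking the dependence of $\kappa$ and $t_1 - t_0$ on $A$, $B$, $\sigma$, and the rank of $\phi$.
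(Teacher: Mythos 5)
Your overall skeleton---conjugate by $\rho^{-\kappa}$ with $\kappa$ dictated by $\sigma$ and rank, prove a degenerate Carleman estimate in a region adjacent to $\mc{I}$, absorb $\mc{G}$ via the $\rho^p$-gain, and send the Carleman parameter to infinity---is the right one and matches the paper. However, the crucial object, the weight function $f$, is constructed differently and in a way that loses the essential geometry. The paper (and \cite{Arick3}) takes $f = \rho/\eta_*(t)$, where $\eta_*$ is a specific solution of the damped harmonic oscillator $\ddot{\varphi} \pm 2B\dot{\varphi} + A\varphi = 0$ built from the null convexity constants $A$, $B$ (see Proposition \ref{NCC_GNCC}). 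The point is that $\eta_*$ vanishes at $t = t_0$ and $t = t_1$, so that the level sets $\{f = c\}$ are \emph{lens-shaped}: they lie in the bulk over $(t_0, t_1)$ but terminate on the conformal boundary at the endpoints. The region $\mc{D}(f_\star) = \{f < f_\star\}$ is therefore a lens pinched to $\mc{I}$ at $t_0, t_1$, and the classical Carleman cutoff argument then yields vanishing in a full neighborhood of $\{t_0 \le t \le t_1\} \subseteq \mc{I}$ precisely because the cutoff error is confined to the lens's curved face, away from the boundary strip. The length $t_1 - t_0$ enters as the distance between consecutive zeros of $\eta_*$, which is where the quantitative dependence on $A$ and $B$ comes from.

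Your weight $f = -\log\rho + \eta\,h(t)$ (with $\eta$ a small \emph{parameter} and $h$ a generic convexifier) does not have this property. Its level sets $\{\rho = e^{\eta h(t) - c}\}$ are graphs bounded away from $\mc{I}$ for every $c$; they never terminate at the conformal boundary, so $\{f < c\}$ is a \emph{slab}, not a lens. This has two concrete consequences: (i) there is no mechanism that forces the timespan to be large---the pseudoconvexity positivity would have to hold on an unbounded $t$-range, which the null convexity criterion does not guarantee; and (ii) the conclusion would live on a slab rather than localizing to a neighborhood of the finite region $\{t_0 \le t \le t_1\}$ in $\mc{I}$. There is also a sign/orientation issue: $-\log\rho \to +\infty$ as $\rho \to 0$, so with the weight $e^{-2\lambda f}$ as you wrote it, the small-$f$ region (where the Carleman argument concludes vanishing) is the one \emph{away} from the boundary. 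Beyond this, you describe the pseudoconvexity computation as yielding ``a multiple of $-\bar{\mf{g}}(\mf{X},\mf{X}) - C(\eta,h)\,\mf{D}^2 t(\mf{X},\mf{X})$'', but the correct leading-order quantity is $\mf{D}^2\eta_* - \eta_*\,\bar{\mf{g}}$ (cf.\ Lemma \ref{Lemmapi}), which involves the \emph{second} derivative of the weight function in $t$; this is exactly what produces the ODE whose interval of positivity is $t_1 - t_0$. In short, the strategy is the right one in outline, but the centerpiece---the construction of a weight whose level sets terminate at $\mc{I}$ at the endpoints of the timespan, governed by the ODE solution $\eta_*$---is missing, and a generic $-\log\rho + (\text{small})\cdot(\text{convex})$ perturbation will not furnish it.
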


\begin{remark}
 	The assumption of sufficiently large timespan $t_1 - t_0$ is necessary in general.
  In particular, if $t_1 - t_0$ is small, and if $\bar{\mf{g}}$ satisfies some additional assumptions, \footnote{See the discussions in \cite{Arick3}.} then one can find a sequence of $g$-null geodesics in $\mc{M}$ such that
  \begin{itemize}
  \item the geodesics become arbitrarily close to the conformal boundary, and

  \item the geodesics ``fly over the boundary region $\{ t_0 \leq t \leq t_1 \}$ without  touching it''---that is, part of the geodesic projects onto this region, but the geodesic does not itself terminate at the conformal boundary in this region.
  \end{itemize}
One can then, at least in principle, apply the classical results of Alinhac-Baouendi \cite{MR1363855} to construct geometric optics counterexamples to unique continuation, using functions that are supported near these null geodesics but away from $\{ t_0 \leq t \leq t_1 \} \subseteq \mc{I}$; this will be rigorously carried out in the upcoming work of Guisset \cite{Simon}.\footnote{We do wish to emphasize that constructing these counterexamples is by no means immediate.
One issue is that the results of \cite{MR1363855} only directly applies when $\sigma$ has the value of the conformal mass.
For other values of $\sigma$, the conformal renormalization of \eqref{KleinGordon} has a singular potential at the boundary.}
This mechanism prevents one from having a general unique continuation result from sufficiently small domains in $\mc{I}$.
\end{remark}

\begin{remark}
  In \cite{Arick3}, the authors also provided a systematic study of such null geodesics near the conformal boundary.
  In particular, they showed that if the null convexity condition holds in the setting of Theorem \ref{PreThm2}, then the null geodesics described in the preceding remark cannot ``fly over'' a timespan of greater than $t_1 - t_0$.
  Thus, one can view the null convexity criterion as the ingredient needed to eliminate the Alinhac-Baoundi counterexamples.
\end{remark}

\begin{remark}
AdS spacetime satisfies the null convexity criterion, and, in this case, one requires $t_1 - t_0 > \pi$ for unique continuation.
In particular, one can find a family of null geodesics emanating from the AdS conformal boundary $\mc{I}_{\text{AdS}}$ which remain arbitrarily close to $\mc{I}_{\text{AdS}}$ and which return to $\mc{I}_{\text{AdS}}$ exactly after time $\pi$.
\end{remark}

\begin{remark}
On the other hand, this large timespan requirement is likely not needed whenever $\mc{G}$ is linear and both $g$ and $\mc{G}$ are analytic (due to Holmgren's theorem \cite{MR2108588}) or even time-analytic (due to the unique continuation results of Tataru; see \cite{hor:uc_interp, tat:uc_hh, tat:uc_hh_2}).
 \end{remark}

\begin{remark}
	Theorem \ref{PreThm2} improves upon the main results of \cite{Arick1, Arick2} in the following regards:
  \begin{itemize}
  \item Theorem \ref{PreThm2} allows for general time functions, whereas \cite{Arick1, Arick2} assumed a time foliation for which the unit normals must be geodesic.

  \item The null convexity criterion \eqref{PseudoConvexityCriterionIntro3}--\eqref{PseudoConvexityCriterionIntro4} is slightly weaker than the corresponding \emph{pseudoconvexity criterion} from \cite{Arick2}.
  In particular, the null convexity criterion pertains only to $\mf{g}$-null directions, while the pseudoconvexity criterion in \cite{Arick2} requires a corresponding bound in all directions along $\mc{I}$.
  \end{itemize}
  See the discussions in \cite{Arick3} for further comparisons among \cite{Arick1, Arick2, Arick3}.
\end{remark}

Finally, we mention the articles \cite{alex_schl_shao:uc_inf, alex_shao:nc_inf, alex_shao:uc_global, enc_shao_verga:obs_swave, peters:cpt_cauchy}, which apply Lorentzian-geometric methods to prove degenerate Carleman estimates and unique continuation results for waves in other geometric settings of interest.
Furthermore, various examples of applications of unique continuation results in relativity can be found in \cite{alex_io_kl:hawking_anal, alex_io_kl:unique_bh, alex_io_kl:rigid_bh, alex_schl:time_periodic, io_kl:unique_bh, peters:cpt_cauchy}.

\subsection{Gauge transformations and invariance}

Another important idea in the description of aAdS spacetimes and their conformal boundaries is that of \emph{gauge invariance}.
Roughly speaking, the term \emph{gauge transformation} refers to a change of parameters on $\mc{M}$,
\begin{align} \label{IntroGauge}
( \rho, x ) \longrightarrow ( \sigma, y ) \text{,} \qquad \rho, \sigma > 0 \text{,} \quad x, y \in \mc{I} \text{,}
\end{align}
that preserves the Fefferman-Graham gauge \eqref{FGintro}, that is, $g$ is also expressed as
\begin{align} \label{FGintro2}
	 g ( \sigma, y ) := \sigma^{-2} \left[ d \sigma^2 + \mathfrak{p} (y) + \sigma^2 \bar{\mathfrak{p}} (y) + \mc{O}(\sigma^3) \right] \text{.}
\end{align}

On one hand, such a gauge transformation clearly leaves the spacetime $( \mc{M}, g )$ unchanged, so that \eqref{FGintro} and \eqref{FGintro2} describe the same physical object.
However, the conformal boundary is \emph{not} left invariant by gauge transformations.
In particular, the boundary metrics $\mf{g}$ and $\mf{p}$ fail to coincide; in fact, it is well-known that they differ via a conformal factor.
Similarly, the subsequent coefficients $\bar{\mf{g}}$ and $\bar{\mf{p}}$ are also related through a more complicated formula (namely, the conformal transformation law for the Schouten tensor).
\footnote{Gauge tranformation laws can also be derived for higher-order coefficients in the Fefferman-Graham expansion not listed in \eqref{FGintro}; see, for instance, \cite{deharo_sken_solod:holog_adscft, imbim_schwim_theis_yanki:diffeo_holog} for details.}
Precise descriptions of gauge transformations and their effects on the conformal boundary are given in Section \ref{sec.gncc}.

Now, \emph{one key criticism of Theorem \ref{PreThm2} is that its crucial geometric assumption, the null convexity criterion, fails to be gauge-invariant}.
More specifically, it is possible that \eqref{PseudoConvexityCriterionIntro3}--\eqref{PseudoConvexityCriterionIntro4} hold in one gauge, but not after applying a gauge transformation \eqref{IntroGauge}.
(A similar shortcoming holds for the corresponding geometric assumptions for \cite{Arick1, Arick2}.)
Thus, according to Theorem \ref{PreThm2}, whether the unique continuation property holds depends not only on the properties of the physical system, but also on the gauge with which we choose to view it.
This is clearly an undesirable feature of the existing unique continuation results.

Therefore, a key objective of the present article is to further generalize Theorem \ref{PreThm2} by \emph{replacing the null convexity criterion by a gauge-invariant condition}.
This will be important for upcoming applications to AdS/CFT, where one again wishes to state any assumptions on the conformal boundary in a gauge-invariant---and hence physically relevant---manner.

\subsection{Statement of main results}

We now state the main results of this paper.
The first is an analogue of Theorem \ref{PreThm2}, except that the null convexity criterion \eqref{PseudoConvexityCriterionIntro3}--\eqref{PseudoConvexityCriterionIntro4} is replaced by a gauge-invariant condition, which we call the \emph{generalized null convexity criterion}.
This fulfills the goal that was discussed in the preceding subsection.

\begin{theorem}[Main Result 1: Unique continuation]\label{MainTheorem1}
	Let $(\mc{M},g)$ be an aAdS spacetime expressed in the Fefferman-Graham gauge \eqref{FGintro}.
  Furthermore, assume the following:
	\begin{enumerate}
    \item There exists a constant $p>0$ such that $\mc{G}$, from \eqref{KleinGordon}, satisfies
		\begin{equation} \label{PDEcoeff}
			\left| \mc{G} ( \psi, \nabla \psi) \right|^2 \lesssim  \rho^{3p} \left| \psi \right|^2 + \rho^{4+p} \left| \nabla \psi \right|^2 \text{.}
		\end{equation}

    \item $\mc{D} \subseteq \mc{I}$ is a domain that satisfies the following \emph{generalized null convexity criterion}:\ There exists a constant $c>0$ and a smooth function $\eta: \bar{\mc{D}} \longrightarrow \mathbb{R}$ such that for all tangent vectors $\mathfrak{X} \in T \bar{\mc{D}}$ with $\mathfrak{g} (\mathfrak{X},\mathfrak{X}) = 0$,
		\begin{align}\label{PDEintro}
			\begin{dcases}
				\left[ \mathfrak{D}^2 \eta - \eta \cdot \bar{\mathfrak{g}} \right] (\mathfrak{X}, \mathfrak{X}) > c \cdot \mathfrak{h} ( \mathfrak{X}, \mathfrak{X} ) \text{ in } \mc{D} \text{,} \\
				\eta > 0 \text{ in } \mc{D} \text{,} \\
				\eta = 0 \text{ on } \partial \mc{D} \text{,}
			\end{dcases}
		\end{align}
		where $\mathfrak{D}^2 \eta $ denotes the Hessian of $\eta$ with respect to $\mathfrak{g}$, and where $\mf{h}$ denotes an arbitrary but fixed Riemannian metric on $\mc{I}$.
	\end{enumerate} 
  Then, if $\phi$ is any scalar or tensorial solution to the Klein--Gordon equation \eqref{KleinGordon} such that
  \begin{itemize}
  \item $\phi$ is compactly supported on each level set of $(\rho, t)$, and

  \item $\rho^{-\kappa} \phi \rightarrow 0$ in $C^1$ as $\rho \rightarrow 0$ over $\mc{D}$; where $\kappa$ depends on $g$, the rank of $\phi$, and $\sigma$,
  \end{itemize}
	then $\phi = 0$ in a neighborhood in $\mc{M}$ of the boundary domain $\mc{D} \subseteq \mc{I}$.
\end{theorem}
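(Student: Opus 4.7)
The approach is to extend the Carleman-estimate methodology of \cite{Arick1, Arick2, Arick3} to this setting, using the auxiliary function $\eta$ supplied by \eqref{PDEintro}---rather than a time coordinate $t$---to build the Carleman weight. The inequality in \eqref{PDEintro} should be exactly the pointwise ingredient that makes the Carleman commutator have a definite sign near the conformal boundary, while the boundary conditions $\eta > 0$ on $\mc{D}$ and $\eta = 0$ on $\partial \mc{D}$ confine the estimate to a near-boundary neighborhood of $\mc{D}$.

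Concretely, I would first extend $\eta$ smoothly from $\bar{\mc{D}}$ into a thin neighborhood of $\mc{D}$ in $\mc{M}$ (e.g.\ by declaring $\eta$ independent of $\rho$ in the Fefferman-Graham coordinates of \eqref{FGintro}), and take as Carleman weight a function of the schematic form $f = \rho / \eta$, composed with an exponential profile as in \cite{Arick3}, e.g.\ $e^{-\lambda f^p}$ for carefully chosen $\lambda \gg 1$ and small $p > 0$. The level sets $\{f = \mr{const}\}$ foliate a thin neighborhood of $\mc{D}$, degenerate onto $\mc{D}$ as $\rho \to 0$, and pinch off near $\partial \mc{D}$ where $\eta \to 0$. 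Next I would compute the pseudoconvexity quantity $-\nabla^2 f(\mf{Y}, \mf{Y})$ for a $g$-null vector $\mf{Y}$ near $\mc{I}$, expand in powers of $\rho$, and decompose $\mf{Y}$ into its $\partial_\rho$-component and its boundary-tangent part $\mf{X}$. The $g$-null condition and \eqref{FGintro} force the limiting $\mf{X}$ to be $\mf{g}$-null, and one checks that the leading-order coefficient in $\rho$ of the pseudoconvexity quantity reduces precisely to $[\mf{D}^2 \eta - \eta \bar{\mf{g}}](\mf{X}, \mf{X})$. The strict bound \eqref{PDEintro}, rendered uniform by the $c\,\mf{h}(\mf{X}, \mf{X})$ lower bound, then dominates the subleading $\mc{O}(\rho)$ corrections from \eqref{FGintro} together with the indefinite contributions from the $\partial_\rho$-direction, once $\lambda$ and $p$ are tuned appropriately.

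Assembling this pointwise identity with the correct $\rho$-weights, and treating the tensorial potential $\sigma \phi$ and the boundary integrals along $\{\rho = \varepsilon\}$ essentially as in \cite{Arick3}, should produce a Carleman estimate of the schematic form
\begin{align*}
\lambda \int_{\mc{M}} e^{-2 \lambda f^p} \bigl( \rho^a |\nabla \phi|^2 + \lambda^2 \rho^b |\phi|^2 \bigr) \, d\mr{vol}_g \lesssim \int_{\mc{M}} e^{-2 \lambda f^p} |\Box_g \phi + \sigma \phi|^2 \, d\mr{vol}_g \text{,}
\end{align*}
valid for large $\lambda$ and $\phi$ supported in a thin near-boundary piece of $\mc{D}$, with $a, b$ dictated by the rank of $\phi$ and by $\sigma$. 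The decay $\rho^{-\kappa} \phi \to 0$ in $C^1$ over $\mc{D}$ kills the boundary contributions at $\{\rho = 0\}$, while the compact spatial support assumption eliminates the ones coming from $\partial \mc{D}$. Applying the estimate to $\phi$ cut off in $f$, using \eqref{PDEcoeff} with $p > 0$ to absorb $|\mc{G}(\phi, \nabla \phi)|^2$ into the left-hand side, and sending $\lambda \to \infty$, forces $\phi \equiv 0$ on a region $\{f < f_0\}$---a genuine $\mc{M}$-neighborhood of $\mc{D}$.

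The main obstacle will be the pseudoconvexity computation and the construction of $f$: one must show that the generalized null convexity inequality---which is a condition only on $\mf{g}$-null directions tangent to $\mc{I}$---promotes to a full pseudoconvexity statement for $\Box_g$ uniformly up to $\mc{I}$, and that the combined quantity $\mf{D}^2 \eta - \eta \bar{\mf{g}}$ (in place of the paired bounds \eqref{PseudoConvexityCriterionIntro3}--\eqref{PseudoConvexityCriterionIntro4} of \cite{Arick3}) is exactly what the geometry demands at leading order in $\rho$. Gauge invariance should then emerge naturally from this setup, since under a Fefferman-Graham gauge change the boundary metric $\mf{g}$ rescales conformally and $\bar{\mf{g}}$ transforms as a Schouten-type tensor, so that the combination $\mf{D}^2 \eta - \eta \bar{\mf{g}}$ is covariant once $\eta$ is rescaled by the corresponding conformal factor; this identity must be verified but is not the technical heart of the proof.
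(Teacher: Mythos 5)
Your proposal follows the same strategy as the paper: it uses the foliating function $f = \rho/\eta$ with an exponential Carleman weight, reduces pseudoconvexity of the level sets of $f$ at leading order in $\rho$ to the GNCC inequality $[\mf{D}^2\eta - \eta\bar{\mf{g}}](\mf{X},\mf{X}) > c\,\mf{h}(\mf{X},\mf{X})$, and then runs the absorption/large-$\lambda$ argument to conclude unique continuation. The ``obstacle'' you correctly flag---promoting the null-directional GNCC to a full pseudoconvexity statement---is resolved in the paper by Proposition~\ref{DefAdmissibleDomains0} (an equivalent formulation of the GNCC in all directions via an auxiliary $\zeta\,\mf{g}$ term, borrowed from \cite{Arick3}), and the paper also carries an extra $f^{-\kappa}$ factor in the Carleman weight and phrases the positivity via the modified deformation tensor $\pi_\zeta$; these are details of execution rather than a different route.
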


Theorem \ref{MainTheorem1} states that solutions of \eqref{KleinGordon} can be uniquely continued from the boundary region $\mc{D}$ as long as $\mc{D}$ satisfies the generalized null convexity criterion.
Note this generalized null convexity criterion plays the same role as the null convexity criterion in Theorem \ref{PreThm2}.
Later, in Proposition \ref{Gaugeinvariance}, we will show that the generalized null convexity criterion is indeed gauge-invariant.
The precise version of Theorem \ref{MainTheorem1} is given later as Corollary \ref{CorollaryUniqueContinuation}.

\begin{remark}
We note that Theorem \ref{MainTheorem1} is a generalization of Theorem \ref{PreThm2}.
In particular, if the hypotheses of Theorem \ref{PreThm2} hold, then so does the hypotheses of Theorem \ref{MainTheorem1}, with $\mc{D} := \{ t_0 < t < t_1 \}$, and with $\eta$ being a specially chosen function of $t$.
For more details, in particular on the choice of $\eta$ here, see Proposition \ref{NCC_GNCC} and its proof.

Furthermore, note that Theorem \ref{PreThm2} is only applicable to boundary domains $\mc{D}$ of the form $\{ t_0 < t < t_1 \}$, while Theorem \ref{MainTheorem1} extends this to more general $\mc{D}$.
\end{remark}

\begin{remark}
In this article, while we restrict our attention to smooth quantities for convenience, however we require far less regularity for Theorem \ref{MainTheorem1} to hold.
For instance, we only require our metric $g$ to be $C^3$ in the proof of our Carleman estimates.
Furthermore, we only require that any lower-order coefficients decay toward $\mc{I}$ at the rates encoded in \eqref{PDEcoeff}.
\end{remark}

\begin{remark}
In general, the vanishing exponent $\kappa$ in Theorem \ref{MainTheorem1} must be sufficiently large to absorb any critical lower-order terms.
However, when $\psi$ is scalar, or when the conformal boundary is static, one can take $\kappa$ to be the optimal exponent from applying a separation of variables ansatz to $\psi$ \cite{breit_freedm:stability_sgrav}.
See Remark \ref{rmk.kappa} for further details and precise values.
\footnote{In practice, a large $\kappa$ does not cause any issues, since given sufficient regularity, one can derive that $\psi$ vanishes to arbitrarily higher order as long as $\psi$ satisfies the optimal vanishing rate in Remark \ref{rmk.kappa}.}
\end{remark}

Next, similar to \cite{Arick3}, we also establish a theorem connecting the generalized null convexity condition in Theorem \ref{MainTheorem1} to the behavior of null geodesics near the conformal boundary.

\begin{theorem}[Main result 2: Characterization of null geodesics]\label{MainTheorem2}
  Let the aAdS spacetime $( \mc{M}, g )$ be as in Theorem \ref{MainTheorem1}, and let $\mc{D} \subseteq \mc{I}$ satisfy the generalized null convexity criterion \eqref{PDEintro}.
  Let $\Lambda: ( s_-, s_+ ) \rightarrow \mc{M}$ be a complete $g$-null geodesic, written as
  \begin{align*}
  \Lambda(s) = \left( \rho(s), \lambda(s) \right) \in ( 0, \rho_0 ) \times \mc{I} \text{.}
  \end{align*}
  Then, there exists $\epsilon_0 > 0$ sufficiently small (depending on $g$ and $\mc{D}$) such that if
	\begin{align} \label{IntroNullGeodesic}
		0 < \rho (s_0) < \epsilon_0 \text{,} \qquad | \dot{\rho}(s_0) | \lesssim \rho (s_0) \text{,} \qquad \lambda(s_0) \in \mc{D} \text{,}
	\end{align}
  for some $s_0 \in ( s_-, s_+ )$, then at least one of the following statements holds:
  \begin{itemize}
    \item $\Lambda$ initiates from the conformal boundary within $\mc{D}$:
\begin{align*}
	\lim_{s \searrow s_- } \rho (s) = 0 \text{,} \qquad \lim_{ s \searrow s_- } \lambda (s) \in \mc{D} \text{.}
\end{align*}	

    \item $\Lambda$ terminates at the conformal boundary within $\mc{D}$:
\begin{align*}
	\lim_{s \nearrow s_+ } \rho (s) = 0 \text{,} \qquad \lim_{ s \nearrow s_+ } \lambda (s) \in \mc{D} \text{.}
\end{align*}
\end{itemize}
\end{theorem}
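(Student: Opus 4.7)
The plan is to identify a scalar quantity along the $g$-null geodesic $\Lambda(s) = (\rho(s), \lambda(s))$ that is strictly convex in the near-boundary region over $\mc{D}$, and then extract the endpoint behavior of $\Lambda$ from that convexity. Matching the structure of the GNCC \eqref{PDEintro}, the natural candidate is
\[
u(s) := \frac{\eta(\lambda(s))}{\rho(s)}.
\]
To this end, I would first write the $g$-geodesic equations in the Fefferman-Graham coordinates $(\rho, x)$ from \eqref{FGintro}. Setting $h := \mathfrak{g} + \rho^2 \bar{\mathfrak{g}} + O(\rho^3)$ and using the $g$-null constraint $h(\dot\lambda,\dot\lambda) = -\dot\rho^2$, a direct Christoffel calculation yields
\begin{align*}
\ddot\rho &= \frac{2\dot\rho^2}{\rho} + \rho\,\bar{\mathfrak{g}}(\dot\lambda,\dot\lambda) + O\bigl(\rho^2\,\mathfrak{h}(\dot\lambda,\dot\lambda)\bigr), \\
\ddot\lambda^a + \Gamma^a_{bc}[\mathfrak{g}]\,\dot\lambda^b\dot\lambda^c &= \frac{2\dot\rho}{\rho}\,\dot\lambda^a + O(\rho),
\end{align*}
so the $\mc{I}$-projection $\lambda$ behaves as an ``almost $\mathfrak{g}$-geodesic'' up to the friction term $2\rho^{-1}\dot\rho\,\dot\lambda$, and $\dot\lambda$ is approximately $\mathfrak{g}$-null with error controlled by $\dot\rho^2 + \rho^2\mathfrak{h}(\dot\lambda,\dot\lambda)$.

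Next I would compute $v'' + (v')^2$ for $v := \log u$. The cross terms of the form $(\dot\rho/\rho)(\dot\eta/\eta)$ and the squared terms $(\dot\rho/\rho)^2$, $(\dot\eta/\eta)^2$ cancel in this combination, leaving the clean identity
\[
v''(s) + (v'(s))^2 \;=\; \frac{\bigl[\mathfrak{D}^2\eta - \eta\,\bar{\mathfrak{g}}\bigr](\dot\lambda,\dot\lambda)}{\eta(\lambda)} \;+\; \mathcal{E}(s),
\]
with an error $\mathcal{E}$ that vanishes as a power of $\rho$. Applying the GNCC \eqref{PDEintro}---extended by a uniform continuity argument from strictly $\mathfrak{g}$-null vectors to a neighborhood of the null cone in $T\bar{\mc{D}}$---the main term is bounded below by $c\,\mathfrak{h}(\dot\lambda,\dot\lambda)/\eta$. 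Using $u'' = u(v'' + (v')^2)$ this translates to
\[
u''(s) \;\geq\; \frac{c\,\mathfrak{h}(\dot\lambda(s),\dot\lambda(s))}{\rho(s)} - C_1
\]
for some constant $C_1 \geq 0$. Tracking $\mathfrak{h}(\dot\lambda,\dot\lambda)$ via $(\log\mathfrak{h}(\dot\lambda,\dot\lambda))' = 4\dot\rho/\rho + (\text{lower order})$ along the geodesic and invoking the hypothesis $|\dot\rho(s_0)| \lesssim \rho(s_0)$ in \eqref{IntroNullGeodesic}, one then verifies that the right-hand side is strictly positive throughout the near-boundary phase of $\Lambda$ once $\epsilon_0$ is taken small.

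With strict convexity of $u$ in hand, the conclusion follows by a convexity argument. Since $u(s_0) > 0$ and $u'' > 0$ strictly as long as $\Lambda$ remains in $\{\rho < \epsilon_0\}$ over $\mc{D}$, the convex function $u$ cannot stay bounded in both directions of $s$; because $\eta$ is bounded on $\bar{\mc{D}}$, unboundedness of $u$ in a given direction forces $\rho \to 0$ in that direction, and the fact that $u \to +\infty$ (rather than $0$) forces $\eta(\lambda)$ to stay bounded below, so $\lambda$ must limit to an interior point of $\mc{D}$ rather than to a point of $\partial\mc{D}$. The main obstacle will be the quantitative bookkeeping: verifying the robustness of the strict GNCC inequality under the $O(\dot\rho^2 + \rho^2\mathfrak{h}(\dot\lambda,\dot\lambda))$ deviation of $\dot\lambda$ from $\mathfrak{g}$-null; excluding the alternative escape $\rho \nearrow \epsilon_0$ using the hypothesis $|\dot\rho(s_0)| \lesssim \rho(s_0)$, which constrains $v'(s_0)$ and hence channels the growth of $u$ toward $\rho \to 0$; and ensuring that the combination $\mathfrak{h}(\dot\lambda,\dot\lambda)/\rho$ in the lower bound for $u''$ dominates the $C_1$ error uniformly along the near-boundary portion of the geodesic.
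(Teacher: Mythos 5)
Your choice of $u := \eta(\lambda)/\rho$ is the right object: it is the reciprocal of the paper's foliation function $f = \rho/\eta$, and the identity you compute,
\[
v'' + (v')^2 = \frac{\bigl[\mathfrak{D}^2\eta - \eta\,\bar{\mathfrak{g}}\bigr](\dot\lambda,\dot\lambda)}{\eta} + \mathcal{E}, \qquad v := \log u \text{,}
\]
is correct (the friction terms $2\rho^{-1}\dot\rho^2$ and $2\rho^{-1}\dot\rho\,\dot\lambda$ in the $g$-affine geodesic equations do cancel against the $(\dot\rho/\rho)^2$ and cross terms). This is the same leading-order computation that drives the paper's Sturm comparison, where the paper instead works with $\theta(s) := \eta(\lambda((1+\delta)s))$ and the Wronskian $W := \dot\theta\rho - \dot\rho\theta = \rho^2(\theta/\rho)'$. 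So the core idea matches.

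However, the concluding step of your argument has a genuine gap. Strict convexity of $u$ on the maximal interval $(s_-, s_+)$ does \emph{not} force $u$ to become unbounded: $u(s) = 1 + s^2$ is strictly (indeed uniformly) convex and bounded on any bounded interval, and the maximal parameter interval here is generally bounded (both in the $g$-affine and $\rho^2 g$-affine normalizations, since the conformal boundary is reached at finite parameter). So ``$u$ is convex and positive'' does not by itself give ``$u \to \infty$''. Worse, even if $u \to \infty$, your claim that this ``forces $\eta(\lambda)$ to stay bounded below'' is false: one can have $\eta(\lambda) \to 0$ and $\rho \to 0$ simultaneously with $\eta/\rho \to \infty$, which is precisely the scenario where $\lambda$ drifts to $\partial\mc{D}$ tangentially and which you must exclude. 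Your convexity statement does not distinguish this from the desired conclusion $\lambda(s) \to \mc{D}$.

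The device that closes this gap in the paper is the small shift $\delta > 0$ in the comparison function $\theta(s) := \eta(\lambda((1+\delta)s))$. With this shift, the Sturm comparison gives $\rho(s) < \theta(s)$, and if $\lambda((1+\delta)s^\ast) \in \partial\mc{D}$ for some $s^\ast < s_+$, then $\theta(s^\ast) = 0$ while $\rho(s^\ast) > 0$, an immediate contradiction; the shift guarantees that $\theta$'s vanishing happens at a parameter where $\rho$ is still strictly positive. Without the shift (i.e., with $\theta(s) = \eta(\lambda(s))$ as in your $u$), both quantities can vanish simultaneously at $s_+$ and there is no contradiction. You would need an analogous mechanism. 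Additionally, your bound $u'' \gtrsim \mathfrak{h}(\dot\lambda,\dot\lambda)/\rho - C_1$ needs a uniform positive lower bound, which in turn requires controlling $\mathfrak{h}(\dot\lambda,\dot\lambda)$ from below along the trajectory; this is nontrivial (the affine normalization allows $|\dot\lambda|$ to degenerate) and in the paper it is part of what the bootstrap assumption (\textbf{BA}) accomplishes. In short: you have found the right comparison quantity and correctly extracted the GNCC as its governing inequality, but the final deduction from convexity to the theorem's conclusion requires the shift-and-bootstrap trichotomy, and that is a missing idea rather than mere bookkeeping.
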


Theorem \ref{MainTheorem2} can be interpreted as follows.
The assumption \eqref{IntroNullGeodesic} tells us that the null geodesic is, at parameter $s_0$, both sufficiently close to the conformal boundary and over the region $\mc{D}$.
The conclusions of Theorem \ref{MainTheorem2} then imply that $\Lambda$ must either start from or terminate at the conformal boundary within $\mc{D}$.
In other words, there are no null geodesics that are sufficiently close to the conformal boundary and that ``fly over $\mc{D}$ without touching $\mc{D}$''.
As a result, Theorem \ref{MainTheorem2} implies that as long as $\mc{D}$ satisfies the generalized null convexity condition, then one cannot construct the geometric optics counterexamples of Alinhac-Baouendi \cite{MR1363855} to unique continuation from the region $\mc{D}$.

A more precise statement of Theorem \ref{MainTheorem2} is given later in Theorem \ref{Characterisation}.

\subsection{Main Ideas of the Proof}

We now describe some of the ideas of the proofs of Theorems \ref{MainTheorem1} and \ref{MainTheorem2}.
As most parts of the proofs have direct analogues in \cite{Arick1, Arick2, Arick3}, here we will focus primarily on how the present proofs differ from those of previous works.

First, the main ingredient in the proof of Theorem \ref{MainTheorem1} is a new Carleman estimate for the Klein--Gordon operator near the conformal boundary.
(As is common in unique continuation literature, Theorem \ref{MainTheorem1} follows from this Carleman estimate via a standard argument.)
For the Carleman estimate itself, the most noteworthy feature here is the function $f$ that defines the weight within the estimate, and for which the level sets are strongly pseudoconvex.
\footnote{Since the conformal boundary barely fails to be pseudoconvex (see the discussions in \cite{Arick1, Arick2, Arick3}), then the pseudoconvexity of the level sets of $f$ must degenerate as one approaches the conformal boundary.}

In \cite{Arick1, Arick2, Arick3}, this $f$ was defined to be of the form
\begin{align*}
f := \frac{ \rho }{ \eta_\ast (t) } \text{,}
\end{align*}
where $t$ is a given time function on the spacetime and conformal boundary.
Note in particular that $\{ f = 0 \}$ lies on the conformal boundary, while $\{ f = c \}$ for any $c > 0$ lies in the interior as long as $\eta_\ast (t) > 0$.
Moreover, all the level sets of $f$ focus at the conformal boundary at any point where $\eta_\ast (t) = 0$.
In \cite{Arick3}, it was shown that if the null convexity criterion holds, and if $\eta_\ast$ is a specific function defined in terms of the constants $A$ and $B$ in \eqref{PseudoConvexityCriterionIntro3}--\eqref{PseudoConvexityCriterionIntro4}, then the level sets of $f$ are pseudoconvex for small enough values of $f$.

For our new Carleman estimate, we take instead
\begin{align*}
f := \frac{ \rho }{ \eta } \text{,}
\end{align*}
the key difference being that $\eta_\ast (t)$ is replaced by a more general function $\eta$, defined on some subset of $\mc{I}$, that is allowed to depend on all coordinates of $\mc{I}$.
In particular, we consider a larger class of foliations $f$ that can depend also on the spatial components of $\mc{I}$.
A crucial computation in this article (see Lemma \ref{Lemmapi}) shows that if the generalized null convexity criterion \eqref{PDEintro} holds, then the level sets of this new $f$ are once again pseudoconvex near the conformal boundary.
Moreover, note that in the setting of \eqref{PDEintro}, the level set $\{ f = 0 \}$ corresponds to the boundary domain $\mc{D}$, while the level sets $\{ f = c \}$, with $c > 0$, are hypersurfaces ``flying over $\mc{D}$'' that refocus at the conformal boundary on $\partial \mc{D}$.

From this point, the proof of the Carleman estimate resembles that of \cite{Arick3}, but with our new $f$ replacing the previous.
\footnote{There are several other major subtleties and difficulties in establishing Carleman estimates near the conformal boundary.
The reader is referred to \cite{Arick1, Arick2, Arick3} for further discussions of these.}
Here, as was done in \cite{Arick3}, we will once again apply the same formalism of vertical fields---tensor fields on $\mc{M}$ that are fully tangent to level sets of $\rho$.

\begin{remark}
We also note again that the Carleman estimate behind Theorem \ref{MainTheorem1}, and in particular the generalized null convexity criterion, will serve as essential ingredients in proving unique continuation results for the full nonlinear Einstein equations, which will be addressed in a forthcoming paper \cite{HolzegelShaoEinsteinPreparation}.
\end{remark}

Next, for Theorem \ref{MainTheorem2}, the idea behind the proof is the same as in \cite{Arick3}---a detailed analysis of trajectories of null geodesics sufficiently near the conformal boundary.
However, since the generalized null convexity criterion is rather different in nature from the null convexity criterion, the process for carrying this out will differ as well.

The key observation for a null geodesic $\Lambda (s) = ( \rho (s), \lambda (s) )$, as in the statement of Theorem \ref{MainTheorem2}, is as follows.
First, reformulate the $\rho$-component of $\Lambda$ as a function on the image of $\lambda$,
\begin{align*}
\vartheta ( \lambda (s) ) := \rho (s) \text{.}
\end{align*}
Then, one can show that up to leading order, $\vartheta$ satisfies
\footnote{To make sense of the covariant derivative $\mf{D}$, one can extend $\vartheta$ arbitrarily away from the image of $\lambda$.}
\begin{align*}
\left. \left( \mf{D}^2 \vartheta - \vartheta \cdot \bar{\mf{g}} \right) \right|_{ \lambda (s) } ( \dot{\lambda} (s), \dot{\lambda} (s) ) = 0 + \text{lower-order error terms} \text{.}
\end{align*}
Moreover, as long as $\Lambda$ makes a narrow angle with the conformal boundary (i.e., $\dot{\rho}$ is sufficiently small), then $\dot{\lambda} (s)$ will be close to a $\mf{g}$-null vector.
\footnote{The equation $\mf{D}^2 \vartheta - \vartheta \mf{g} = 0$ can be viewed as a generalization of the damped harmonic oscillators that controlled near-boundary null geodesics in \cite{Arick3}.
In particular, if $\vartheta$ depends on only a single time coordinate $t$, then the above reduces to a family of damped harmonic oscillator equations.}

In other words, as long as $\Lambda$ is sufficiently close to and makes a sufficiently narrow angle with the conformal boundary, then its $\rho$-component is approximately governed by the first line of \eqref{PDEintro}, but with ``$> c \cdot \mf{h} ( \mf{X}, \mf{X} )$'' replaced by by ``$= 0$''.
In this way, $\eta$ in \eqref{PDEintro} can be thought of as strictly constraining the $\rho$-values of any such null geodesic $\Lambda$.

As a result, if $\Lambda$ satisfies the assumption \eqref{IntroNullGeodesic}, then one can (somewhat similarly to \cite{Arick3}) use the above observation, in tandem with a Sturm-type comparison $\eta$, to ensure that one of the two possibilities in the conclusion of Theorem \ref{MainTheorem2} must hold.
However, an additional complication is that the actual geodesic equations contain various nonlinear terms that, in our setting, can be viewed as error.
Therefore, like in \cite{Arick3}, one must couple the above with an elaborate continuity argument to ensure that these nonlinear error terms remain negligibly small throughout the trajectory of the geodesic.

\begin{remark}
Note this connection between $\eta$ and null geodesics also provides the intuition as to why the level sets of our $f := \rho \eta^{-1}$ in the Carleman estimates are pseudoconvex.
In particular, this geodesic constraining property of $\eta$ ensures that if a null geodesic $\Lambda$ (near the conformal boundary) hits a point of $\{ f = c \}$ tangentially, then at nearby points, $\Lambda$ must lie within $\{ f < c \}$---that is, closer to the conformal boundary.
This is precisely the geometric characterization of $\{ f = c \}$ being pseudoconvex (from the conformal boundary inward).
\end{remark}

\begin{remark}
This intuition also gives an explanation for the gauge-invariance of the generalized null convexity criterion.
Since $\eta$ is directly connected to trajectories of near-boundary null geodesics in $\mc{M}$ (in particular, to when such geodesics reach the conformal boundary), and since null geodesics are obviously invariant under coordinate changes, then this suggests that the generalized null convexity criterion should have a gauge-invariant formulation.
\end{remark}

\subsection{Organization of the Paper}

In Section \ref{sec.prelim}, we define precisely the asymptotically AdS settings on which our main results will hold.
In addition, we collect a number of computations and observations on these spaceetimes that will be useful in later sections.

In Section \ref{sec.gncc}, we discuss the generalized null convexity criterion.
First, we show that this condition is gauge invariant.
We then relate this to the null convexity criterion of \cite{Arick3}, and we study some basic examples in which this criterion holds or fails.

In Section \ref{sec.geodesic}, we give a precise statement and a proof of Theorem \ref{MainTheorem2}.

In Section \ref{sec.carleman}, we give a precise statement and a proof of Theorem \ref{MainTheorem1}.

\subsection{Acknowledgments}

The authors thank Gustav Holzegel for useful discussions. Furthermore, the first author gratefully acknowledges the support of the ERC grant 714408 GEOWAKI, under the European Union's Horizon 2020 research and innovation program.

\section{Preliminaries} \label{sec.prelim}

We begin our analysis with a precise description of the \emph{asymptotically anti-de Sitter} (abbreviated \emph{aAdS}) spacetimes we will consider in this article.
Much of this content is an abridged version of the more detailed development in \cite[Section 2]{Arick3}.

\subsection{Admissible spacetimes}

The first step is to construct our asymptotically AdS manifolds.
For this, we resort to the following assumption:

\begin{assumption} \label{ass.manifold}
Fix $n \geq 3$, let $\mc{I}$ be an $n$-dimensional manifold, and let
\begin{align}\label{manifold}
\mc{M} := ( 0, \rho_0 ] \times \mc{I} \text{,} \qquad \rho_0 > 0 \text{.}
\end{align}
In addition, we let $\rho$ denote the projection from $\mc{M}$ onto its $( 0, \rho_0 ]$-component.
\end{assumption}

Throughout, we will work with three types of tensorial objects on $\mc{M}$ and $\mc{I}$:
\begin{itemize}
\item \emph{Spacetime tensor fields:} These are simply tensor fields on $\mc{M}$.
In general, spacetime tensor fields will be denoted using standard italicized font (e.g., $g$ and $A$).

\item \emph{Boundary tensor fields:} These refer to tensor fields on $\mc{I}$.
In general, boundary tensor fields will be denoted using a fraktur font (e.g., $\mf{g}$ and $\mf{A}$).

\item \emph{Vertical tensor fields:} These are $\rho$-parametrized families of tensor fields on $\mc{I}$, which can be equivalently viewed as spacetime fields having only $\mc{I}$-components.
In general, vertical tensor fields are denoted using serif font (e.g., $\ms{g}$ and $\ms{A}$).
\end{itemize}
We will use these two viewpoints of vertical fields interchangeably, depending on context.

\begin{remark}[Identifications of fields]
Note that any boundary tensor field $\mf{A}$ can also be viewed as a vertical tensor field, by thinking of $\mf{A}$ as a $\rho$-independent family of tensor fields on $\mc{I}$.
Similarly, from the above discussion, we see that a vertical tensor field $\ms{A}$ can also be viewed as a spacetime tensor field on $\mc{M}$ that is trivial in any $\rho$-component.
\end{remark}

In addition, we let $\partial_\rho$ denote the lift to $\mc{M}$ of the vector field $\frac{d}{ d \rho }$ on $( 0, \rho_0 ]$, \footnote{In other words, the integral curves of $\partial_\rho$ are given by $\rho \mapsto ( \rho, x )$ for every $x \in \mi{M}$.} and we let $\mc{L}_\rho$ denote the Lie derivative in the $\partial_\rho$-direction.
Note in particular that for a vertical tensor field $\ms{A}$, its Lie derivative in $\rho$ is given by (see \cite[Section 2]{Arick3})
\begin{equation} \label{LieDeriv}
\mc{L}_\rho \ms{A} |_{ \rho = \sigma } = \lim_{ \sigma' \rightarrow \sigma } ( \sigma' - \sigma )^{-1} ( \ms{A} |_{ \rho = \sigma' } - \ms{A} |_{ \rho = \sigma } ) \text{.}
\end{equation}

Next, consider a local coordinate system $( U, \varphi )$ in $\mc{I}$:
\begin{itemize}
\item We say that $( U, \varphi )$ is \emph{compact} iff $\bar{U}$ is compact and $\varphi$ extends smoothly to $\bar{U}$.

\item We let $\varphi_\rho$ denote the coordinate system $( \rho, \varphi )$ on $( 0, \rho_0 ] \times U$.
\end{itemize}
In this paper, we will only consider coordinate systems of the above forms.
Moreover, similar to \cite{Arick3}, we will use lowercase Latin indices ($a, b, c, \dots$) to denote the coordinates on $\mc{I}$, and lowercase Greek indices ($\mu, \nu, \lambda, \dots$) to denote coordinates on $\mc{M}$.
We also impose Einstein summation convention, meaning that repeated indices denote summations.

We make use of such compact coordinate systems in order to make sense of asymptotic bounds and boundary limits of vertical tensor fields:

\begin{definition} \label{def.Limit}
Given a vertical tensor field $\ms{A}$ of rank $( k, l )$ and an integer $M \geq 0$:
\begin{itemize}
\item For a compact coordinate system $( U, \varphi )$, we set
\begin{equation} \label{Norm}
| \ms{A} |_{ M, \varphi } := \sum_{ m = 0 }^M \sum_{ \substack{ a_1, \dots, a_m \\ b_1, \dots, b_k \\ c_1, \dots, c_l } } | \partial^m_{ a_1 \dots a_m } \ms{A}^{ b_1 \dots b_k }_{ c_1 \dots c_l } | \text{,}
\end{equation}
where the right-hand side is expressed with respect to $\varphi$-coordinates.

\item We say that $\ms{A}$ is \emph{locally bounded in $C^M$} iff $| \ms{A} |_{ M, \varphi }$ is a bounded function for any compact coordinate system $( U, \varphi )$ on $\mc{I}$.
Furthermore, we will use the notation $\mc{O} ( \zeta )$ to refer to any vertical tensor field $\ms{B}$ such that $\zeta^{-1} \ms{B}$ is locally bounded in $C^0$.

\item Let $\mf{A}$ be a boundary tensor field of the same rank $( k, l )$.
We write $\ms{A} \rightarrow^M \mf{A}$ (i.e., \emph{$\ms{A}$ converges to $\mf{A}$ locally in $C^M$ as $\rho \searrow 0$}) iff for any compact coordinate system $( U, \varphi )$,
\begin{equation} \label{Limit}
\lim_{ \sigma \searrow 0 } \sup_{ \{ \sigma \} \times U } | \ms{A} - \mf{A} |_{ M, \varphi } = 0 \text{.}
\end{equation}
\end{itemize}
\end{definition}

With the above conventions set, we can now prescribe the geometry of our setting:

\begin{assumption} \label{ass.aads}
Let $g$ be a Lorentzian metric on $\mc{M}$, of the form
\begin{align}\label{metricdefiniton1}
	 g := \rho^{-2} ( d \rho^2 + \ms{g} ) \text{,}
\end{align}
where:
\begin{itemize}
\item $\ms{g}$ is a \emph{vertical metric}, that is, a vertical tensor field of rank $( 0, 2 )$ satisfying that $\ms{g}$ is a Lorentzian metric on every level set of $\rho$.

\item There exist a Lorentzian metric $\mf{g}$ and a rank $( 0, 2 )$ tensor field $\bar{\mf{g}}$ on $\mc{I}$ such that
\footnote{Note that $\bar{\mf{g}}$ needs not be a metric.}
\begin{equation}
\label{metricdefiniton2} \ms{g} \rightarrow^3 \mf{g} \text{,} \qquad \mc{L}_\rho \ms{g} \rightarrow^2 0 \text{,} \qquad \mc{L}_\rho^2 \ms{g} \rightarrow^1 2 \bar{\mf{g}} \text{,} \qquad \mc{L}_\rho^3 \ms{g} = \mc{O} (1) \text{.}
\end{equation}
\end{itemize}
\end{assumption}

From here on, we will remain within the setting described by Assumptions \ref{ass.manifold} and \ref{ass.aads}.

\begin{figure}[ht]
    \includegraphics[width=0.9\textwidth]{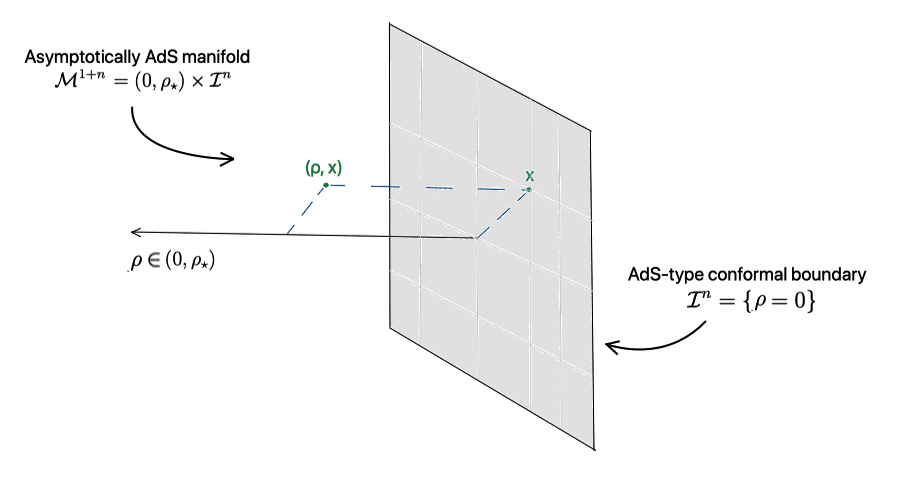}
    \caption{Illustration of an aAdS spacetime in Fefferman-Graham gauge.}
    \label{Pic1cite}
\end{figure}

Using the terminology of \cite{Arick3}, the spacetime $( \mc{M}, g )$ is called an \emph{FG-aAdS segment} and represents the near-boundary geometry of an aAdS spacetime.
Moreover, we refer to such a specific form \eqref{metricdefiniton1} of $g$ as a \emph{Fefferman-Graham} (or \emph{FG}) gauge.

\begin{remark}[Asymptotic expansion of $g$]
A more intuitive but slightly less precise formulation of \eqref{metricdefiniton1} and \eqref{metricdefiniton2} is as an asymptotic expansion for $\ms{g}$:
\begin{equation} \label{FGExpansion}
\ms{g} = \mf{g} + \rho^2 \bar{\mf{g}} + \mc{O} ( \rho^3 ) \text{.}
\end{equation}
\end{remark}

\begin{remark}
Note that no generality is lost by focusing on aAdS spacetimes given by \eqref{metricdefiniton1} and \eqref{metricdefiniton2}, as one can change coordinates such that the Fefferman-Graham gauge condition \eqref{metricdefiniton2} is satisfied.
For such an appropriate change of variables, we refer the reader to \cite{MR837196}. 
\end{remark}

The manifold $( \mc{I}, \mf{g} )$ is commonly known as the \emph{conformal boundary} of $( \mc{M}, g )$.
In particular, we can formally identify the conformal boundary $( \mc{I}, \mf{g} )$ with $\{ \rho = 0 \}$, and interpret the expansion \eqref{metricdefiniton2} as the asymptotic behaviour of $\mf{g}$ toward the conformal boundary.

Observe that our setting contains three distinct Lorentzian metrics:\ the spacetime metric $g$, the vertical metric $\ms{g}$, and the boundary metric $\mf{g}$.
Each metric comes with a number of associated operators and geometric quantities, such as the Levi-Civita connection and the curvature.
The following table summarizes the notations we will use for these objects.

\begin{table}[ht]
 \begin{tabular}[ht]{|c|c|c|c|c|c|c|c|c|} 
 \hline
 Type & Tensors & Metric & Dual & Trace & Christoffel & Connection & Gradient & Curvature \\
 \hline 
 Spacetime & $A, X, Y$ & $g$ & $g^{-1}$ & $\tr_g$ & $\Gamma$ & $\nabla$ & $\nabla^\sharp$ & $R$ \\ 
 Vertical & $\ms{A}, \ms{X}, \ms{Y}$ & $\ms{g}$ & $\ms{g}^{-1}$ & $\tr_{ \ms{g} }$ & $\ms{\Gamma}$ & $\ms{D}$ & $\ms{D}^\sharp$ & $\ms{R}$ \\
 Boundary & $\mf{A}, \mf{X}, \mf{Y}$ & $\mf{g}$ & $\mf{g}^{-1}$ & $\tr_{ \mf{g} }$ & $\mf{T}$ & $\mf{D}$ & $\mf{D}^\sharp$ & $\mf{R}$ \\
 \hline 
 \end{tabular}
\end{table}

Also, as is standard, we will omit the ``$-1$'' from metric duals when in index notation.

\begin{remark}
The more accurate statement is that $\ms{g}$ gives a Lorentzian metric on each level set of $\rho$.
Thus, the vertical connection $\ms{D}$ and curvature $\ms{R}$ are more precisely defined on each level set $\{ \rho = \sigma \}$ as the connection and curvature associated with $\ms{g} |_{ \rho = \sigma }$.
\end{remark}

The following lemmas list the metric and Christoffel symbol components in the FG gauge:

\begin{lemma}[Metric] \label{LemmaMetric}
Let $( U, \varphi )$ be any coordinate system in $\mc{I}$.
Then:
\begin{itemize}
\item The following hold with respect to $\varphi_\rho$-coordinates:
	\begin{align}
		\label{expansiong} g_{\rho \rho} = \rho^{-2} \text{,} \qquad g_{\rho b} &= 0 \text{,} \qquad g_{bc} = \rho^{-2} \ms{g}_{bc} \text{,} \\
		\notag g^{\rho \rho} = \rho^{2} \text{,} \qquad g^{\rho b} &= 0 \text{,} \qquad g^{bc} = \rho^{2} \ms{g}^{bc} \text{.}
	\end{align}

\item The following hold with respect to $\varphi$-coordinates:
\begin{align}
  \label{expansiongbc} \ms{g}_{bc} &= \mf{g}_{bc} + \rho^2 \bar{\mf{g}}_{bc} + \mc{O} (\rho^3)_{bc} \text{,} \qquad \ms{g}^{bc} = \mf{g}^{bc} - \rho^2 \mf{g}^{bd} \bar{\mf{g}}_{de} \mf{g}^{ec} + \mc{O} (\rho^3)^{bc} \text{.}
\end{align}
\end{itemize}
\end{lemma}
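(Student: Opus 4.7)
The plan is to prove the two bullet points essentially independently, with the first being immediate from the form \eqref{metricdefiniton1} of $g$ and the second following from Taylor expansion based on the convergence hypotheses \eqref{metricdefiniton2}.

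For the $\varphi_\rho$-coordinate identities, I would simply unpack the definition $g = \rho^{-2}(d\rho^2 + \ms{g})$. Because $\ms{g}$ is a vertical field, it has no $\rho$-components in $\varphi_\rho$-coordinates, so the cross terms $g_{\rho b}$ must vanish, while $g_{\rho \rho} = \rho^{-2}$ and $g_{bc} = \rho^{-2}\ms{g}_{bc}$ are read off directly. The inverse identities then follow by inverting the resulting block-diagonal matrix: $\ms{g}$ is invertible on each level set of $\rho$ (being Lorentzian there), and one sees immediately that $\ms{g}^{bc}$ is the only nonzero tangential block while the $\rho\rho$-block is simply $\rho^2$.

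For the expansion \eqref{expansiongbc}, I would apply Taylor's theorem with integral remainder in the variable $\rho$ at $\rho = 0$ to the coordinate components $\ms{g}_{bc}$, using the fact that $\mc{L}_\rho$ applied to a vertical field in $\varphi$-coordinates coincides with $\partial_\rho$ acting on those components (this is the content of \eqref{LieDeriv}). The hypotheses \eqref{metricdefiniton2} give, in any compact coordinate chart $(U,\varphi)$, that $\ms{g}_{bc} \to \mf{g}_{bc}$ and $\mc{L}_\rho \ms{g}_{bc} \to 0$ as $\rho \searrow 0$, that $\frac{1}{2}\mc{L}_\rho^2 \ms{g}_{bc} \to \bar{\mf{g}}_{bc}$, and that $\mc{L}_\rho^3 \ms{g}_{bc}$ is locally bounded. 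Integrating the third derivative three times from $0$ to $\rho$ and collecting constants of integration via the first three limits yields
\begin{equation*}
\ms{g}_{bc}(\rho, x) = \mf{g}_{bc}(x) + 0 \cdot \rho + \rho^2 \bar{\mf{g}}_{bc}(x) + \mc{O}(\rho^3)_{bc} \text{,}
\end{equation*}
exactly as claimed. The bound on $\varphi$-coordinate derivatives of the remainder, needed to qualify it as $\mc{O}(\rho^3)_{bc}$ in the sense of Definition \ref{def.Limit}, is then inherited from the local boundedness of $\mc{L}_\rho^3 \ms{g}$ in $C^0$ together with the corresponding $C^M$ convergence statements for the lower-order Lie derivatives.

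The formula for $\ms{g}^{bc}$ follows from the previous one by the standard matrix Neumann expansion applied componentwise: writing $\ms{g}_{bc} = \mf{g}_{bc} + \rho^2 \bar{\mf{g}}_{bc} + \mc{O}(\rho^3)_{bc}$, one verifies that
\begin{equation*}
\mf{g}^{bc} - \rho^2 \mf{g}^{bd}\bar{\mf{g}}_{de}\mf{g}^{ec} + \mc{O}(\rho^3)^{bc}
\end{equation*}
yields $\delta^b{}_a$ when contracted with $\ms{g}_{ca}$, modulo the appropriate remainder. The main obstacle I anticipate is purely bookkeeping: namely, making the $\mc{O}(\rho^3)$ notation from Definition \ref{def.Limit} behave correctly under products, inverses, and differentiation in the $\varphi$-coordinates. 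Once one observes that locally-bounded-in-$C^0$ is preserved by the algebraic operations involved (multiplication by smooth bounded boundary fields and matrix inversion near a uniformly invertible matrix $\mf{g}$ on a compact chart), both expansions drop out. No deep geometric input beyond \eqref{metricdefiniton1}--\eqref{metricdefiniton2} is used.
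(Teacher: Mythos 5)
Your proposal is correct and is essentially the fleshed-out version of the paper's one-line proof, which simply cites the Fefferman--Graham form \eqref{metricdefiniton1} and the expansion \eqref{FGExpansion} (itself a restatement of \eqref{metricdefiniton2} via Taylor's theorem, as you carry out). The block-diagonal inversion and the Neumann expansion for $\ms{g}^{-1}$ are exactly the implicit steps the paper treats as immediate; the only minor overreach is worrying about $\varphi$-derivative bounds on the remainder, since Definition \ref{def.Limit} only requires $\mc{O}(\rho^3)$ to mean local $C^0$ boundedness, which the integral remainder and the $C^0$ bound on $\mc{L}_\rho^3\ms{g}$ already give.
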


\begin{proof}
	These follow immediately from \eqref{metricdefiniton1} and \eqref{FGExpansion}.
\end{proof}

\begin{lemma}[Christoffel symbols] \label{LemmaChristoffel}
Let $( U, \varphi )$ denote any coordinate system in $\mc{I}$.
Then, the following identities hold with respect to $\varphi_\rho$-coordinates:
\begin{align}
  \label{expansionGamma} &\Gamma_{\rho \rho}^{\rho} = -\rho^{-1} \text{,} \qquad \Gamma_{\rho b}^{\rho} = 0 \text{,} \qquad \Gamma_{bc}^{\rho} = \rho^{-1} \ms{g}_{bc} - \frac{1}{2} \mc{L}_\rho \ms{g}_{bc} \text{,} \\
  \notag &\Gamma_{\rho \rho}^{b} = 0 \text{,} \qquad \Gamma_{\rho b}^{c} = - \rho^{-1} \delta_b^c + \frac{1}{2} \ms{g}^{ce} \mc{L}_\rho \ms{g}_{be} \text{,} \qquad \Gamma_{bc}^{d} = \ms{\Gamma}_{bc}^{d} \text{.}
\end{align}
\end{lemma}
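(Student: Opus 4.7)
The proof is a direct computation using the standard Koszul formula
\begin{equation*}
\Gamma^{\lambda}_{\mu\nu} = \tfrac{1}{2} g^{\lambda\sigma} \bigl( \partial_\mu g_{\nu\sigma} + \partial_\nu g_{\mu\sigma} - \partial_\sigma g_{\mu\nu} \bigr),
\end{equation*}
together with the explicit components of $g$ and $g^{-1}$ supplied by Lemma \ref{LemmaMetric}. The plan is to exploit the block-diagonal structure $g_{\rho b} = g^{\rho b} = 0$, which collapses the sum over $\sigma$ in each case to either a $\sigma = \rho$ term or a sum over boundary indices, and to identify the $\partial_\rho$ derivative of a vertical component with the Lie derivative $\mc{L}_\rho$ (by the definition \eqref{LieDeriv}, $\partial_\rho$ agrees with $\mc{L}_\rho$ on any vertical tensor component in $\varphi_\rho$-coordinates). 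All six formulas then reduce to one-line differentiations of $\rho^{-2}$ and $\rho^{-2} \ms{g}_{bc}$.

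The six cases split naturally according to how many indices lie in the $\rho$-slot. For $\Gamma^{\rho}_{\rho\rho}$ one gets $\tfrac{1}{2} g^{\rho\rho} \partial_\rho g_{\rho\rho} = \tfrac{1}{2} \rho^2 \partial_\rho(\rho^{-2}) = -\rho^{-1}$. For $\Gamma^{\rho}_{\rho b}$ and $\Gamma^{b}_{\rho\rho}$, every term in the Koszul expression is either a $\rho$-derivative of $g_{\rho b} = 0$ or a boundary derivative of $g_{\rho\rho} = \rho^{-2}$, which is independent of $x$; both vanish. For $\Gamma^{\rho}_{bc}$, only the $-\partial_\rho g_{bc}$ term survives, giving
\begin{equation*}
\Gamma^{\rho}_{bc} = -\tfrac{1}{2} \rho^2 \partial_\rho \bigl( \rho^{-2} \ms{g}_{bc} \bigr) = \rho^{-1} \ms{g}_{bc} - \tfrac{1}{2} \mc{L}_\rho \ms{g}_{bc}.
\end{equation*}
For $\Gamma^{c}_{\rho b}$, a similar computation with $g^{ce} = \rho^2 \ms{g}^{ce}$ produces $-\rho^{-1} \delta_b^c + \tfrac{1}{2} \ms{g}^{ce} \mc{L}_\rho \ms{g}_{be}$ after contracting $\ms{g}^{ce} \ms{g}_{be} = \delta_b^c$.

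Finally, for the purely boundary case $\Gamma^{d}_{bc}$, the $\rho^{-2}$ prefactor in $g_{bc}$ cancels against the $\rho^2$ from $g^{de}$, and no $\rho$-derivative appears since the summation variable is a boundary index. What remains is exactly the Koszul formula for $\ms{g}$ at fixed $\rho$, which is the definition of $\ms{\Gamma}^{d}_{bc}$. The main ``obstacle'' is purely bookkeeping---keeping track of which indices are $\rho$ versus boundary, and which $\rho$-powers cancel---so no conceptual difficulty arises, and the proof is genuinely a one-paragraph unpacking of Lemma \ref{LemmaMetric}.
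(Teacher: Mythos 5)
Your proof is correct and takes the same route the paper intends: the paper's own proof is the single line ``These follow from direct computations using \eqref{expansiong},'' and you have simply spelled out the Koszul-formula computation, including the key observation that $\partial_\rho \ms{g}_{bc} = \mc{L}_\rho \ms{g}_{bc}$ in $\varphi_\rho$-coordinates. All six cases check out.
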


\begin{proof}
  These follow from direct computations using \eqref{expansiong}.
\end{proof}

\subsection{Near-Boundary Foliations}

A crucial step in our unique continuation result is the construction of a pseudoconvex foliation of hypersurfaces near the conformal boundary $\mc{I}$ that also terminate at $\mc{I}$.
In this section, we first study a general class of foliating functions $f$ near $\mc{I}$, constructed by ``bending'' the level sets of $\rho$ toward $\mc{I}$.

More specifically, given an open $\mc{D} \subseteq \mc{I}$ and a smooth $\eta: \mc{D} \longrightarrow ( 0, \infty )$, we define
\begin{align} \label{Definitionf}
	f_\eta := f: ( 0, \rho_0 ] \times \mc{D} \longrightarrow \R \text{,} \qquad f ( \rho, x ) := \frac{ \rho }{ \eta(x) } \text{.}
\end{align}
Observe that wherever $\eta \searrow 0$, the level sets of $f$ terminate at the conformal boundary.
Figure \ref{Pic2cite} illustrates a generic foliation obtained from the level sets of $f$.

\begin{figure}[ht]
    \includegraphics[width=0.9\textwidth]{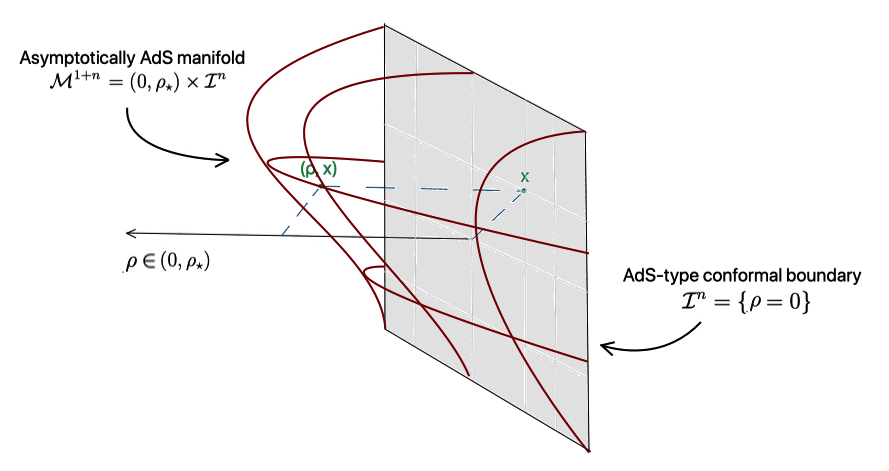}
    \caption{The red grid illustrates a level set of $f$. At the edges of this grid, this level set terminates at the conformal boundary.}
    \label{Pic2cite}
\end{figure}

In the following, we compute the derivatives, the gradient, and the Hessian of $f$:

\begin{lemma}[Derivatives of $f$] \label{LemmaDerivativesf}
Let $( U, \varphi )$ be an arbitrary coordinate system in $\mc{I}$.
Then, the following identities hold with respect to $\varphi_\rho$-coordinates:
\begin{align}
  \label{Derivativesf} \partial_\rho f = \rho^{-1} f \text{,} &\qquad \partial_b f = - \rho^{-1} f^2 \partial_b \eta \text{,} \\
  \notag \partial_{\rho \rho}^2 f = 0 \text{,} \qquad	\partial_{\rho b}^2 f = - \rho^{-2} f^{2} \partial_{b} \eta \text{,} &\qquad \partial_{bc}^2 f = - \rho^{-1} f^{2} \partial_{bc}^2 \eta + 2 \rho^{-2} f^{3} \partial_{b} \eta \partial_{c} \eta \text{.}
\end{align}
\end{lemma}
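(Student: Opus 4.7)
The plan is to compute each derivative directly from the explicit formula $f = \rho \cdot \eta^{-1}$, treating $\rho$ and the coordinates $\{x^b\}$ on $U$ as independent variables in the product chart $\varphi_\rho$. Since $\eta$ depends only on $x \in \mc{I}$, partial derivatives with respect to $\rho$ act only on the $\rho$-factor, while partial derivatives with respect to the boundary coordinates act only on $\eta^{-1}$. Throughout, I will repeatedly rewrite $\rho\cdot\eta^{-1}$ as $f$, $\rho^2\cdot\eta^{-2}$ as $f^2$, and $\rho^3\cdot\eta^{-3}$ as $f^3$ in order to match the claimed form.

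First I would compute the first-order derivatives: $\partial_\rho f = \eta^{-1} = \rho^{-1}(\rho\eta^{-1}) = \rho^{-1} f$, which gives the first identity; and $\partial_b f = \rho \cdot \partial_b(\eta^{-1}) = -\rho\, \eta^{-2} \partial_b \eta$, which after factoring out $\rho^{-1}$ and identifying $\rho^2\eta^{-2} = f^2$ yields $\partial_b f = -\rho^{-1} f^2 \partial_b \eta$.

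Next I would differentiate these expressions a second time. The identity $\partial^2_{\rho\rho} f = 0$ follows immediately since $\partial_\rho f = \eta^{-1}$ is $\rho$-independent. For $\partial^2_{\rho b} f$, I would apply $\partial_b$ to $\partial_\rho f = \eta^{-1}$, giving $-\eta^{-2}\partial_b \eta$; pulling out a factor of $\rho^{-2}$ and again identifying $\rho^2 \eta^{-2} = f^2$ yields the stated form. For the last identity, I would apply $\partial_c$ to $\partial_b f = -\rho\, \eta^{-2} \partial_b \eta$, obtaining two terms via the product rule:
\begin{align*}
\partial^2_{bc} f = -\rho\, \eta^{-2} \partial^2_{bc} \eta + 2 \rho\, \eta^{-3} \partial_b \eta \, \partial_c \eta \text{.}
\end{align*}
Factoring $\rho^{-1}$ from the first term to produce $\rho^2\eta^{-2} = f^2$, and factoring $\rho^{-2}$ from the second to produce $\rho^3\eta^{-3} = f^3$, gives precisely $-\rho^{-1} f^2 \partial^2_{bc} \eta + 2 \rho^{-2} f^3 \partial_b \eta \, \partial_c \eta$.

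There is essentially no obstacle here: the entire lemma is a bookkeeping exercise in the product rule, carried out in the fixed chart $\varphi_\rho$ where $\rho$ and $x^b$ are genuinely independent coordinates. The only point requiring any care is the consistent repackaging of powers of $\rho$ and $\eta$ into the scalar $f$ so that the expressions match the stated form; this normalization is what makes the identities convenient for the pseudoconvexity computations in later sections.
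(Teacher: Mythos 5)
Your computation is correct and is precisely what the paper has in mind: the paper dismisses this lemma as an ``immediate consequence'' of \eqref{Definitionf}, and your direct differentiation of $f = \rho\,\eta^{-1}$ in the product chart $\varphi_\rho$, with the subsequent repackaging of $\rho^k\eta^{-k}$ as $f^k$, is exactly the intended argument.
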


\begin{proof}
	These are immediate consequences of \eqref{Definitionf}.
\end{proof}

\begin{lemma}[Gradient of $f$] \label{LemmaGradientf}
The following identities hold:
\begin{align}
  \label{Gradientf}	\nabla^\sharp f = \rho f \partial_\rho - \rho f^2 \ms{D}^\sharp \eta \text{,} &\qquad g ( \nabla^\sharp f, \nabla^\sharp f ) = f^2 + f^4 \, \ms{g} ( \ms{D}^\sharp \eta, \ms{D}^\sharp \eta ) \text{.}
\end{align}
If $\eta$ is also locally bounded in $C^1$, then
\begin{align}
  \label{AsympGradientf} \nabla^\sharp f = \rho f \partial_{\rho} - \rho f^2 \mf{D}^\sharp \eta + \mc{O} ( \rho^3 f^2 ) \text{,} \qquad g ( \nabla^\sharp f, \nabla^\sharp f ) = f^2 + \mc{O} ( f^4 ) \text{.}
\end{align}
\end{lemma}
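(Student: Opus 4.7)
The proof is a direct coordinate computation building on Lemmas \ref{LemmaMetric} and \ref{LemmaDerivativesf}, so the plan is essentially bookkeeping; there is no serious obstacle.

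First, I would fix an arbitrary coordinate system $(U,\varphi)$ on $\mc{I}$ and work in the associated $\varphi_\rho$-coordinates on $\mc{M}$. By definition, the components of the gradient are $(\nabla^\sharp f)^\mu = g^{\mu\nu}\partial_\nu f$. Since the metric is block-diagonal by \eqref{expansiong} (i.e., $g^{\rho b}=0$), the $\rho$- and $b$-components decouple:
\begin{align*}
(\nabla^\sharp f)^\rho &= g^{\rho\rho}\partial_\rho f = \rho^2 \cdot \rho^{-1} f = \rho f \text{,} \\
(\nabla^\sharp f)^b &= g^{bc}\partial_c f = \rho^2 \ms{g}^{bc}\cdot(-\rho^{-1} f^2 \partial_c\eta) = -\rho f^2 \ms{g}^{bc}\partial_c\eta \text{.}
\end{align*}
Recognizing $\ms{g}^{bc}\partial_c\eta$ as the $b$-component of the vertical gradient $\ms{D}^\sharp\eta$ (viewed as a spacetime vector field trivial in $\partial_\rho$), assembling these components yields the first identity in \eqref{Gradientf}.

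Next, for the squared norm, I would again exploit the block-diagonal form of $g^{-1}$ to write
\begin{align*}
g(\nabla^\sharp f,\nabla^\sharp f) = g^{\rho\rho}(\partial_\rho f)^2 + g^{bc}\partial_b f\,\partial_c f \text{,}
\end{align*}
and substitute the derivatives from \eqref{Derivativesf} together with the components from \eqref{expansiong}. The first term gives $\rho^2\cdot\rho^{-2}f^2=f^2$, while the second term gives $\rho^2\ms{g}^{bc}\cdot\rho^{-2}f^4\,\partial_b\eta\,\partial_c\eta = f^4\,\ms{g}(\ms{D}^\sharp\eta,\ms{D}^\sharp\eta)$, producing the second identity in \eqref{Gradientf}.

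Finally, for the asymptotic refinements \eqref{AsympGradientf}, I would feed the expansion $\ms{g}^{bc}=\mf{g}^{bc}+\mc{O}(\rho^2)^{bc}$ from \eqref{expansiongbc} into the formulas just derived. Under the assumption that $\eta$ is locally bounded in $C^1$, the quantity $\partial_c\eta$ is $\mc{O}(1)$, so
\begin{align*}
-\rho f^2\ms{g}^{bc}\partial_c\eta = -\rho f^2\mf{g}^{bc}\partial_c\eta + \mc{O}(\rho^3 f^2)^b \text{,}
\end{align*}
which identifies the leading term with $-\rho f^2\mf{D}^\sharp\eta$ and establishes the first asymptotic identity. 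For the second asymptotic identity, the same substitution gives $\ms{g}(\ms{D}^\sharp\eta,\ms{D}^\sharp\eta) = \mc{O}(1)$, whence $f^4\ms{g}(\ms{D}^\sharp\eta,\ms{D}^\sharp\eta) = \mc{O}(f^4)$, as claimed. Since every step is a direct substitution, the only care required is keeping track of the $\rho$-powers when moving between $g$, $\ms{g}$, and $\mf{g}$; no genuine obstacle arises.
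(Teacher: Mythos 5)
Your proof is correct and takes essentially the same route as the paper: write the gradient components in $\varphi_\rho$-coordinates using Lemmas \ref{LemmaMetric} and \ref{LemmaDerivativesf}, exploit the block-diagonal form $g^{\rho b}=0$, and then feed in \eqref{expansiongbc} for the asymptotics. The paper simply states that the identities "follow from" these ingredients, and you have filled in exactly that bookkeeping.
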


\begin{proof}
Note that by \eqref{Definitionf}, we have
\begin{align*}
\nabla^\sharp f = g^{ \mu \nu } \partial_\mu f \partial_\nu \text{,} \qquad g ( \nabla^\sharp f, \nabla^\sharp f ) = g^{ \mu \nu } \partial_\mu f \partial_\nu f \text{.}
\end{align*}
The identities \eqref{Gradientf} then follow from \eqref{expansiong}, Lemma \ref{LemmaDerivativesf}, and the above.
Furthermore, the asymptotics \eqref{AsympGradientf} follow from \eqref{Gradientf} and the metric expansions \eqref{expansiongbc}.
\end{proof}

\begin{lemma}[Hessian of $f$] \label{LemmaHessianf}
Let $( U, \varphi )$ be an arbitrary coordinate system in $\mc{I}$.
Then, the following identities hold with respect to $\varphi_\rho$-coordinates:
\begin{align}
\label{Hessianf} \nabla_{ \rho \rho }^2 f &= \rho^{-2} f \text{,} \\
\notag \nabla_{ \rho b }^2 f &= -2 \rho^{-2} f^2 \ms{D}_b \eta + \frac{1}{2} \rho^{-1} f^2 \mc{L}_\rho \ms{g}_{bd} ( \ms{D}^\sharp \eta )^d \text{,} \\
\notag \nabla_{ b c }^2 f &= - \rho^{-1} f^2 \ms{D}_{bc}^2 \eta - \rho^{-2} f \ms{g}_{bc} + \frac{1}{2} \rho^{-1} f \mc{L}_\rho \ms{g}_{bc} + 2 \rho^{-2} f^3 \ms{D}_b \eta \ms{D}_c \eta \text{.}
\end{align}
\end{lemma}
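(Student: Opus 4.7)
The plan is to compute each component of $\nabla^2 f$ via the standard coordinate formula
\begin{equation*}
\nabla_{\mu\nu}^2 f = \partial_{\mu\nu}^2 f - \Gamma_{\mu\nu}^{\lambda} \partial_\lambda f,
\end{equation*}
and then to substitute the coordinate derivatives of $f$ from Lemma \ref{LemmaDerivativesf} together with the Christoffel symbols from Lemma \ref{LemmaChristoffel}. Since each piece has been pre-computed, the proof reduces to bookkeeping.

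First I would handle $\nabla_{\rho\rho}^2 f$. Here $\partial_{\rho\rho}^2 f = 0$, and the only nonvanishing Christoffel with upper-$\rho$ is $\Gamma_{\rho\rho}^\rho = -\rho^{-1}$, while $\Gamma_{\rho\rho}^b = 0$. Combining with $\partial_\rho f = \rho^{-1} f$ immediately gives $\rho^{-2} f$. Next, for $\nabla_{\rho b}^2 f$, I would substitute $\partial_{\rho b}^2 f = -\rho^{-2} f^2 \partial_b \eta$, use $\Gamma_{\rho b}^\rho = 0$ and $\Gamma_{\rho b}^c = -\rho^{-1}\delta^c_b + \tfrac{1}{2}\ms{g}^{ce}\mc{L}_\rho\ms{g}_{be}$, and contract against $\partial_c f = -\rho^{-1} f^2 \partial_c \eta$. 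The two contributions containing $\rho^{-2} f^2 \partial_b\eta$ add to $-2\rho^{-2} f^2 \partial_b\eta$, and since $\eta$ is a function on $\mc{I}$, $\partial_b\eta = \ms{D}_b\eta$, producing the claimed expression.

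The component $\nabla_{bc}^2 f$ is the only one that involves a genuine simplification. Starting from
\begin{equation*}
\partial_{bc}^2 f = -\rho^{-1} f^2 \partial_{bc}^2 \eta + 2\rho^{-2} f^3 \partial_b\eta\,\partial_c\eta,
\end{equation*}
I would subtract $\Gamma_{bc}^\rho \partial_\rho f = (\rho^{-1}\ms{g}_{bc} - \tfrac{1}{2}\mc{L}_\rho\ms{g}_{bc})\cdot \rho^{-1} f$ and $\Gamma_{bc}^d \partial_d f = \ms{\Gamma}_{bc}^d \cdot (-\rho^{-1} f^2 \partial_d\eta)$. The key observation is that the coordinate combination
\begin{equation*}
\partial_{bc}^2 \eta - \ms{\Gamma}_{bc}^d\, \partial_d \eta = \ms{D}_{bc}^2 \eta,
\end{equation*}
which converts the coordinate Hessian of $\eta$ into its vertical Hessian with respect to $\ms{g}$; this works because $\eta$ depends only on $\mc{I}$-coordinates and the tangential Christoffel of $g$ matches $\ms{\Gamma}$ exactly (Lemma \ref{LemmaChristoffel}). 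Collecting the remaining terms yields the stated identity.

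None of these steps presents a genuine obstacle; the calculation is purely mechanical. The only point requiring care is the sign bookkeeping in the $(\rho b)$ component (two separate contributions producing $\rho^{-2}f^2\partial_b\eta$) and the recognition, in the $(bc)$ component, that the off-diagonal Christoffel $\ms{\Gamma}_{bc}^d$ precisely promotes the coordinate second derivative of $\eta$ to the intrinsic vertical Hessian $\ms{D}_{bc}^2\eta$.
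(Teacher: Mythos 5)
Your proposal is correct and follows exactly the paper's own argument: apply the coordinate formula $\nabla^2_{\mu\nu} f = \partial^2_{\mu\nu} f - \Gamma^\lambda_{\mu\nu}\partial_\lambda f$ component by component, substitute the derivatives of $f$ from Lemma \ref{LemmaDerivativesf} and the Christoffel symbols from Lemma \ref{LemmaChristoffel}, and observe in the $(bc)$ component that $\partial^2_{bc}\eta - \ms{\Gamma}^d_{bc}\partial_d\eta = \ms{D}^2_{bc}\eta$. No difference from the paper's proof beyond phrasing.
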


\begin{proof}
These results follow from Lemma \ref{LemmaChristoffel}, Lemma \ref{LemmaDerivativesf}, and the standard identity
\begin{align*}
\nabla_{\mu \nu}^2 f = \partial_{\mu \nu}^2 f - \Gamma_{\mu \nu}^{\lambda} \partial_{\lambda} f \text{.}
\end{align*}
Indeed, from the above, we compute the following:
\begin{align*}
\nabla_{\rho \rho}^2 f &= \partial _{\rho \rho}^2 f - \Gamma_{\rho\rho}^{\rho} \partial _{\rho} f - \Gamma_{\rho\rho}^{b} \partial_{b} f \\
&= \rho^{-2} f \text{,} \\
\nabla_{\rho b}^2 f &= \partial _{\rho b}^2 f - \Gamma_{\rho b}^{\rho} \partial _{\rho} f - \Gamma_{\rho b}^{c} \partial_{c} f \\
&= - \rho^{-2} f^2 \partial_{b} \eta - \left( \frac{1}{2} \ms{g}^{cd} \mc{L}_\rho \ms{g}_{bd} - \rho^{-1} \delta_{b}^{c} \right) ( - \rho^{-1} f^2 \partial_c \eta ) \\
&= - 2 \rho^{-2} f^2 \ms{D}_b \eta + \frac{1}{2} \rho^{-1} f^2 \mc{L}_\rho \ms{g}_{bd} ( \ms{D}^\sharp \eta )^d \text{,} \\
\nabla_{bc}^2 f &= \partial_{bc}^2 f - \Gamma_{bc}^{\rho} \partial_\rho f - \Gamma_{bc}^{d} \partial_d f \\
&= -\rho^{-1} f^2 \partial_{bc}^2 \eta + 2 \rho^{-2} f^3 \partial_b \eta \partial_c \eta + \frac{1}{2} \rho^{-1} f \mc{L}_\rho \ms{g}_{bc} - \rho^{-2} f \ms{g}_{bc}  + \rho^{-1} f^2 \ms{\Gamma}_{bc}^{d} \partial_d \eta \\
&= -\rho^{-1} f^2 \ms{D}^2 \eta + 2 \rho^{-2} f^3 \ms{D}_b \eta \ms{D}_c \eta + \frac{1}{2} \rho^{-1} f \mc{L}_\rho \ms{g}_{bc} - \rho^{-2} f \ms{g}_{bc} \text{.} \qedhere
\end{align*}
\end{proof}

\begin{lemma}[Wave operator of $f$] \label{LemmaWaveOperatorf}
The following holds for $\Box_g f := \tr_g ( \nabla^2 f )$:
\begin{align} \label{WaveOperatorf}
\Box_g f = - (n-1) f - \rho f^2 \tr_{ \ms{g} } ( \ms{D}^2 \eta ) + \frac{1}{2} \rho f \tr_{ \ms{g} } \mc{L}_\rho \ms{g}	+ 2 f^3 \, \ms{g} ( \ms{D}^\sharp \eta, \ms{D}^\sharp \eta ) \text{.}
\end{align}
If $\eta$ is also locally bounded in $C^2$, then
\begin{align} \label{AsympWaveOperatorf}
\Box_g f &= - (n-1) f + \mc{O} ( f^3 ) \text{.}
\end{align}
\end{lemma}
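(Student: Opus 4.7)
The plan is to compute $\Box_g f = g^{\mu\nu}\nabla^2_{\mu\nu} f$ directly in an arbitrary coordinate chart $\varphi_\rho$, using the inverse metric components from Lemma \ref{LemmaMetric} together with the Hessian components already tabulated in Lemma \ref{LemmaHessianf}. Since $g^{\rho b}=0$, the only contributions to the trace come from the $(\rho\rho)$ piece and the $(bc)$ piece, so there is no cross term to worry about.

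First I would handle the $(\rho\rho)$ contribution: combining $g^{\rho\rho} = \rho^{2}$ with $\nabla^2_{\rho\rho} f = \rho^{-2} f$ yields exactly $f$. Next, for the $(bc)$ piece I would pair $g^{bc} = \rho^{2}\, \ms{g}^{bc}$ with each of the four terms of $\nabla^2_{bc} f$ from \eqref{Hessianf}. The contraction of $\rho^2 \ms{g}^{bc}$ with $-\rho^{-2} f \ms{g}_{bc}$ produces $-n f$; the remaining three terms give, respectively, $-\rho f^2 \tr_{\ms{g}}(\ms{D}^2 \eta)$, $\tfrac{1}{2}\rho f \tr_{\ms{g}} \mc{L}_\rho \ms{g}$, and $2 f^3\, \ms{g}(\ms{D}^\sharp \eta, \ms{D}^\sharp \eta)$. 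Adding the $(\rho\rho)$ piece $f$ then combines with $-n f$ to give $-(n-1) f$, which together with the three remainders is exactly \eqref{WaveOperatorf}.

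For the asymptotic estimate \eqref{AsympWaveOperatorf}, the key move is to use the identity $\rho = f \eta$ to convert powers of $\rho$ into powers of $f$. This yields $\rho f^2 = f^3 \eta$, so the Hessian term $\rho f^2 \tr_{\ms{g}}(\ms{D}^2 \eta)$ is $\mc{O}(f^3)$ whenever $\eta$ is locally bounded in $C^2$. For the $\mc{L}_\rho \ms{g}$ term, I would appeal to \eqref{metricdefiniton2}, which implies $\mc{L}_\rho \ms{g} \to^2 0$ and hence $\mc{L}_\rho \ms{g} = \mc{O}(\rho)$; therefore $\rho f \tr_{\ms{g}} \mc{L}_\rho \ms{g} = \mc{O}(\rho^2 f) = \mc{O}(f^3 \eta^2) = \mc{O}(f^3)$. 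The final term $2 f^3\, \ms{g}(\ms{D}^\sharp \eta, \ms{D}^\sharp \eta)$ is manifestly $\mc{O}(f^3)$ given a $C^1$ bound on $\eta$ and the convergence $\ms{g} \to^3 \mf{g}$.

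There is no real obstacle here—the computation is a routine assembly of the preceding lemmas. The only subtlety worth flagging is the rewriting $\rho = f\eta$ needed to recognize the three error terms as genuinely $\mc{O}(f^3)$ rather than merely $\mc{O}(\rho)$, which is precisely where the local boundedness hypothesis on $\eta$ enters.
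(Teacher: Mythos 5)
Your computation matches the paper's proof essentially step for step: trace the Hessian components from Lemma \ref{LemmaHessianf} against the inverse metric from Lemma \ref{LemmaMetric}, combine the $g^{\rho\rho}\nabla^2_{\rho\rho}f$ piece with the $-n f$ coming from $\rho^2\ms{g}^{bc}\cdot(-\rho^{-2}f\ms{g}_{bc})$, and then apply the asymptotics $\tr_{\ms{g}}(\ms{D}^2\eta)=\mc{O}(1)$, $\tr_{\ms{g}}\mc{L}_\rho\ms{g}=\mc{O}(\rho)$, $\ms{g}(\ms{D}^\sharp\eta,\ms{D}^\sharp\eta)=\mc{O}(1)$. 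The only cosmetic difference is that you spell out the substitution $\rho=f\eta$ to justify $\rho=\mc{O}(f)$ locally, which the paper leaves implicit; this is a correct and slightly clearer presentation of the same argument.
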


\begin{proof}
First, \eqref{WaveOperatorf} follows from a direct computation using Lemmas \ref{LemmaMetric} and \ref{LemmaHessianf}:
\begin{align*}
\Box_g f &= g^{\rho \rho} \nabla_{\rho \rho}^2 f + g^{b c} \nabla_{bc}^2 f \\
&= \rho^{2} ( \rho^{-2} f ) + \rho^2 \ms{g}^{bc} \left(	- \rho^{-1} f^2 \ms{D}^2_{bc} \eta - \rho^{-2} f \ms{g}_{bc} + \frac{1}{2} \rho^{-1} f \mc{L}_\rho \ms{g}_{bc} + 2 \rho^{-2} f^3 \ms{D}_b \eta \ms{D}_c \eta \right) \\
&= - (n-1) f - \rho f^2 \tr_{ \ms{g} } ( \ms{D}^2 \eta ) + \frac{1}{2} \rho f \tr_{ \ms{g} } \mc{L}_\rho \ms{g}	+ 2 f^3 \, \ms{g} ( \ms{D}^\sharp \eta, \ms{D}^\sharp \eta ) \text{.}
\end{align*}
For \eqref{AsympWaveOperatorf}, we apply Lemma \ref{LemmaWaveOperatorf}, along with the asymptotics
\begin{align*}
	\tr_{ \ms{g} } ( \ms{D}^2 \eta ) = \mc{O} (1) \text{,} \qquad \tr_{ \ms{g} } \mc{L}_\rho \ms{g} = \mc{O} (\rho) \text{,} \qquad \ms{g} ( \ms{D}^\sharp \eta, \ms{D}^\sharp \eta ) = \mc{O} (1) \text{,}
\end{align*}
which follow from Assumption \ref{ass.aads}, Lemmas \ref{LemmaMetric} and \ref{LemmaChristoffel}, and our assumptions on $\eta$.
\end{proof}

Next, given any $c > 0$, we denote the corresponding level set of $f$ by
\begin{align} \label{levetsetsSigma}
	\Sigma_c := \{ f = c \} =	\{ ( \sigma, x ) \in ( 0, \rho ] \times \mc{D} : \sigma = c \eta(x) \} \text{.}
\end{align}
Observe that $\nabla^\sharp f$ is an inward-pointing normal to each $\Sigma_c$.
\footnote{By inward-pointing, we mean that $\nabla^\sharp f$ points from the boundary into the bulk, that is, $( \nabla^\sharp f ) \rho > 0$.}

\begin{definition}[Unit normal] \label{DefinitionNormal}
Whenever $\nabla^\sharp f$ is spacelike, we define the vector field
\footnote{Note the second equality in \eqref{Normal} is a consequence of Lemma \ref{LemmaGradientf}.}
\begin{equation} \label{Normal}
N := \frac{ \nabla^\sharp f }{ | g ( \nabla^\sharp f, \nabla^\sharp f ) |^\frac{1}{2} } = \rho [ 1 + f^2 \, g ( \ms{D}^\sharp \eta, \ms{D}^\sharp \eta ) ]^{ -\frac{1}{2} } ( \partial_\rho - f \ms{D}^\sharp \eta ) \text{,}
\end{equation}
\end{definition}

In addition, since $\Sigma_c$ can be viewed as the graph of $c \eta$, then one can naturally identify $T \mc{D}$ with $T \Sigma_c$.
This can be done for each $c > 0$ via the following formula:

\begin{definition}[Boundary-foliation isomorphism] \label{DefinitionVa}
Given a vector field $\mf{X}$ on some open $\mc{U} \subseteq \mc{D}$, we define $\mc{P} \mf{X}$ to be the vector field on $T ( ( 0, \rho_0 ] \times \mc{U} )$ given by
\footnote{The factor $\rho$ on the right-hand side of \eqref{Va} arises from the scale factor $\rho^{-2}$ associated with $g$.}
\begin{equation} \label{Va}
\mc{P} \mf{X} = \rho [ f ( \mathfrak{X} \eta ) \, \partial_\rho + \mf{X} ] \text{.}
\end{equation}
\end{definition}

\begin{lemma}[Properties of $\mc{P}$] \label{LemmaVa}
Let $\mf{X}$, $\mf{Y}$ be vector fields on (a subset of) $\mc{D}$, and let
\[
X := \mc{P} \mf{X} \text{,} \qquad Y := \mc{P} \mf{Y} \text{.}
\]
\begin{itemize}
\item $\mc{P}$ is linear and invertible.

\item $X$ and $Y$ are tangent to each level set of $f$.

\item The following identity holds:
\begin{align} \label{VaVb}
g ( X, Y ) &= f^2 ( \mf{X} \eta ) ( \mf{Y} \eta ) + \ms{g} ( \mf{X}, \mf{Y} ) \text{.}
\end{align}

\item If $\eta$ is also locally bounded in $C^1$, then
\begin{align} \label{AsympVaVb}
g ( X, Y ) &= \mf{g} ( \mf{X}, \mf{Y} ) + \mc{O} ( f^2 ) ( \mf{X}, \mf{Y} ) \text{.}
\end{align}
\end{itemize}
\end{lemma}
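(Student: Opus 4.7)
My plan is to verify the four claims in order, relying on the coordinate formulas already established in Lemmas \ref{LemmaMetric}, \ref{LemmaChristoffel}, and \ref{LemmaDerivativesf}.

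Linearity of $\mc{P}$ is immediate from inspection of \eqref{Va}, since $\mf{X} \mapsto \mf{X}\eta$ is linear and $\mf{X}$ enters $\mc{P} \mf{X}$ linearly. For invertibility, I observe that the $\mc{I}$-component of $\mc{P} \mf{X}$ in $\varphi_\rho$-coordinates is simply $\rho \mf{X}$, so given any $X$ of the form $\mc{P}\mf{X}$ one recovers $\mf{X}$ by reading off its tangential part and dividing by $\rho$. To check tangency to $\Sigma_c$, I compute $X f$ for $X = \mc{P}\mf{X}$ using Lemma \ref{LemmaDerivativesf}:
\begin{align*}
X f = \rho f (\mf{X} \eta) \partial_\rho f + \rho \mf{X}^b \partial_b f = \rho f (\mf{X}\eta) \cdot \rho^{-1} f + \rho \cdot (-\rho^{-1} f^2 \mf{X}^b \partial_b \eta) = f^2 (\mf{X}\eta) - f^2(\mf{X}\eta) = 0 \text{,}
\end{align*}
which is exactly what is needed.

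Next, to establish the metric identity \eqref{VaVb}, I expand $g(X,Y)$ in $\varphi_\rho$-coordinates, using the block-diagonal form of $g$ from \eqref{expansiong}. Since $X$ has $\rho$-component $\rho f (\mf{X}\eta)$ and tangential part $\rho \mf{X}$ (and similarly for $Y$),
\begin{align*}
g(X,Y) = g_{\rho\rho} \cdot \rho^2 f^2 (\mf{X}\eta)(\mf{Y}\eta) + \rho^2 \ms{g}_{bc} \mf{X}^b \mf{Y}^c = f^2 (\mf{X}\eta)(\mf{Y}\eta) + \ms{g}(\mf{X}, \mf{Y}) \text{,}
\end{align*}
where the cross terms vanish because $g_{\rho b} = 0$. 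Finally, for the asymptotic statement \eqref{AsympVaVb}, I substitute the Fefferman--Graham expansion $\ms{g} = \mf{g} + \mc{O}(\rho^2)$ from \eqref{expansiongbc} into \eqref{VaVb}; combining this with the identity $\rho = f \eta$ (so $\rho^2 = \mc{O}(f^2)$ on a compact coordinate chart where $\eta$ is bounded), and using that $\eta$ is locally bounded in $C^1$ to absorb $f^2 (\mf{X}\eta)(\mf{Y}\eta)$ into the error, yields the desired expression.

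I expect no serious obstacles here: each claim is a direct unpacking of the definition \eqref{Va} and the coordinate formulas already derived. The only point requiring minor care is the asymptotic \eqref{AsympVaVb}, where one should make clear that the error term $\mc{O}(f^2)(\mf{X}, \mf{Y})$ is to be understood as a symmetric bilinear form on $\mc{I}$ with components $\mc{O}(f^2)$ in any compact coordinate chart, coming jointly from the $\rho^2$-correction to $\ms{g}$ and the explicit $f^2 (\mf{X}\eta)(\mf{Y}\eta)$ term.
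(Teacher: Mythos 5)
Your proof is correct and follows essentially the same route as the paper: the tangency check by direct computation of $Xf$ using Lemma \ref{LemmaDerivativesf}, the metric identity via the block-diagonal form of $g$ in \eqref{expansiong}, and the asymptotics by substituting $\ms{g} = \mf{g} + \mc{O}(\rho^2)$ from \eqref{expansiongbc} together with $\rho = f\eta$. The paper is terser (it dispatches invertibility and the asymptotics in a line each), but your slightly more explicit reasoning — e.g., reading off $\mf{X}$ from the tangential part of $\mc{P}\mf{X}$ divided by $\rho$, and tracking the $\rho^2 = f^2\eta^2$ step — fills in the same gaps the paper leaves implicit.
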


\begin{proof}
That $\mc{P}$ is linear and invertible follows from the formula \eqref{Va}.
That $X$ and $Y$ are tangent to level sets of $f$ is a direct computation using \eqref{Va}; we show this for $X$:
\begin{align*}
X f &= \rho [ f ( \mf{X} \eta ) \partial_\rho f + \mf{X} f ] \\
&= \rho [ \rho \eta^{-2} ( \mf{X} \eta ) + \rho \, \mf{X} ( \eta^{-1} ) ] \\
&= 0 \text{.}
\end{align*}

Finally, the identity \eqref{VaVb} is an immediate consequence of Lemma \ref{LemmaMetric} and \eqref{Va}, while the asymptotics \eqref{AsympVaVb} follows from \eqref{expansiongbc} and \eqref{VaVb}.
\end{proof}

\subsection{Pseudoconvexity}

In this section, we examine when the level sets $\Sigma_c$ of $f$ are pseudoconvex.
More specifically, using the tools from above, we connect the pseudoconvexity of these level sets to a partial differential inequality on the conformal boundary.

First, we recall the definition of pseudoconvexity from \cite{ler_robb:unique}; see also \cite[Section 2.3]{Arick1}:

\begin{definition}[Pseudoconvexity]
Let $f$ be as before.
Then, the level set $\Sigma_c := \{ f = c \}$ is pseudoconvex (with respect to the wave operator $\Box_g$ and the direction of increasing $f$) iff $-f$ is convex along all $g$-null directions tangent to $\Sigma_c$, namely,
\begin{align} \label{DefintionPseudoConvexity}
- \nabla^2 f ( U, U ) > 0 \text{,} \quad \text{ for all } U \in T \Sigma_c \text{, with } g ( U, U ) = 0 \text{.}
\end{align}
\end{definition}

However, rather than examine \eqref{DefintionPseudoConvexity}, it suffices to instead study the following condition: there exists a smooth function $w: ( 0, \rho_0 ] \times \mc{D} \rightarrow \R$ such that
\begin{align} \label{fconvex}
- ( \nabla^2 f + w \, g ) ( X, X ) > 0 \text{,} \quad \text{ for all } X \in T \Sigma_c \text{.}
\end{align}
Clearly, \eqref{fconvex} implies \eqref{DefintionPseudoConvexity}, though the two conditions are in fact equivalent.
\footnote{See \cite{Arick3}.}
In other words, we can view \eqref{fconvex} as the criterion for $\Sigma_c$ being pseudoconvex.

Thus, our goal is now to obtain conditions on which \eqref{conditiononpi} holds.
For this, we must find the Hessian of $f$ with respect to the vector fields tangent to level sets of $f$:

\begin{lemma}[Hessian of $f$, revisited] \label{LemmaHessian}
Let $\mf{X}$, $\mf{Y}$ be vector fields on (a subset of) $\mc{D}$, and let $X := \mc{P} \mf{X}$ and $Y := \mc{P} \mf{Y}$ (see Definition \ref{DefinitionVa}).
Then, the following identities hold:
\begin{align}
\label{Hessian} \nabla^2 f ( X, Y ) &= - f \, g ( X, Y ) - \rho f^2 \, \ms{D}^2 \eta ( \mf{X}, \mf{Y} ) + \frac{1}{2} \rho f \, \mc{L}_\rho \ms{g} ( \mf{X}, \mf{Y} ) \\
\notag &\qquad + \frac{1}{2} \rho f^3 [ \mf{Y} \eta \, \mc{L}_\rho \ms{g} ( \ms{D}^\sharp \eta, \mf{X} ) + \mf{X} \eta \, \mc{L}_\rho \ms{g} ( \ms{D}^\sharp \eta, \mf{Y} ) ] \text{,} \\
\notag \nabla^2 f ( \nabla^\sharp f, X ) &= \rho f^4 \, \ms{D}^2 \eta ( \ms{D}^\sharp \eta, \mf{X} ) - \frac{1}{2} \rho f^5 \mf{X} \eta \, \mc{L}_\rho \ms{g} ( \ms{D}^\sharp \eta, \ms{D}^\sharp \eta ) \text{,} \\
\notag \nabla^2 f ( \nabla^\sharp f, \nabla^\sharp f ) &= f \, g ( \nabla^\sharp f, \nabla^\sharp f ) + 2 f^5 \, \ms{g} ( \ms{D}^\sharp \eta, \ms{D}^\sharp \eta ) - \frac{1}{2} \rho f^5 \, \mc{L}_\rho \ms{g} ( \ms{D}^\sharp \eta, \ms{D}^\sharp \eta ) \\
\notag &\qquad - \rho f^6 \, \ms{D}^2 \eta ( \ms{D}^\sharp \eta, \ms{D}^\sharp \eta ) + 2 f^7 [ \ms{g} ( \ms{D}^\sharp \eta, \ms{D}^\sharp \eta ) ]^2 \text{.}
\end{align}
Moreover, if $\eta$ is locally bounded in $C^2$, then
\begin{align} \label{AsympHessian}
( \nabla^2 f + f g ) ( X, Y ) &= - \rho f^2 \, ( \mf{D}^2 \eta - \eta \bar{\mf{g}} ) ( \mf{X}, \mf{Y} ) + \mc{O} ( \rho f^3 ) ( \mf{X}, \mf{Y} ) \text{,} \\
\notag ( \nabla^2 f + f g ) ( \nabla^\sharp f, X ) &= \mc{O} ( \rho f^4 ) ( \mf{X} ) \text{,} \\
\notag ( \nabla^2 f + f g ) ( \nabla^\sharp f, \nabla^\sharp f ) &= 2 f^3 + \mc{O} ( f^5 ) \text{.}
\end{align}
\end{lemma}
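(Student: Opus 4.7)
The plan is to carry out the three identities in \eqref{Hessian} via direct bilinear expansion in coordinates, plugging in the componentwise Hessian formulas from Lemma \ref{LemmaHessianf}, and then handling the asymptotics \eqref{AsympHessian} separately by substituting the Fefferman-Graham expansions from Assumption \ref{ass.aads}.

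First, to prove the identity for $\nabla^2 f(X,Y)$, I would fix a coordinate system $(U,\varphi)$ on $\mc{I}$, write $X = \mc{P}\mf{X}$ and $Y = \mc{P}\mf{Y}$ in $\varphi_\rho$-coordinates using Definition \ref{DefinitionVa}, and expand
\[
\nabla^2 f(X,Y) = \rho^2\Bigl[ f^2(\mf{X}\eta)(\mf{Y}\eta)\,\nabla^2_{\rho\rho}f + f(\mf{X}\eta)\mf{Y}^b\nabla^2_{\rho b}f + f(\mf{Y}\eta)\mf{X}^b\nabla^2_{\rho b}f + \mf{X}^b\mf{Y}^c\,\nabla^2_{bc}f \Bigr] \text{.}
\]
Substituting the three formulas from Lemma \ref{LemmaHessianf} and using that $\mf{X}^b\ms{D}_b\eta = \mf{X}\eta$, the $f^3(\mf{X}\eta)(\mf{Y}\eta)$-coefficients simplify (they combine as $1 - 2 - 2 + 2 = -1$). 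The surviving $-f^3(\mf{X}\eta)(\mf{Y}\eta) - f\ms{g}(\mf{X},\mf{Y})$ piece is then recognized as $-f\,g(X,Y)$ via \eqref{VaVb}, and the remaining terms collect into exactly the right-hand side of the first line of \eqref{Hessian}.

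For the second and third identities, I would use Lemma \ref{LemmaGradientf} to write $\nabla^\sharp f = \rho f\,\partial_\rho - \rho f^2\ms{D}^\sharp\eta$, viewed as $\mc{P}(-f\,\ms{D}^\sharp\eta) + \rho f(1+f^2\ms{g}(\ms{D}^\sharp\eta,\ms{D}^\sharp\eta))\partial_\rho$; alternatively, and more cleanly, just expand $\nabla^2 f(\nabla^\sharp f, X)$ and $\nabla^2 f(\nabla^\sharp f, \nabla^\sharp f)$ directly in $\varphi_\rho$-coordinates using the components $(\nabla^\sharp f)^\rho = \rho f$ and $(\nabla^\sharp f)^b = -\rho f^2(\ms{D}^\sharp\eta)^b$, and plug in Lemma \ref{LemmaHessianf}. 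Again the algebra reduces to grouping terms by powers of $f$ and contracting $\ms{D}^\sharp\eta$ with $\ms{D}\eta$; the cancellations in the final identity use $g(\nabla^\sharp f,\nabla^\sharp f) = f^2 + f^4\,\ms{g}(\ms{D}^\sharp\eta,\ms{D}^\sharp\eta)$ from Lemma \ref{LemmaGradientf}.

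For the asymptotic identities \eqref{AsympHessian}, the key observation is that Assumption \ref{ass.aads} yields $\mc{L}_\rho\ms{g} = \mc{O}(\rho)$ and $\tfrac{1}{2}\mc{L}_\rho\ms{g} = \rho\,\bar{\mf{g}} + \mc{O}(\rho^2)$, while $\ms{g} = \mf{g} + \mc{O}(\rho^2)$ implies that the vertical Christoffels satisfy $\ms{\Gamma} = \mf{T} + \mc{O}(\rho^2)$, hence $\ms{D}^2\eta = \mf{D}^2\eta + \mc{O}(\rho^2)$ when $\eta$ is locally bounded in $C^2$. Substituting into the first identity of \eqref{Hessian} and using $\rho = f\eta$ to convert $\rho^2 f$ into $\rho f^2\eta$, the leading two terms $-\rho f^2\,\ms{D}^2\eta + \tfrac12\rho f\,\mc{L}_\rho\ms{g}$ collapse to $-\rho f^2(\mf{D}^2\eta - \eta\bar{\mf{g}})(\mf{X},\mf{Y})$ modulo $\mc{O}(\rho f^3)$ errors; the cross terms involving $\mc{L}_\rho\ms{g}(\ms{D}^\sharp\eta,\cdot)$ are already $\mc{O}(\rho^2 f^3) = \mc{O}(\rho f^3)$. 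The other two asymptotic identities follow analogously by matching powers of $\rho$ and $f$ — the main bookkeeping subtlety, and the one place I would have to be careful, is that every ``error'' term generated by the expansions of $\ms{g}$, $\mc{L}_\rho\ms{g}$, and $\ms{\Gamma}$ must be rewritten in the stated form $\mc{O}(\rho f^k)$ using $\rho = f\eta$ together with local boundedness of $\eta$ on the relevant compact chart.
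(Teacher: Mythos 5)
Your proposal is correct and follows essentially the same route as the paper: expand each bilinear form in $\varphi_\rho$-coordinates using the components of $\mc{P}\mf{X}$, $\mc{P}\mf{Y}$, and $\nabla^\sharp f$, substitute the Hessian components from Lemma \ref{LemmaHessianf}, regroup via \eqref{VaVb} and \eqref{Gradientf}, and then obtain the asymptotics by inserting the Fefferman--Graham expansions (in particular $\tfrac12\mc{L}_\rho\ms{g} = \rho\bar{\mf{g}} + \mc{O}(\rho^2)$ and $\ms{D}^2\eta = \mf{D}^2\eta + \mc{O}(\rho^2)$) together with $\rho = f\eta$. Your arithmetic check that the $f^3(\mf{X}\eta)(\mf{Y}\eta)$ coefficients combine as $1-2-2+2=-1$ and the reduction of the cross terms to $\mc{O}(\rho f^3)$ match the paper's computation.
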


\begin{proof}
First, observe that given $X$, $Y$ as above, we have from \eqref{Hessianf} that
\begin{align*}
\nabla^2 f ( X, Y ) &= X^\rho Y^\rho \, \nabla_{ \rho \rho } f + ( X^\rho Y^b + X^b Y^\rho ) \nabla_{ \rho b } f + X^b Y^c \, \nabla_{ b c } f \\
&= X^\rho Y^\rho \, \rho^{-2} f + ( X^\rho Y^b + X^b Y^\rho ) \left[ -2 \rho^{-2} f^2 \ms{D}_b \eta +  \frac{1}{2} \rho^{-1} f^2 \mc{L}_\rho \ms{g}_{bd} ( \ms{D}^\sharp \eta )^d \right] \\
&\qquad + X^b Y^c \left[ - \rho^{-1} f^2 \ms{D}_{bc}^2 \eta - \rho^{-2} f \ms{g}_{bc} + \frac{1}{2} \rho^{-1} f \mc{L}_\rho \ms{g}_{bc} + 2 \rho^{-2} f^3 \ms{D}_b \eta \ms{D}_c \eta \right] \text{.}
\end{align*}
Second, the values $X^\mu = ( \mc{P} \mf{X} )^\mu$ and $Y^\mu = ( \mc{P} \mf{Y} )^\mu$ can be expanded using \eqref{Va}:
\begin{align*}
\nabla^2 f ( X, Y ) &= \mf{X} \eta \mf{Y} \eta \, f^3 + ( \mf{X} \eta \, \mf{Y}^b + \mf{Y} \eta \mf{X}^b ) \left[ -2 f^3 \ms{D}_b \eta + \frac{1}{2} \rho f^3 \mc{L}_\rho \ms{g}_{bd} ( \ms{D}^\sharp \eta )^d \right] \\
&\qquad + \mf{X}^b \mf{Y}^c \left[ - \rho f^2 \ms{D}_{bc}^2 \eta - f \ms{g}_{bc} + \frac{1}{2} \rho f \mc{L}_\rho \ms{g}_{bc} + 2 f^3 \ms{D}_b \eta \ms{D}_c \eta \right] \\
&= - f \, \ms{g} ( \mf{X}, \mf{Y} ) - f^3 \, \mf{X} \eta \mf{Y} \eta + \frac{1}{2} \rho f^3 ( \mf{X} \eta \, \mf{Y}^b + \mf{Y} \eta \mf{X}^b ) \mc{L}_\rho \ms{g}_{bd} ( \ms{D}^\sharp \eta )^d \\
&\qquad + \mf{X}^b \mf{Y}^c \left( - \rho f^2 \ms{D}_{bc}^2 \eta + \frac{1}{2} \rho f \mc{L}_\rho \ms{g}_{bc} \right) \text{.} 
\end{align*}
The first part of \eqref{Hessian} now follows from the above and from \eqref{VaVb}.

Next, applying \eqref{Gradientf} along with \eqref{Hessianf} and \eqref{Va} yields
\begin{align*}
\nabla^2 f ( \nabla^\sharp f, X ) &= \rho^2 f^2 \mf{X} \eta \, \nabla_{ \rho \rho } f + [ \rho^2 f \, \mf{X}^b - \rho^2 f^3 ( \ms{D}^\sharp \eta )^b \, \mf{X} \eta ] \nabla_{ \rho b } f - \rho^2 f^2 ( \ms{D}^\sharp \eta )^b \mf{X}^c \, \nabla_{ b c } f \\
&= f^3 \, \mf{X} \eta + [ \mf{X}^b - f^2 ( \ms{D}^\sharp \eta )^b \, \mf{X} \eta ] \left[ -2 f^3 \ms{D}_b \eta + \frac{1}{2} \rho f^3 \mc{L}_\rho \ms{g}_{bd} ( \ms{D}^\sharp \eta )^d \right] \\
&\qquad + ( \ms{D}^\sharp \eta )^b \mf{X}^c \left[ \rho f^4 \ms{D}_{bc}^2 \eta + f^3 \ms{g}_{bc} - \frac{1}{2} \rho f^3 \mc{L}_\rho \ms{g}_{bc} - 2 f^5 \ms{D}_b \eta \ms{D}_c \eta \right] \text{.}
\end{align*}
Further expanding the right-hand side of the above and then observing that several pairs of terms cancel results in the second identity in \eqref{Hessian}.

Similarly, for the last part of \eqref{Hessian}, we expand
\begin{align*}
\nabla^2 f ( \nabla^\sharp f, \nabla^\sharp f ) &= \rho^2 f^2 \, \nabla_{ \rho \rho } f - 2 \rho^2 f^3 ( \ms{D}^\sharp \eta )^b \, \nabla_{ \rho b } f + \rho^2 f^4 ( \ms{D}^\sharp \eta )^b ( \ms{D}^\sharp \eta )^c \, \nabla_{ b c } f \\
&= f^3 + ( \ms{D}^\sharp \eta )^b [ 4 f^5 \ms{D}_b \eta - \rho f^5 \mc{L}_\rho \ms{g}_{bd} ( \ms{D}^\sharp \eta )^d ] \\
&\qquad + ( \ms{D}^\sharp \eta )^b ( \ms{D}^\sharp \eta )^c \left[ - \rho f^6 \ms{D}_{bc}^2 \eta - f^5 \ms{g}_{bc} + \frac{1}{2} \rho f^5 \mc{L}_\rho \ms{g}_{bc} + 2 f^7 \ms{D}_b \eta \ms{D}_c \eta \right] \\
&= f^3 + 3 f^5 \, \ms{g} ( \ms{D}^\sharp \eta, \ms{D}^\sharp \eta ) - \frac{1}{2} \rho f^5 \, \mc{L}_\rho \ms{g} ( \ms{D}^\sharp \eta, \ms{D}^\sharp \eta ) - \rho f^6 \, \ms{D}^2 \eta ( \ms{D}^\sharp \eta, \ms{D}^\sharp \eta ) \\
&\qquad - \rho f^6 \, \ms{D}^2 \eta ( \ms{D}^\sharp \eta, \ms{D}^\sharp \eta ) + 2 f^7 [ \ms{g} ( \ms{D}^\sharp \eta, \ms{D}^\sharp \eta ) ]^2 \text{.}
\end{align*}
The desired identity now follows from the above, after applying the second part of \eqref{Gradientf} to the first two terms on the right-hand side.
This completes the proof of \eqref{Hessian}.

Finally, \eqref{AsympHessian} follow from combining \eqref{Hessian} with the following asymptotics:
\begin{align*}
\ms{D}^2 \eta ( \mf{X}, \mf{Y} ) &= \mf{D}^2 \eta ( \mf{X}, \mf{Y} ) + \mc{O} ( \rho^2 ) ( \mf{X}, \mf{Y} ) \text{,} \\
\mc{L}_\rho \ms{g} ( \mf{X}, \mf{Y} ) &= 2 \rho \, \bar{\mf{g}} ( \mf{X}, \mf{Y} ) + \mc{O} ( \rho^2 ) ( \mf{X}, \mf{Y} ) \\
&= 2 f \eta \, \bar{\mf{g}} ( \mf{X}, \mf{Y} ) + \mc{O} ( \rho^2 ) ( \mf{X}, \mf{Y} ) \text{,} \\
\mc{L}_\rho \ms{g} ( \ms{D}^\sharp \eta, \mf{X} ) &= \mc{O} (\rho) ( \mf{X} ) \text{,} \\
\ms{D}^2 \eta ( \ms{D}^\sharp \eta, \mf{X} ) &= \mc{O} ( 1 ) ( \mf{X} ) \text{,} \\
\mc{L}_\rho \ms{g} ( \ms{D}^\sharp \eta, \ms{D}^\sharp \eta ) &= \mc{O} (\rho) \text{,} \\
\ms{D}^2 \eta ( \ms{D}^\sharp \eta, \ms{D}^\sharp \eta ) &= \mc{O} ( 1 ) \text{,} \\
\ms{g} ( \ms{D}^\sharp \eta, \ms{D}^\sharp \eta ) &= \mc{O} (1) \text{.} 
\end{align*}
These in turn follow from the various asymptotic expansions derived in \eqref{metricdefiniton2}, \eqref{expansiongbc}, \eqref{expansionGamma}, and \eqref{AsympGradientf}, along with the assumption that $\eta$ is locally bounded in $C^2$.
\end{proof}

In particular, the first identity of \eqref{AsympHessian} shows the leading order behavior of $\nabla^2 f + f \, g$---and hence the pseudoconvexity of $\Sigma_c$ for small $c$---is determined by $\eta$ and geometric properties of the conformal boundary.
Similar to \cite{Arick1, Arick2, Arick3}, for our Carleman estimates, it will be more convenient to express this in terms of a related modified deformation tensor:

\begin{definition}[Modified deformation tensor] \label{Defnmodifieddeformationtensor}
Given a function $\zeta: ( 0, \rho_0 ] \times \mc{D} \rightarrow \R$, we define the following modified deformation tensor $\pi_\zeta$ by
\begin{align} \label{modifieddeformationtensor}
\pi_\zeta := - [ \nabla ( f^{n-3} \nabla f ) + f^{n-3} w_\zeta \cdot g ] \text{,} \qquad w_\zeta := f + \rho f^2 \zeta \text{.}
\end{align}
\end{definition}

Observe that \eqref{fconvex} can be equivalently expressed in terms of $\pi$ as
\begin{align} \label{conditiononpi}
\pi_\zeta ( X, X ) > 0 \text{,} \quad \text{ for all } X \in T \Sigma_c \text{,}
\end{align}
for some particular chosen function $\zeta$, and with $w := w_\zeta$ in \eqref{fconvex}.
Lastly, we reformulate \eqref{AsympHessian} in terms of the modified deformation tensor \eqref{modifieddeformationtensor}:

\begin{lemma}[Asymptotic pseudoconvexity] \label{Lemmapi}
Assume the setting of Lemma \ref{LemmaHessian}, and let $\zeta$ and $\pi_\zeta$ be as in Definition \ref{Defnmodifieddeformationtensor}.
Moreover, suppose $\eta$ and $\zeta$ are locally bounded in $C^2$ and $C^0$, respectively.
Then, the following hold wherever $\nabla^\sharp f$ is spacelike:
\begin{align}
\label{pi} \pi_\zeta ( X, Y ) &=  \rho f^{n-1} ( \mf{D}^2 \eta - \eta \bar{\mf{g}} - \zeta \mf{g} ) ( \mf{X}, \mf{Y} ) + \mc{O} ( \rho^2 f^{n-1} ) ( \mf{X}, \mf{Y} ) \text{,} \\
\notag \pi_\zeta ( N, X ) &= \mc{O} ( \rho f^n ) ( \mf{X} ) \text{,} \\
\notag \pi_\zeta ( N, N ) &= - (n-1) f^{n-2} + \mc{O} ( f^n ) \text{.}
\end{align}
\end{lemma}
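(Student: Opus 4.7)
The plan is to expand $\pi_\zeta$ via the Leibniz rule,
\begin{align*}
\pi_\zeta = -\bigl[ (n-3)\, f^{n-4}\, df \otimes df + f^{n-3}\, \nabla^2 f + f^{n-3}\, w_\zeta\, g \bigr] \text{,}
\end{align*}
and then evaluate this expression on each of the three pairings $(X,Y)$, $(N,X)$, $(N,N)$ by inserting the asymptotics for $\nabla^2 f$ and $\nabla^\sharp f$ from Lemmas \ref{LemmaGradientf} and \ref{LemmaHessian}. The organizing remark that keeps the error bookkeeping under control is the algebraic identity $\rho = \eta f$ built into the definition $f = \rho/\eta$; this lets me trade any ``$\rho^a f^b$'' error for a ``$\rho^{a+1} f^{b-1}$'' error up to a locally bounded factor of $\eta$, which is what reconciles the various error shapes produced by Lemma \ref{LemmaHessian} with the $\mc{O}(\rho^2 f^{n-1})$, $\mc{O}(\rho f^n)$, and $\mc{O}(f^n)$ errors stated in the conclusion.

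For the tangential-tangential case, I would take $X := \mc{P}\mf{X}$ and $Y := \mc{P}\mf{Y}$. By Lemma \ref{LemmaVa} the fields $X, Y$ are tangent to the level sets of $f$, so $Xf = Yf = 0$ and the $df \otimes df$ contribution vanishes. What remains collapses to
\begin{align*}
\pi_\zeta(X, Y) = -f^{n-3}\, \bigl[ (\nabla^2 f + f\, g)(X, Y) + \rho f^2 \zeta\, g(X, Y) \bigr] \text{,}
\end{align*}
at which point the first line of \eqref{AsympHessian} supplies the Hessian bracket and \eqref{AsympVaVb} supplies $g(X, Y)$. Collecting the leading-order contributions produces exactly $\rho f^{n-1}(\mf{D}^2 \eta - \eta \bar{\mf{g}} - \zeta \mf{g})(\mf{X}, \mf{Y})$, and all subleading pieces get absorbed into the stated $\mc{O}(\rho^2 f^{n-1})(\mf{X}, \mf{Y})$ after the $\rho = \eta f$ reshuffling above.

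For the mixed case, tangency of $X$ together with the fact that $N$ is $g$-proportional to $\nabla^\sharp f$ gives $Xf = 0$ and $g(N, X) = 0$, so two of the three Leibniz pieces vanish outright and one is left with $\pi_\zeta(N, X) = -f^{n-3}\, \nabla^2 f(N, X)$. Unpacking $N$ via Definition \ref{DefinitionNormal} and the expansion $|g(\nabla^\sharp f, \nabla^\sharp f)|^{-1/2} = f^{-1} + \mc{O}(f)$ coming from \eqref{AsympGradientf}, and then applying the second line of \eqref{AsympHessian} (after noting that $g(\nabla^\sharp f, X) = 0$ allows freely interchanging $\nabla^2 f$ with $\nabla^2 f + f\, g$), produces the claimed $\mc{O}(\rho f^n)(\mf{X})$ bound. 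The $(N, N)$ case is the one where all three Leibniz contributions are active: $g(N, N) = 1$, $(Nf)^2 = g(\nabla^\sharp f, \nabla^\sharp f) = f^2 + \mc{O}(f^4)$ by \eqref{AsympGradientf}, and combining the third line of \eqref{AsympHessian} with the same gradient asymptotic yields $\nabla^2 f(N, N) = f + \mc{O}(f^3)$; summing the three terms then gives $-[1 + (n-3) + 1] f^{n-2} + \mc{O}(f^n) = -(n-1) f^{n-2} + \mc{O}(f^n)$.

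The main obstacle throughout is not any single computation but the disciplined normalization of error terms: the Hessian formulas in Lemma \ref{LemmaHessian} produce residuals of several different ``$\rho^a f^b$'' shapes, and these must be systematically homogenized via $\rho = \eta f$ before the three asymptotics line up in exactly the form stated. Beyond that, every ingredient required is either a direct consequence of the Leibniz expansion, of the identities $Xf = 0$ and $g(N, X) = 0$, or of the asymptotics already in hand.
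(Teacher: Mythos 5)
Your overall structure mirrors the paper's argument: the Leibniz expansion of $\pi_\zeta$, the vanishing of $Xf$ and of $g(N,X)$ to kill terms in the tangential and mixed cases, and then the asymptotics of Lemmas~\ref{LemmaGradientf} and \ref{LemmaHessian}. Your $(N,X)$ and $(N,N)$ computations are correct and match the paper's (you keep $\nabla^2 f(N,N)$ and $w_\zeta\,g(N,N)$ separate where the paper folds them into $(\nabla^2 f + f\,g)$; same result either way).

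However, the ``$\rho = \eta f$ reshuffling'' step in your $(X,Y)$ case has the inequality pointing the wrong way, and this is precisely the direction you need it. Since $\rho = \eta f$ and $\eta$ is bounded \emph{above} on $\bar{\mc{D}}$ but vanishes on $\partial\mc{D}$, the valid monotonicity is
\[
\rho^{a+1} f^{\,b-1} \;=\; \eta\cdot \rho^{a} f^{\,b} \;\lesssim\; \rho^{a} f^{\,b}\text{,}
\]
i.e.\ raising the power of $\rho$ at the expense of $f$ makes a quantity \emph{smaller}, not larger. Your claim that an $\mc{O}(\rho^a f^b)$ error can be ``traded for'' an $\mc{O}(\rho^{a+1}f^{b-1})$ error would require $\eta^{-1}$ locally bounded, which fails near $\partial\mc{D}$. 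Concretely: you cannot promote the $\mc{O}(\rho f^3)$ error quoted in \eqref{AsympHessian} to $\mc{O}(\rho^2 f^2)$, and hence cannot reach the stated $\mc{O}(\rho^2 f^{n-1})(\mf{X},\mf{Y})$ after multiplying by $f^{n-3}$.

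To actually obtain the stated error, one must bypass the quoted bound in \eqref{AsympHessian} and go back to the proof of Lemma~\ref{LemmaHessian}, where the individual error contributions to $(\nabla^2 f + f g)(X,Y)$ are $\mc{O}(\rho^3 f^2)$, $\mc{O}(\rho^3 f)$, and $\mc{O}(\rho^2 f^3)$. Each of these \emph{is} $\mc{O}(\rho^2 f^2)$, but via the legitimate trades $\rho\lesssim 1$, $\rho\lesssim f$ (i.e.\ $\eta\lesssim 1$), and $f\lesssim 1$ on $\mc{D}(f_\star)$---none of which is your $f\lesssim\rho$. So the stated asymptotic is recoverable, but not by the mechanism you describe; you need the sharper underlying errors, not a reshuffling of the coarser quoted one.
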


\begin{proof}
First, notice that \eqref{modifieddeformationtensor}, along with the fact that $g ( N, X ) = 0$, implies
\begin{align*}
\pi_\zeta ( X, Y ) &= - f^{ n - 3 } [ ( \nabla^2 f + f \, g ) ( X, Y ) + \rho f^2 \zeta \, g ( X, Y ) ] \text{,} \\
\pi_\zeta ( N, X ) &= - f^{ n - 3 } \, ( \nabla^2 f + f \, g ) ( N, X ) \text{.}
\end{align*}
Applying \eqref{AsympGradientf}, \eqref{Normal}, \eqref{AsympVaVb}, and \eqref{AsympHessian} to the above, we then obtain
\begin{align*}
\pi_\zeta ( X, Y ) &= f^{n-3} [ \rho f^2 ( \mf{D}^2 \eta - \eta \bar{\mf{g}} ) + \mc{O} ( \rho f^3 ) + \rho f^2 \zeta  \, \mf{g} + \mc{O} ( \rho^3 f^2 ) ] ( \mf{X}, \mf{Y} ) \\
&= \rho f^{n-1} ( \mf{D}^2 \eta - \eta \bar{\mf{g}} - \zeta \mf{g} ) ( \mf{X}, \mf{Y} ) + \mathcal{O} ( \rho f^n ) ( \mf{X}, \mf{Y} ) \text{,} \\
\pi_\zeta ( N, X ) &= - f^{n-3} ( \nabla^2 f + f \, g ) ( \nabla^\sharp f, X ) \cdot [ g ( \nabla^\sharp f, \nabla^\sharp f ) ]^{ - \frac{1}{2} } \\
&= \mc{O} ( \rho f^n ) \text{.}
\end{align*}
For the remaining part, we again apply \eqref{AsympGradientf}, \eqref{Normal}, and \eqref{AsympHessian} to conclude that
\begin{align*}
\pi_\zeta ( N, N ) &= - f^{n-3} ( \nabla^2 f + f \, g ) ( \nabla^\sharp f, \nabla^\sharp f ) \cdot [ g ( \nabla^\sharp f, \nabla^\sharp f ) ]^{-1} \\
&\qquad - (n-3) f^{n-4} \, g ( \nabla^\sharp f, \nabla^\sharp f ) - \rho f^{n-1} \zeta \\
&= [ - 2 f^{n-2} + \mc{O} ( f^n ) ] + [ - (n - 3) f^{n-2} + \mc{O} ( f^n ) ] + \mc{O} ( f^n ) \\
&= - (n - 1) f^{n-2} + \mc{O} ( f^n ) \text{.} \qedhere
\end{align*}
\end{proof}

\section{The Generalized Null Convexity Criterion} \label{sec.gncc}

Lemma \ref{Lemmapi}, in particular the first identity in \eqref{pi}, implies that in order for the level sets of $f$ to be pseudoconvex (namely, the condition \eqref{conditiononpi} holds), one requires positivity for the leading-order coefficient $\mf{D}^2 \eta - \eta \bar{\mf{g}} - \zeta \mf{g}$ at the conformal boundary.
This observation motivates our key \emph{generalized null convexity criterion} (abbreviated \emph{GNCC}):

\begin{definition}[Generalized Null Convexity Criterion] \label{DefAdmissibleDomains1}
Let $\mf{p}$ be a smooth Riemannian metric on $\mc{I}$, and let $\mc{D} \subseteq \mc{I} $ be open with compact closure.
We say $\mc{D}$ satisfies the generalized null convexity criterion iff there exists $\eta \in C^4 ( \bar{\mc{D}} )$ satisfying
\footnote{The assumption $\eta \in C^4 ( \bar{\mc{D}} )$ arises from the fact that one must take four derivatives of $\eta$ at one point in the proof of the upcoming Carleman estimates.}
\begin{align} \label{BVP}
\begin{cases}
( \mf{D}^2 \eta - \eta \, \bar{\mf{g}} ) ( \mf{Z}, \mf{Z} ) > c \eta \, \mf{p} ( \mf{Z}, \mf{Z} ) &\text{in } \mc{D} \text{,} \\
\eta > 0 &\text{in } \mc{D} \text{,} \\
\eta = 0 &\text{on } \partial \mc{D} \text{,}
\end{cases} 
\end{align}
for some constant $c > 0$, and for all tangent vectors $\mf{Z} \in T \mc{D}$ with $\mf{g} ( \mf{Z}, \mf{Z} ) = 0$.
\end{definition}

Applying \cite[Corollary 3.5]{Arick3}, one obtains a useful equivalent formulation of the GNCC:

\begin{prop}[Generalized Pseudoconvexity Criterion] \label{DefAdmissibleDomains0}
Let $\mf{p}$ and $\mc{D}$ be as in Definition \ref{DefAdmissibleDomains1}.
Then, $\mc{D}$ satisfies the GNCC if and only if there exists $\eta \in C^4 ( \bar{\mc{D}} )$ satisfying 
\begin{align} \label{BVP0}
\begin{cases}
( \mf{D}^2 \eta - \eta \, \bar{\mf{g}} - \zeta \, \mf{g} ) ( \mf{X}, \mf{X} ) > c \eta \, \mf{p} ( \mf{X}, \mf{X} ) &\text{in } \mc{D} \text{,} \\
\eta > 0 &\text{in } \mc{D} \text{,} \\
\eta = 0 &\text{on } \partial \mc{D} \text{,}
\end{cases} 
\end{align}	  
for some $c > 0$ and $\zeta \in C^2 ( \bar{\mc{D}} )$, and for all $\mf{X} \in T \mc{D}$.
\end{prop}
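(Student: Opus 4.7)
The plan is to prove Proposition \ref{DefAdmissibleDomains0} by establishing both implications, with the "($\Leftarrow$)" direction being immediate and the nontrivial "($\Rightarrow$)" direction reducing to a pointwise linear-algebraic fact (namely \cite[Corollary 3.5]{Arick3}) together with a smooth-selection argument for $\zeta$.

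For "($\Leftarrow$)", given $\eta$ and $\zeta$ satisfying \eqref{BVP0}, restricting the first inequality to $\mf{g}$-null vectors $\mf{Z}$ eliminates the $\zeta \mf{g}$-term since $\mf{g}(\mf{Z}, \mf{Z}) = 0$, leaving precisely \eqref{BVP} with the same $\eta$ and $c$. No further work is needed here.

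For "($\Rightarrow$)", assume \eqref{BVP} holds with constant $c > 0$ and function $\eta \in C^4(\bar{\mc{D}})$, and set $S := \mf{D}^2 \eta - \eta \bar{\mf{g}}$, a smooth symmetric $(0,2)$-tensor on $\bar{\mc{D}}$. The key pointwise observation, which is essentially \cite[Corollary 3.5]{Arick3}, says the following: at each $x \in \bar{\mc{D}}$, if $S(\mf{Z}, \mf{Z}) > c \eta(x)\, \mf{p}(\mf{Z}, \mf{Z})$ for all $\mf{g}$-null $\mf{Z}$, then the set of $\zeta \in \mathbb{R}$ for which $(S - \zeta \mf{g} - c' \eta\, \mf{p})(\mf{X}, \mf{X}) > 0$ holds for every $\mf{X}$ (with some $c' > 0$) is a nonempty open interval. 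To see this in outline, decompose $\mf{X} = \alpha \mf{T} + \mf{Y}$ relative to a local $\mf{g}$-unit timelike vector $\mf{T}$ and its $\mf{g}$-orthogonal spacelike complement; the quadratic form $(S - \zeta \mf{g})(\mf{X}, \mf{X})$ in $(\alpha, \mf{Y})$ is positive definite precisely when $\zeta > -S(\mf{T}, \mf{T})$ (positivity in the timelike direction) and $\zeta$ stays below the smallest generalized eigenvalue of $S$ relative to $\mf{g}|_{\mf{T}^\perp}$ on the spacelike subspace, and a direct discriminant computation shows that the null-cone bound in \eqref{BVP} is exactly what ensures this interval is nonempty.

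Finally, one must realize these pointwise choices as a single function $\zeta \in C^2(\bar{\mc{D}})$. Since $\mf{g}$, $\bar{\mf{g}}$, $\mf{p}$, and $\eta$ are smooth, and since the admissible $\zeta$-interval varies continuously with $x$, one can produce a smooth $\zeta$ by e.g.\ taking the midpoint of the admissible interval, or alternatively by exhibiting an explicit formula in terms of $S(\mf{T}, \mf{T})$ and $\mf{T}$ for a smooth unit timelike vector field $\mf{T}$ on a neighborhood of $\bar{\mc{D}}$ (which exists by time-orientability and compactness). Compactness of $\bar{\mc{D}}$ then guarantees that \eqref{BVP0} holds with a uniform constant $c' > 0$, possibly slightly smaller than $c$. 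The main technical obstacle is the pointwise linear-algebraic lemma itself, but that content is already furnished by \cite[Corollary 3.5]{Arick3}; all that remains here is the regularity bookkeeping, which is routine by the smoothness of the data and compactness of $\bar{\mc{D}}$.
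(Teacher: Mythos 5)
Your proposal is correct and takes essentially the same route as the paper: both implications hinge on the pointwise linear-algebraic fact from \cite[Corollary 3.5]{Arick3}, with the ``$\Leftarrow$'' direction following trivially by restricting to $\mf{g}$-null $\mf{X}$, and the ``$\Rightarrow$'' direction requiring a global selection of $\zeta$. The one minor divergence is in that selection step: the paper patches local choices of $\zeta$ together with a partition of unity (which works cleanly because, for fixed $\eta$, the admissible $\zeta$'s at each point form an interval, hence a convex set, so convex combinations remain admissible), whereas your ``midpoint of the admissible interval'' suggestion risks losing $C^2$ regularity where the interval endpoints (which involve generalized eigenvalues) fail to be smooth; your alternative of an explicit formula involving a smooth timelike frame $\mf{T}$ is closer in spirit to what actually gets patched by the partition of unity.
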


\begin{remark}
Both Definition \ref{DefAdmissibleDomains1} and Proposition \ref{DefAdmissibleDomains0} are independent of the choice of metric $\mf{p}$.
Since $\bar{\mc{D}}$ is compact, a change in $\mf{p}$ would only result in a different choice of constant $c$.
\end{remark}

In this section, we further investigate the equivalent conditions \eqref{BVP} and \eqref{BVP0}:
\begin{itemize}
\item To begin with, we show that the GNCC is gauge invariant (Proposition \ref{Gaugeinvariance}).

\item We then show that the GNCC implies the null convexity criterion of \cite{Arick3}.
\end{itemize}
Finally, to make the discussion more concrete, we look at some examples of domains where the GNCC holds, and where the GNCC fails to hold.

\subsection{Gauge Invariance}

In this subsection, we demonstrate that the GNCC \eqref{BVP} is gauge-invariant.
The first step in this process is to make precise the notion of gauge transformations.

Let $( \mc{M}, g )$ be as usual, and assume that $g$ can be written as
\footnote{The more accurate statements would be \eqref{metricdefiniton2} and their analogues $\ms{h} \rightarrow^3 \mf{h}$, $\mc{L}_\sigma \ms{h} \rightarrow^2 0$, $\mc{L}_\sigma^2 \ms{h} \rightarrow^1 2 \bar{\mf{h}}$, and $\mc{L}_\sigma^3 \ms{h} = \mc{O} ( 1 )$ for $( \sigma, \ms{h} )$.
However, we write this merely as expansions here to keep notations simpler.}
\begin{align}
\label{gauges} g = \rho^{-2} ( d\rho^2 + \ms{g} ) \text{,} &\qquad \ms{g} = \mf{g} + \rho^2 \bar{\mf{g}} + \mc{O} ( \rho^3 ) \text{,} \\
\notag g = \sigma^{-2} ( d\sigma^2 + \ms{h} ) \text{,} &\qquad \ms{h} = \mf{h} + \sigma^2 \bar{\mf{h}} + \mc{O} ( \sigma^3 ) \text{,}
\end{align}
where $\rho, \sigma \in C^\infty ( \mc{M} )$ both determine foliations of $\mc{M}$ and vanish at the conformal boundary, and where $\ms{g}$ and $\ms{h}$ are vertical metrics on level sets of $\rho$ and $\sigma$, respectively.
In other words, $( \mc{M}, g )$ is expressed in Fefferman-Graham gauge in two different ways, using $( \rho, \ms{g} )$ and $( \sigma, \ms{h} )$.

In addition, we assume some regularity between $\rho$ and $\sigma$ at the conformal boundary:
\footnote{More precisely, $\rho \rightarrow^4 0$, $\mc{L}_\sigma \rho \rightarrow^3 \mf{r}_1$, $\mc{L}_\sigma^2 \rho \rightarrow^2 2 \mf{r}_2$, $\mc{L}_\sigma^3 \rho \rightarrow^1 6 \mf{r}_3$, and $\mc{L}_\sigma^4 \rho = \mc{O} (1)$.}
\begin{align} \label{rhosigma}
\rho = \mf{r}_1 \sigma + \mf{r}_2 \sigma^2 + \mf{r}_3 \sigma^3 + \mc{O} ( \sigma^4 ) \text{,} \qquad \mf{r}_1 > 0 \text{.}
\end{align}
Here, the coefficients $\mf{r}_1$, $\mf{r}_2$, $\mf{r}_3$ are real-valued functions on $\mc{I}$.
Under these conditions, we can derive relations between the coefficients of $\ms{g}$ and $\ms{h}$:

\begin{prop}[Gauge transformations] \label{PropGaugeTransform}
Let $( \mc{M}, g)$ be as above, and assume the two FG-gauges \eqref{gauges} and the comparison \eqref{rhosigma}.
Then, we have the identities
\begin{equation} \label{GaugeTransform}
\mf{h} = \mf{a}^2 \mf{g} \text{,} \qquad \bar{\mf{h}} = \bar{\mf{g}} + \mf{a}^{-1} \, \mf{D}^2 \mf{a} - 2 \mf{a}^{-2} ( \mf{D} \mf{a} \otimes \mf{D} \mf{a} ) + \frac{1}{2} \mf{a}^{-2} \, \mf{g} ( \mf{D}^\sharp \mf{a}, \mf{D}^\sharp \mf{a} ) \cdot \mf{g} \text{,}
\end{equation}
where $\mf{a} := \mf{r}_1^{-1}$, and where $\mf{D}$ denotes the Levi-Civita connection with respect to $\mf{g}$.
\end{prop}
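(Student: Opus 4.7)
The plan is to realise the gauge change as a coordinate diffeomorphism
\[ \Phi\colon (\sigma, y) \longmapsto (\rho, x) = \bigl( R(\sigma, y), X(\sigma, y) \bigr) , \]
with $R$ given by the expansion \eqref{rhosigma} and the tangential part expanded as $X^a(\sigma, y) = \chi_0^a(y) + \sigma \chi_1^a(y) + \sigma^2 \chi_2^a(y) + \mc{O}(\sigma^3)$. Since both identities in \eqref{GaugeTransform} are tensorial on $\mc{I}$, I may work in a single coordinate chart and arrange $\chi_0 = \mr{id}$. Writing out $\Phi^* g$ in $(\sigma, y)$-coordinates, the Fefferman--Graham conditions $g_{\sigma \sigma} = \sigma^{-2}$ and $g_{\sigma a} = 0$, together with the definition $\ms{h}_{ab} = \sigma^2 g_{ab}$, translate into
\begin{align*}
R_\sigma^2 + \ms{g}(X_\sigma, X_\sigma) &= (R/\sigma)^2 , \\
R_\sigma R_{y^a} + \ms{g}(X_\sigma, X_{y^a}) &= 0 , \\
R_{y^a} R_{y^b} + \ms{g}(X_{y^a}, X_{y^b}) &= (R/\sigma)^2 \, \ms{h}_{ab} ,
\end{align*}
where $\ms{g}_{ab}$ on the left is evaluated at $(R(\sigma, y), X(\sigma, y))$ and then expanded via $\ms{g} = \mf{g} + \rho^2 \bar{\mf{g}} + \mc{O}(\rho^3)$.

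I would match these three identities order by order in $\sigma$. At order $\sigma^0$, Condition~2 forces $\chi_1 = 0$, while Condition~3 yields $\mf{g}_{ab} = \mf{r}_1^2 \mf{h}_{ab}$, which is the first identity in \eqref{GaugeTransform} upon setting $\mf{a} := \mf{r}_1^{-1}$. At order $\sigma$, Condition~1 (now a pure $R$-identity thanks to $\chi_1 = 0$) forces $\mf{r}_2 = 0$, while Condition~2 solves for $\chi_2^a = -\tfrac{1}{2} \mf{r}_1 \, ( \mf{D}^\sharp \mf{r}_1 )^a$. At order $\sigma^2$, Condition~1 additionally determines $\mf{r}_3 = -\tfrac{1}{4} \mf{r}_1 \, \mf{g}( \mf{D}^\sharp \mf{r}_1, \mf{D}^\sharp \mf{r}_1 )$.

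The final step is to extract $\bar{\mf{h}}$ by matching the $\sigma^2$-coefficient of Condition~3. Using $\chi_1 = 0$, $\mf{r}_2 = 0$, and the formula for $\chi_2$, the left-hand side expands as
\[ \mf{g}_{ab} + \sigma^2 \bigl[ \, \partial_a \mf{r}_1 \, \partial_b \mf{r}_1 + (\mc{L}_{\chi_2} \mf{g})_{ab} + \mf{r}_1^2 \bar{\mf{g}}_{ab} \, \bigr] + \mc{O}(\sigma^3) , \]
the three $\sigma^2$-contributions arising respectively from $R_{y^a} R_{y^b}$, from combining the Taylor expansion of $\mf{g}_{cd}$ around the boundary value $y$ with the derivatives $\partial_a \chi_2^c$, $\partial_b \chi_2^d$ in $X_{y^a}, X_{y^b}$, and from $R^2 \bar{\mf{g}}_{cd}$ in the FG expansion of $\ms{g}$. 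Using $\chi_{2, a} = -\tfrac{1}{2} \mf{r}_1 \partial_a \mf{r}_1$ one computes $(\mc{L}_{\chi_2} \mf{g})_{ab} = - \mf{r}_1 \mf{D}^2_{ab} \mf{r}_1 - \partial_a \mf{r}_1 \, \partial_b \mf{r}_1$, so the $\sigma^2$-coefficient on the left collapses to $\mf{r}_1^2 \bar{\mf{g}}_{ab} - \mf{r}_1 \mf{D}^2_{ab} \mf{r}_1$. The right-hand side contributes $\mf{r}_1^2 \bar{\mf{h}}_{ab} + 2 \mf{r}_1 \mf{r}_3 \mf{h}_{ab}$ at the same order, and substituting the value of $\mf{r}_3$ together with the conversions $\partial_a \mf{r}_1 = -\mf{a}^{-2} \partial_a \mf{a}$ and $\mf{D}^2 \mf{r}_1 = -\mf{a}^{-2} \mf{D}^2 \mf{a} + 2 \mf{a}^{-3} \, \mf{D} \mf{a} \otimes \mf{D} \mf{a}$ yields exactly the second identity of \eqref{GaugeTransform}.

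The main obstacle is the $\sigma^2$-order bookkeeping for Condition~3: several partial-derivative contributions must be repackaged through the Lie-derivative identity $(\mc{L}_{\chi_2} \mf{g})_{ab} = \chi_2^e \partial_e \mf{g}_{ab} + \mf{g}_{cb} \partial_a \chi_2^c + \mf{g}_{ac} \partial_b \chi_2^c$ in order to produce $\mf{g}$-covariant derivatives, and one must simultaneously track the interplay between $\mf{r}_3$ and $\chi_2$ coming from Condition~1. A useful consistency check is that the final formula for $\bar{\mf{h}}$ coincides with the well-known conformal transformation law of a Schouten-type $(0, 2)$-tensor under $\mf{h} = \mf{a}^2 \mf{g}$, as one expects on general grounds.
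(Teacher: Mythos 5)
Your proposal is correct and follows essentially the same route as the paper: realize the gauge change as a coordinate map $(\sigma,y)\mapsto(\rho,x)$, derive the three matching conditions from the two Fefferman--Graham expressions of $g$, and solve order by order to fix the coefficients ($\mf{q}_1=\chi_1=0$, $\mf{r}_2=0$, $\chi_2=-\tfrac12\mf{r}_1\mf{D}^\sharp\mf{r}_1$, $\mf{r}_3=-\tfrac14\mf{r}_1|\mf{D}\mf{r}_1|_{\mf{g}}^2$) before extracting $\bar{\mf{h}}$ at $\mc{O}(\sigma^2)$. Your observation that the non-tensorial combination $\chi_2^e\partial_e\mf{g}_{ab}+\mf{g}_{cb}\partial_a\chi_2^c+\mf{g}_{ac}\partial_b\chi_2^c$ is precisely $(\mc{L}_{\chi_2}\mf{g})_{ab}$ is a cleaner way to organize the covariantization step that the paper carries out via explicit Christoffel-symbol rearrangement, but it is the same computation.
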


\begin{remark}
The relations \eqref{GaugeTransform} are well-known in the physics literature---see, e.g., \cite{deharo_sken_solod:holog_adscft, imbim_schwim_theis_yanki:diffeo_holog}.
However, here we express them rigorously in finite regularity settings.
\end{remark}

\begin{proof}[Proof of Proposition \ref{PropGaugeTransform}.]
Fix any compact coordinate system $( U, \varphi )$ on $\mc{I}$, and let
\begin{align*}
( \rho, x^1, \dots x^n ) := \varphi_\rho \text{,} \qquad ( \sigma, y^1 \dots, y^n ) := \varphi_\sigma \text{.}
\end{align*}
In other words, the $x^a$'s and $y^b$'s are defined to be constant along the $\partial_\rho$ and $\partial_\sigma$ directions, respectively.
Since $( x^1, \dots, x^n )$ and $( y^1, \dots, y^n )$ coincide on $\mc{I}$, then we have the expansion
\begin{align} \label{gauge_xy}
x^a &= y^a + \mf{q}_1^a \sigma + \mf{q}_2^a \sigma^2 + \mf{q}_3^a \sigma^3 + \mc{O} ( \sigma^4 ) \text{,}
\end{align}
where the $\mf{q}_1^a$'s, $\mf{q}_2^a$'s, and $\mf{q}_3^a$'s are scalar functions on $U$.

We begin with some preliminary computations.
From \eqref{rhosigma} and \eqref{gauge_xy}, we deduce that
\begin{align}
\label{gauge_exp_1} \partial_\sigma \rho &= \mf{r}_1 + 2 \mf{r}_2 \, \sigma + 3 \mf{r}_3 \, \sigma^2 + \mc{O} ( \sigma^3 ) \text{,} \\
\notag \partial_a \rho &= \partial_a \mf{r}_1 \, \sigma + \partial_a \mf{r}_2 \, \sigma^2 + \mc{O} ( \sigma^3 )_a \text{,} \\
\notag \sigma^{-2} \rho^2 &= \mf{r}_1^2 + 2 \mf{r}_1 \mf{r}_2 \, \sigma + ( \mf{r}_2^2 + 2 \mf{r}_1 \mf{r}_3 ) \, \sigma^2 + \mc{O} ( \sigma^3 ) \text{,} \\
\notag \partial_\sigma x^a &= \mf{q}^a_1 + 2 \mf{q}^a_2 \, \sigma + 3 \mf{q}^a_3 \, \sigma^2 + \mc{O} ( \sigma^3 ) \text{,} \\
\notag \partial_a x^b &= \delta_a^b + \partial_a \mf{q}^b_1 \, \sigma + \partial_a \mf{q}^b_2 \, \sigma^2 + \mc{O} ( \sigma^3 )_a \text{,}
\end{align}
whereas \eqref{gauges} and \eqref{gauge_exp_1} yield
\begin{align}
\label{gauge_exp_2} \ms{g}_{ab} (\rho, x) &= \mf{g}_{ab} (x) + \mf{r}_1^2 \bar{\mf{g}}_{ab} (x) \, \sigma^2 + \mc{O}( \sigma^3 )_{ab} (x) \text{.}
\end{align}
Moreover, applying Taylor's theorem around $y^a$-coordinates and recalling \eqref{gauge_xy}, we have
\begin{align*}
\mf{g}_{ab} (x) &= \mf{g}_{ab} (y) + ( x^c - y^c ) \partial_c \mf{g}_{ab} (y) + \frac{1}{2} ( x^c - y^c ) ( x^d - y^d ) \partial_{cd}^2 \mf{g}_{ab} (y) + \mc{O} ( \sigma^3 )_{ab} (y) \\
&= \mf{g}_{ab} (y) + ( \mf{q}_1^c \partial_c \mf{g}_{ab} ) (y) \, \sigma + [ \mf{q}_2^c \partial_c \mf{g}_{ab} + \mf{q}_1^c \mf{q}_2^d \partial_{cd}^2 \mf{g}_{ab} ] (y) \, \sigma^2 + \mc{O} ( \sigma^3 )_{ab} (y) \text{,} \\ 
\bar{\mf{g}}_{ab} (x) &= \bar{\mf{g}}_{ab} (y) + \mc{O} ( \sigma )_{ab} (y) \text{.}
\end{align*}
Thus, evaluating $\ms{g}$ in terms of the $\sigma$-foliation, and using \eqref{gauge_exp_2} and the above, we obtain
\begin{align}
\label{gauge_exp_4} \ms{g}_{ab} = \mf{g}_{ab} + \mf{q}_1^c \partial_c \mf{g}_{ab} \, \sigma + ( \mf{r}_1^2 \bar{\mf{g}}_{ab} + \mf{q}_2^c \partial_c \mf{g}_{ab} + \mf{q}_1^c \mf{q}_1^d \partial_{cd}^2 \mf{g}_{ab} ) \sigma^2 + \mc{O} ( \sigma^3 ) \text{.}
\end{align}

Now, by \eqref{gauges} and the chain rule, we observe that
\begin{align*}
g &= \sigma^{-2} ( d \sigma^2 + \ms{h}_{ab} dy^a dy^b ) \text{,} \\
g &= \rho^{-2} ( d \rho^2 + \ms{g}_{ab} dx^a dx^b ) \\
&= \rho^{-2} [ ( \partial_\sigma \rho )^2 + \ms{g}_{ab} \partial_\sigma x^a \partial_\sigma x^b ] d \sigma^2 + 2 \rho^{-2} [\partial_\sigma \rho \partial_a \rho + \ms{g}_{bc} \partial_a x^c \partial_\sigma x^b ] d \sigma d y^a \\
&\qquad + \rho^{-2} [ \partial_a \rho \partial_b \rho + \ms{g}_{cd} \partial_a x^c \partial_b x^d ] d y^a d y^b \text{.}
\end{align*}
Comparing the two expansions for $g$, we infer
\begin{align}
\label{DefinitionhabGaugeInvariance} 0 &= ( \partial_\sigma \rho )^2 + \ms{g}_{ab} \partial_\sigma x^a \partial_\sigma x^b - \sigma^{-2} \rho^2 \text{,} \\
\notag 0 &= \partial_\sigma \rho \partial_a \rho + \ms{g}_{bc} \partial_a x^c \partial_\sigma x^b \text{,} \\
\notag \sigma^{-2} \ms{h}_{ab} &= \rho^{-2} [ \partial_a \rho \partial_b \rho + \ms{g}_{cd} \partial_a x^c \partial_b x^d ] \text{.}
\end{align}
Expanding the first two parts of \eqref{DefinitionhabGaugeInvariance} and recalling \eqref{gauge_exp_1} and \eqref{gauge_exp_4} then yields
\begin{align*}
0 &= [ \mf{g}_{ab} + \mf{q}_1^c \partial_c \mf{g}_{ab} \, \sigma + ( \mf{r}_1^2 \bar{\mf{g}}_{ab} + \mf{q}_2^c \partial_c \mf{g}_{ab} + \mf{q}_1^c \mf{q}_1^d \partial_{cd}^2 \mf{g}_{ab} ) \sigma^2 ] \\
&\qquad\qquad \cdot ( \mf{q}_1^a + 2 \mf{q}_2^a \, \sigma + 3 \mf{q}_3^a \, \sigma^2 )( \mf{q}_1^b + 2 \mf{q}_2^b \, \sigma + 3 \mf{q}_3^b \, \sigma^2 ) \\
&\qquad + ( \mf{r}_1 + 2 \mf{r}_2 \, \sigma + 3 \mf{r}_3 \, \sigma^2 )^2 - [ \mf{r}_1^2 + 2 \mf{r}_1 \mf{r}_2 \, \sigma + ( \mf{r}_2^2 + 2 \mf{r}_1 \mf{r}_3 ) \sigma^2 ] + \mc{O} ( \sigma^3 ) \\
&= \mf{A}_0 + \mf{A}_1 \, \sigma + \mf{A}_2 \, \sigma^2 + \mc{O} ( \sigma^3 ) \text{,} \\
0 &= [ \mf{g}_{ b c } + \mf{q}_1^d \partial_d \mf{g}_{ b c } \, \sigma + ( \mf{r}_1^2 \bar{\mf{g}}_{ b c } + \mf{q}_2^d \partial_d \mf{g}_{ b c } + \mf{q}^d_1 \mf{q}^e_1 \partial_{ d e } \mf{g}_{ b c } ) \, \sigma^2 ] \\
\notag &\qquad \quad \cdot ( \delta_a^c + \partial_a \mf{q}^c_1 \, \sigma + \partial_a \mf{q}^c_2 \, \sigma^2 ) ( \mf{q}^b_1 + 2 \mf{q}^b_2 \, \sigma + 3 \mf{q}^b_3 \, \sigma^2 ) \\
\notag &\qquad + ( \mf{r}_1 + 2 \mf{r}_2 \, \sigma + 3 \mf{r}_3 \, \sigma^2 ) ( \partial_a \mf{r}_1 \, \sigma + \partial_a \mf{r}_2 \, \sigma^2 ) + \mc{O} ( \sigma^3 ) \\
\notag &:= \mf{A}_{ a, 0 } + \mf{A}_{ a, 1 } \, \sigma + \mf{A}_{ a, 2 } \, \sigma^2 + \mc{O} ( \sigma^3 )_a \text{.} \\
\end{align*}
From the above, we deduce the following values for our coefficients:
\begin{align}
\label{gauge_coeff} 0 = \mf{A}_{a, 0} = \mf{g}_{ab} \mf{q}_1^b \quad &\Longrightarrow \quad \mf{q}_{1}^b = 0 \text{,} \\
\notag 0 = \mf{A}_1 = 2 \mf{r}_1 \mf{r}_2 \quad &\Longrightarrow \quad \mf{r}_2 = 0 \text{,} \\
\notag 0 = \mf{A}_{a, 1} = 2 \mf{g}_{ab} \mf{q}_2^b + \mf{r}_1 \partial_a \mf{r}_1 \quad &\Longrightarrow \quad \mf{q}_2^a = - \frac{1}{2} \mf{r}_1 \mf{g}^{ab} \partial_b \mf{r}_1 \text{,} \\
\notag 0 = \mf{A}_{a, 2} = 3 \mf{g}_{ab} \mf{q}_3^b \quad &\Longrightarrow \quad \mf{q}_3^a = 0 \text{,} \\
\notag 0 = \mf{A}_2 = 4 \mf{g}_{ab} \mf{q}_2^a \mf{q}_2^b + 4 \mf{r}_1 \mf{r}_3 \quad &\Longrightarrow \quad \mf{r}_3 = - \frac{1}{4} \mf{r}_1 \mf{g}^{cd} \partial_c \mf{r}_1 \partial_d \mf{r}_1 \text{.}
\end{align}

Finally, from the last equation of \eqref{DefinitionhabGaugeInvariance}, along with \eqref{gauge_exp_1}, \eqref{gauge_exp_4}, and \eqref{gauge_coeff}, we have
\begin{align}
\label{gauge_final_1} \sigma^{-2} \rho^2 \ms{h}_{ab} &= \ms{g}_{cd} \partial_a x^c \partial_b x^d + \partial_a \rho \partial_b \rho \\
\notag &= \mf{g}_{ab} + ( \mf{r}_1^2 \bar{\mf{g}}_{ab} + \mf{q}_2^c \partial_c \mf{g}_{ab} + \mf{g}_{ad} \partial_b \mf{q}_2^d + \mf{g}_{bd} \partial_a \mf{q}_2^d ) \sigma^2 + \mc{O} ( \sigma^3 ) \\
\notag &\qquad + \partial_a \mf{r}_1 \partial_b \mf{r}_1 \, \sigma^2 + \mc{O} ( \sigma^3 )
\end{align}
The non-tensorial terms in \eqref{gauge_final_1} can be treated using \eqref{gauge_coeff}:
\begin{align*}
\mf{q}_2^c \partial_c \mf{g}_{ab} + \mf{g}_{ad} \partial_b \mf{q}_2^d + \mf{g}_{bd} \partial_a \mf{q}_2^d &= ( \partial_e \mf{g}_{ab} - \mf{g}_{ad} \mf{T}_{be}^d - \mf{g}_{bd} \mf{T}_{ae}^d ) \mf{q}_{2}^e + \mf{D}_{b} ( \mf{g}_{ad} \mf{q}_2^d ) + \mf{D}_{a} ( \mf{g}_{bd} \mf{q}_2^d ) \\
&= -\frac{1}{2} \mf{D}_b ( \mf{r}_1 \partial_a \mf{r}_1 ) - \frac{1}{2} \mf{D}_a ( \mf{r}_1 \partial_b \mf{r}_1 ) \\
&= - \mf{D}_a \mf{r}_1 \mf{D}_b \mf{r}_1  - \mf{r}_1 \mf{D}_{ab}^2 \mf{r}_1 \text{.}
\end{align*}
Combining \eqref{gauge_exp_1}, \eqref{gauge_final_1}, and the above yields
\begin{align*}
\ms{h}_{ab} &= \frac{ \ms{g}_{ab} + ( \mf{r}_1 \bar{\mf{g}}_{ab} - \mf{r}_1 \mf{D}_{ab}^2 \mf{r}_1 ) \sigma^2 + \mc{O} ( \sigma^3 ) }{ \mf{r}_1^2 \left[ 1 - \frac{1}{2} \mf{g}^{cd} \mf{D}_c \mf{r}_1 \mf{D}_d \mf{r}_1 \, \sigma^2 + \mc{O} ( \sigma^3 ) \right] } \\
&= \mf{r}_1 ^{-2} \mf{g}_{ab} + \left( \bar{\mf{g}}_{ab} - \mf{r}_1^{-1} \mf{D}_{ab}^2 \mf{r}_1 + \frac{1}{2} \mf{r}_{1}^{-2} \mf{g}^{cd} \mf{D}_c \mf{r}_1 \mf{D}_d \mf{r}_1 \mf{g}_{ab} \right) \sigma^2  + \mc{O} ( \sigma^3 ) \text{.}
\end{align*}
The desired \eqref{GaugeTransform} follows from the above after setting $\mf{r}_1 = \mf{a}^{-1}$.
\end{proof}

Using the transformations \eqref{GaugeTransform}, we can show that the GNCC is indeed gauge-invariant:

\begin{prop}[Gauge invariance] \label{Gaugeinvariance}
Let $( \mc{M}, g )$ be as above, and assume the two FG-gauges \eqref{gauges} and the comparison \eqref{rhosigma}.
In addition, let $\mc{D}$ and $\mf{h}$ be as in Definition \ref{DefAdmissibleDomains1}.
Then, $\mc{D}$ satisfies the generalized null convexity condition with respect to $( \rho, \ms{g} )$-gauge if and only if it satisfies the generalized null convexity condition with respect to $( \sigma, \ms{h} )$-gauge.
\end{prop}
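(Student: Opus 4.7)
The plan is to find the correct conformal weight for $\eta$ and then verify, via a direct but careful computation involving the conformal transformation laws \eqref{GaugeTransform}, that the quantity $\mf{D}^2\eta - \eta\bar{\mf{g}}$ transforms as a pure rescaling when restricted to null directions. First, since the level sets of $f = \rho/\eta$ should describe the same hypersurface in either gauge, and since \eqref{rhosigma} combined with \eqref{gauge_coeff} yields $\rho = \mf{a}^{-1}\sigma + \mc{O}(\sigma^3)$ with $\mf{a} := \mf{r}_1^{-1}$, the natural guess is $\tilde{\eta} := \mf{a}\,\eta$. Under this ansatz $\tilde\eta$ clearly inherits from $\eta$ the positivity in $\mc{D}$ and the vanishing on $\partial\mc{D}$, since $\mf{a}>0$ is smooth up to the boundary.

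Second, I would recall the standard conformal transformation law for the Levi-Civita connection: if $\mf{h} = \mf{a}^2\mf{g}$ and $F$ is any scalar function, then
\begin{align*}
\hat{\mf{D}}^2 F(\mf{Z},\mf{Z}) = \mf{D}^2 F(\mf{Z},\mf{Z}) - 2\mf{a}^{-1}(\mf{Z}\mf{a})(\mf{Z}F) + \mf{a}^{-1}\mf{g}(\mf{D}^\sharp\mf{a},\mf{D}^\sharp F)\,\mf{g}(\mf{Z},\mf{Z})
\end{align*}
for any vector $\mf{Z}$, where $\hat{\mf{D}}$ denotes the Levi-Civita connection of $\mf{h}$. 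For $\mf{Z}$ that is $\mf{g}$-null (equivalently $\mf{h}$-null), the last term drops out. Applying this with $F = \tilde\eta = \mf{a}\eta$ and expanding $\mf{D}^2\tilde\eta(\mf{Z},\mf{Z})$ by the Leibniz rule, together with the transformation law \eqref{GaugeTransform} for $\bar{\mf{h}}$ (again using $\mf{g}(\mf{Z},\mf{Z})=0$ to kill the $\mf{g}$-trace term), one obtains after a short computation that all the inhomogeneous $\mf{a}$-terms cancel and one is left with the clean identity
\begin{align*}
(\hat{\mf{D}}^2\tilde\eta - \tilde\eta\,\bar{\mf{h}})(\mf{Z},\mf{Z}) = \mf{a}\,(\mf{D}^2\eta - \eta\,\bar{\mf{g}})(\mf{Z},\mf{Z})
\end{align*}
for every $\mf{g}$-null $\mf{Z}$. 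This cancellation is the heart of the argument and arguably the only nontrivial step; everything else is bookkeeping.

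Third, given this identity, gauge invariance follows immediately. If $\eta$ realizes the GNCC on $\mc{D}$ in the $(\rho,\ms{g})$-gauge with constant $c>0$ and reference metric $\mf{p}$, then
\begin{align*}
(\hat{\mf{D}}^2\tilde\eta - \tilde\eta\,\bar{\mf{h}})(\mf{Z},\mf{Z}) = \mf{a}\,(\mf{D}^2\eta - \eta\,\bar{\mf{g}})(\mf{Z},\mf{Z}) > c\,\mf{a}\,\eta\,\mf{p}(\mf{Z},\mf{Z}) = c\,\tilde\eta\,\mf{p}(\mf{Z},\mf{Z})
\end{align*}
on all $\mf{h}$-null $\mf{Z}\in T\mc{D}$. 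Since $\mf{p}$ in Definition \ref{DefAdmissibleDomains1} may be chosen freely and $\mf{a}$ is bounded away from zero and infinity on the compact set $\bar{\mc{D}}$, the resulting inequality is precisely the GNCC in the $(\sigma,\ms{h})$-gauge (with $\tilde\eta\in C^4(\bar{\mc{D}})$ since $\mf{a}$ is smooth). The reverse direction follows identically by interchanging the two gauges, since gauge transformations are invertible and $\mf{a}^{-1}$ plays the role of the new conformal factor. The main subtlety, and the one step I would verify with care, is the algebraic cancellation in the middle paragraph; once it is in hand the rest of the argument is routine.
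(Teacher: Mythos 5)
Your proposal is correct and follows essentially the same route as the paper's proof: both define the conformally weighted candidate $\tilde\eta := \mf{a}\,\eta$ (the paper calls it $\xi$), apply the same conformal transformation law for the Hessian restricted to null directions, invoke \eqref{GaugeTransform}, and observe the identical cancellation yielding $(\hat{\mf{D}}^2\tilde\eta - \tilde\eta\,\bar{\mf{h}})(\mf{Z},\mf{Z}) = \mf{a}\,(\mf{D}^2\eta - \eta\,\bar{\mf{g}})(\mf{Z},\mf{Z})$. The only differences are notational.
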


\begin{proof}
Suppose $\mc{D}$ satisfies the GNCC in the $( \rho, \ms{g} )$-gauge, so there exists $\eta \in C^4 ( \bar{\mc{D}} )$ with
\begin{align*} 
\begin{cases}
( \mf{D}^2 \eta - \eta \bar{\mf{g}} ) ( \mf{Y}, \mf{Y} ) > c \eta \, \mf{p} ( \mf{Y}, \mf{Y} ) &\text{in } \mc{D} \text{,} \\
\eta > 0 &\text{in } \mc{D} \text{,} \\
\eta = 0 &\text{on } \partial \mc{D} \text{,}
\end{cases}
\end{align*}
with $c > 0$, and with $\mf{g}$-null $\mf{Y} \in T \mc{D}$.
Now, let $\mf{a} := \mf{r}^{-1} > 0$, and let $\xi := \mf{a} \eta$.
Furthermore, we denote by $\mf{O}$ the Levi-Civita connection with respect to $\mf{h}$.

Clearly, $\xi > 0$ on $\mc{D}$ and $\xi = 0$ on $\partial \mc{D}$ by definition.
Let $\mf{Z} \in T \mc{D}$ be $\mf{h}$-null; since $\mf{h} = \mf{a}^2 \mf{g}$ by \eqref{GaugeTransform}, then $\mf{Z}$ is $\mf{g}$-null as well.
Moreover, using standard formulas for conformal transformation to relate $\mf{D}$ and $\mf{O}$, along with \eqref{GaugeTransform}, we compute that
\begin{align*}
\mf{O}^2 \xi ( \mf{Z}, \mf{Z} ) &= \mf{D}^2 \xi ( \mf{Z}, \mf{Z} ) - \mf{a}^{-1} [ 2 \mf{D}_{ \mf{Z} } \mf{a} \mf{D}_{ \mf{Z} } \xi - \mf{g} ( \mf{Z}, \mf{Z} ) \mf{g} ( \mf{D}^\sharp \mf{a}, \mf{D}^\sharp \xi ) ] \\
&= \mf{D}^2 ( \mf{a} \eta ) ( \mf{Z}, \mf{Z} ) - 2 \mf{a}^{-1} \mf{D}_{ \mf{Z} } \mf{a} \mf{D}_{ \mf{Z} }( \mf{a} \eta ) \\
&= \mf{a} \mf{D}^2 \eta ( \mf{Z}, \mf{Z} ) + \eta \mf{D}^2 \mf{a} ( \mf{Z}, \mf{Z} ) - 2 \eta \mf{a}^{-1} ( \mf{D}_{ \mf{Z} } \mf{a} )^2 \text{,} \\
\xi \, \bar{\mf{h}} ( \mf{Z}, \mf{Z} ) &= \mf{a} \eta \, \bar{\mf{g}} ( \mf{Z}, \mf{Z} ) + \eta \mf{D}^2 \mf{a} ( \mf{Z}, \mf{Z} ) - 2 \eta \mf{a}^{-1} ( \mf{D}_{ \mf{Z} } \mf{a} )^2 \text{.}
\end{align*}
In particular, the above implies that
\begin{align*}
( \mf{O}^2 \xi - \xi \, \bar{\mf{h}} ) ( \mf{Z}, \mf{Z} ) &= \mf{a} ( \mf{D}^2 \eta - \eta \, \bar{\mf{g}} ) ( \mf{Z}, \mf{Z} ) > c \xi \, \mf{p} ( \mf{Z}, \mf{Z} ) \text{,}
\end{align*}
since $\mf{a} > 0$.
This shows that GNCC indeed holds with respect to the $( \sigma, \ms{h} )$-gauge.
Finally, the converse statement can be proved using a symmetric argument.
\end{proof}

\subsection{Examples and Counterexamples}

In this subsection, we study some conditions under which the GNCC either is satisfied or fails to hold.
We begin with counterexamples; for this, our first tool is to restrict \eqref{BVP} to individual null geodesics.

\begin{lemma}[Necessary size of $\bar{\mf{g}}$] \label{ComparisonPrinciple2}
Let $\mc{D} \subseteq \mc{I}$ be open and with compact closure, and assume the GNCC \eqref{BVP} holds on $\mc{D}$, for some $\eta$, $c$, $\mf{p}$.
In addition, let $\lambda: [ 0, \ell ] \longrightarrow \mc{I}$ be any $\mf{g}$-null geodesic on $\mc{I}$ that also satisfies the following conditions:
\begin{equation} \label{ComparisonPrincipleAss}
\lambda (0), \lambda ( \ell ) \in \partial \mc{D} \text{,} \qquad \lambda (s) \in \mc{D} \text{,} \quad 0 < s < \ell \text{.}
\end{equation}
Then, $\bar{\mf{g}}$ satisfies the following inequality along $\lambda$:
\begin{equation} \label{ComparisonPrincipleEq}
\sup_{ s \in [0, \ell] } [ - \bar{\mf{g}} ( \dot{\lambda} (s), \dot{\lambda} (s) ) ] > \frac{ \pi^2 }{ \ell^2 } \text{.}
\end{equation}
\end{lemma}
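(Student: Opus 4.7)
The plan is to reduce the inequality \eqref{ComparisonPrincipleEq} to a one-dimensional Sturm-type comparison for the pullback of $\eta$ along the null geodesic $\lambda$. Define
\[
u: [0, \ell] \longrightarrow \mathbb{R} \text{,} \qquad u(s) := \eta ( \lambda (s) ) \text{.}
\]
From the hypotheses \eqref{ComparisonPrincipleAss} together with the boundary conditions in the GNCC \eqref{BVP}, we have $u(0) = u(\ell) = 0$ and $u(s) > 0$ for every $s \in (0, \ell)$. Moreover, since $\lambda$ is a $\mf{g}$-geodesic, so that $\mf{D}_{ \dot{\lambda} } \dot{\lambda} = 0$, a direct computation gives
\[
u''(s) = \mf{D}^2 \eta ( \dot{\lambda} (s), \dot{\lambda} (s) ) \text{,} \qquad s \in (0, \ell) \text{.}
\]

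Next, since $\dot{\lambda}(s)$ is a $\mf{g}$-null vector tangent to $\mc{D}$ for $s \in (0, \ell)$, we can apply the first inequality of the GNCC \eqref{BVP}. Dropping the (strictly positive) term $c \eta \, \mf{p} ( \dot{\lambda}, \dot{\lambda} )$, we obtain
\[
u''(s) > \eta ( \lambda (s) ) \cdot \bar{\mf{g}} ( \dot{\lambda}(s), \dot{\lambda}(s) ) \text{,} \qquad s \in (0, \ell) \text{.}
\]
Now set $M := \sup_{ s \in [0, \ell] } [ - \bar{\mf{g}} ( \dot{\lambda} (s), \dot{\lambda} (s) ) ]$. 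Then $\bar{\mf{g}} ( \dot{\lambda}, \dot{\lambda} ) \geq -M$ along $\lambda$, and combining this with $u \geq 0$ gives the differential inequality
\[
u''(s) + M u(s) > 0 \text{,} \qquad s \in ( 0, \ell ) \text{.}
\]

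The final step is a Sturm comparison with the first Dirichlet eigenfunction of $- d^2 / ds^2$ on $[0, \ell]$, namely $v(s) := \sin ( \pi s / \ell )$, which satisfies $v'' + ( \pi^2 / \ell^2 ) v = 0$, with $v(0) = v(\ell) = 0$ and $v > 0$ on $(0, \ell)$. Multiplying the above inequality by $v$, integrating over $[0, \ell]$, and applying integration by parts twice (the boundary terms all vanish thanks to $u(0) = u(\ell) = 0$ and $v(0) = v(\ell) = 0$), we obtain
\[
\left( M - \frac{ \pi^2 }{ \ell^2 } \right) \int_0^\ell u(s) v(s) \, ds > 0 \text{.}
\]
Since $u, v > 0$ on $(0, \ell)$, the integral is strictly positive, so $M > \pi^2 / \ell^2$, which is precisely \eqref{ComparisonPrincipleEq}.

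The argument is essentially routine once the correct comparison ODE is identified; the main subtle point is simply to verify that the strict inequality in the GNCC survives the integration-by-parts step and yields a strict conclusion, which follows because the GNCC provides a strictly positive slack term $c \eta \, \mf{p} ( \dot{\lambda}, \dot{\lambda} )$ that we are free to discard. I do not anticipate any serious obstacle beyond this bookkeeping.
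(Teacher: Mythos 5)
Your proof is correct, and it takes a genuinely cleaner route than the paper's. Both arguments reduce the statement to a Sturm-type comparison for $u(s)=\eta(\lambda(s))$ along the null geodesic, but the paper argues by contradiction with a pointwise Wronskian comparison: it first extracts a constant $p>0$ with $c\,\mf{p}(\dot{\lambda},\dot{\lambda})\geq p^2$, compares against the shifted function $\varphi(s)=\sin(\sqrt{\pi^2/\ell^2-p^2}\,s)$ (chosen so that $\varphi(\ell)>0$), and shows $(\theta/\varphi)'>0$ to force $\theta(\ell)>0$, contradicting $\theta(\ell)=0$. You instead discard the positive slack $c\,\eta\,\mf{p}(\dot{\lambda},\dot{\lambda})$ at the outset, obtain $u''+Mu>0$ with $M:=\sup[-\bar{\mf{g}}(\dot{\lambda},\dot{\lambda})]$, and run the integrated (weak-form) comparison against the first Dirichlet eigenfunction $v(s)=\sin(\pi s/\ell)$; the strictness survives because the integrand $u''v+Muv$ is continuous on $[0,\ell]$ and strictly positive on the open interval. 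This avoids both the auxiliary constant $p$ and the contradiction framing, and it more transparently explains where the strict inequality comes from (the GNCC is strict in $\mc{D}$). One small point worth making explicit: since $\eta\in C^4(\bar{\mc{D}})$ by Definition~\ref{DefAdmissibleDomains1}, $u$ is $C^2$ up to $s=0,\ell$, so the two integrations by parts and the vanishing of the boundary terms are fully justified.
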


\begin{proof}
For convenience, we define the following along $\lambda$:
\begin{equation} \label{Comparison_setup_1}
\theta (s) := \eta ( \lambda (s) ) \text{,} \qquad \mf{c} (s) := \bar{\mf{g}} ( \dot{\lambda} (s), \dot{\lambda} (s) ) \text{,} \qquad 0 \leq s \leq \ell \text{.}
\end{equation}
Observe that by compactness, there is some constant $0 < p < \frac{\pi}{\ell}$ such that
\begin{equation} \label{Comparison_setup_2}
c \, \mf{p} ( \dot{\lambda} (s), \dot{\lambda} (s) ) \geq p^2 \text{,} \qquad 0 \leq s \leq \ell \text{.}
\end{equation}
In addition, we consider the quantity
\begin{equation} \label{Comparison_setup_3}
\varphi (s) := \sin \left( \sqrt{ \frac{ \pi^2 }{ \ell^2 } - p^2 } \cdot s \right) \text{,} \qquad 0 \leq s \leq \ell \text{.}
\end{equation}
Then, \eqref{BVP} and \eqref{Comparison_setup_1}--\eqref{Comparison_setup_3} imply that the following hold:
\begin{align} \label{BVP_geodesic}
\begin{dcases}
\ddot{\theta} - ( \mf{c} + p^2 ) \theta > 0 &\quad \text{on $[0, \ell]$,} \\
\theta > 0 &\quad \text{on $(0, \ell)$,} \\
\theta (0) = \theta ( \ell ) = 0 \text{,}
\end{dcases} \qquad
\begin{dcases}
\ddot{\varphi} + \left( \frac{ \pi^2 }{ \ell^2 } - p^2 \right) \varphi = 0 &\quad \text{on $[0, \ell]$,} \\
\varphi > 0 &\quad \text{on $(0, \ell]$,} \\
\varphi (0) = 0 \text{,} \quad \varphi (\ell) > 0 \text{.}
\end{dcases}
\end{align}

Assume, for a contradiction, that the desired \eqref{ComparisonPrincipleEq} does not hold, that is,
\begin{equation} \label{eql.ComparisonPrinciple2_0}
- \mf{c} (s) \leq \frac{ \pi^2 }{ \ell^2 } \text{,} \qquad 0 \leq s \leq \ell \text{.}
\end{equation}
We now apply a Sturm comparison argument---from \eqref{BVP_geodesic}, the assumption \eqref{eql.ComparisonPrinciple2_0}, and two integrations by parts, we obtain for each $0 < s < \ell$ that
\begin{align*}
0 &< \int_0^s [ \ddot{\theta} - \mf{c} \, \theta - p^2 \, \theta ] \varphi - \int_0^s \theta \left( \ddot{\varphi} + \frac{ \pi^2 }{ \ell^2 } \, \varphi - p^2 \, \varphi \right) \\
&\leq \dot{\theta} (s) \varphi (s) - \theta (s) \dot{\varphi} (s) - \int_0^s \left( \mf{c} - \frac{ \pi^2 }{ \ell^2 } \right) \theta \varphi \\
&\leq \varphi^2 (s) \left( \frac{ \theta }{ \varphi } \right)' (s) \text{.}
\end{align*}

Dividing the above by $\varphi^2 (s) > 0$ and integrating from a fixed $0 < s_0 < \ell$ yields
\[
\frac{ \theta (s) }{ \varphi (s) } > \frac{ \theta ( s_0 ) }{ \varphi (s_0) } \text{,} \quad s_0 < s < \ell \text{,} \qquad \theta (\ell) \geq \frac{ \theta ( s_0 ) }{ \varphi (s_0) } \cdot \varphi (\ell) > 0 \text{,}
\]
where the second part follows from taking $s \nearrow \ell$.
On the other hand, since $\lambda (\ell) \in \partial \mc{D}$, then \eqref{Comparison_setup_1} implies $\theta (\ell) = 0$, contradicting the above.
As a result, \eqref{ComparisonPrincipleEq} must hold.
\end{proof}

\begin{remark}
An equivalent formulation of Lemma \ref{ComparisonPrinciple2} is that if, in the same setting,
\[
- \bar{\mf{g}} ( \dot{\lambda} (s), \dot{\lambda} (s) ) \leq B^2 \text{,} \qquad 0 \leq s \leq \ell
\]
for some constant $B > 0$, then $\ell > \frac{\pi}{B}$.
\end{remark}

Lemma \ref{ComparisonPrinciple2} is useful for identifying situations in which the GNCC cannot hold:

\begin{cor}[Failure of GNCC] \label{NoGNCC}
Suppose $- \bar{\mf{g}} ( \mf{X}, \mf{X} ) \leq 0$ for all vectors $\mf{X} \in T \mc{I}$ satisfying $\mf{g} ( \mf{X}, \mf{X} ) = 0$.
Then, the GNCC is not satisfied by any domain $\mc{D} \subseteq \mc{I}$.
\end{cor}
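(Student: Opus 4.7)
The plan is to argue by contradiction using a direct maximum principle applied to the witnessing function $\eta$, which turns out to be cleaner than routing the argument through the Sturm-comparison Lemma~\ref{ComparisonPrinciple2}.

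Suppose, toward contradiction, that some domain $\mc{D} \subseteq \mc{I}$ satisfies the GNCC, so that there exist $\eta \in C^4(\bar{\mc{D}})$, a constant $c > 0$, and a background Riemannian metric $\mf{p}$ verifying \eqref{BVP}. Since $\bar{\mc{D}}$ is compact and $\eta$ is continuous with $\eta > 0$ on $\mc{D}$ and $\eta \equiv 0$ on $\partial\mc{D}$, the function $\eta$ attains a strictly positive maximum at some interior point $q \in \mc{D}$.

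At this interior maximum, standard manifold calculus yields $\mf{D}\eta(q) = 0$ together with the fact that the Hessian $\mf{D}^2 \eta(q)$ is negative semidefinite as a symmetric bilinear form on $T_q \mc{I}$ (differentiate $\eta$ twice along any curve through $q$ tangent to a given $\mf{X}$ and discard the acceleration contribution using $\mf{D}\eta(q) = 0$). Since $\mf{g}$ has Lorentzian signature and $\dim\mc{I} = n \geq 3$, one can pick a non-zero $\mf{g}$-null vector $\mf{Z} \in T_q\mc{I}$. Evaluating the first line of \eqref{BVP} at the point $q$ and the vector $\mf{Z}$, the right-hand side $c\,\eta(q)\,\mf{p}(\mf{Z},\mf{Z})$ is strictly positive (using $c > 0$, $\eta(q) > 0$, and the Riemannian character of $\mf{p}$), while the left-hand side splits as
\[
\mf{D}^2\eta(q)(\mf{Z},\mf{Z}) \;+\; \eta(q)\bigl[-\bar{\mf{g}}(\mf{Z},\mf{Z})\bigr],
\]
in which the first summand is $\leq 0$ by negative semidefiniteness and the second summand is $\leq 0$ by the standing hypothesis of the corollary together with $\eta(q) > 0$. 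This directly contradicts the strict inequality in \eqref{BVP}, completing the proof.

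There is essentially no substantive obstacle; the only minor verification is the existence of a non-zero $\mf{g}$-null direction at $q$, which is immediate from Lorentzian signature. One could instead try to deduce the corollary from Lemma~\ref{ComparisonPrinciple2} by producing a $\mf{g}$-null geodesic $\lambda: [0,\ell] \to \bar{\mc{D}}$ with endpoints on $\partial\mc{D}$ and interior in $\mc{D}$---the Sturm bound $\sup_s[-\bar{\mf{g}}(\dot\lambda,\dot\lambda)] > \pi^2/\ell^2 > 0$ would then conflict with the hypothesis $-\bar{\mf{g}} \leq 0$ on null vectors---but guaranteeing such a geodesic exists forces one to grapple with potential Lorentzian geodesic incompleteness of $\mc{I}$, which the maximum-principle argument above bypasses entirely.
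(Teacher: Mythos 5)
Your proof is correct, and it takes a genuinely different route from the paper. The paper's proof applies Lemma~\ref{ComparisonPrinciple2} to a $\mf{g}$-null geodesic $\lambda$ through $\mc{D}$, deducing from the Sturm bound \eqref{ComparisonPrincipleEq} that $-\bar{\mf{g}}(\dot\lambda,\dot\lambda)$ must somewhere be strictly positive, contradicting the hypothesis. Your proof instead applies a maximum principle directly to the witnessing function $\eta$: at its interior maximum $q$, the covariant Hessian $\mf{D}^2\eta(q)$ equals the coordinate second-derivative matrix (since $\mf{D}\eta(q)=0$) and is therefore negative semidefinite, so both terms on the left of the first line of \eqref{BVP} are nonpositive along any non-zero $\mf{g}$-null $\mf{Z}$, violating the strict inequality.

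The comparison you draw is fair. The paper's one-line proof leans on producing a $\mf{g}$-null geodesic segment with both endpoints on $\partial\mc{D}$ and interior in $\mc{D}$; establishing that such a segment exists (starting from any $q\in\mc{D}$ and running the geodesic until it exits $\bar{\mc{D}}$) is intuitively plausible but does require a small continuation/compactness argument that the paper elides. Your maximum-principle argument bypasses this entirely, at the cost of not yielding the quantitative $\ell > \pi/B$ refinement that the Sturm route also gives via Lemma~\ref{ComparisonPrinciple2} and its remark. Both proofs are valid; yours is more self-contained for the specific qualitative statement of Corollary~\ref{NoGNCC}.
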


\begin{proof}
Applying Lemma \ref{ComparisonPrinciple2} to any $\mf{g}$-null geodesic $\lambda$ passing through $\mc{D}$ yields a contradiction, since \eqref{ComparisonPrincipleEq} implies that $- \bar{\mf{g}} ( \dot{\lambda}, \dot{\lambda} )$ must be positive somewhere.
\end{proof}

Recall (see \cite{2020arXiv200807396S}) that when $( \mc{M}, g )$ is Einstein-vacuum,
\[
\operatorname{Ric} (g) = \Lambda g \text{,} \qquad \Lambda := - \frac{n(n-1)}{2} < 0 \text{,}
\]
and has boundary dimension $n \geq 3$, then one also has
\[
- ( n - 2 ) \bar{\mf{g}} = \mf{Ric} ( \mf{g} ) \text{,}
\]
with $\mf{Ric}$ being the Ricci curvature for $\mf{g}$.
As a result, Corollary \ref{NoGNCC} implies that \emph{in vacuum spacetimes, no conformal boundary that is non-positively curved in null directions},
\[
\mf{Ric} ( \mf{X}, \mf{X} ) \leq 0 \text{,} \qquad \mf{X} \in T \mc{I} \text{,} \quad \mf{g} ( \mf{X}, \mf{X} ) = 0 \text{,}
\]
\emph{can have a subdomain $\mc{D} \subseteq \mc{I}$ satisfying the GNCC}.
In particular, the above applies to planar and toric Schwarzschild-AdS spacetimes (which satisfy $\mf{Ric} = 0$):

\begin{cor}[Counterexamples to GNCC] \label{NoGNCC_pAdS}
Given any planar or toric Schwarzschild-AdS spacetime, no subdomain of its conformal boundary can satisfy the GNCC.
\end{cor}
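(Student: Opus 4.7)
The plan is to reduce the statement directly to Corollary \ref{NoGNCC} via the vacuum identity $-(n-2)\bar{\mf{g}} = \mf{Ric}(\mf{g})$ recalled just before the corollary. Since planar and toric Schwarzschild-AdS are both solutions of the Einstein-vacuum equations with $\Lambda = -n(n-1)/2$, this relation applies, so it suffices to show that the boundary Ricci curvature $\mf{Ric}(\mf{g})$ vanishes identically on $\mc{I}$.

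First I would identify the conformal boundary geometry. In the planar Schwarzschild-AdS case the conformal boundary is $(\R^n, \mf{g}_{\text{Mink}})$, and in the toric case it is a flat quotient $(\R \times \mathbb{T}^{n-1}, \mf{g}_{\text{Mink}})$; in both cases the induced boundary metric $\mf{g}$ is flat Minkowski. This can be read off from the standard Fefferman-Graham form of these metrics (see for instance the computations referenced in \cite{2020arXiv200807396S}). In particular $\mf{Ric}(\mf{g}) = 0$ identically on $\mc{I}$.

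Combining this with $-(n-2)\bar{\mf{g}} = \mf{Ric}(\mf{g})$ and $n \geq 3$ gives $\bar{\mf{g}} \equiv 0$. Consequently, for every $\mf{X} \in T\mc{I}$ with $\mf{g}(\mf{X}, \mf{X}) = 0$, we have
\begin{equation*}
-\bar{\mf{g}}(\mf{X}, \mf{X}) = 0 \leq 0 \text{.}
\end{equation*}
This is precisely the hypothesis of Corollary \ref{NoGNCC}, which then yields that no subdomain $\mc{D} \subseteq \mc{I}$ can satisfy the GNCC, completing the proof.

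I do not anticipate any serious obstacle: the argument is a one-line reduction once the vanishing of $\bar{\mf{g}}$ is in place. The only point requiring any care is the verification that the planar/toric boundary metric is genuinely flat Minkowski in the Fefferman-Graham gauge used here (as opposed to some conformal rescaling); however, by the gauge invariance established in Proposition \ref{Gaugeinvariance}, one is free to work in whichever FG gauge is most convenient, and in the standard one the boundary metric is manifestly flat.
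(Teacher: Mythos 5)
Your proposal is correct and follows essentially the same route as the paper: the paper's proof of Corollary \ref{NoGNCC_pAdS} is precisely the discussion immediately preceding it, which invokes the vacuum identity $-(n-2)\bar{\mf{g}} = \mf{Ric}(\mf{g})$, observes that the planar/toric boundary is Ricci-flat, and then applies Corollary \ref{NoGNCC}. Your extra remark on gauge invariance is a reasonable clarification but is not needed, since the cited reference already exhibits a Fefferman--Graham gauge in which the boundary metric is manifestly flat.
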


Our next objective is to connect the GNCC to the null convexity criterion (abbreviated \emph{NCC}) of \cite{Arick3}.
This will allow us to generate some common examples for which the GNCC is satisfied.
First, let us recall a precise formulation of the NCC:

\begin{definition}[Null Convexity Criterion, \cite{Arick3}] \label{NCC}
Let $t$ be a time function on $\mc{I}$, satisfying
\[
K^{-1} \leq - g ( \mf{D}^\sharp t, \mf{D}^\sharp t ) \leq K \text{,} \qquad K \geq 1 \text{,}
\]
and assume that the level sets of $t$ are compact Cauchy hypersurfaces.
Then, we say that the \emph{null convexity criterion} holds on an open subset $\mc{D} \subseteq \mc{I}$, with associated constants $0 \leq B < C$, iff the following inequalities hold for any vector $\mf{Z} \in \mc{T} \mc{D}$ with $\mf{g} ( \mf{Z}, \mf{Z} ) = 0$:
\begin{equation} \label{NCC_ineq}
- \bar{\mf{g}} ( \mf{Z}, \mf{Z} ) \geq C^2 \cdot ( \mf{Z} t )^2 \text{,} \qquad | \mf{D}^2 t ( \mf{Z}, \mf{Z} ) | \leq 2 B \cdot ( \mf{Z} t )^2 \text{.}
\end{equation}
\end{definition}

\begin{remark}
The compactness assumptions in Definition \ref{NCC}, and elsewhere in this paper, are imposed to simplify discussions.
Otherwise, one encounters additional technical issues in proving unique continuation results for \eqref{KleinGordon}, due to integrability issues and the need for uniform bounds on the geometry.
For further discussions, see \cite{Arick2, Arick3}, which sidestep these issues by assuming compact support properties on their solutions of \eqref{KleinGordon}.
\end{remark}

Our main observation relating the GNCC and the NCC can be roughly characterized as the NCC implying the GNCC on a sufficiently large timespan:

\begin{prop}[NCC implies GNCC] \label{NCC_GNCC}
Let $t$ be a time function on $\mc{I}$ satisfying the same assumptions as in Definition \ref{NCC}, and consider a boundary domain of the form
\begin{equation} \label{D_t}
\mc{D}_{ t_-, t_+ } := \{ t_- < t < t_+ \} \subseteq \mc{I} \text{,} \qquad \inf_{ \mc{I} } t < t_- < t_+ < \sup_{ \mc{I} } t \text{.}
\end{equation}
Then, if the NCC holds on $\mc{D}_{ t_-, t_+ }$, with constants $0 \leq B < C$, and if
\footnote{Here, the value of $\tan^{-1}$ in \eqref{NCC_timespan} is chosen to lie in $[ \frac{\pi}{2}, \pi )$.}
\begin{equation} \label{NCC_timespan}
t_+ - t_- > \mc{T}_+ ( B, C ) := \frac{2}{ \sqrt{ C^2 - B^2 } } \, \tan^{-1} \left( - \frac{ \sqrt{ C^2 - B^2 } }{ B } \right) \text{,}
\end{equation}
then the GNCC is also satisfied on $\mc{D}_{ t_-, t_+ }$.
\end{prop}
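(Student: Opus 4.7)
The plan is to construct $\eta$ as a function of $t$ alone, i.e., $\eta := F \circ t$ for a suitable scalar function $F \in C^4([t_-, t_+])$ satisfying $F > 0$ on $(t_-, t_+)$ and $F(t_\pm) = 0$. A direct computation using $\mf{D}^2 \eta = F''(t) \, dt \otimes dt + F'(t) \, \mf{D}^2 t$ gives, for any $\mf{Z} \in T \mc{D}_{t_-, t_+}$,
\begin{align*}
( \mf{D}^2 \eta - \eta \, \bar{\mf{g}} ) ( \mf{Z}, \mf{Z} ) = F''(t) \, ( \mf{Z} t )^2 + F'(t) \, \mf{D}^2 t ( \mf{Z}, \mf{Z} ) - F(t) \, \bar{\mf{g}}( \mf{Z}, \mf{Z} ) \text{.}
\end{align*}
Applying the NCC \eqref{NCC_ineq} to a $\mf{g}$-null $\mf{Z}$ (and using $F \geq 0$) yields the pointwise lower bound
\begin{align*}
( \mf{D}^2 \eta - \eta \, \bar{\mf{g}} ) ( \mf{Z}, \mf{Z} ) \geq [ F''(t) - 2 B | F'(t) | + C^2 F(t) ] \cdot ( \mf{Z} t )^2 \text{.}
\end{align*}

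Next, the plan is to reduce the GNCC to a one-variable differential inequality on $F$. Since $\mf{D}^\sharp t$ is $\mf{g}$-timelike with $-\mf{g}( \mf{D}^\sharp t, \mf{D}^\sharp t ) \geq K^{-1}$, a $\mf{g}$-null vector $\mf{Z}$ cannot be $\mf{g}$-orthogonal to $\mf{D}^\sharp t$. Combined with the compactness of $\overline{\mc{D}_{t_-, t_+}}$, this yields a uniform $\kappa > 0$ such that $( \mf{Z} t )^2 \geq \kappa \, \mf{p}( \mf{Z}, \mf{Z} )$ for every $\mf{g}$-null $\mf{Z} \in T \overline{\mc{D}_{t_-, t_+}}$. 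Consequently, it suffices to produce $F$ with the boundary conditions above satisfying
\begin{align} \label{ProposalODE}
F''(t) - 2 B | F'(t) | + C^2 F(t) \geq c_0 F(t) \text{,} \qquad t \in [ t_-, t_+ ] \text{,}
\end{align}
for some $c_0 > 0$; then setting $c := \kappa c_0$ gives the GNCC constant.

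The heart of the argument is the construction of $F$. The function $\omega \mapsto \tfrac{2}{\omega} \tan^{-1}( -\omega / B )$, with the branch in $[ \pi/2, \pi )$, is strictly decreasing on $( 0, \sqrt{C^2 - B^2} ]$, mapping onto $[ \mc{T}_+(B, C), \infty )$. The hypothesis $t_+ - t_- > \mc{T}_+( B, C )$ thus lets one pick $\omega \in ( 0, \sqrt{C^2 - B^2} )$ with $\tfrac{2}{\omega} \tan^{-1}( -\omega / B ) = t_+ - t_-$. Setting $\epsilon := C^2 - B^2 - \omega^2 > 0$ and $t_* := ( t_- + t_+ ) / 2$, the candidate
\begin{align*}
F(t) := \begin{cases} \omega^{-1} \, e^{ B ( t - t_- ) } \sin( \omega ( t - t_- ) ) \text{,} & t_- \leq t \leq t_* \text{,} \\ \omega^{-1} \, e^{ -B ( t - t_+ ) } \sin( \omega ( t_+ - t ) ) \text{,} & t_* \leq t \leq t_+ \text{,} \end{cases}
\end{align*}
solves $F'' - 2 B F' + ( B^2 + \omega^2 ) F = 0$ with $F' \geq 0$ on the left piece and $F'' + 2 B F' + ( B^2 + \omega^2 ) F = 0$ with $F' \leq 0$ on the right piece, so $F'' - 2 B | F' | + C^2 F = \epsilon F$ on both pieces. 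The symmetric choice of $\omega$ and $t_*$ guarantees $F$, $F'$, $F''$ all match at $t_*$ (so $F \in C^2$), with $F( t_\pm ) = 0$ and $F > 0$ on $( t_-, t_+ )$.

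The main obstacle is the $C^4$ regularity at the turning point $t_*$: a direct computation shows the one-sided values of $F'''(t_*)$ differ by $4 B ( B^2 + \omega^2 ) F( t_* ) \neq 0$ when $B > 0$, so the piecewise $F$ is \emph{not} $C^3$ in general. To finish, the plan is to smooth $F$ in a small neighborhood $[ t_* - \delta, t_* + \delta ]$ by convolution with a mollifier (or by replacing $F$ there with a polynomial interpolant of degree $\geq 5$ matching $F$, $F'$, $F''$, $F'''$, $F^{(4)}$ at the endpoints of that interval). Because \eqref{ProposalODE} holds with a strict positive margin $\epsilon F( t_* ) > 0$ at $t_*$, a sufficiently $C^2$-small modification preserves $\tilde F'' - 2 B | \tilde F' | + C^2 \tilde F \geq (\epsilon / 2) \tilde F$ throughout, while leaving the boundary conditions and positivity of $F$ on $( t_-, t_+ )$ untouched, as the modification is confined to an interior neighborhood of $t_*$. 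Setting $\eta := \tilde F \circ t$ then verifies Definition \ref{DefAdmissibleDomains1} with constant $c := \kappa \epsilon / 2$.
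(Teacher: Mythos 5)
Your proof is correct and follows essentially the same route as the paper: both reduce to the scalar ODE inequality $F'' - 2B|F'| + C^2 F > 0$ along the flow of $t$, both construct the same piecewise exponential-times-sine solution to a damped harmonic oscillator (matching $F$, $F'$, $F''$ at the turning point), both observe it is only $C^2$ there, and both conclude by a small perturbation preserving the strict inequality. The only cosmetic differences are that you create slack by shrinking the frequency $\omega < \sqrt{C^2 - B^2}$ while keeping $B$ and $C$ fixed (where the paper instead relaxes both to intermediate constants $B < b < c < C$), and you make explicit the uniform bound $(\mf{Z} t)^2 \geq \kappa\, \mf{p}(\mf{Z}, \mf{Z})$ on $\mf{g}$-null $\mf{Z}$ that the paper leaves implicit in passing from $(\mf{Z} t)^2 \eta$ to $\mf{p}(\mf{Z}, \mf{Z}) \eta$.
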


\begin{proof}
Observe that there exist constants $B < b < c < C$ such that
\[
t_+ - t_- = \mc{T}_+ ( b, c ) > \mc{T}_+ ( B, C ) \text{.}
\]
By time translation, we can also assume without loss of generality $t_\pm = \pm \frac{1}{2} \mc{T}_+ ( b, c )$.
Let
\begin{equation} \label{NCC_GNCC_phi}
\varphi (s) := e^{ -b |s| } \, \sin \left( \frac{ \sqrt{ c^2 - b^2 } }{2} ( \mc{T}_+ ( b, c ) - 2 | s | ) \right) \text{,} \qquad s \in \R \text{.}
\end{equation}
From direct computations (see also \cite[Proposition 5.2]{Arick3}), we see that $\varphi$ satisfies
\begin{equation} \label{NCC_GNCC_ODE}
\begin{dcases}
\ddot{\varphi} (s) - 2 b \, \dot{\varphi} (s) + c^2 \varphi (s) = 0 \text{,} &\qquad s \leq 0 \text{,} \\
\ddot{\varphi} (s) + 2 b \, \dot{\varphi} (s) + c^2 \varphi (s) = 0 \text{,} &\qquad s \geq 0 \text{.}
\end{dcases}
\end{equation}

Next, we define on $\bar{\mc{D}}_{ t_-, t_+ }$ the function
\begin{equation} \label{NCC_GNCC_eta}
\eta := \varphi (t) \text{.}
\end{equation}
By \eqref{NCC_GNCC_phi} and \eqref{NCC_GNCC_ODE}, we see that $\eta > 0$ on $\mc{D}_{ t_-, t_+ }$, and that $\eta = 0$ on $\partial \mc{D}_{ t_-, t_+ }$.
Furthermore, given any $\mf{g}$-null vector $\mf{Z} \in T \mc{D}_{ t_-, t_+ }$, a direct computation yields that
\begin{align*}
\mf{D}^2 \eta ( \mf{Z}, \mf{Z} ) &= \mf{Z}^a \mf{Z}^b [ \ddot{\varphi} (t) \, \partial_a t \partial_b t + \dot{\varphi} (t) ( \partial_{ a b } t - \mf{T}^d_{ a b } \partial_d t ) ] \\
&= \ddot{\varphi} (t) \, ( \mf{Z} t )^2 + \mf{D}^2 t ( \mf{Z}, \mf{Z} ) \, \dot{\varphi} (t) \text{.}
\end{align*}
From the above and \eqref{NCC_GNCC_ODE}, we then obtain
\begin{align*}
( \mf{D}^2 \eta - \bar{\mf{g}} \, \eta ) ( \mf{Z}, \mf{Z} ) &= \ddot{\varphi} (t) \, ( \mf{Z} t )^2 + \mf{D}^2 t ( \mf{Z}, \mf{Z} ) \, \dot{\varphi} (t) - \bar{\mf{g}} ( \mf{Z}, \mf{Z} ) \, \varphi (t) \\
&= \begin{cases} [ \mf{D}^2 t ( \mf{Z}, \mf{Z} ) + 2 b ( \mf{Z} t )^2 ] \dot{\varphi} (t) + [ - \bar{\mf{g}} ( \mf{Z}, \mf{Z} ) - c^2 ( \mf{Z} t )^2 ] \varphi (t) \text{,} &\quad t \leq 0 \text{,} \\ [ \mf{D}^2 t ( \mf{Z}, \mf{Z} ) - 2 b ( \mf{Z} t )^2 ] \dot{\varphi} (t) + [ - \bar{\mf{g}} ( \mf{Z}, \mf{Z} ) - c^2 ( \mf{Z} t )^2 ] \varphi (t) \text{,} &\quad t \geq 0 \text{.} \end{cases}
\end{align*}

Note in particular \eqref{NCC_ineq} implies that in either case, both terms in the right-hand side of the above are positive on $\mc{D}_{ t_-, t_+ }$.
(Here, we used that $\dot{\varphi} \geq 0$ when $- \frac{1}{2} \mc{T}_+ ( b, c ) < s \leq 0$, and that $\dot{\varphi} \leq 0$ when $0 \leq s < \frac{1}{2} \mc{T}_+ ( b, c )$.)
As a result, we obtain that
\[
( \mf{D}^2 \eta - \bar{\mf{g}} \, \eta ) ( \mf{Z}, \mf{Z} ) \geq ( C^2 - c^2 ) ( \mf{Z} t )^2 \, \eta \text{,}
\]
which implies \eqref{BVP} indeed holds, with $\mc{D} := \mc{D}_{ t_-, t_+ }$ and $\eta$ as in \eqref{NCC_GNCC_eta}.

One final issue is that $\eta$ only lies in $C^2 ( \bar{\mc{D}}_{ t_-, t_+ } )$; see \cite[Proposition 5.2]{Arick3}.\footnote{In particular, $\mf{D}^3 \eta$ becomes discontinuous at $t = 0$.}
The proof is completed by perturbing $\eta$ to a smoother function in a way that preserves \eqref{BVP}.
\end{proof}

For example, the Kerr-AdS (and hence the Schwarzschild-AdS) spacetimes satisfy
\[
\mf{g} = - dt^2 + d \omega^2 \text{,} \qquad \bar{\mf{g}} := -\frac{1}{2} ( dt^2 + d \omega^2 ) \text{,}
\]
with $d \omega^2$ being the unit round metric on $\Sph^{n-1}$.
It follows that the NCC holds on the conformal boundary, with $( B, C ) = ( 0, 1 )$.
From the above, we can establish the following:

\begin{cor}[Examples of GNCC] \label{GNCC_AdS}
On any Kerr-AdS spacetime, the GNCC is satisfied on a timeslab $\mc{D}_{ t_-, t_+ } := \{ t_- < t < t_+ \} \subseteq \mc{I}$ if and only if $t_+ - t_- > \pi$.
\end{cor}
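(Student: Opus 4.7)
The plan is to prove both directions by invoking the two prior results in this section. For sufficiency (``$t_+ - t_- > \pi$ implies GNCC'') I will apply Proposition \ref{NCC_GNCC}, and for necessity (``GNCC implies $t_+ - t_- > \pi$'') I will apply Lemma \ref{ComparisonPrinciple2} to an explicit $\mf{g}$-null geodesic on the conformal boundary.

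For sufficiency, I first record that, as noted immediately before the corollary, the Kerr-AdS conformal boundary satisfies the NCC on all of $\mc{I} \approx \R \times \Sph^{n-1}$ with constants $(B, C) = (0, 1)$. Substituting these into the formula \eqref{NCC_timespan}, and using the convention in the footnote that $\tan^{-1}$ takes values in $[\pi/2, \pi)$, the limit $B \searrow 0$ yields $\tan^{-1}(-\infty) = \pi/2$, so
\[
\mc{T}_+(0, 1) = \frac{2}{\sqrt{1}} \cdot \frac{\pi}{2} = \pi \text{.}
\]
Hence Proposition \ref{NCC_GNCC} immediately gives the GNCC on $\mc{D}_{t_-, t_+}$ whenever $t_+ - t_- > \pi$.

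For necessity, suppose $t_+ - t_- \leq \pi$ and set $\ell := t_+ - t_-$. Since $\Sph^{n-1}$ has diameter $\pi$, I fix a unit-speed geodesic $\gamma : [0, \ell] \to \Sph^{n-1}$ and define $\lambda(s) := (t_- + s, \gamma(s))$. Because $\mf{g} = -dt^2 + d\omega^2$ has decoupled geodesic equations, $\lambda$ is a geodesic in $\mc{I}$; moreover
\[
\mf{g}(\dot{\lambda}, \dot{\lambda}) = -1 + |\dot{\gamma}|^2 = 0 \text{,}
\]
so $\lambda$ is $\mf{g}$-null. By construction $\lambda(0), \lambda(\ell) \in \partial \mc{D}_{t_-, t_+}$ and $\lambda(s) \in \mc{D}_{t_-, t_+}$ for $0 < s < \ell$, so the hypothesis \eqref{ComparisonPrincipleAss} of Lemma \ref{ComparisonPrinciple2} is met. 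Using $\bar{\mf{g}} = -\tfrac{1}{2}(dt^2 + d\omega^2)$, a direct computation yields $-\bar{\mf{g}}(\dot{\lambda}, \dot{\lambda}) = \tfrac{1}{2}(1 + 1) = 1$ along all of $\lambda$. If the GNCC held on $\mc{D}_{t_-, t_+}$, Lemma \ref{ComparisonPrinciple2} would force
\[
1 = \sup_{s \in [0, \ell]} [ -\bar{\mf{g}}(\dot{\lambda}(s), \dot{\lambda}(s)) ] > \frac{\pi^2}{\ell^2} \text{,}
\]
that is, $\ell > \pi$, contradicting $\ell \leq \pi$. Hence GNCC fails on $\mc{D}_{t_-, t_+}$ in this range.

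The only genuinely subtle point, and the one I would double-check most carefully, is the boundary value $B = 0$ in \eqref{NCC_timespan}: the formula formally has $B$ in a denominator, and one must invoke the prescribed range $[\pi/2, \pi)$ for $\tan^{-1}$ to recover $\mc{T}_+(0, 1) = \pi$ as a limit. Everything else is a direct application of Proposition \ref{NCC_GNCC} and Lemma \ref{ComparisonPrinciple2} to an explicit null geodesic on $\R \times \Sph^{n-1}$.
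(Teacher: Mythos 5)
Your proof is correct and follows the same route as the paper's (very brief) proof: Proposition \ref{NCC_GNCC} with $(B,C)=(0,1)$ for sufficiency, and Lemma \ref{ComparisonPrinciple2} applied to an explicit null geodesic on $\R\times\Sph^{n-1}$ for necessity. The paper merely names these two ingredients without carrying out the computations, so your filling in of the limiting value $\mc{T}_+(0,1)=\pi$ and the explicit geodesic $\lambda(s)=(t_-+s,\gamma(s))$ with $-\bar{\mf{g}}(\dot\lambda,\dot\lambda)\equiv 1$ is exactly what was intended.
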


\begin{proof}
That $t_+ - t_- > \pi$ implies the GNCC follows from Proposition \ref{NCC_GNCC}, in the special case $( B, C ) = ( 0, 1 )$.
The converse statement can be proved using Lemma \ref{ComparisonPrinciple2}.
\end{proof}

\subsection{Causal Diamonds}

We conclude this section with some examples of boundary domains that satisfy or violate the GNCC.
Here, we focus our attention on causal diamond domains, which are often considered in the physics literature.
Moreover, we separate the cases $n = 2$ and $n \geq 3$, as the conclusions are quite different in these two settings.

For simplicity, we restrict our attention to the case of Minkowski boundary geometry,
\begin{equation} \label{AdSmetrics}
\mc{I} := \R^n \text{,} \qquad \mf{g} := - dt^2 + d ( \omega^1 )^2 + \dots + d ( \omega^{n-1} )^2 \text{.}
\end{equation}
Though this particular setting is somewhat contrived, it does allow for some explicit formulas and computations.
It will be apparent from the upcoming discussions that similar qualitative results can also be derived for more general curved boundary geometries.
 
\subsubsection{The Case $n = 2$}

First, we study the case $n = 2$, that is,
\begin{equation} \label{AdSmetrics2d}
( \mc{I}, \mf{g} ) := ( \R^2, - dt^2 + d \omega^2 ) \text{.}
\end{equation}
For our boundary region, we fix any $\alpha > 0$, and we consider the causal diamond
\begin{equation} \label{D2d}
\mc{D}_\alpha := \{ (t, \omega) \in \mc{I} \mid -\alpha < t \pm \omega < \alpha \} \text{.}
\end{equation}
In particular, $\mc{D}_\alpha$ is the interior of the red diamond drawn in Figure \ref{Pic7cite}. 

Next, assume an arbitrary $\bar{\mf{g}}$ in our setting that satisfies the following on $\mc{I}$:
\begin{equation} \label{AdS_gbar_2d}
- \bar{\mf{g}} ( \mf{Z}, \mf{Z} ) \geq \beta^2 \text{,} \qquad \mf{Z} = \partial_t \pm \partial_\omega \text{,} \quad \beta > 0 \text{.}
\end{equation}
Consider now the function $\eta \in C^\infty ( \bar{\mc{D}}_\alpha )$ given by
\begin{equation} \label{eta_2d}
\eta (t, \omega) := \sin \left( \frac{ \pi (t - \omega + \alpha ) }{ 2 \alpha } \right) \sin \left( \frac{ \pi ( t + \omega + \alpha ) }{ 2 \alpha } \right) \text{.}
\end{equation}
By definition, $\eta$ is positive within $\mc{D}_\alpha$ and vanishes on $\partial \mc{D}_\alpha$, so the second and third properties of \eqref{BVP} hold.
For the remaining first part of \eqref{BVP}, it suffices to check this for $\mf{Z} := \partial_t \pm \partial_\omega$:
\begin{align*}
\mf{D}^2 \eta ( \mf{Z}, \mf{Z} ) |_{ ( t, \omega ) } &= \partial_{tt} \eta ( t, \omega ) \pm 2 \partial_{ t \omega } \eta ( t, \omega ) + \partial_{\omega \omega} \eta ( t, \omega ) \\
&= - \left( \frac{ \pi }{ \alpha } \right)^2 \sin \left( \frac{ \pi (t - \omega + \alpha ) }{ 2 \alpha } \right) \sin \left( \frac{ \pi ( t + \omega + \alpha ) }{ 2 \alpha } \right) \\
&= - \left( \frac{ \pi }{ \alpha } \right)^2 \eta ( t, \omega ) \text{.}
\end{align*}

From the above, we see that the first part of \eqref{BVP} holds, with $\eta$ as in \eqref{eta_2d}, if $\beta > \frac{ \pi }{ \alpha }$.
In other words, the causal diamond $\mc{D}_\alpha$ satisfies the GNCC as long as it is sufficiently large:

\begin{prop} \label{AdS_2d}
Assume the conformal boundary \eqref{AdSmetrics2d}, and assume the condition \eqref{AdS_gbar_2d} holds.
Then, the causal diamond $\mc{D}_\alpha$ from \eqref{D2d} satisfies the GNCC whenever $\alpha > \frac{ \pi }{ \beta }$.
\end{prop}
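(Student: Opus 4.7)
The plan is to adopt the function $\eta$ already written out in \eqref{eta_2d} as the candidate GNCC witness and to verify the three clauses of Definition \ref{DefAdmissibleDomains1} directly. The author has essentially carried out the key Hessian computation in the paragraph preceding the proposition, so the remaining work is largely organizational.

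First I would pass to double-null coordinates $u := t - \omega$ and $v := t + \omega$, in which $\mc{D}_\alpha$ becomes the open square $(-\alpha, \alpha)^2$ and $\eta$ factorizes as a product of two sines of the form $\sin ( \pi ( \cdot + \alpha ) / ( 2 \alpha ) )$. Each factor is strictly positive on the interval $(-\alpha, \alpha)$ and vanishes at both of its endpoints, which immediately yields $\eta > 0$ on $\mc{D}_\alpha$ and $\eta = 0$ on $\partial \mc{D}_\alpha$, settling the last two clauses of \eqref{BVP}.

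Next, for the differential inequality I would exploit the two-dimensionality of $\mc{I}$: every $\mf{g}$-null vector is a scalar multiple of one of $\mf{Z}_\pm := \partial_t \pm \partial_\omega$, and both sides of \eqref{BVP} scale quadratically in $\mf{Z}$, so it suffices to test the inequality on these two directions. Since $\mf{g}$ is Minkowskian its Christoffel symbols vanish and the Hessian collapses to ordinary partial derivatives; combined with $\partial_u = \tfrac{1}{2} \mf{Z}_-$ and $\partial_v = \tfrac{1}{2} \mf{Z}_+$, this reproduces the already displayed identity $\mf{D}^2 \eta ( \mf{Z}_\pm, \mf{Z}_\pm ) = - ( \pi / \alpha )^2 \eta$. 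Combined with the hypothesis \eqref{AdS_gbar_2d}, it gives
\[
( \mf{D}^2 \eta - \eta \, \bar{\mf{g}} )( \mf{Z}_\pm, \mf{Z}_\pm ) \;\geq\; \Bigl( \beta^2 - \frac{ \pi^2 }{ \alpha^2 } \Bigr) \eta,
\]
whose coefficient is strictly positive on $\mc{D}_\alpha$ precisely when $\alpha > \pi / \beta$.

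Finally, to extract the positive constant $c$ required in \eqref{BVP} uniformly over all $\mf{g}$-null $\mf{Z}$, I would combine the homogeneity just noted with compactness of $\bar{\mc{D}}_\alpha$ (to bound $\mf{p}( \mf{Z}_\pm, \mf{Z}_\pm )$ from above), invoking the remark after Proposition \ref{DefAdmissibleDomains0} that the choice of $\mf{p}$ is immaterial. There is no serious obstacle in the argument; the only subtlety worth flagging is the null-direction reduction, which genuinely uses $\dim \mc{I} = 2$ and will not survive the passage to the higher-dimensional setting treated next in the paper.
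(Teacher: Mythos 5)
Your proposal is correct and follows essentially the same route as the paper: you take the same witness function $\eta$ from \eqref{eta_2d}, verify the positivity and boundary-vanishing clauses, and perform the same Hessian computation $\mf{D}^2\eta(\mf{Z}_\pm,\mf{Z}_\pm) = -(\pi/\alpha)^2\eta$ to conclude. The only additions you make --- explicitly invoking quadratic homogeneity and the two-dimensionality to reduce to the two null directions $\mf{Z}_\pm$, and using compactness to extract the uniform constant $c$ --- are minor clarifications of steps the paper leaves implicit, not a different argument.
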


\begin{figure}[ht]
    \centering
  \includegraphics[width=0.65\textwidth]{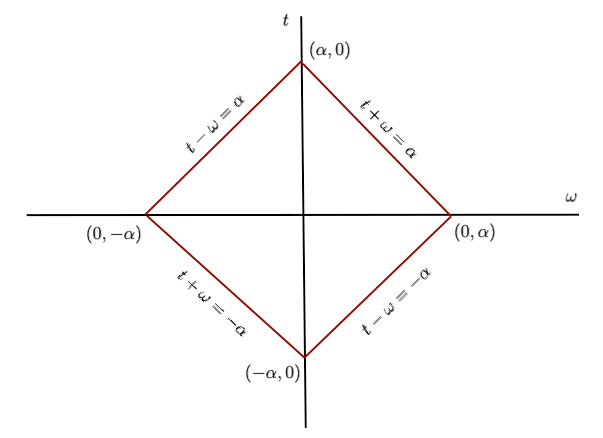}
    \caption{The causal diamond region $\mc{D}_\alpha$ in $(t,\omega)$-coordinates.}
    \label{Pic7cite}
\end{figure}

\subsubsection{The Cases $n \geq 3$}

Next, we consider higher-dimensional cases---i.e., the setting \eqref{AdSmetrics} with $n \geq 3$.
Here, the analogues of the causal diamonds \eqref{D2d} are given by
\begin{equation} \label{Dhd}
\mc{D}_\alpha := \{ (t, \omega) \in \mc{I} = \R \times \R^{n-1} \mid t + | \omega | < \alpha \text{, } t - | \omega | > -\alpha \} \text{,} \qquad \alpha > 0 \text{.}
\end{equation}

However, the situation here differs significantly from the $n = 2$ case.
The key observation is that there are many more null geodesics to consider in higher dimensions.
In particular, one can find null geodesics near the sphere $S_\alpha := \{ t = 0 \text{, } | \omega | = \alpha \} \subseteq \partial \mc{D}_\alpha$ that pass through $\mc{D}_\alpha$ but spend an arbitrarily small amount of time within $\mc{D}_\alpha$.

To be more explicit, one can consider the null geodesics
\begin{equation} \label{AdS_geod_hd}
\lambda (s) := ( \lambda^t (s), \lambda^\omega (s) ) := \left( s, \alpha - \delta, s, 0, \dots, 0 \right) \text{,} \qquad 0 < \delta < \alpha \text{.}
\end{equation}
Note that by taking $\delta$ as small as needed, one can have $\lambda$ satisfy
\[
\lambda ( \pm \ell ) \in \partial \mc{D}_\alpha \text{,} \qquad \lambda (s) \in \mc{D}_\alpha \text{,} \quad - \ell < s < \ell \text{,}
\]
for $\ell > 0$ arbitrarily small.
If $\mc{D}_\alpha$ were to satisfy the GCC, then Lemma \ref{ComparisonPrinciple2} would yield that some components of $- \bar{\mf{g}}$ must be positive and arbitrarily large near $S_\alpha$.
This produces a contradiction, and it follows that $\mc{D}_\alpha$ cannot satisfy the GCC:

\begin{prop} \label{AdS_hd}
Assume the conformal boundary \eqref{AdSmetrics}, with $n \geq 3$.
Then, the causal diamond $\mc{D}_\alpha$ from \eqref{Dhd} fails to satisfy the GNCC for any $\alpha > 0$.
\end{prop}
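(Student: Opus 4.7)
\medskip

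\noindent\textbf{Proof proposal for Proposition \ref{AdS_hd}.}
The plan is to exhibit a family of $\mf{g}$-null geodesics inside $\mc{D}_\alpha$ whose entry/exit times shrink to zero while their $\bar{\mf{g}}$-norms stay bounded, and then derive a contradiction from Lemma \ref{ComparisonPrinciple2}. Concretely, I would use the straight-line null geodesics $\lambda_\delta:\R\to\mc{I}$ from \eqref{AdS_geod_hd}, namely
\begin{equation*}
\lambda_\delta (s) := ( s, \alpha - \delta, s, 0, \dots, 0 ) \text{,} \qquad 0 < \delta < \alpha \text{,}
\end{equation*}
whose velocity $\dot\lambda_\delta = \partial_t + \partial_{\omega^1}$ is constant and $\mf{g}$-null in the Minkowski boundary \eqref{AdSmetrics}. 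Since the spatial coordinate dimension satisfies $n-1\geq 2$, the second coordinate $\omega^1 = \alpha-\delta$ can remain fixed and strictly less than $\alpha$ while the additional coordinate $\omega^2 = s$ carries the motion.

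Next, I would identify the interval on which $\lambda_\delta$ lies in $\mc{D}_\alpha$. A direct computation using $|\omega(s)|=\sqrt{(\alpha-\delta)^2+s^2}$ shows that
\begin{equation*}
\lambda_\delta (s) \in \mc{D}_\alpha \;\Longleftrightarrow\; |s| < \ell_\delta := \delta - \frac{\delta^2}{2\alpha} \text{,} \qquad \lambda_\delta(\pm\ell_\delta)\in\partial\mc{D}_\alpha\text{.}
\end{equation*}
After an affine reparameterization $s\mapsto s-\ell_\delta$, the restriction of $\lambda_\delta$ to $[0,2\ell_\delta]$ therefore satisfies the hypothesis \eqref{ComparisonPrincipleAss} of Lemma \ref{ComparisonPrinciple2} on the domain $\mc{D}_\alpha$. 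Assuming, for a contradiction, that $\mc{D}_\alpha$ satisfies the GNCC, the conclusion \eqref{ComparisonPrincipleEq} of that lemma forces
\begin{equation*}
\sup_{ |s| \leq \ell_\delta } \bigl[ - \bar{\mf{g}} ( \dot{\lambda}_\delta (s), \dot{\lambda}_\delta (s) ) \bigr] > \frac{ \pi^2 }{ ( 2 \ell_\delta )^2 } \text{.}
\end{equation*}

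To close the argument, I would observe that $\dot\lambda_\delta = \partial_t + \partial_{\omega^1}$ is independent of $s$ and $\delta$, and that as $\delta\searrow 0$ the image $\lambda_\delta([-\ell_\delta,\ell_\delta])$ stays inside the compact set $\bar{\mc{D}}_\alpha$ and in fact concentrates near the sphere $S_\alpha=\{t=0,\ |\omega|=\alpha\}$. Since $\bar{\mf{g}}$ is a fixed smooth tensor field on $\mc{I}$, the function $-\bar{\mf{g}}(\partial_t+\partial_{\omega^1},\partial_t+\partial_{\omega^1})$ is uniformly bounded on $\bar{\mc{D}}_\alpha$ by some constant $M$, independent of $\delta$. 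Combining this with the inequality above yields $M>\pi^2/(2\ell_\delta)^2$, which fails once $\delta$ (and hence $\ell_\delta$) is chosen small enough. This contradicts the existence of an $\eta$ witnessing the GNCC and completes the proof.

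The only step requiring any real care is the null-geodesic bookkeeping: one must verify that $\lambda_\delta$ is indeed a complete $\mf{g}$-null geodesic (immediate since Minkowski geodesics are affine lines with constant velocity and $\mf{g}(\dot\lambda_\delta,\dot\lambda_\delta)=-1+1=0$), and that the entry/exit times $\pm\ell_\delta$ are computed correctly so that Lemma \ref{ComparisonPrinciple2} applies verbatim. Everything else is a routine compactness/continuity argument. I do not expect any genuine obstacle; the content of the proposition is that the extra spatial dimensions in $n\geq 3$ allow null geodesics to graze the ``equator'' of $\partial\mc{D}_\alpha$ with arbitrarily short transit time, so no pointwise lower bound of the form required by \eqref{BVP} can hold.
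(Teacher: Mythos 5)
Your proposal is correct and follows essentially the same route as the paper: take the straight null lines from \eqref{AdS_geod_hd} grazing the equator $S_\alpha$, show their transit time through $\mc{D}_\alpha$ shrinks to zero as $\delta\searrow 0$, and contradict Lemma \ref{ComparisonPrinciple2} using the uniform boundedness of $-\bar{\mf{g}}$ applied to a fixed null direction on the compact set $\bar{\mc{D}}_\alpha$. The only slip is notational: since $\lambda_\delta(s)=(s,\alpha-\delta,s,0,\dots,0)$ has $\omega^1\equiv\alpha-\delta$ fixed and $\omega^2=s$, the velocity is $\dot\lambda_\delta=\partial_t+\partial_{\omega^2}$, not $\partial_t+\partial_{\omega^1}$; this does not affect the argument.
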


Consequently, no causal diamond of any size satisfies the GNCC in the higher-dimensional cases.
Furthermore, Proposition \ref{AdS_hd} continues to hold for more general conformal boundaries, since the above intuitions carry over to curved settings.

\begin{figure}[ht]
    \centering
  \includegraphics[width=0.6\textwidth]{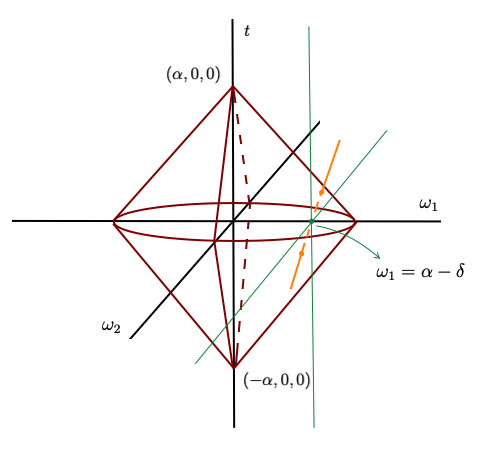}
    \caption{The region bounded by the two red cones is the causal diamond $\mc{D}_\alpha$ from \eqref{Dhd}, in the case $n = 3$.
    The null geodesic line $\lambda$ from \eqref{AdS_geod_hd}, which passes through $\mc{D}_\alpha$ for an arbitrarily short time, is drawn in orange.}
    \label{Pic12cite}
\end{figure}

\section{Characterization of Null Geodesics} \label{sec.geodesic}

In this section, we give a precise statement and proof of our second main result (informally stated in Theorem \ref{MainTheorem2}), which shows that the generalized null convexity criterion of Definition \ref{DefAdmissibleDomains1} governs the trajectories of (spacetime) null geodesics near the conformal boundary.

Roughly, the result states that \emph{if $\mc{D} \subseteq \mc{I}$ satisfies the GNCC, then any $g$-null geodesic that is close enough to the conformal boundary and that passes over $\mc{D}$ must either initiate from or terminate at $\mc{D}$}.
As discussed in Section \ref{sec.intro}, this property precisely rules out the standard geometric optics counterexamples, constructed in \cite{MR1363855}, to unique continuation for waves from $\mc{D}$.
(Later, in Section \ref{sec.carleman}, we prove the GNCC in fact implies unique continuation from $\mc{D}$.)

The precise statement of our second main result is as follows:

\begin{theorem}[Main result 2: Characterization of null geodesics] \label{Characterisation}
Let $( \mc{M}, g )$ be a spacetime satisfying Assumptions \ref{ass.manifold} and \ref{ass.aads}.
In addition, let $\Lambda: ( s_-, s_+ ) \longrightarrow \mc{M}$ be a complete null geodesic with respect to the metric $\rho^2 g = d \rho^2 + \ms{g}$, written as
\footnote{Since $g$ and $\rho^2 g$ have the same null geodesics, we can more usefully interpret $\Lambda$ as being a $g$-null geodesic.
However, we use $\rho^2 g$ in Theorem \ref{Characterisation}, since the $\rho^2 g$-affine parametrization of $\Lambda$ is more convenient.}
\[
\Lambda (s) := ( \rho (s), \lambda(s) ) \in ( 0, \rho_0 ) \times \mc{I} \text{,} \qquad s \in ( s_-, s_+ ) \text{,}
\]
with $s$ being an affine parameter for $\Lambda$.
In addition:
\begin{itemize}
\item Let $\mc{D} \subseteq \mc{I}$ be open with compact closure, and suppose $\mc{D}$ satisfies the GNCC.

\item Let $\epsilon_0 > 0$ be sufficiently small, depending on the metric $\ms{g}$ on $\mc{D}$.

\item Suppose there exists $s_0 \in ( s_-, s_+ )$ such that
\footnote{In other words, $\Lambda$ is both hovering over $\mc{D}$ and ``$\epsilon$-close" to $\mc{D}$.}
\begin{equation} \label{Characterisation_ass}
0 < \rho (s_0) < \epsilon_0 \text{,} \qquad | \dot{\rho} (s_0) | \lesssim \rho (s_0) \text{,} \qquad \lambda (s_0) \in \mc{D} \text{.}
\end{equation}
\end{itemize}
Then, at least one of the following holds:
\begin{itemize}
\item $\Lambda$ initiates from the conformal boundary within $\mc{D}$:
\begin{equation} \label{Characterisation_init}
\lim_{ s \searrow s_- } \rho (s) = 0 \text{,} \qquad \lim_{ s \searrow s_- } \lambda (s) \in \mc{D} \text{.}
\end{equation}

\item $\Lambda$ terminates at the conformal boundary within $\mc{D}$:
\begin{equation} \label{Characterisation_term}
\lim_{ s \nearrow s_+ } \rho (s) = 0 \text{,} \qquad \Hquad \lim_{ s \nearrow s_+ } \lambda (s) \in \mc{D} \text{.}
\end{equation}
\end{itemize}
\end{theorem}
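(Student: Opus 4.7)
My plan is to track two scalar functions along the null geodesic $\Lambda$: the radial coordinate $\rho(s)$, and the boundary quantity $\theta(s) := \eta(\lambda(s))$ obtained by restricting the GNCC function $\eta$ to the projection $\lambda$. The strict inequality in \eqref{BVP} translates into a comparison in which $\theta$ is forced to be ``more convex'' than $\rho$ along $\Lambda$, so that a Sturm-type Wronskian argument prevents $\theta$ from vanishing at both endpoints of the maximal sub-interval where $\rho > 0$. Coupling this with the pseudoconvexity of the level sets of $f = \rho / \eta$ from Lemma \ref{Lemmapi} and a continuity argument to control the nonlinear error terms, one forces $\rho \to 0$ at one of the endpoints $s_\pm$, with $\lambda$-limit strictly inside $\mc{D}$.

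\textbf{Geodesic ODEs and the GNCC inequality.} Direct computation of the Christoffel symbols of $\rho^2 g = d\rho^2 + \ms{g}$, combined with the Fefferman--Graham expansion $\ms{g} = \mf{g} + \rho^2 \bar{\mf{g}} + \mc{O}(\rho^3)$, yields along $\Lambda$ the equations
\begin{align*}
\ddot{\rho} &= \rho \, \bar{\mf{g}}(\dot{\lambda}, \dot{\lambda}) + \mc{O}(\rho^2) |\dot{\lambda}|^2 \text{,} \\
\ddot{\lambda}^c + \mf{T}^c_{ab} \dot{\lambda}^a \dot{\lambda}^b &= \mc{O}(\rho) |\dot{\lambda}|^2 + \mc{O}(\rho |\dot{\rho}|) |\dot{\lambda}| \text{,}
\end{align*}
together with the null constraint $\mf{g}(\dot{\lambda}, \dot{\lambda}) = -\dot{\rho}^2 + \mc{O}(\rho^2) |\dot{\lambda}|^2$, so that $\dot{\lambda}$ is approximately $\mf{g}$-null under the bootstrap $|\dot{\rho}| \lesssim \rho$. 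Differentiating $\theta$ twice and feeding in the GNCC \eqref{BVP} applied to the almost-null vector $\dot{\lambda}$ then produces
\begin{equation*}
\ddot{\theta} - \bar{\mf{g}}(\dot{\lambda}, \dot{\lambda}) \, \theta \; > \; c \, \mf{p}(\dot{\lambda}, \dot{\lambda}) \, \theta - \mc{O}(\rho + \dot{\rho}^2) |\dot{\lambda}|^2 \text{.}
\end{equation*}

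\textbf{Wronskian comparison and pseudoconvexity.} Set $W(s) := \rho \dot{\theta} - \dot{\rho} \theta$. Subtracting the two displays above and exploiting the GNCC surplus yields
\begin{equation*}
\dot{W} = \rho \ddot{\theta} - \ddot{\rho} \theta \; \geq \; c \, \mf{p}(\dot{\lambda}, \dot{\lambda}) \, \rho \theta - (\text{error}) \text{,}
\end{equation*}
so $W$ is strictly monotone increasing on any interval where $\rho, \theta > 0$ and the errors are dominated. Consequently, if $\theta$ vanished at two points $s_1 < s_2$ of a sub-interval of $(s_-, s_+)$ on which $\rho > 0$, then the sign relations $W(s_1^+) \geq 0$ and $W(s_2^-) \leq 0$ would contradict $W(s_1^+) < W(s_2^-)$; hence on the maximal sub-interval containing $s_0$ where $\lambda \in \mc{D}$, $\theta$ can vanish at most at one endpoint. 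Independently, Lemma \ref{Lemmapi} shows that the level sets of $f$ are pseudoconvex in a near-boundary region $\{f < \epsilon_1\}$. Translating this along $\Lambda$ via the conformal identity $\ddot{F} = \nabla^2 f(\dot{\Lambda}, \dot{\Lambda}) - 2 \dot{F} \cdot \dot{\rho}/\rho$ gives $\ddot{F} < 0$ at every critical point of $F := f \circ \Lambda$, so $F$ admits no interior local minimum in this regime.

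\textbf{Continuity argument and conclusion.} The heart of the proof is the continuity/bootstrap scheme ensuring that the three smallness conditions, $F < \epsilon_1$, $|\dot{\rho}| \lesssim \rho$, and $\dot{\lambda}$ nondegenerate of bounded $\mf{p}$-norm, all propagate from $s_0$ across the entire interval of existence, keeping the error terms subdominant. Once this is closed, the absence of interior minima of $F$ together with the strict upper bound $\ddot{F} < -\delta$ in a neighborhood of any critical point precludes $F$ from tending to a positive limit at either endpoint: a monotone convergence $F \to F_\infty > 0$ would force $\dot{F} \to 0$, whence $\ddot{F}$ would approach the negative critical-point value, contradicting the boundedness of $\dot{F}$. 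Therefore $F \to 0$ at both endpoints, which forces $\rho \to 0$ (since $\eta$ is bounded on $\bar{\mc{D}}$). The Wronskian restriction of the previous step then precludes $\theta$ from also vanishing at both endpoints, so at at least one of $s_-, s_+$ the limit of $\lambda$ lies strictly inside $\mc{D}$, yielding \eqref{Characterisation_init} or \eqref{Characterisation_term}. The principal obstacle is the bootstrap itself: the coupled quantities $(\rho, \dot{\rho}, |\dot{\lambda}|_{\mf{p}})$ interact through nonlinear terms in the geodesic equations, and one must ensure that the cumulative error remains subordinate to the GNCC gain $c \, \mf{p}(\dot{\lambda}, \dot{\lambda}) \, \rho \theta$ throughout the trajectory---in particular as $\rho$ itself approaches zero, and in the delicate case where $F(s_0)$ is not a priori small because $\lambda(s_0)$ lies close to $\partial \mc{D}$.
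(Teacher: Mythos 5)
Your overall blueprint---track $\theta := \eta \circ \lambda$ alongside $\rho$, set up the Wronskian $W = \rho\dot{\theta} - \dot{\rho}\theta$, and run a Sturm-type comparison using the strict GNCC surplus---is exactly the paper's strategy. However, you diverge from the paper in two ways, one productive and one that introduces gaps.

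Where you deviate: instead of concluding as the paper does (show directly via the Wronskian bootstrap that $\rho(s) < \theta(s)$ on the whole forward or backward interval, from which $\rho \to 0$ inside $\mc{D}$ follows immediately), you split the work into two separate pieces---(i) a Wronskian argument showing $\theta$ cannot vanish at both ends, and (ii) a pseudoconvexity argument via Lemma~\ref{Lemmapi} showing $F := f\circ\Lambda$ has no interior local minima, and from there that $F\to 0$ at both endpoints. The paper never invokes Lemma~\ref{Lemmapi} in this proof: the Wronskian argument alone does the job, since $W = -\theta^2 \dot{F}$ means $W > 0$ is already the statement that $F$ is decreasing, so the ``no interior minima'' structure you are proving via $\ddot{F} < 0$ is secretly encoded in the sign of $\dot{W}$. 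Your version is geometrically illuminating but duplicates information, and the step ``monotone convergence $F \to F_\infty > 0$ forces $\dot{F}\to 0$, whence $\ddot{F}$ approaches the negative critical-point value'' is not rigorous as stated: a Barbalat-type conclusion needs a uniform bound on $\ddot{F}$, and since $\ddot{F}$ contains the factor $\rho^{-1}\dot{\rho}\dot{F}$, such a bound is not free when $\rho$ could be approaching zero. Moreover you assert $F\to 0$ at \emph{both} endpoints, which is stronger than the theorem (and than what the bootstrap actually controls, since the geodesic may cross $\partial\mc{D}$ before $s_-$ or $s_+$ and leave the region where $\theta$ is even defined).

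The more substantive gap is in how you apply the GNCC. Definition~\eqref{BVP} holds only for exactly $\mf{g}$-null vectors $\mf{Z}$, whereas $\dot{\lambda}(s)$ satisfies $\mf{g}(\dot{\lambda},\dot{\lambda}) = -\dot{\rho}^2 + \mc{O}(\rho^2)|\dot{\lambda}|^2 \ne 0$. You write a lower-order error $\mc{O}(\rho + \dot{\rho}^2)|\dot{\lambda}|^2$ to account for this, but that needs to be justified: the GNCC itself gives no information at all about non-null directions. The clean way (the paper's way) is to replace \eqref{BVP} by the equivalent Proposition~\ref{DefAdmissibleDomains0} involving the scalar $\zeta$, which holds for \emph{all} tangent vectors, so that the near-null quantity $\zeta\,\mf{g}(\dot{\lambda},\dot{\lambda}) = -\zeta\dot{\rho}^2 + \mc{O}(\rho^2)$ appears explicitly as a controllable error. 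Without this, your displayed ODE for $\ddot{\theta}$ is not a consequence of the hypotheses.

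Finally, you correctly flag the continuity/bootstrap as ``the heart of the proof'' but do not carry it out. The paper's proof depends on two technical devices you omit: the normalization $\eta(\lambda(s_0)) = \rho(s_0)$ (which turns the Wronskian sign at $s_0$ into the case dichotomy $\dot{\theta}(s_0)\lessgtr\dot{\rho}(s_0)$, i.e.\ which direction to run the argument), and the reparametrization $\theta(s) = \eta(\lambda((1+\delta)s))$, which is what lets the improved bootstrap conclusion $\rho < \theta$ strictly outrun the assumed $\rho < 2\theta$ and close the continuity argument. Without these, ``one must ensure the cumulative error remains subordinate'' remains a wish rather than a proof.
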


\subsection{Proof of Theorem \ref{Characterisation}}

Assume the hypotheses of Theorem \ref{Characterisation}.
The first step is to describe the behaviour of null geodesics near the conformal boundary:

\begin{lemma}[Null geodesic equations] \label{NullGeodesicEquation}
The following hold for each $s \in ( s_-, s_+ )$:
\begin{align} \label{NullGeodesicAsymp}
\ddot{\rho} (s) &= \rho(s) \cdot \bar{\mf{g}} ( \dot{\lambda} (s), \dot{\lambda}(s) ) + \rho^2 (s) \cdot \mc{O} (1) ( \dot{\lambda} (s), \dot{\lambda} (s) ) \text{,} \\
\notag ( \mf{D}_{ \dot{\lambda} } \dot{\lambda} ) (s) &= \rho^2 (s) \cdot \mc{O} (1) ( \dot{\lambda} (s), \dot{\lambda} (s) ) - \rho (s) \dot{\rho} (s) \cdot \mc{O} (1) ( \dot{\lambda} (s) ) \text{,} \\
\notag \dot{\rho}^2 (s) + \mf{g} ( \dot{\lambda} (s), \dot{\lambda} (s) ) &= \rho^2 (s) \cdot \mc{O} (1) ( \dot{\lambda} (s), \dot{\lambda} (s) ) \text{.}
\end{align}
\end{lemma}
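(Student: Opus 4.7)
The plan is to derive all three identities by working with respect to the conformally rescaled metric $\tilde{g} := \rho^2 g = d\rho^2 + \ms{g}$, which shares the same null geodesics as $g$ and for which $s$ is affine by assumption. In any compact coordinate system $(U, \varphi)$ on $\mc{I}$, the components of $\tilde{g}$ in $\varphi_\rho$-coordinates are simply $\tilde{g}_{\rho\rho} = 1$, $\tilde{g}_{\rho b} = 0$, and $\tilde{g}_{bc} = \ms{g}_{bc}$, so a short computation yields the Christoffel symbols
\[
\tilde{\Gamma}_{\rho\rho}^\mu = 0 \text{,} \qquad \tilde{\Gamma}_{\rho b}^\rho = 0 \text{,} \qquad \tilde{\Gamma}_{bc}^\rho = -\tfrac{1}{2} \mc{L}_\rho \ms{g}_{bc} \text{,} \qquad \tilde{\Gamma}_{\rho b}^c = \tfrac{1}{2} \ms{g}^{cd} \mc{L}_\rho \ms{g}_{bd} \text{,} \qquad \tilde{\Gamma}_{bc}^d = \ms{\Gamma}_{bc}^d \text{.}
\]

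The first identity then follows by writing out the $\rho$-component of the geodesic equation $\ddot{x}^\mu + \tilde{\Gamma}_{\alpha\beta}^\mu \dot{x}^\alpha \dot{x}^\beta = 0$, which gives $\ddot{\rho} = \tfrac{1}{2} \mc{L}_\rho \ms{g}(\dot{\lambda}, \dot{\lambda})$, and then invoking the expansion $\mc{L}_\rho \ms{g} = 2 \rho \, \bar{\mf{g}} + \mc{O}(\rho^2)$ from Assumption \ref{ass.aads}. The third identity is just the null condition $\tilde{g}(\dot{\Lambda}, \dot{\Lambda}) = \dot{\rho}^2 + \ms{g}(\dot{\lambda}, \dot{\lambda}) = 0$, combined with the expansion $\ms{g} = \mf{g} + \rho^2 \bar{\mf{g}} + \mc{O}(\rho^3)$, which converts the $\ms{g}$-norm to a $\mf{g}$-norm up to an $\mc{O}(\rho^2)$ error quadratic in $\dot{\lambda}$.

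For the second (most involved) identity, I would write out the $\lambda^a$-component of the geodesic equation,
\[
\ddot{\lambda}^a + \ms{\Gamma}_{bc}^a \dot{\lambda}^b \dot{\lambda}^c + \ms{g}^{ad} \mc{L}_\rho \ms{g}_{bd} \, \dot{\rho} \dot{\lambda}^b = 0 \text{,}
\]
and then convert from the vertical to the boundary connection via
\[
(\mf{D}_{\dot\lambda} \dot\lambda)^a = \ddot{\lambda}^a + \mf{T}_{bc}^a \dot{\lambda}^b \dot{\lambda}^c = (\mf{T}_{bc}^a - \ms{\Gamma}_{bc}^a) \dot{\lambda}^b \dot{\lambda}^c - \ms{g}^{ad} \mc{L}_\rho \ms{g}_{bd} \, \dot{\rho} \dot{\lambda}^b \text{.}
\]
The first right-hand term is $\mc{O}(\rho^2)(\dot\lambda, \dot\lambda)$, since $\ms{g}_{bc} - \mf{g}_{bc} = \mc{O}(\rho^2)$ in $C^2$ by Assumption \ref{ass.aads}, so the difference of Christoffel symbols is also $\mc{O}(\rho^2)$. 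The second term is $\rho \dot{\rho} \cdot \mc{O}(1)(\dot{\lambda})$, again from $\mc{L}_\rho \ms{g} = \mc{O}(\rho)$. Combining these produces the claimed identity.

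There is no serious obstacle here; the argument is essentially bookkeeping once the connection coefficients are in hand. The only point requiring care is the distinction between the vertical connection $\ms{D}$ (associated with $\ms{g}$ on each $\rho$-level set) and the boundary connection $\mf{D}$ (associated with $\mf{g}$), and ensuring that the error $\mf{T} - \ms{\Gamma}$ is controlled by the Fefferman--Graham expansions in \eqref{metricdefiniton2}, which holds in $C^3$ and hence is strong enough for the $C^0$-type $\mc{O}(\rho^2)$ bound needed here.
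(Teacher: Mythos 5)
Your proposal is correct and follows essentially the same route as the paper: compute the Christoffel symbols of $\rho^2 g = d\rho^2 + \ms{g}$ in $\varphi_\rho$-coordinates, write out the geodesic and null equations, then apply the Fefferman--Graham asymptotics. The only cosmetic difference is that for the second identity the paper passes through the vertical connection $\ms{D}_{\dot\lambda}\dot\lambda$ and then uses $\ms{D}_{\mf X}\mf Z = \mf D_{\mf X}\mf Z + \mc O(\rho^2)(\mf X,\mf Z)$, whereas you compare $\mf T^a_{bc}$ with $\ms\Gamma^a_{bc}$ directly --- the same computation, repackaged; note also that the Christoffel-difference bound genuinely needs a $C^1$-type estimate $\partial(\ms g - \mf g) = \mc O(\rho^2)$ (not merely $C^0$), which is indeed supplied by \eqref{metricdefiniton2}.
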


\begin{proof}
Fix an arbitrary coordinate system $( U, \varphi )$ of $\mc{I}$ along $\lambda$.
A direct computation shows that the Christoffel symbols $\hat{\Gamma}^\alpha_{ \mu \nu }$ for $\rho^2 g$ in $\varphi_\rho$-coordinates satisfy
\begin{align*}
\hat{\Gamma}_{\rho \rho}^\rho = 0 \text{,} \qquad \hat{\Gamma}_{\rho b}^\rho &= 0 \text{,} \qquad \hat{\Gamma}_{\rho \rho}^c = 0 \text{,} \\
\hat{\Gamma}_{b c}^\rho = -\frac{1}{2} \mc{L}_\rho \ms{g}_{bc} \text{,} \qquad \hat{\Gamma}_{\rho b}^c &= \frac{1}{2} \ms{g}^{c e} \mc{L}_\rho \ms{g}_{eb} \text{,} \qquad \hat{\Gamma}_{ab}^c = \ms{\Gamma}_{ab}^c \text{.}
\end{align*}
Then, the above and the geodesic equations imply
\begin{align}
\label{eql.NullGeodesicEquation_1} 0 &= \ddot{\rho} (s) + \Gamma_{\rho \rho}^\rho \cdot \dot{\rho}^2 (s) + 2 \hat{\Gamma}_{\rho b}^\rho \cdot \dot{\rho} (s) \dot{\lambda}^b (s) + \hat{\Gamma}_{b c}^\rho \cdot \dot{\lambda}^b (s) \dot{\lambda}^c (s) \\
\notag &= \ddot{\rho} (s) - \frac{1}{2} \mc{L}_\rho \ms{g} ( \dot{\lambda} (s), \dot{\lambda} (s) ) \text{,} \\
\notag 0 &= \ddot{\lambda}^c (s) + \hat{\Gamma}_{\rho \rho}^c \cdot \dot{\rho}^2 (s) + 2 \hat{\Gamma}_{\rho b}^c \cdot \dot{\rho} (s) \dot{\lambda}^b (s) + \hat{\Gamma}_{a b}^c \cdot \dot{\lambda}^a (s) \dot{\lambda}^b (s) \\
\notag &= ( \ms{D}_{ \dot{\lambda} } \dot{\lambda} )^c (s) + \dot{\rho} (s) \dot{\lambda}^b (s) \cdot \ms{g}^{ce} \mc{L}_\rho \ms{g}_{be} \text{.}
\end{align}
Furthermore, since $\Lambda$ is $\rho^2 g$-null, \eqref{metricdefiniton1} yields
\begin{equation} \label{eql.NullGeodesicEquation_2}
0 = \dot{\rho}^2 (s) + \ms{g} ( \dot{\lambda} (s), \dot{\lambda} (s) ) \text{.}
\end{equation}

Note that the asymptotics of Lemma \ref{LemmaMetric} imply
\begin{align*}
\ms{g}^{-1} &= \mf{g}^{-1} + \mc{O} ( \rho^2 ) \text{,} \qquad \mc{L}_\rho \ms{g} = 2 \rho \cdot \bar{\mf{g}} + \mc{O} ( \rho^2 ) \text{,} \qquad \ms{D}_{ \mf{X} } \mf{Z} = \mf{D}_{ \mf{X} } \mf{Z} + \mc{O} ( \rho^2 ) ( \mf{X}, \mf{Z} ) \text{,}
\end{align*}
for any boundary tangent vectors $\mf{X}, \mf{Z} \in T \mc{I}$.
The desired relations \eqref{NullGeodesicAsymp} now follow from \eqref{eql.NullGeodesicEquation_1}, \eqref{eql.NullGeodesicEquation_2}, and the above, since all the equations are evaluated at $\rho$-value $\rho (s)$.
\end{proof}

For convenience, we can assume, without any loss of generality, that $s_0 := 0$.
In addition, we fix a Riemannian metric $\mf{p}$ on $\mc{I}$.
Since $\mc{D}$ satisfies the GNCC, there exists $\eta \in \mc{C}^4 ( \bar{\mc{D}} )$ and $c > 0$ such that \eqref{BVP} holds.
Note that if $\eta$ is multiplied by a positive constant, then the new function still satisfies \eqref{BVP}.
Thus, we can also assume without loss of generality that
\begin{equation} \label{eta_normalize}
\eta ( \lambda ( 0 ) ) = \rho ( 0 ) \text{.}
\end{equation}
Applying \cite[Corollary 3.5]{Arick3} with a partition of unity, we see that \eqref{BVP} can be equivalently stated as follows:\ there exists $\zeta \in C^2 ( \bar{\mc{D}} )$ such that for any tangent vector $\mf{X} \in T \mc{D}$,
\begin{align}\label{etaequation}
\begin{dcases}
[ \mf{D}^2 \eta - \eta \, \bar{\mf{g}} - \zeta \, \mf{g} ] ( \mf{X}, \mf{X} ) > c \eta \, \mf{p} ( \mf{X}, \mf{X} ) &\quad \text{in $\mc{D}$,} \\
\eta > 0 &\quad \text{in $\mc{D}$,} \\
\eta = 0 &\quad \text{on $\partial \mc{D}$.}
\end{dcases} 
\end{align}	  

Fix now a constant $0 < \delta \ll 1$, whose value is to be determined later, and define
\begin{equation} \label{DefinitionTheta}
\theta: \mc{J}_\delta := \{ s \in ( s_-, s_+ ) \mid \lambda ( ( 1 + \delta ) s ) \in \mc{D} \} \rightarrow ( 0, \infty ) \text{,} \qquad \theta (s) := \eta ( \lambda ( (1 + \delta) s) ) \text{.}
\end{equation}
Note that by our normalization \eqref{eta_normalize}, we have
\begin{equation} \label{theta_bound_0}
0 < \sup_{ s \in \mc{J}_\delta } \theta (s) \leq \rho (0) \sup_{ s \in \mc{J}_\delta } \frac{ \eta ( \lambda ( (1 + \delta) s) ) }{ \eta ( \lambda (0) ) } \lesssim_{ \ms{g}, \mc{D} } \rho (0) \text{,}
\end{equation} 
since an upper bound for the ratio in \eqref{theta_bound_0} is independent of the normalization for $\eta$ and is a property of $\mc{D}$ and $\lambda (0)$.
Moreover, we use \eqref{etaequation} to compute, for any $s \in \mc{J}_\delta$,
\begin{align*}
\ddot{\theta} (s) &= ( 1 + \delta )^2 \cdot [ ( \partial_{ab}^2 \eta \circ \lambda ) \dot{\lambda}^a \dot{\lambda}^b + ( \partial_a \eta \circ \lambda ) \ddot{\lambda}^a ] ( ( 1 + \delta ) s ) \\
&= ( 1 + \delta )^2 \cdot [ \mf{D}^2 \eta ( \dot{\lambda}, \dot{\lambda} ) + \mf{D} \eta ( \mf{D}_{ \dot{\lambda} } \dot{\lambda} ) ] ( ( 1 + \delta ) s ) \text{.}
\end{align*}

Combining the above with \eqref{eta_normalize} and \eqref{etaequation}, we see that $\theta$ satisfies
\begin{align} \label{thetaequation}	
\begin{dcases}
\ddot{\theta} (s) + \mf{c}_+ (s) \, \theta (s) > \mc{N}_\theta (s) \text{,} &\quad s \in \mc{J}_\delta \text{,} \\
\theta(s) > 0 \text{,} &\quad s \in \mc{J}_\delta \text{,} \\
0 < \theta (0) = \rho (0) \text{,}
\end{dcases}
\end{align}
where $\mf{c}_+$ and $\mc{N}_\theta$ are given, for any $s \in \mc{J}_\delta$, by
\begin{align} \label{bound_N_theta}
\mf{c}_+ (s) &:= ( 1 + \delta )^2 \cdot [ ( \bar{\mf{g}} + c \mf{p} ) ( \dot{\lambda}, \dot{\lambda} ) ] ( ( 1 + \delta ) s ) \text{,} \\
\notag \mc{N}_\theta (s) &:= ( 1 + \delta )^2 \cdot [ \zeta ( \lambda ) \cdot \mf{g} ( \dot{\lambda}, \dot{\lambda} ) + \mf{D} \eta ( \mf{D}_{ \dot{\lambda} } \dot{\lambda} ) ] ( ( 1 + \delta ) s ) \text{.}
\end{align}
Moreover, since $\mf{p}$ is positive-definite, then by continuity, compactness, and \eqref{bound_N_theta}, we can choose $\delta$ sufficiently small and find a constant $p > 0$ such that
\begin{equation} \label{theta_positive}
\mf{c}_+ - \bar{\mf{g}} ( \dot{\lambda}, \dot{\lambda} ) \geq p^2 \text{.}
\end{equation}

Now, to prove Theorem \ref{Characterisation}, we split into two cases, depending on the relation between $\dot{\theta} (0)$ and $\dot{\rho} (0)$.
In particular, the conclusions of Theorem \ref{Characterisation} are consequences of the following:

\begin{lemma} \label{Characterisation_case_init}
If $\dot{\theta} (0) \leq \dot{\rho} (0)$, then \eqref{Characterisation_init} holds.
\end{lemma}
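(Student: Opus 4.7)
The plan is to run a backward-in-$s$ bootstrap (continuity) argument on an interval $[s_\star, 0]$, whose core is a Sturm-type Wronskian comparison between the $\rho$-component of $\Lambda$ and the boundary function $\theta$, powered by the coefficient gap $\mf{c}_+ - \bar{\mf{g}}(\dot{\lambda}, \dot{\lambda}) \geq p^2$ recorded in \eqref{theta_positive}.

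First I would set up the bootstrap: for constants $K_1, K_2 > 0$ to be fixed, consider those $s_\star \in (s_-, 0]$ for which the following hold on $[s_\star, 0]$: (i) $s \in \mc{J}_\delta$ (so $\lambda((1+\delta)s) \in \mc{D}$); (ii) $\rho(s) \leq K_1\,\theta(s/(1+\delta))$; and (iii) $|\dot{\rho}(s)| \leq K_2 \rho(s)$. Assumption \eqref{Characterisation_ass} together with the normalization \eqref{eta_normalize} ensures that $0$ lies in this set, so it is nonempty and contains a backward neighborhood of $0$.

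Next, to improve the bootstrap strictly, I would introduce the rescaled quantity $\tilde{\rho}(s) := \rho((1+\delta)s)$, which by Lemma \ref{NullGeodesicEquation} satisfies an ODE whose leading coefficient equals $(1+\delta)^2 \bar{\mf{g}}(\dot{\lambda},\dot{\lambda})|_{(1+\delta)s}$, matching the time-evaluation of $\theta$. A direct computation for the Wronskian $W := \tilde{\rho} \dot{\theta} - \theta \dot{\tilde{\rho}}$, using the strict inequality \eqref{thetaequation}, the approximate ODE for $\tilde{\rho}$, and the coefficient gap from \eqref{theta_positive}, yields
\[
\dot{W}(s) > p^2\,\theta(s)\,\tilde{\rho}(s) + \text{(bootstrap-controlled errors)} \text{.}
\]
Under (ii), together with $\epsilon_0 \ll 1$ and the uniform positive lower bound on $\theta$ along compact subsets of $\mc{J}_\delta$, the main term strictly dominates. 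Meanwhile, the hypothesis $\dot{\theta}(0) \leq \dot{\rho}(0)$ together with $\theta(0) = \tilde{\rho}(0) = \rho(0)$ (via \eqref{eta_normalize}) gives $W(0) \leq 0$, up to an $\mc{O}(\delta)$ correction from the rescaling that can be absorbed into the error. Integrating $\dot{W} > 0$ backward from $s = 0$ yields $W(s) \leq 0$ on $[s_\star, 0]$, hence $(\theta/\tilde{\rho})' \leq 0$, which quantitatively improves (ii) and (iii).

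The improved bootstrap extends backward until one of two things happens: $\rho \to 0$, which is the desired conclusion, or $\lambda((1+\delta)s) \to \partial \mc{D}$ (so $\theta \to 0$). Sturm's separation theorem applied to the ODEs for $\theta$ and $\tilde{\rho}$---whose potentials differ by at least $p^2$---forces $\tilde{\rho}$ to vanish strictly between any zero of $\theta$ and $0$, contradicting the bootstrap in the second scenario. Hence the bootstrap persists until $\rho(s) \to 0$ as $s \searrow s_-$, while $\lambda(s)$ remains inside $\mc{D}$ throughout. Convergence of $\lambda(s)$ to a limit in $\mc{D}$ then follows from the third identity of \eqref{NullGeodesicAsymp} (which gives uniform $|\dot{\lambda}|$-control), compactness of $\bar{\mc{D}}$, and the fact that $\theta$ stays bounded away from zero along the bootstrap trajectory.

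The hardest step will be the Wronskian estimate and closing the bootstrap: ensuring that the error terms---the nonlinear $\mc{O}(\tilde{\rho}^2)$ from the geodesic equation, the inhomogeneity $\mc{N}_\theta$ from \eqref{thetaequation}, and the parameter-mismatch corrections arising from the $(1+\delta)$-rescaling---remain strictly dominated by the main term $p^2\,\theta\,\tilde{\rho}$ uniformly along the bootstrap interval. This will rely on combining (ii) with the smallness $\rho(0) < \epsilon_0 \ll 1$, with the compactness of $\bar{\mc{D}}$, and with the quantitative lower bound for $\theta$ on compact sub-regions of $\mc{J}_\delta$.
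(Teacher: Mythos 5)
Your overall strategy --- a backward-in-$s$ bootstrap driven by a Wronskian comparison of $\rho$ against $\theta$, with the coefficient gap \eqref{theta_positive} supplying the strict monotonicity --- is exactly the argument the paper runs for Lemma \ref{Characterisation_case_term}, mirrored in time, and the structure (rule out $\rho \to \rho_0$, rule out $\lambda$ exiting $\mc{D}$, conclude $\rho \to 0$ with $\lambda$-limit in $\mc{D}$) matches.

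However, the rescaling $\tilde\rho(s) := \rho((1+\delta)s)$ introduces a real gap in the sign bookkeeping at $s=0$, and the way you dismiss it does not hold up. Since $\dot{\tilde\rho}(0) = (1+\delta)\dot\rho(0)$, your Wronskian satisfies
\[
W(0) = \tilde\rho(0)\dot\theta(0) - \theta(0)\dot{\tilde\rho}(0) = \rho(0)\bigl[\dot\theta(0) - \dot\rho(0)\bigr] - \delta\,\rho(0)\,\dot\rho(0) \text{.}
\]
The first term is $\leq 0$ by hypothesis, but the second term has the sign of $-\dot\rho(0)$, so it is \emph{positive} whenever $\dot\rho(0) < 0$ and can push $W(0) > 0$ when $0 \leq \dot\rho(0) - \dot\theta(0) < \delta|\dot\rho(0)|$. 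You assert this is an ``$\mc{O}(\delta)$ correction that can be absorbed into the error,'' but the correction has magnitude $\delta\rho(0)|\dot\rho(0)| \lesssim \delta\,\rho^2(0)$, whereas the error contributions in $\dot W$ coming from the $\mc{O}(\rho^2)$ remainders in \eqref{NullGeodesicAsymp} and from $\mc{N}_\theta$ are of size $\mc{O}(\rho^3(0))$; since $\delta$ is a \emph{fixed} constant determined by $(\ms{g},\mc{D})$ while $\rho(0) < \epsilon_0$ is arbitrarily small, you generally have $\delta\,\rho^2(0) \gg \rho^3(0)$ and the absorption fails as stated. Concretely: with $W(0)$ possibly slightly positive and $\dot W > 0$, integrating backward does not immediately give $W < 0$, so you lose the crucial monotonicity $(\theta/\tilde\rho)' \leq 0$ on an initial transient interval of parameter-length $\sim\delta$, and you would have to argue separately (e.g., by a short-time estimate showing the ratio $\theta/\tilde\rho$ only degrades by $1 + \mc{O}(\delta)$) that the bootstrap survives this transient. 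You have not done so. The simplest repair is to drop the rescaling entirely and work, as the paper does, with $W(s) := \dot\theta(s)\rho(s) - \dot\rho(s)\theta(s)$; then $W(0) = \rho(0)[\dot\theta(0) - \dot\rho(0)] \leq 0$ is immediate from the hypothesis, and the time mismatch between $\bar{\mf{g}}(\dot\lambda,\dot\lambda)$ evaluated at $s$ versus $(1+\delta)s$ in $\dot W$ is harmlessly absorbed by choosing $\delta$ small (this is precisely how \eqref{theta_positive} is used in the paper). Two smaller remarks: the bootstrap hypothesis (iii) on $\dot\rho$ is unnecessary --- the bound $|\dot\rho| \lesssim \rho(0)$ follows from integrating the first equation of \eqref{NullGeodesicAsymp} under (ii), as in the forward proof --- and your invocation of ``Sturm's separation theorem'' to rule out $\lambda$ exiting $\mc{D}$ is really just the same Wronskian comparison again (if $\theta$ vanishes at some $\tau < 0$ while $\rho < \theta$, then $\rho(\tau) < 0$, a contradiction); it would be cleaner to say that directly rather than invoke a separate ODE theorem for what are, here, strict differential inequalities rather than genuine ODEs.
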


\begin{lemma} \label{Characterisation_case_term}
If $\dot{\theta} (0) \geq \dot{\rho} (0)$, then \eqref{Characterisation_term} holds.
\end{lemma}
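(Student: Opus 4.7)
The plan is to run a Sturm-type comparison between $\rho$ and $\theta$, then close a bootstrap argument to control the nonlinear errors in the geodesic equations. The reparametrization $s \mapsto (1+\delta)s$ built into $\theta$ is what will guarantee that the geodesic's endpoint lies strictly inside $\mc{D}$ rather than on $\partial \mc{D}$.

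\textbf{Step 1 (Wronskian).} Introduce the Wronskian $W(s) := \dot\rho(s)\theta(s) - \rho(s)\dot\theta(s)$. The hypotheses $\theta(0) = \rho(0)$ and $\dot\theta(0) \geq \dot\rho(0)$ immediately give $W(0) = \rho(0)[\dot\rho(0) - \dot\theta(0)] \leq 0$. Differentiating $W$ and inserting $\ddot\rho$ from the first part of \eqref{NullGeodesicAsymp} together with the strict inequality \eqref{thetaequation} for $\ddot\theta$, the leading $\rho\theta\,\bar{\mf{g}}(\dot\lambda,\dot\lambda)$ terms cancel against $\rho\theta\,\mf{c}_+$ except for the positive $c\,\mf{p}(\dot\lambda,\dot\lambda)$ contribution guaranteed by the GNCC, so via \eqref{theta_positive}
\[
\dot W(s) \;\leq\; -p^{2}\rho(s)\theta(s) + \rho(s)\,|\mc{N}_\theta(s)| + O(\rho(s)^{2})\,\theta(s).
\]
Under the a priori bound $|\dot\rho| \lesssim \rho$, the second and third parts of Lemma \ref{NullGeodesicEquation} yield $\mc{N}_\theta = O(\rho^{2})$, so for $\rho$ sufficiently small the gain $-p^{2}\rho\theta$ dominates and $\dot W \leq -\tfrac{1}{2}p^{2}\rho\theta < 0$. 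Hence $W \leq 0$, which (together with $\theta > 0$) gives $(\rho/\theta)' = W/\theta^{2} \leq 0$ and the desired pointwise comparison $\rho \leq \theta$.

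\textbf{Step 2 (Bootstrap).} Let $s_\ast$ be the supremum of those $s \in [0,s_+)$ on which the a priori bounds
\[
\rho \leq 2\theta, \qquad |\dot\rho| \leq K\rho, \qquad |\mf{g}(\dot\lambda,\dot\lambda)| + |\mf{D}_{\dot\lambda}\dot\lambda| \leq K\rho^{2}
\]
are valid on $[0,s]$, with $K$ a sufficiently large constant depending on $\ms{g}$ and $\mc{D}$. Using the second and third parts of \eqref{NullGeodesicAsymp} together with a Gr\"onwall-type argument, each bound can be strictly improved on $[0,s_\ast]$, which closes the bootstrap and yields $s_\ast = s_+$. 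This validates the estimate of Step 1 uniformly on $[0,s_+)$. Then, integrating $\dot W \leq -\tfrac{1}{2}p^{2}\rho\theta$ while noting that the a priori bounds keep $|W|$ uniformly finite, one obtains $\int_{0}^{s_+}\rho\theta\,ds < \infty$. Combined with the upper bound $\theta \lesssim \rho(0)$ from \eqref{theta_bound_0}, this forces $\rho(s) \to 0$ as $s \nearrow s_+$.

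\textbf{Step 3 (Limit lies in $\mc{D}$).} If instead $\lim_{s\nearrow s_+}\lambda(s) \in \partial\mc{D}$, then $\theta(s) = \eta(\lambda((1+\delta)s))$ would tend to $0$ already at the parameter $s_+/(1+\delta) < s_+$; the comparison $\rho \leq \theta$ would then force $\rho(s_+/(1+\delta)) = 0$, contradicting $\rho > 0$ on $(s_-,s_+)$. Hence the limit lies in the open set $\mc{D}$, and \eqref{Characterisation_term} is proved.

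\textbf{Main obstacle.} The main technical hurdle is closing the bootstrap in Step 2. The geodesic equations in Lemma \ref{NullGeodesicEquation} are nonlinearly coupled, so a slight degradation in any of the bounds on $\rho,\dot\rho,\dot\lambda$ can amplify over long $s$-intervals; in addition, the reparametrization in the definition of $\theta$ produces an $O(\delta)$ misalignment between the arguments of $\mf{c}_+$ and $\bar{\mf{g}}(\dot\lambda,\dot\lambda)$ that must be absorbed into the gain $-p^{2}\rho\theta$. As in the analogous continuity argument of \cite{Arick3}, this is handled by choosing $\delta$ sufficiently small and exploiting the smallness of $\epsilon_0$ from \eqref{Characterisation_ass}.
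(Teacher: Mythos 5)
Your overall plan---a Sturm--Wronskian comparison of $\rho$ against $\theta$, closed by a continuity argument, with the reparametrization $s \mapsto (1+\delta)s$ pushing the limit into the open set $\mc{D}$---is the correct one and matches the paper's strategy; your Step~3 is essentially identical to the paper's argument ruling out $\lambda$ exiting $\mc{D}$. The trouble is in Steps~1 and~2. The pointwise claim $\dot W(s) \leq -\tfrac{1}{2} p^2\rho(s)\theta(s)$ is not justified: what Lemma~\ref{NullGeodesicEquation}, combined with the bounds $\rho(s),|\dot\rho(s)|\lesssim\rho(0)$ obtained on the bootstrap region, actually gives is the \emph{uniform} estimate $|\mc{N}_\theta(s)|\lesssim\rho(0)^2$---not a bound in terms of $\rho(s)$ or $\theta(s)$. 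Near any parameter at which $\lambda((1+\delta)s)$ approaches $\partial\mc{D}$ one has $\theta(s)\to 0$ (and, by the bootstrap, $\rho(s)\to 0$ with it), so $\rho(s)\,|\mc{N}_\theta(s)|\sim\rho(s)\rho(0)^2$ overwhelms $p^2\rho(s)\theta(s)$ and the pointwise sign of $\dot W$ is lost; note that ``taking $\rho$ small'' makes this worse, not better. The paper circumvents this by integrating: it writes $W(s) > W(0) + p^2\int_0^s\rho\theta + \int_0^s\rho\,\mc{N}_\theta + \int_0^s\theta\rho^2\,\mc{O}(1)$ and observes that $\int_0^s\rho\theta$ already accumulates a definite amount $\gtrsim\rho(0)^2$ from parameters near $0$ (where $\rho\simeq\theta\simeq\rho(0)$), while the error integrals are $\mc{O}(\rho(0)^3)$ and absorbed by $\epsilon_0$ small. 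The monotonicity of $\theta/\rho$ is a consequence of this \emph{integrated} positivity, not a local sign of $\dot W$.

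Two further gaps in Step~2. The bootstrap hypothesis $|\dot\rho|\leq K\rho$ cannot be closed: the $\rho$-equation $\ddot\rho\approx\bar{\mf{g}}(\dot\lambda,\dot\lambda)\rho$ is of oscillator type, so $\dot\rho$ remains of size $\rho(0)$ even as $\rho\searrow 0$, and $|\dot\rho|/\rho$ diverges near the termination point. What actually closes (and is all the paper uses) is the uniform bound $|\dot\rho(s)|\lesssim\rho(0)$, obtained by integrating $\ddot\rho$ once. Finally, the inference that $\int_0^{s_+}\rho\theta\,ds<\infty$ together with $\theta\lesssim\rho(0)$ forces $\rho(s)\to 0$ is a non-sequitur: if $s_+<\infty$ the integral is trivially finite, and if $s_+=\infty$ a convergent integral of a positive quantity says nothing about its limit. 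The correct route, as in the paper, is to first establish $\rho<\theta$ on $(0,s_+)$; then $\rho\lesssim\rho(0)<\epsilon_0$ rules out escape through $\{\rho=\rho_0\}$, your Step~3 rules out $\lambda$ exiting $\mc{D}$, and termination at the conformal boundary within $\mc{D}$ is the only remaining possibility for the maximal geodesic.
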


Lemma \ref{Characterisation_case_term} is proved in Section \ref{sec.geod_subproof}; the proof of Lemma \ref{Characterisation_case_init} is completely analogous and is omitted.
Thus, the proof of Theorem \ref{Characterisation} will be complete after the next subsection.

The intuitions behind the proofs of Lemmas \ref{Characterisation_case_init} and \ref{Characterisation_case_term} are illustrated below in Figure \ref{Pic1111AandBcite}. 

\subsection{Proof of Lemma \ref{Characterisation_case_term}} \label{sec.geod_subproof}

Observe that at least one of the following scenarios must hold:
\begin{enumerate}
\item $\Lambda$ escapes from $\mc{I}$:\ $\lim_{ s \nearrow s_+ } \rho (s) = \rho_0$.

\item $\Lambda$ terminates at $\mc{I}$:\ $\lim_{ s \nearrow s_+ } \rho (s) = 0$.

\item $\lambda$ exits $\mc{D}$:\ there exists $\tau_+ \in ( 0, s_+ ]$ such that $\lim_{ s \nearrow \tau_+ } \lambda (s) \in \partial \mc{D}$.
\end{enumerate}
The goal is to show (2) and rule out (1) and (3), as well as show $\lim_{ s \nearrow s_+ } \lambda (s) \in \mc{D}$.

The proof of this is based on a Sturm comparison argument, combined with a continuity argument to control nonlinear error terms.
The key step is the following:

\begin{lemma} \label{Step1a}
$( 0, s_+ ) \subseteq \mc{J}_\delta$, and $\rho (s) < \theta (s)$ for all $s \in ( 0, s_+ )$. 
\end{lemma}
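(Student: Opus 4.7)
The proof will proceed via a bootstrap/continuity argument combined with a Sturm-type comparison of $\rho$ and $\theta$. Introduce the set
\begin{align*}
T := \{ s_\star \in (0, s_+) \mid (0, s_\star) \subseteq \mc{J}_\delta, \, \rho(s) < \theta(s) \text{ and } |\dot\rho(s)| \leq M \rho(s) \text{ for all } s \in (0, s_\star) \} \text{,}
\end{align*}
where $M$ is a constant slightly larger than the implicit constant in \eqref{Characterisation_ass}. The goal is $\sup T = s_+$. The auxiliary estimate $|\dot\rho| \leq M \rho$ is crucial: via the null constraint $\mf{g}(\dot\lambda, \dot\lambda) = -\dot\rho^2 + \mc{O}(\rho^2)$ from \eqref{NullGeodesicAsymp} and the other two identities in \eqref{NullGeodesicAsymp}, it ensures that all nonlinear error terms entering the comparison---both in $\mc{N}_\theta$ from \eqref{bound_N_theta} and in the $\mc{O}(\rho^2)$ remainder of $\ddot\rho$---are controlled by a constant multiple of $\rho^2$ throughout the bootstrap region.

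First I would populate $T$ with a small initial interval. Openness of $\mc{D}$ and continuity of $\lambda$ give $\lambda((1+\delta)s) \in \mc{D}$ for $s$ near zero. If $\dot\theta(0) > \dot\rho(0)$ strictly, then $\theta - \rho$ has positive derivative at $s=0$ and vanishes there, so $\theta > \rho$ on a right neighborhood. If $\dot\theta(0) = \dot\rho(0)$, I instead compare second derivatives at $0$: combining \eqref{thetaequation}, \eqref{NullGeodesicAsymp}, the normalization $\theta(0) = \rho(0)$, the gap \eqref{theta_positive}, and the bounds $|\dot\rho(0)| \lesssim \rho(0)$, $\rho(0) < \epsilon_0$, one obtains
\begin{align*}
\ddot\theta(0) - \ddot\rho(0) > \bigl( \mf{c}_+(0) - \bar{\mf{g}}(\dot\lambda(0), \dot\lambda(0)) \bigr) \rho(0) - C \rho(0)^2 \geq \tfrac{1}{2} p^2 \, \rho(0) > 0
\end{align*}
once $\epsilon_0$ is small enough. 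Either way, $\theta > \rho$ holds on some $(0, s_\star)$, and shrinking $s_\star$ further retains $|\dot\rho| \leq M \rho$ there.

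The core of the argument is the Sturm comparison on the bootstrap region. Rather than comparing $\rho$ and $\theta$ directly, I would work with the Wronskian-like quantity $W(s) := \dot\theta(s) \rho(s) - \theta(s) \dot\rho(s)$ and show $\dot{W} > 0$. A direct computation using \eqref{thetaequation} and \eqref{NullGeodesicAsymp} gives
\begin{align*}
\dot{W}(s) = \ddot\theta(s) \rho(s) - \theta(s) \ddot\rho(s) > \bigl( \mf{c}_+(s) - \bar{\mf{g}}(\dot\lambda(s), \dot\lambda(s)) \bigr) \theta(s) \rho(s) + (\text{errors}) \text{,}
\end{align*}
where the errors collect the $\mc{N}_\theta$ contribution, the $\mc{O}(\rho^2)$ remainder in $\ddot\rho$, and the small mismatch $\bar{\mf{g}}(\dot\lambda((1+\delta)s), \dot\lambda((1+\delta)s)) - \bar{\mf{g}}(\dot\lambda(s), \dot\lambda(s)) = \mc{O}(\delta)$ coming from the $(1+\delta)$-shift in the argument of $\bar{\mf{g}}$. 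On the bootstrap region these errors are all bounded by $C \rho^2$, while \eqref{theta_positive} yields $\mf{c}_+(s) - \bar{\mf{g}}(\dot\lambda(s), \dot\lambda(s)) \geq \tfrac{1}{2} p^2$ after possibly shrinking $\delta$. Hence $\dot{W}(s) \geq \tfrac{1}{2} p^2 \, \theta(s) \rho(s) - C \rho(s)^2 > 0$ once $\rho$ is small. Since $W(0) = (\dot\theta(0) - \dot\rho(0)) \rho(0) \geq 0$, integration gives $W(s) > 0$ for $s \in (0, \sup T)$, hence $(\theta/\rho)'(s) = W(s)/\rho(s)^2 > 0$, and finally $\theta/\rho > 1$, i.e., $\theta > \rho$ strictly, persists up to and including $\sup T$.

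To close the bootstrap, suppose $\sup T < s_+$. The strict inequality $\rho < \theta$ at $\sup T$ extends slightly beyond by continuity, and the auxiliary bound $|\dot\rho| \leq M \rho$ past $\sup T$ follows from integrating \eqref{NullGeodesicAsymp} by a Gr\"onwall-type estimate---valid because $\bar{\mf{g}}$ is bounded on $\bar{\mc{D}}$ and $\rho$ stays small thanks to $\rho < \theta$ with $\theta$ bounded on $\bar{\mc{D}}$. This contradicts maximality of $\sup T$ and gives $\sup T = s_+$, completing both claims of the lemma. The main obstacle is precisely this interplay: ensuring the nonlinear errors (in particular those arising from the $(1+\delta)$ time-shift and from $\mc{N}_\theta$) remain uniformly subordinate to the principal gap $p^2 \theta \rho$ along the entire geodesic trajectory. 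This is the ``elaborate continuity argument'' alluded to in the main-ideas subsection, and it is precisely why both the smallness of $\epsilon_0$ and the bound $|\dot\rho(s_0)| \lesssim \rho(s_0)$ in \eqref{Characterisation_ass} are indispensable.
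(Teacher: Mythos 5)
Your high-level strategy---a bootstrap/continuity argument coupled with a Sturm-type Wronskian comparison---is the same as the paper's, and the algebra of $W$, $\dot W$, and the role of \eqref{theta_positive} is correctly identified. However, there is a genuine gap in how you set up and close the bootstrap.

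The flaw lies in the auxiliary bootstrap hypothesis $|\dot\rho(s)| \leq M\rho(s)$. This condition cannot be maintained along the geodesic. Since the GNCC forces $\bar{\mf{g}}(\dot\lambda,\dot\lambda) < 0$ (in the relevant null directions), the leading order of the comparison ODE $\ddot\rho \approx \bar{\mf{g}}(\dot\lambda,\dot\lambda)\,\rho$ is a damped harmonic oscillator. When $\rho$ decreases to $0$ near $s_+$, one has $\dot\rho$ converging to a strictly negative value, so $|\dot\rho(s)| / \rho(s) \to \infty$; the model case $\ddot\rho = -\rho$, $\rho = \cos s$ near $s = \pi/2$ makes this explicit. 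Furthermore, the Gr\"onwall step you invoke to re-establish the auxiliary bound past $\sup T$ does not produce it: integrating $\ddot\rho = \mc{O}(\rho)$ gives $|\dot\rho(s)| \leq |\dot\rho(0)| + C \int_0^s \rho \lesssim \rho(0)$, which is a \emph{constant} bound, not a bound of the form $M\rho(s)$. So the auxiliary condition fails and $\sup T$ can be strictly less than $s_+$, at which point your argument gives nothing. (There is a secondary issue as well: even granting $|\dot\rho| \lesssim \rho$, your claimed pointwise inequality $\dot W \geq \tfrac{1}{2}p^2\theta\rho - C\rho^2 > 0$ requires $\theta > (2C/p^2)\rho$, which is strictly stronger than the bootstrap $\theta > \rho$ unless $2C/p^2 \leq 1$; this is not automatic.)

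The paper sidesteps both problems simultaneously. It uses the \emph{relaxed} bootstrap hypothesis $\rho < 2\theta$ (improved to $\rho < \theta$), deriving from it only the uniform bound $|\dot\rho(s)| \lesssim \rho(0)$, which is all the error estimates require. And crucially, it never asserts the pointwise inequality $\dot W > 0$: instead it integrates $\dot W$ and bounds the three resulting integrals $\mc{W}_0, \mc{W}_1, \mc{W}_2$ by quantities depending only on the fixed initial value $\rho(0)$---namely $\mc{W}_0 \gtrsim \rho^2(0)$ and $\mc{W}_1, \mc{W}_2 \gtrsim -\rho^3(0)$. This \emph{integrated} comparison is what allows $W > 0$ (and hence $\theta/\rho$ increasing) to persist through times where $\rho(s) \ll \rho(0)$ and the pointwise inequality would be false. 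That integration step is the device your proposal is missing, and it is what renders the unmaintainable auxiliary hypothesis $|\dot\rho| \leq M\rho$ both unnecessary and avoidable.
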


\begin{proof}
We start with the following bootstrap assumption for an arbitrary $s_1 \in ( 0, s_+ )$:
\begin{itemize}
\item (\textbf{BA}) $( 0, s_1 ) \subseteq \mc{J}_\delta$, and $\rho (s) < 2 \, \theta (s)$ for all $s \in ( 0, s_1 )$.
\end{itemize}
(Note that (\textbf{BA}) holds for $s_1$ sufficiently close to $0$, by \eqref{eta_normalize}.)
Then, by a standard continuity argument, it suffices to establish that (\textbf{BA}) implies the strictly stronger property
\footnote{More precisely, we consider the set $\mc{A} := \{ s_1 \in ( 0, s_+ ) \mid s \in \mc{J}_\delta \text{ and } \rho (s) < 2 \, \theta (s) \text{ for all } s \in ( 0, s_1 ) \}$, which is clearly closed in $( 0, s_+ )$.
If (\textbf{BA}) implies \eqref{bootstrap_conclusion}, then $\mc{A}$ is also open and hence is all of $( 0, s_+ )$.}
\begin{equation} \label{bootstrap_conclusion}
s_1 \in \mc{J}_\delta \text{,} \qquad \rho (s) < \theta (s) \text{,} \quad s \in ( 0, s_1 ) \text{.}
\end{equation}

Consider now the Wronskian,
\begin{equation} \label{Wronskian}
W (s) := \dot{\theta} (s) \rho (s) - \dot{\rho} (s) \theta (s) \text{,} \qquad s \in \mc{J}_\delta \text{.}
\end{equation}
Note that \eqref{eta_normalize} and the above imply that
\begin{equation} \label{Wronskian_s0}
W (0) = \rho (0) [ \dot{\theta} (0) - \dot{\rho} (0) ] \geq 0 \text{.}
\end{equation}
Moreover, differenting $W$ and recalling \eqref{NullGeodesicAsymp}, \eqref{thetaequation}, and \eqref{theta_positive}, we obtain that
\begin{align*}
\dot{W} &= \ddot{\theta} \rho - \ddot{\rho} \theta \\ 
&> ( \mf{c}_+ \theta + \mc{N}_\theta ) \rho - [ \bar{\mf{g}} ( \dot{\lambda}, \dot{\lambda} ) \, \rho + \mc{O} (1) ( \dot{\lambda}, \dot{\lambda} ) \, \rho^2 ] \theta \\
&\geq p^2 \, \rho \theta + \mc{N}_\theta \, \rho + \mc{O} (1) ( \dot{\lambda}, \dot{\lambda} ) \, \theta \rho^2 \text{.}
\end{align*}
Integrating the above leads, for each $s \in ( 0, s_1 )$, to
\begin{align}\label{EstimateW}
W (s) &= W (0) + \int_0^s \dot{W} ( \tau ) \, d \tau \\
\notag &> p^2 \int_0^s \rho \theta + \int_0^s \rho \mc{N}_\theta + \int_0^s \theta \rho^2 \, \mc{O} (1) ( \dot{\lambda}, \dot{\lambda} ) \\
\notag &:= \mc{W}_0 (s) + \mc{W}_1 (s) + \mc{W}_2 (s) \text{.}
\end{align}

We now control the terms in the right-hand side of \eqref{EstimateW}.
First, since $\rho (s) \simeq \theta (s) \simeq \rho (0)$ whenever $s$ is sufficiently close to $0$ (see \eqref{eta_normalize}), and since $\rho, \theta > 0$ on $( 0, s_1 )$, then
\begin{equation} \label{EstimateW0}
\mc{W}_0 (s) \geq \begin{cases} C_{ 0, 1 } s \cdot \rho^2 (0) & 0 < s \ll 1 \text{,} \\ C_{ 0, 2 } \cdot \rho^2 (0) & \text{otherwise,} \end{cases} \qquad s \in ( 0, s_1 ) \text{,}
\end{equation}
for some positive constants $C_{ 0, 1 }$, $C_{ 0, 2 }$.
Next, note that (\textbf{BA}) and \eqref{theta_bound_0} imply
\begin{equation} \label{EstimateWW}
0 \leq \rho (s) \lesssim \theta (s) \lesssim \rho (0) \text{,} \qquad s \in ( 0, s_1 ) \text{.}
\end{equation}
From the above, we obtain the following bound for some constant $C_2 > 0$:
\begin{equation} \label{EstimateW1}
\mc{W}_2 (s) \geq - C_2 \cdot \rho^3 (0) \text{,} \qquad s \in ( 0, s_1 ) \text{.}
\end{equation}

Next, we integrate the first part of \eqref{NullGeodesicAsymp} and apply \eqref{EstimateWW} to obtain
\begin{align*}
| \dot{\rho} (s) | \leq | \dot{\rho} (0) | + \int_0^s [ | \bar{\mf{g}} ( \dot{\lambda}, \dot{\lambda} ) | \rho + | \mc{O} (1) ( \dot{\lambda}, \dot{\lambda} ) | \rho^2 ] \lesssim \rho (0) \text{,} \qquad s \in ( 0, s_1 ) \text{,}
\end{align*}
as long as $\epsilon_0 > \rho (0)$ is sufficiently small.
Then, by \eqref{NullGeodesicAsymp}, \eqref{thetaequation}, \eqref{EstimateWW}, and the above,
\begin{align*}
| \mc{N}_\theta (s) | &\lesssim | \mf{g} ( \dot{\lambda}, \dot{\lambda} ) ( ( 1 + \delta ) s ) | + | \mf{D} \eta ( \mf{D}_{ \dot{\lambda} } \dot{\lambda} ) ( ( 1 + \delta ) s ) | \\
&\lesssim ( [ | \dot{\rho}^2 ( (1 + \delta) s) | + \rho^2 ( (1 + \delta) s) ] + [ | \rho^2 ( (1 + \delta) s ) | + | ( \rho \dot{\rho} ) ( (1 + \delta) s ) | ] \\
&\lesssim  \rho^2 (0) \text{,}
\end{align*}
for any $s \in ( 0, s_1 )$.
By \eqref{EstimateWW} and the above, there is some $C_1 > 0$ such that
\begin{equation} \label{EstimateW2}
\mc{W}_1 (s) \geq - C_1 \cdot \rho^3 (0) \text{,} \qquad s \in ( 0, s_1 ) \text{.}
\end{equation}

Observe that as long as $\epsilon_0$ is sufficiently small, depending on the constants $C_{ 0, 1 }$, $C_{ 0, 2 }$, $C_1$, $C_2$ (which arise from $\ms{g}$ and $\mc{D}$), then \eqref{EstimateW}, \eqref{EstimateW0}, \eqref{EstimateW1}, and \eqref{EstimateW2} yield
\[
\rho^2 (s) \left( \frac{ \theta }{ \rho } \right)' (s) = W (s) > \mc{W}_0 (s) + \mc{W}_1 (s) + \mc{W}_2 (s) > 0 \text{,} \qquad s \in ( 0, s_1 ) \text{.}
\]
Integrating the above from $s_0 = 0$ and recalling \eqref{eta_normalize} yields the second part of \eqref{bootstrap_conclusion}:
\[
\rho (s) < \theta (s) \text{,} \qquad s \in ( 0, s_1 ) \text{.}
\]
Moreover, since $\rho$ is positive on $( 0, s_1 )$, then \eqref{DefinitionTheta} and the above imply
\[
\eta ( \lambda ( ( 1 + \delta ) s_1 ) ) = \lim_{ s \nearrow s_1 } \theta ( s ) \geq \rho ( s_1 ) > 0 \text{,}
\]
which, along with \eqref{etaequation}, implies that $s_1 \in \mc{J}_\delta$.
Therefore, we have established the improved properties \eqref{bootstrap_conclusion}, which completes the bootstrap argument and hence the proof itself.
\end{proof}

Finally, by \eqref{theta_bound_0} and Lemma \ref{Step1a}, we have that
\[
\rho (s) < \theta (s) \lesssim \rho (0) \text{,} \qquad s \in ( 0, s_+ ) \text{.}
\]
Thus, by taking $\epsilon_0$ to be sufficiently small, the above rules out scenario (1).

Next, suppose (3) holds, and let $\tau_+ \in ( 0, s_+ ]$ be the smallest parameter with
\[
\lim_{ s \nearrow \tau_+ } \lambda (s) \in \partial \mc{D} \text{.}
\]
Then, the above implies $\theta ( ( 1 + \delta )^{-1} \tau_+ ) = 0$; this results in a contradiction, since Lemma \ref{Step1a} then yields $\rho ( ( 1 + \delta )^{-1} \tau_+ ) < 0$.
Therefore, (3) cannot hold, and it follows that
\[
\lim_{ s \nearrow s_+ } \lambda (s) \in \mc{D} \text{.}
\]

Furthermore, since scenarios (1) and (3) are ruled out, then (2) holds.
In particular, the above yields \eqref{Characterisation_term}, which completes the proof of Lemma \ref{Characterisation_case_term}.

\begin{figure}[ht]
\centering
\includegraphics[width=0.84\textwidth]{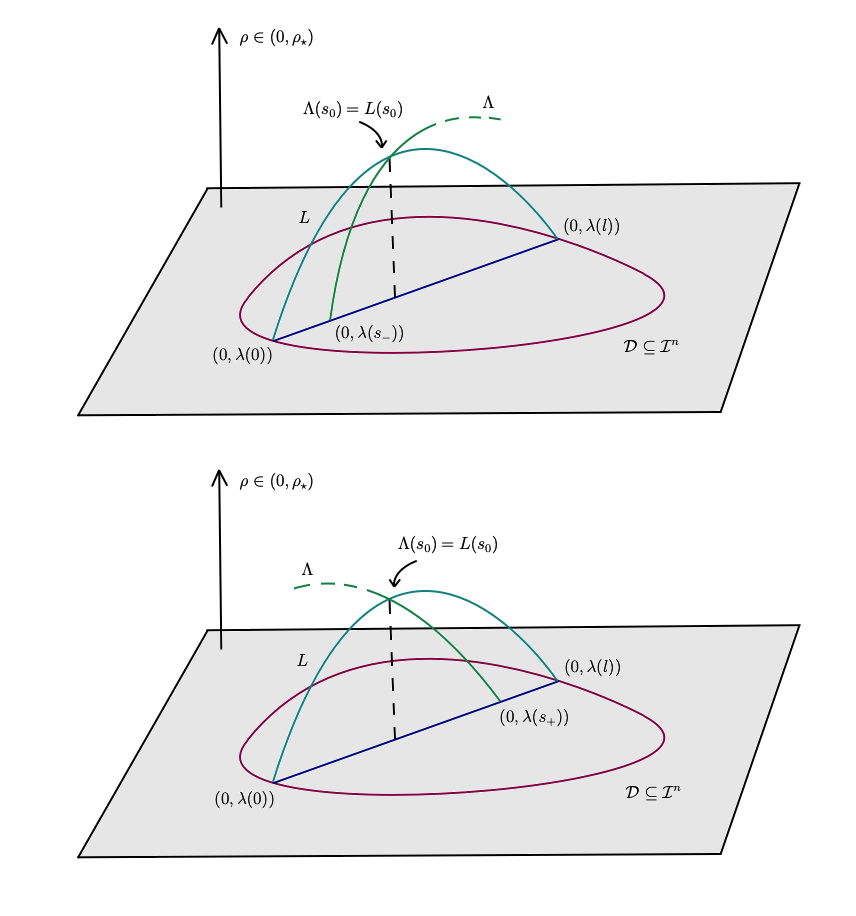}
\caption{If $\dot{\theta}(s_0) \leq \dot{\rho}(s_0)$, then $L := ( \theta, \lambda )$ lies strictly above $\Lambda := ( \rho, \lambda )$ backwards in time, so that $\rho$ must vanish before $\theta$; see the first graphic.
On the other hand, if $\dot{\theta} (s_0) \geq \dot{\rho} (s_0)$, then $L$ lies strictly above $\Lambda$ forward in time, so again $\rho$ must vanish before $\theta$; see the second illustration.}
\label{Pic1111AandBcite}
\end{figure}

\section{Carleman estimate}\label{sec.carleman}
 
In this section, we precisely state and then prove the first main theorem of this paper, Theorem \ref{MainTheorem1}, which establishes unique continuation for solutions of wave equations from the conformal boundary.
The main focus of this section, however, will be on the corresponding Carleman estimate, which is the key tool for proving unique continuation.

\subsection{Preliminaries}

The key applications of our Carleman estimates will require that they also apply to general vertical tensor fields.
As a result, we will need to define some additional concepts and notations concerning vertical and mixed tensorial quantities.
Here, we give an abridged version of the development given in \cite[Sections 2.3 and 2.4]{Arick3}.

First, it will be useful to define additional objects for treating vertical tensor fields:

\begin{definition}[Analysis of vertical tensors] \label{VerticalTensor}
Let $\mf{h}$ be a Riemannian metric on $\mc{I}$, which we also view as a $\rho$-independent vertical tensor field.
In addition, fix two integers $k, l \geq 0$.
\begin{itemize}
\item (Full contractions) Given two vertical tensor fields $\ms{A}$ and $\ms{B}$ of dual ranks $(k,l)$ and $(l,k)$, respectively, we let $\langle \ms{A}, \ms{B} \rangle$ denote the full contraction of $\ms{A}$ and $\ms{B}$, that is, the scalar field obtained by contracting all corresponding components of $\ms{A}$ and $\ms{B}$.

\item (Full duals) Given a vertical tensor field $\ms{A}$ of rank $(k, l)$, we let $\mf{h}^\star \ms{A}$ denote the full $\mf{h}$-dual of $\ms{A}$---the rank $(l, k)$ vertical tensor field obtained by raising and lowering all indices of $\ms{A}$ using (the vertical Riemannian metric) $\mf{h}$.

\item (Bundle metrics) Given vertical tensor fields $\ms{A}$ and $\ms{B}$ of the same rank, we write $\mf{h} ( \ms{A}, \ms{B} )$ to denote the full metric contraction of $\ms{A}$ and $\ms{B}$ using $\mf{h}$:
\begin{equation} \label{vertical_rmetric}
\hm ( \ms{A}, \ms{B} ) := \langle \mf{h}^\star \ms{A}, \ms{B} \rangle = \langle \ms{A}, \mf{h}^\star \ms{B} \rangle \text{.}
\end{equation}

\item (Vertical norm) Given a vertical tensor field $\ms{A}$, we define its $\mf{h}$-norm by
\begin{equation} \label{vertical_norm}
| \ms{A} |_{ \mf{h} }^2 := \hm ( \ms{A}, \ms{A} ) \text{.}
\end{equation}
\end{itemize}
\end{definition}

\begin{remark}
One difference between the present setting and \cite{Arick3} is that the latter assumed a time function $t$ on $\mc{I}$, which is naturally extended into $\mc{M}$, whereas here we have no need of such a function.
Furthermore, in \cite{Arick3}, the vertical Riemannian metric was defined using this $t$, whereas here we simply take any arbitrary $\rho$-independent metric.
\end{remark}

\begin{remark}
Direct computations (see \cite[Proposition 2.21]{Arick3}) yield the relations
\begin{equation} \label{vertical_cauchy}
| \mf{h}^\ast \ms{A} |_{ \mf{h} } = | \ms{A} |_{ \mf{h} } \text{,} \qquad | \ms{A} \otimes \ms{B} |_{ \mf{h} } \leq | \ms{A} |_{ \mf{h} } | \ms{B} |_{ \mf{h} } \text{,} \qquad | \langle \ms{A}, \ms{C} \rangle | \leq | \ms{A} |_{ \mf{h} } | \ms{C} |_{ \mf{h} } \text{,}
\end{equation}
for any vertical tensor fields $\ms{A}$, $\ms{B}$, $\ms{C}$ of appropriate ranks, in the setting of Definition \ref{VerticalTensor}.
\end{remark}

\begin{definition}[Extended vertical connection] \label{VerticalConnection}
We extend the vertical connection $\ms{D}$ to also apply in the $\rho$-direction in the following manner:\ given any vertical tensor field $\ms{A}$ of rank $( k, l )$ and any coordinate system $( U, \varphi )$ on $\mc{I}$, we define, with respect to $\varphi_\rho$-coordinates,
\[
\breve{\ms{D}}_\rho	\ms{A}_{ b_1 \dots b_l }^{ a_1 \dots a_k } := \mc{L}_\rho \ms{A}_{ b_1 \dots b_l }^{ a_1 \dots a_k } + \frac{1}{2} \sum_{i=1}^k \ms{g}^{ a_i c } \mc{L}_\rho \ms{g}_{ d c } \ms{A}_{ b_1 \dots b_l }^{ a_1 \hat{d}_i a_k } - \frac{1}{2} \sum_{j=1}^l \ms{g}^{ d c } \mc{L}_\rho \ms{g}_{ b_j c } \ms{A}_{ b_1 \hat{d}_j b_l }^{ a_1 \dots a_k } \text{,}
\]
where the multi-index notations $\smash{a_1 \hat{d}_i a_k}$ and $\smash{b_1 \hat{d}_j b_l}$ denote the sequences $a_1 \dots a_k$ and $b_1 \dots b_l$ of indices, respectively, except with $a_i$ and $b_j$ replaced by $d$.

Then, $\ms{D}$ and the above formula define a unique connection $\smash{\Dm}$ on vertical tensor fields that extends $\ms{D}$-covariant derivatives to all directions along $\mc{M}$.
\end{definition}

\begin{remark}
See \cite[Definition 2.22, Proposition 2.23]{Arick3} for more precise statements on the extended vertical connection $\smash{\Dm}$.
In practice, the following properties of $\smash{\Dm}$ are most useful:
\begin{itemize}
\item For any vertical vector field $\ms{Y}$ and vertical tensor field $\ms{A}$,
\begin{equation} \label{vertical_connection_extend}
\Dm_{ \ms{Y} } \ms{A} = \ms{D}_{ \ms{Y} } \ms{A} \text{.} 
\end{equation}

\item For any $a \in C^\infty ( \mc{M} )$ and any vector field $X$ on $\mc{M}$,
\begin{equation} \label{vertical_connection_basic}
\Dm_X a = X a \text{,} \qquad \Dm_X \ms{g} = 0 \text{,} \qquad \Dm_X \ms{g}^{-1} = 0 \text{.} 
\end{equation}

\item For any vector field $X$ on $\mc{M}$, vertical tensor fields $\ms{A}$ and $\ms{B}$, and contraction $\mc{C}$,
\begin{equation} \label{vertical_connection_product}
\Dm_X ( \ms{A} \otimes \ms{B} ) = \Dm_X \ms{A} \otimes \ms{B} + \ms{A} \otimes \Dm_X \ms{B} \text{,} \qquad \Dm_X ( \mc{C} \ms{A} ) = \mc{C} ( \Dm_X \ms{A} ) \text{.}
\end{equation}
\end{itemize}
\end{remark}

Next, we widen our scope to \emph{mixed} tensor fields, which contain both spacetime and vertical components.
In the following, we give minimally technical definitions of these objects; the reader is referred to \cite[Section 2.4]{Arick3} for more precise statements.

\begin{definition}[Mixed tensor fields] \label{MixedTensor}
Fix integers $\kappa, \lambda, k, l \geq 0$.
\begin{itemize}	
\item (Mixed tensor field) A \emph{mixed tensor field} of rank $(\kappa,\lambda;k,l)$ is, roughly, a tensor field object containing $\kappa$ contravariant and $\lambda$ covariant spacetime components, as well as $k$ contravariant and $l$ covariant vertical components.
\footnote{See \cite[Definition 2.25]{Arick3} for a more precise definition in terms of sections of vector bundles over $\mc{M}$.}

\item (Mixed connection) Let $\smash{\nablam}$ be the \emph{mixed connection}---the connection on mixed tensor fields that acts like $\nabla$ on spacetime components and $\smash{\Dm}$ on vertical components.
\footnote{More precisely, $\smash{\nablam}$ is the tensor product connection of $\nabla$ and $\smash{\Dm}$.}
\end{itemize}
\end{definition}

\begin{remark}
As a general convention, we will use bond font to denote mixed tensor fields (e.g., $\mathbf{A}$, $\mathbf{B}$).
In addition, we recall the following properties (see \cite[Proposition 2.28]{Arick3}):
\begin{itemize}
\item Any spacetime tensor field $A$ and vertical tensor field $\ms{B}$ can be viewed as a mixed tensor field.
In particular, for any vector field $X$ on $\mc{M}$,
\begin{equation} \label{mixed_connection_extend}
\nablam_X A = \nabla_X A \text{,} \qquad \nablam_X \ms{B} = \Dm_X \ms{B} \text{.}
\end{equation}

\item For any vector field $X$ on $\mc{M}$ and any mixed tensor fields $\mathbf{A}$ and $\mathbf{B}$,
\begin{equation} \label{mixed_connection_product}
\nablam_X ( \mathbf{A} \otimes \mathbf{B} ) = \nablam_X \mathbf{A} \otimes \mathbf{B} + \mathbf{A} \otimes \nablam_X \mathbf{B} \text{.}
\end{equation}

\item $\smash{\nablam}$ annihilates both $g$ and $\ms{g}$---for any vector field $X$ on $\mathcal{M}$,
\begin{equation} \label{mixed_connection_metric}
\nablam_X g = 0 \text{,} \qquad \nablam_X g^{-1} = 0 \text{,} \qquad \nablam_X \ms{g} = 0 \text{,} \qquad \nablam_X \ms{g}^{-1} = 0 \text{.}
\end{equation}
\end{itemize}
\end{remark}

\begin{definition}[Mixed operators] \label{MixedTensorOps}
Let $\mathbf{A}$ be a mixed tensor field of rank $( \kappa, \lambda; k, l )$.
\begin{itemize}
\item (Mixed differential) We can view $\smash{\nablam} \mathbf{A}$ as a mixed tensor field of rank $( \kappa, \lambda + 1; k, l )$:
\[
( \nablam \mathbf{A} ) (X) := \nablam_X \mathbf{A} \text{.}
\]

\item (Mixed Hessian) We define the \emph{mixed Hessian} $\smash{\nablam}^2 \mathbf{A}$ to be $\smash{\nablam} ( \smash{\nablam} \mathbf{A} )$, i.e., two applications of $\smash{\nablam}$ to $\mathbf{A}$.
Note this is a mixed tensor field of rank $( \kappa, \lambda + 2; k, l )$.

\item (Mixed wave operator) We define $\Boxm \mathbf{A}$ over $\mathcal{M}$ to be the $g$-trace of $\smash{\nablam}^2 \mathbf{A}$:
\begin{equation} \label{mixed_wave}
\Boxm \mathbf{A} := \tr_g ( \nablam^2 \mathbf{A} ) = g^{\mu \nu} \nablam^2_{ \mu \nu } \mathbf{A} \text{.}
\end{equation}  

\item (Mixed curvature) The \emph{mixed curvature} applied to $\mathbf{A}$ is the rank $( \kappa, \lambda + 2; k, l )$ mixed tensor field $\smash{\Rm} [ \mathbf{A} ]$ defined as the commutator of two mixed differentials of $\mathbf{A}$:
\begin{equation} \label{mixed_curv}
\Rm [\mathbf{A}] (X, Y) := \Rm_{XY} [\mathbf{A}] := \nablam_{ X Y } \mathbf{A} - \nablam_{ Y X } \mathbf{A} \text{.}
\end{equation}  
\end{itemize}
\end{definition}

\begin{remark}
Most importantly, Definition \ref{MixedTensorOps} makes sense of the wave operator applied to vertical tensor fields.
An additional benefit of using the mixed connection $\smash{\nablam}$ in our analysis is that its covariant structure allows us to apply product rule and integration by parts formulas to mixed tensor fields in the same way that we would for scalar fields.
\end{remark}

We summarize our notations in the table below: 

\begin{table}[ht]
\centering
\begin{tabular}{|c|c|c|c|c|c|c|} 
\hline
Name & Font & Connection & Hessian & Wave operator & Curvature        \\
\hline 
Spacetime tensor field & $A$ & $\nabla A$ & $\nabla^2 A$ & $\Box A$ & $R [A]$ \\ 
Vertical tensor field & $\ms{A}$ & $\smash{\Dm} \ms{A}$ & $\smash{\Dm}^2 \ms{A}$ & $\smash{\Boxm} \ms{A}$ & $ \smash{\Rm} [\ms{A}]$ \\
Mixed tensor fields & $\mathbf{A}$ & $\nablam \mathbf{A}$ & $\smash{\nablam}^2 \mathbf{A} $ & $\smash{\Boxm} \mathbf{A}$ & $\smash{\Rm} [ \mathbf{A}] $ \\ \hline 
\end{tabular}
\end{table} 

\vspace{-0.8pc}

\begin{definition}[Integration measures]
We write $d \mu_g$ to denote the volume measure induced by $g$ on $\mc{M}$.
Similarly, we write $d \mu_{ \ms{g} }$ for the volume measure induced by $\ms{g}$ on level sets of $\rho$.
\end{definition}

\subsection{The Main Estimate}

We now give a precise statement of our main Carleman estimate:

\begin{theorem}[Carleman estimate] \label{TheoremCarlemanEstimate}
Let $( \mc{M}, g )$ be a spacetime satisfying Assumptions \ref{ass.manifold} and \ref{ass.aads}, and fix a Riemannian metric $\mf{h}$ on $\mc{I}$.
Furthermore, we assume the following: 
\begin{itemize}
\item Let $\mc{D} \subseteq \mc{I} $ be open with compact closure, and assume the GNCC is satisfied on $\mc{D}$.

\item Let $\eta \in C^4 ( \bar{\mc{D}} )$ be the function satisfying \eqref{BVP}, with $\mf{h}$ as above.
In addition, let the function $f$ be defined as in \eqref{Definitionf}, with the above $\eta$ and $\mc{D}$.

\item Fix integers $k, l \geq 0$.

\item Let $\sigma \in \R$, fix $\mf{X}^\rho \in C^\infty ( \mc{I} )$ and a vector field $\mf{X}$ on $\mc{I}$, and set $X:= \mf{X}^{\rho} \partial_\rho + \mf{X}$.
\end{itemize}
Then, there exist constants $\mc{C}_0 \geq 0$ and $\mc{C}_b > 0$ (depending on $\ms{g}$, $\mc{D}$, $X$, $k$, $l$) such that
\begin{itemize}
\item for any $\kappa \in \mathbb{R}$ with 
\begin{equation} \label{carleman_ass_kappa}
2 \kappa \geq n - 1 + \mc{C}_0 \text{,} \qquad \kappa^2 - (n-2) \kappa + \sigma - (n-1) - \mc{C}_0 \geq 0 \text{,}
\end{equation}
\item and for any constants $f_\star, \lambda, p > 0$ with 
\begin{equation} \label{carleman_ass_params}
0 < f_\star \ll_{ \ms{g}, \mc{D}, X, k, l} 1 \text{,} \qquad \lambda \gg_{ \ms{g}, \mc{D}, X, k, l } |\kappa| + |\sigma| \text{,} \qquad 0 < 2 p < 1 \text{,}
\end{equation}
\end{itemize}
the following Carleman estimate holds for any vertical tensor field $\ms{\Phi}$ on $\mc{M}$ of rank $( k, l )$ such that both $\ms{\Phi}$ and $\smash{\nablam} \ms{\Phi}$ vanish on $\mc{M} \cap \{ f = f_\star \}$,
\begin{align} \label{CarlemanEstimate}
&\int_{ \mc{D} (f_\star) } e^{ - 2 \lambda p^{-1} f^{p} } f^{ n - 2 - p - 2 \kappa } | ( \Boxm + \sigma + \rho^2 \nablam_X ) \ms{\Phi} |_{ \mf{h} }^2 d \mu_g \\
\notag &\quad \qquad + \mc{C}_b \lambda^3 \limsup_{ \rho_\star \searrow 0 } \int_{ \mc{D} (f_\star) \cap \{ \rho = \rho_\star \} } [ | \Dm_{ \partial_\rho } ( \rho^{-\kappa} \ms{\Phi} ) |_{ \mf{h} }^2 + | \ms{D} ( \rho^{-\kappa} \ms{\Phi} ) |_{ \mf{h} }^2 + | \rho^{-\kappa-1} \ms{\Phi} |_{ \mf{h} }^2 ] d \mu_{ \ms{g} } \\
\notag &\quad \geq \lambda \int_{ \mc{D} (f_\star) } e^{-2 \lambda p^{-1} f^{p}} f^{ n - 2 - 2 \kappa } ( \rho^4 | \Dm_{ \partial_\rho } \ms{\Phi} |_{ \mf{h} }^2 + \rho^4 | \ms{D} \ms{\Phi} |_{ \mf{h} }^2 + f^{2p} | \ms{\Phi} |_{ \mf{h} }^2 ) d\mu_g \text{,}
\end{align}
where $\mc{D} ( f_\star )$ is the spacetime region
\begin{equation} \label{Df}
\mc{D} (f_\star) := [ ( 0, \rho_0 ] \times \mc{D} ] \cap \{ f < f_\star \} \text{.}
\end{equation}
Furthermore, if $X = 0$ and $k = l = 0$ ($\mathsf{\Phi}$ is scalar), then one can take $\mc{C}_0 = 0$ in the above.
\end{theorem}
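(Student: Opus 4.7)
\subsection*{Proof Strategy for Theorem \ref{TheoremCarlemanEstimate}}

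The plan is to follow the now-standard multiplier strategy for Carleman estimates, adapted to the aAdS setting as in \cite{Arick1, Arick2, Arick3}, but with the new foliation function $f = \rho / \eta$ and the new pseudoconvex structure provided by Lemma \ref{Lemmapi}. The first step is a renormalization: set $\ms{\Psi} := \rho^{-\kappa} \ms{\Phi}$ and conjugate by the weight $F := e^{-\lambda p^{-1} f^p}$, introducing $\ms{\Upsilon} := F \ms{\Psi}$. A direct computation using \eqref{mixed_connection_metric} and Lemma \ref{LemmaHessianf} rewrites the equation $(\Boxm + \sigma + \rho^2 \nablam_X)\ms{\Phi} = \ms{F}$ as a conjugated operator $\ms{L} \ms{\Upsilon}$ with principal part $\Boxm \ms{\Upsilon}$, plus first-order terms carrying factors of $\lambda f^{p-1} \nabla^\sharp f$ (from the weight) and factors of $\kappa \rho^{-1} \partial_\rho$ (from the rescaling), and a zeroth-order potential that includes $\sigma$, $\kappa^2 - (n-2)\kappa$, and contributions of size $\mc{O}(\lambda^2 f^{2p-2})$, $\mc{O}(\lambda f^p)$, $\mc{O}(1)$.

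The second step is the divergence identity. Pair $\ms{L}\ms{\Upsilon}$ against a first-order multiplier of the schematic form
\begin{equation*}
S \ms{\Upsilon} := f^{n-3} \nablam^{\sharp} f \cdot \nablam \ms{\Upsilon} + f^{n-3} w_\zeta \ms{\Upsilon}
\end{equation*}
with $w_\zeta := f + \rho f^2 \zeta$ as in \eqref{modifieddeformationtensor}, for a suitable function $\zeta$ that Proposition \ref{DefAdmissibleDomains0} guarantees exists. Taking the inner product $\hm( \ms{L}\ms{\Upsilon}, S\ms{\Upsilon})$ and integrating by parts converts the bulk integral into a sum of (a) a bulk quadratic form in $\nablam \ms{\Upsilon}$ whose principal symbol is precisely the modified deformation tensor $\pi_\zeta$ from Definition \ref{Defnmodifieddeformationtensor}, (b) a zeroth-order bulk term with coefficient that is positive thanks to the constraints \eqref{carleman_ass_kappa} on $\kappa$ and $\sigma$, and (c) boundary terms on $\{f = f_\star\}$ (which vanish by hypothesis on $\ms{\Phi}$) and on $\{\rho = \rho_\star\}$ as $\rho_\star \searrow 0$. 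This last limit gives precisely the boundary term on the left-hand side of \eqref{CarlemanEstimate} after accounting for the $\rho^{-\kappa}$-rescaling.

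The third step is to extract positivity. Lemma \ref{Lemmapi} gives, for $f$ small, that $\pi_\zeta(X,X)$ is positive definite in directions tangential to $\Sigma_c$ (with leading constant determined by the GNCC, via $\mf{D}^2\eta - \eta\bar{\mf{g}} - \zeta \mf{g} > c \eta \mf{h}$), that the normal-tangential off-diagonal term is $\mc{O}(\rho f^n)$, and that the normal-normal component is strictly negative of size $-(n-1)f^{n-2}$; the indefiniteness in the normal direction is then compensated by an additional term of the form $\lambda f^{p-1} (\nabla^{\sharp} f \cdot \nablam \ms{\Upsilon})^2$ arising from the exponential weight, which dominates for $\lambda$ large. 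The combined principal quadratic form is therefore coercive, yielding control of $\rho^4 |\Dm_{\partial_\rho}\ms{\Upsilon}|_{\mf{h}}^2 + \rho^4 |\ms{D}\ms{\Upsilon}|_{\mf{h}}^2 + \lambda^2 f^{2p-2}|\ms{\Upsilon}|_{\mf{h}}^2$ up to lower-order errors, which after the standard manipulation gives the right-hand side of \eqref{CarlemanEstimate}. The zeroth-order term is handled using \eqref{carleman_ass_kappa}, which is essentially the Breitenlohner--Freedman bound for the effective mass after renormalization; when $X=0$ and $\ms{\Phi}$ is scalar, no extra slack is needed, giving $\mc{C}_0 = 0$.

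The main obstacle, and the point where this proof genuinely differs from \cite{Arick3}, will be the fourth step: absorbing all error terms arising from (i) the tensorial nature of $\ms{\Phi}$, producing commutator contributions $\smash{\Rm}[\ms{\Upsilon}]$ that must be estimated using Assumption \ref{ass.aads}; (ii) the first-order spacetime vector field $\rho^2 \nablam_X$, which contributes a term that, thanks to the $\rho^2$ weight, can be absorbed into the $\mc{O}(\rho^4|\nablam \ms{\Upsilon}|^2)$ bulk RHS for $\lambda$ large depending on $X$; and (iii) the fact that $\eta$ depends on all boundary coordinates rather than on $t$ alone, meaning the cross-term $\ms{D}^\sharp \eta$ enters Lemma \ref{LemmaHessianf} nontrivially and must be handled via the full generality of Lemma \ref{Lemmapi}. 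These errors are each of strictly lower order in either $f$, $\rho$, or $\lambda^{-1}$ relative to the positive principal terms, so choosing $f_\star$ small (to suppress the $\mc{O}(\rho f^n)$ corrections in Lemma \ref{Lemmapi}) and $\lambda$ large (to dominate first-order terms) as in \eqref{carleman_ass_params} closes the estimate.
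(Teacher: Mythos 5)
Your proposal follows essentially the same route as the paper: conjugate by a weight, pair the conjugated operator against a first-order multiplier built from $S := f^{n-3}\nabla^\sharp f$, organize the bulk quadratic form around the modified deformation tensor $\pi_\zeta$, invoke Lemma \ref{Lemmapi} for pseudoconvexity, and absorb errors for $f_\star$ small and $\lambda$ large. One cosmetic divergence: the paper conjugates by $e^{-F}$ with $F := \kappa\log f + \lambda p^{-1} f^p$, so the renormalizing power is $f^{-\kappa}$ rather than your $\rho^{-\kappa}$; the two differ by the bounded smooth factor $\eta^{\kappa}$ and either choice can be made to work, though having $F$ depend on $f$ alone keeps the conjugation formulas cleaner and simplifies the boundary-term bookkeeping.

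The one genuine omission is the Hardy-type inequality that converts excess positivity in the $N$-derivative term into zeroth-order positivity (the paper's \eqref{Carleman_Hardy}). After the divergence identity, the raw zeroth-order coefficient that comes out of the computation is $\kappa^2 - n\kappa + \sigma$, while the usable $N$-derivative coefficient is $(2\kappa - n + 1 - \mc{C}_0)f^{n-2} + \frac{3}{2}\lambda f^{n-2+p} + \cdots$. Applying the Hardy inequality with $q=n-2$ and $q=n-2+p$ transfers the factor $(2\kappa - n + 1 - \mc{C}_0)$ from the $|\nablam_N\ms{\Psi}|^2$ line to the $|\ms{\Psi}|^2$ line, which is precisely how the condition $\kappa^2 - (n-2)\kappa + \sigma - (n-1) - \mc{C}_0 \geq 0$ of \eqref{carleman_ass_kappa} arises; without this step, the constraint on $\kappa$ you would obtain is $\kappa^2 - n\kappa + \sigma \geq 0$, which is not what the theorem requires. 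Your claim that the normal-direction indefiniteness in $\pi_\zeta(N,N) \sim -(n-1)f^{n-2}$ is compensated solely by the $\lambda$-weighted cross-term is therefore incomplete; the first part of \eqref{carleman_ass_kappa}, namely $2\kappa \geq n-1+\mc{C}_0$, exists exactly so that the Hardy step yields a nonnegative contribution rather than a loss. Absent this mechanism, the proof would produce a weaker (and gauge-dependent, since it would involve $\mc{C}_0$ additively on the wrong side) condition on $\kappa$, so the omission is substantive rather than cosmetic.

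A secondary imprecision: the zeroth-order positive bulk term is of the form $\lambda^2 f^{n-2+2p}|\ms{\Psi}|^2$, not $\lambda^2 f^{2p-2}|\ms{\Upsilon}|^2$; the correct $f$-power is needed to match the stated right-hand side $\lambda\int e^{-2\lambda p^{-1}f^p} f^{n-2-2\kappa} f^{2p}|\ms{\Phi}|^2\,d\mu_g$ after reinserting the conjugation. This does not affect the structure of the argument but will matter when you finally check that the integrated inequality is exactly \eqref{CarlemanEstimate}.
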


The proof of Theorem \ref{TheoremCarlemanEstimate} is given in Section \ref{ProofCarlemanEstimate} below.
The following unique continuation property for wave equations then follows as a consequence of Theorem \ref{TheoremCarlemanEstimate}:

\begin{cor}[Unique continuation] \label{CorollaryUniqueContinuation}
Let $( \mc{M}, g )$ be a spacetime satisfying Assumptions \ref{ass.manifold} and \ref{ass.aads}.
Furthermore, we assume the following: 
\begin{itemize}
\item Let $\mc{D} \subseteq \mc{I} $ be open with compact closure, and assume the GNCC is satisfied on $\mc{D}$.

\item let $k$, $l$, $\sigma$, $X$ be as in the statement of Theorem \ref{TheoremCarlemanEstimate}.
\end{itemize}
Then, there exists $\mc{C}_0 \geq 0$ (depending on $\ms{g}$, $\mc{D}$, $X$, $k$, $l$) so that the following unique continuation property holds:\ if $\ms{\Phi}$ is a vertical tensor field on $\mc{M}$ of rank $(k,l)$ such that
\begin{itemize}
\item there exist constants $p > 0$ and $\mc{C} > 0$ satisfying
\begin{equation} \label{UC_wave_ineq}
| ( \Boxm + \sigma + \rho^2 \nablam_X ) \ms{\Phi} |_{ \mf{h} }^2 \leq \mc{C} ( \rho^{4+p} | \Dm_{ \partial_\rho } \ms{\Phi} |_{ \mf{h} }^2 + \rho^{4+p} | \ms{D} \ms{\Phi} |_{ \mf{h} }^2 + \rho^{3p} | \ms{\Phi} |_{ \mf{h} }^2 ) \text{,}
\end{equation}

\item and $\ms{\Phi}$ satisfies, for some $\kappa \in \R$ satisfying \eqref{carleman_ass_kappa}, the vanishing condition
\begin{equation} \label{UC_vanishing}
\lim_{ \rho_\star \searrow 0 } \int_{ \{ \rho_\star \} \times \mc{D} } [ | \Dm_{ \partial_\rho } ( \rho^{-\kappa} \ms{\Phi} ) |_{ \mf{h} }^2 + | \ms{D} ( \rho^{-\kappa} \ms{\Phi} ) |_{ \mf{h} }^2 + | \rho^{-\kappa-1} \ms{\Phi} |_{ \mf{h} }^2 ] d \mu_{ \ms{g} } = 0 \text{,}
\end{equation}
\end{itemize}
then there exists $f_\star > 0$ (depending on $\ms{g}$, $\mc{D}$, $X$, $k$, $l$) such that $\ms{\Phi} = 0$ on the domain $\mc{D} ( f_\star )$ defined in \eqref{Df}.
Furthermore, if $X = 0$ and $k = l = 0$, then the above holds with $\mc{C}_0 = 0$.
\end{cor}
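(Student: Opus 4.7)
The plan is to deduce Corollary \ref{CorollaryUniqueContinuation} from Theorem \ref{TheoremCarlemanEstimate} via the standard localization-and-absorption scheme. Fix $f_\star$ small enough so that Theorem \ref{TheoremCarlemanEstimate} applies on $\mc{D}(f_\star)$, and choose auxiliary levels $0 < f_1 < f_0 < f_\star$. Introduce a smooth cutoff $\chi: \mc{M} \to [0,1]$ depending only on $f$, with $\chi \equiv 1$ on $\{f \leq f_1\}$ and $\chi \equiv 0$ on $\{f \geq f_0\}$; then $\chi \ms{\Phi}$ and $\nablam(\chi \ms{\Phi})$ both vanish on $\{f = f_\star\}$, so Theorem \ref{TheoremCarlemanEstimate} applies to $\chi \ms{\Phi}$.

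Expanding $(\Boxm + \sigma + \rho^2 \nablam_X)(\chi \ms{\Phi})$ by the product rule produces a principal piece $\chi \cdot (\Boxm + \sigma + \rho^2 \nablam_X)\ms{\Phi}$, controlled pointwise by the hypothesis \eqref{UC_wave_ineq}, together with commutator terms involving $\nablam \chi$ and $\Boxm \chi$ supported in the annulus $\{f_1 < f < f_0\}$. Since $\rho = f \eta$ with $\eta$ bounded on $\bar{\mc{D}}$, each term on the right-hand side of \eqref{UC_wave_ineq} contributes to the left-hand side of \eqref{CarlemanEstimate} a density that matches the corresponding term on the right-hand side up to a multiplicative factor of $\lambda^{-1} \eta^{p}$ for the gradient terms, or $\lambda^{-1} \eta^{3p}$ for the zeroth-order term. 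Hence, for $\lambda$ sufficiently large (depending only on $\mc{C}$, $\ms{g}$, $\mc{D}$), the principal contribution absorbs into the right-hand side of \eqref{CarlemanEstimate}. The boundary term at $\rho = 0$ vanishes as $\rho_\star \searrow 0$ by the hypothesis \eqref{UC_vanishing}, modulo controlling the cutoff derivatives; since $\partial_\rho \chi$ is supported where $\rho \sim \eta$ so that $|\partial_\rho \chi| \lesssim \rho^{-1}$ there, these extra terms reduce to expressions already dominated by \eqref{UC_vanishing}.

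After absorption, what remains is an inequality roughly of the form
\[
\mc{E}_{\text{cut}}(\lambda) \;\geq\; \tfrac{\lambda}{2} \int_{\{f \leq f_1\}} e^{-2\lambda p^{-1} f^{p}} f^{n - 2 - 2\kappa + 2p} \, |\ms{\Phi}|_{\mf{h}}^2 \, d\mu_g,
\]
where $\mc{E}_{\text{cut}}(\lambda)$ is generated by the commutator terms and satisfies $\mc{E}_{\text{cut}}(\lambda) \lesssim e^{-2\lambda p^{-1} f_1^{p}}$, since the weight $e^{-2\lambda p^{-1} f^{p}}$ is maximized on the cutoff transition region at $f = f_1$ and $\ms{\Phi}, \nablam\ms{\Phi}$ are locally bounded there by smoothness. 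Restricting the right-hand side to $\{f \leq f_1/2\}$ bounds the Carleman weight below by $e^{-2\lambda p^{-1}(f_1/2)^{p}}$, so multiplying through by $e^{2\lambda p^{-1}(f_1/2)^{p}}$ renders the left-hand side $O(e^{-c\lambda})$ with $c = 2p^{-1}[f_1^{p} - (f_1/2)^{p}] > 0$. Passing $\lambda \nearrow \infty$ then forces $\ms{\Phi} \equiv 0$ on $\mc{D}(f_1/2)$, which is the desired neighborhood of $\mc{D}$. The main conceptual point is the weight compatibility between the decay rates postulated in \eqref{UC_wave_ineq} and the coefficients in \eqref{CarlemanEstimate}: the matching exponents $\rho^{4+p}$ versus $f^{4}$ and $\rho^{3p}$ versus $f^{2p}$, together with the boundedness of $\eta$ on $\bar{\mc{D}}$, are precisely what permit the absorption, and this is where the generalized null convexity criterion enters implicitly through the definition of $f$.
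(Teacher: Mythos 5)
Your proposal follows the standard Carleman-to-unique-continuation scheme, which is precisely the route the paper takes---it omits the proof and cites \cite[Theorem 3.11]{Arick2}. The cutoff construction, the absorption bookkeeping (the ratios $\lambda^{-1}\eta^p$ and $\lambda^{-1}\eta^{3p}$ coming from $\rho = f\eta$ and the boundedness of $\eta$ on $\bar{\mc{D}}$), the $\rho_\star \searrow 0$ treatment of the boundary integral, and the concluding $\lambda \to \infty$ step are all correct.

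One step, however, is mis-justified. You assert that $\mc{E}_{\text{cut}}(\lambda) \lesssim e^{-2\lambda p^{-1} f_1^p}$ because ``$\ms{\Phi}$, $\nablam\ms{\Phi}$ are locally bounded [on the transition annulus] by smoothness.'' This is not sufficient: the annulus $\{ f_1 < f < f_0 \}$ is \emph{not} relatively compact in $\mc{M}$. Near $\partial\mc{D}$, where $\eta \searrow 0$, the constraint $f_1 \eta < \rho < f_0 \eta$ forces $\rho$ to be arbitrarily small, and there the volume measure $d\mu_g \sim \rho^{-(n+1)}\, d\rho \, d\mu_{\ms{g}}$ diverges. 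Local boundedness of $\ms{\Phi}$ therefore gives no control on the convergence of the cutoff commutator integral. The correct ingredient is the vanishing hypothesis \eqref{UC_vanishing}: it yields a uniform-in-$\rho$ bound on $\int_{\{\rho\}\times\mc{D}} [ | \Dm_{ \partial_\rho } ( \rho^{-\kappa} \ms{\Phi} ) |_{ \mf{h} }^2 + | \ms{D} ( \rho^{-\kappa} \ms{\Phi} ) |_{ \mf{h} }^2 + | \rho^{-\kappa-1} \ms{\Phi} |_{ \mf{h} }^2 ] \, d \mu_{ \ms{g} }$, which, combined with $2\kappa \geq n-1$ from \eqref{carleman_ass_kappa}, your estimates $|\nabla\chi| \lesssim \rho^{-1}$ and $\Box_g \chi = \mc{O}(1)$ on the annulus, and the boundedness of the weight $e^{-2\lambda p^{-1} f^p} f^{n-2-p-2\kappa}$ there, gives finiteness of the annulus integral and the stated $e^{-2\lambda p^{-1} f_1^p}$ decay. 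You already invoke \eqref{UC_vanishing} for the boundary term; it must be invoked for the integrability of the commutator terms as well, rather than smoothness alone.
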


We omit the proof of Corollary \ref{CorollaryUniqueContinuation}, as this follows from a standard argument using the Carleman estimate \eqref{CarlemanEstimate}; for details, see the proof of \cite[Theorem 3.11]{Arick2}.

\begin{remark} \label{rmk.kappa}
We note that if $X = 0$ and $k = l = 0$ (namely, $\ms{\Phi}$ is scalar), then Corollary \ref{CorollaryUniqueContinuation} holds with the optimal $\kappa$ (like in \cite{Arick1, Arick2, Arick3}) for a certain range of $\sigma$.
In particular, here we can take $\mc{C}_0$ in Corollary \ref{CorollaryUniqueContinuation}, and the smallest allowed value of $\kappa$ is given by
\begin{align*}
\kappa = \begin{cases} \frac{n-2}{2} + \sqrt{ \frac{n^2}{4} - \sigma } \text{,} &\quad \sigma \leq \frac{n^2 - 1}{4} \text{,} \\ \frac{n-1}{2} \text{,} &\quad \sigma \geq \frac{n^2 - 1}{4} \text{.} \end{cases}
\end{align*}
\end{remark}

\subsection{Proof of Theorem \ref{TheoremCarlemanEstimate}} \label{ProofCarlemanEstimate}

Throughout this subsection, we assume the hypotheses of Theorem \ref{TheoremCarlemanEstimate}.
The proof is, to a large extent, analogous to that of the Carleman estimate from \cite[Theorem 5.11]{Arick3}.
As a result, we will omit details from the parts that are close to \cite{Arick3} and focus our attention primarily on the arguments that differ.
\footnote{The main differences with \cite{Arick3} are that we use a different function $\eta$ in our definition of $f$, and that we use a different Riemannian metric $\mf{h}$ to measure the sizes of our vertical tensor fields.}

By \cite[Corollary 3.5]{Arick3} and a partition of unity argument, there is a $\zeta \in C^2 ( \bar{\mc{D}} )$ such that
\begin{equation} \label{zeta}
( \mf{D}^2 \eta - \eta \, \bar{\mf{g}} - \zeta \, \mf{g} ) ( \mf{X}, \mf{X} ) > c \eta \, \mf{h} ( \mf{X}, \mf{X} )
\end{equation}
for any tangent vector $\mf{X} \in T \mc{D}$.
We then define $\pi_\zeta$ and $w_\zeta$ as in Definition \ref{Defnmodifieddeformationtensor}, using the above $\zeta$.
In addition, we let $N$ be as in Definition \ref{DefinitionNormal}, and we set the following:
\begin{itemize}
\item We define the vector field $S$ and the real-valued function $v_\zeta$ by
\footnote{Here, $\div_g$ denotes the divergence with respect to $g$.}
\begin{equation} \label{DefinitionS}
S := f^{n-3} \nabla^\sharp f \text{,} \qquad v_\zeta := f^{n-3} w_\zeta + \frac{1}{2} \div_g S \text{.}
\end{equation}

\item We define the auxiliary quantities
\begin{equation} \label{DefinitionauxiliaryF}
F := \kappa \log f + \lambda p^{-1} f^p \text{,} \qquad \ms{\Psi} := e^{-F} \ms{\Phi} \text{.}
\end{equation}

\item Throughout the proof, we will view $F := F (f)$ as a function of $f$.
In particular, we write $F', F'', \dots$ to denote derivatives of $F$ with respect to $f$.

\item We also define the following differential operators:
\begin{equation} \label{DefinitionOperators}
\mc{L} := e^{-F} ( \Boxm + \sigma ) e^F \text{,} \qquad \mc{L}^\dagger := e^{-F} ( \Boxm + \sigma + \rho^2 \nablam_X ) e^F \text{,} \qquad \breve{S}_\zeta := \nablam_S + v_\zeta \text{.}
\end{equation}
\end{itemize}
Finally, we assume that for all $\mc{O}$-notations, the constants can depend on $\ms{g}$, $\mc{D}$, $X$, $k$, $l$.

\subsubsection{Preliminary Computations} \label{PreliminaryEstimates}

Here, we collect a number of computations and asymptotic properties that we will require later in the proof.

\begin{lemma}[Asymptotics for $S$ and $v_\zeta$] \label{LemmaAsymptoticsS}
The following hold for $S$:
\begin{equation} \label{AsymptoticsS}
S = [ f^{n-2} + \mc{O} ( f^n ) ] N \text{,} \qquad \div_g S = -2 f^{n-2} + \mc{O} ( f^n ) \text{.}
\end{equation}
Moreover, the following hold for $v_\zeta$:
\begin{align} \label{Asymptotics_vzeta}
v_\zeta = \mc{O} ( f^n ) \text{,} &\qquad \nabla_\rho v_\zeta = \mc{O} ( \rho^{-1} f^n ) \text{,} \\
\notag \ms{D} v_\zeta = \mc{O} ( \rho^{-1} f^{n+1} ) \text{,} &\qquad \Box_g v_\zeta = \mc{O} ( f^n ) \text{.}
\end{align}
\end{lemma}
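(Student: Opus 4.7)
My plan is to derive each of the asymptotic statements by direct computation, building on the formulas from Lemmas \ref{LemmaDerivativesf}, \ref{LemmaGradientf}, \ref{LemmaHessianf}, and \ref{LemmaWaveOperatorf}. First, I would address the two identities in \eqref{AsymptoticsS}. For the first, I rewrite $\nabla^\sharp f = |g(\nabla^\sharp f, \nabla^\sharp f)|^{1/2} N$ using Definition \ref{DefinitionNormal} and expand the square root using the asymptotic $g(\nabla^\sharp f, \nabla^\sharp f) = f^2 + \mc{O}(f^4)$ from \eqref{AsympGradientf} to obtain $|\nabla^\sharp f| = f + \mc{O}(f^3)$, whence $S = f^{n-3} \nabla^\sharp f = [f^{n-2} + \mc{O}(f^n)] N$. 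For the second, I apply the Leibniz rule to obtain
\begin{equation*}
\div_g S = (n-3) f^{n-4} g(\nabla^\sharp f, \nabla^\sharp f) + f^{n-3} \Box_g f,
\end{equation*}
and then substitute \eqref{AsympGradientf} and \eqref{AsympWaveOperatorf}; the leading contributions combine as $[(n-3)-(n-1)] f^{n-2} = -2 f^{n-2}$, and everything else is $\mc{O}(f^n)$.

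With these preliminary asymptotics, the bound $v_\zeta = \mc{O}(f^n)$ follows by substituting into Definition \ref{Defnmodifieddeformationtensor}, writing $w_\zeta = f + \rho f^2 \zeta$ and using $\rho = f \eta$, which yields
\begin{equation*}
v_\zeta = f^{n-2} + \eta \zeta f^n + \tfrac{1}{2}\bigl[-2 f^{n-2} + \mc{O}(f^n)\bigr] = \mc{O}(f^n),
\end{equation*}
so the essential feature is the cancellation of the two $f^{n-2}$ contributions. The bounds on $\nabla_\rho v_\zeta$ and $\ms{D} v_\zeta$ then follow by differentiating the resulting explicit expansion of $v_\zeta$ as a sum of terms of the form $f^k \cdot \text{(powers of $\rho$)} \cdot \text{(factors smooth on $\bar{\mc{D}}$)}$: from \eqref{Derivativesf}, each $\partial_\rho$ applied to an $f$-factor contributes an additional $\rho^{-1}$, giving $\nabla_\rho v_\zeta = \mc{O}(\rho^{-1} f^n)$, while each $\partial_b$ applied to an $f$-factor contributes $\rho^{-1} f$, giving $\ms{D} v_\zeta = \mc{O}(\rho^{-1} f^{n+1})$.

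The most delicate estimate is $\Box_g v_\zeta = \mc{O}(f^n)$, because a naive application of the wave operator to an $\mc{O}(f^n)$ quantity only gives $\mc{O}(f^{n-2})$, so $\Box_g$ could in principle drop two powers of $f$. The resolution rests on the model identity
\begin{equation*}
\Box_g f^k = k(k-1) f^{k-2} g(\nabla^\sharp f, \nabla^\sharp f) + k f^{k-1} \Box_g f = k(k-n) f^k + \mc{O}(f^{k+2}),
\end{equation*}
obtained by combining \eqref{AsympGradientf} and \eqref{AsympWaveOperatorf}; at $k = n-2$ this yields $\Box_g f^{n-2} = -2(n-2) f^{n-2} + \mc{O}(f^n)$, while at $k = n$ it gives $\Box_g f^n = \mc{O}(f^{n+2})$. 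I would then decompose $\Box_g v_\zeta = \Box_g[f^{n-3} w_\zeta] + \tfrac{1}{2}\Box_g(\div_g S)$ and apply the product formula $\Box_g(FG) = G \Box_g F + F \Box_g G + 2 g(\nabla F, \nabla G)$ together with the model identity to show that $\Box_g[f^{n-3} w_\zeta] = -2(n-2) f^{n-2} + \mc{O}(f^n)$ and $\tfrac{1}{2}\Box_g(\div_g S) = 2(n-2) f^{n-2} + \mc{O}(f^n)$. Adding these produces the required $\mc{O}(f^n)$ bound via the same cancellation of $f^{n-2}$ contributions that drove $v_\zeta = \mc{O}(f^n)$ itself. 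The main expected obstacle is the careful bookkeeping of the expansion of $\div_g S$ one order beyond what appears in \eqref{AsymptoticsS}---keeping explicit track of auxiliary terms such as $\rho f^{n-1} \tr_{\ms{g}}(\ms{D}^2 \eta)$ and $\rho f^{n-2} \tr_{\ms{g}} \mc{L}_\rho \ms{g}$---and verifying that $\Box_g$ applied to each such term remains of size $\mc{O}(f^n)$; this requires no new geometric input beyond the asymptotic identities already available.
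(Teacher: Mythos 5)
Your proposal is correct, and for the identities \eqref{AsymptoticsS}, the bound $v_\zeta = \mc{O}(f^n)$, and the $\nabla_\rho$ and $\ms{D}$ estimates, your route coincides with the paper's: in each case one expands $\div_g S$ exactly, observes the cancellation $f^{n-3}w_\zeta + \tfrac{1}{2}\div_g S = f^{n-2} - f^{n-2} + \dots$ (yielding the paper's explicit formula \eqref{DefinitionauxiliarlyvzetaBETTER}), and then differentiates term by term using \eqref{Derivativesf}.

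For the last estimate $\Box_g v_\zeta = \mc{O}(f^n)$ your organization is a mild but genuine variant. The paper applies $\Box_g$ directly to the exact expression \eqref{DefinitionauxiliarlyvzetaBETTER}, in which the $f^{n-2}$ cancellation has already occurred, so that each remaining term is individually $\mc{O}(f^n)$ and one only needs to check that $\Box_g$ applied to terms of the form $f^n \cdot (\text{bounded})$ or $\rho f^{n-1}\cdot(\text{bounded})$ stays $\mc{O}(f^n)$. You instead keep the two pieces $f^{n-3}w_\zeta$ and $\tfrac{1}{2}\div_g S$ separate, apply the clean model identity $\Box_g f^k = k(k-n)f^k + \mc{O}(f^{k+2})$ to each, and cancel the resulting $\mp 2(n-2)f^{n-2}$ contributions \emph{after} the differentiation. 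The model identity is a nice and reusable packaging of \eqref{AsympGradientf}--\eqref{AsympWaveOperatorf}, and it isolates exactly why a naive two-power loss does not occur: at $k = n$ the $f^k$-coefficient vanishes. Both organizations require the same underlying inputs (the exact expansion \eqref{eql.AsymptoticsS_1} of $\div_g S$, the metric and Christoffel asymptotics, and the regularity $\eta\in C^4$, $\zeta\in C^2$), and both involve the same verification that $\Box_g$ applied to the subleading terms in $\div_g S$ remains $\mc{O}(f^n)$; yours simply postpones the cancellation one step. The estimates in the final paragraph of your proposal check out, and no step fails.
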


\begin{proof}
The first two parts of \eqref{AsymptoticsS} follow immediately from Lemma \ref{LemmaGradientf}, \eqref{Normal}, and \eqref{DefinitionS}.
For the last part of \eqref{AsymptoticsS}, we first apply Lemmas \ref{LemmaGradientf} and \ref{LemmaWaveOperatorf} to obtain
\begin{align} \label{eql.AsymptoticsS_1}
\div_g S &= (n-3) f^{n-4} \, g ( \nabla^\sharp f, \nabla^\sharp ) + f^{n-3} \Box_g f \\
\notag &= -2 f^{n-2} + (n-1) f^n \, g ( \ms{D}^\sharp \eta, \ms{D}^\sharp \eta ) - \rho f^{n-1} \tr_{ \ms{g} } \ms{D}^2 \eta + \frac{1}{2} \rho f^{n-2} \tr_{ \ms{g} } \mc{L}_\rho \ms{g} \text{.}
\end{align}
The asymptotics for $\div_g S$ then follow from the above, along with the asymptotics
\[
\ms{D} \eta = \mc{O} (1) \text{,} \qquad \ms{D}^2 \eta = \mc{O} (1) \text{,} \qquad \mc{L}_\rho \ms{g} = \mc{O} ( \rho ) \text{,}
\]
and the observation that $\rho \lesssim f$.

Next, from \eqref{modifieddeformationtensor}, \eqref{DefinitionS}, and \eqref{AsymptoticsS}, we compute
\begin{align} \label{DefinitionauxiliarlyvzetaBETTER}
v_\zeta &= \rho f^{n-1} \zeta + \frac{1}{2} (n-1) f^n \, \ms{g} ( \ms{D}^\sharp \eta, \ms{D}^\sharp \eta ) - \frac{1}{2} \rho f^{n-1} \tr_{ \ms{g} } \ms{D}^2 \eta + \frac{1}{4} \rho f^{n-2} \tr_{ \ms{g} } \mc{L}_\rho \ms{g} \text{.}
\end{align}
The first part of \eqref{Asymptotics_vzeta} follows immediately from the above.
The remaining parts of \eqref{Asymptotics_vzeta} then follow from differentiating \eqref{DefinitionauxiliarlyvzetaBETTER} as needed, along with some tedious computations; see \cite[Lemma 5.23]{Arick3} for further details from an analogous computation.
\end{proof}

\begin{lemma}[Asymptotics for $\smash{\Rm}$] \label{LemmaVerticalCurvatureEstimates}
Let $\ms{A}$ be a vertical tensor field of rank $(k, l)$.
In addition, let $\mf{E}$ be a vector field on $\mc{D}$, and let $E = \mc{P} \mf{E}$.
\footnote{See Definition \ref{DefinitionVa} for the definition of $\mc{P}$.}
Then,
\begin{equation} \label{VerticalCurvatureEstimates}
| \Rm_{ N E } [ \ms{A} ] |_{ \mf{h} } \lesssim (k+l) \rho^2 f \, | \mf{E} |_{ \mf{h} } | \ms{A} |_{ \mf{h} } \text{.}
\end{equation}
\end{lemma}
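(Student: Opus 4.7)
The plan is to expand $\Rm_{NE}[\ms{A}]$ bilinearly in $N$ and $E$ using the explicit decompositions from Definitions \ref{DefinitionNormal} and \ref{DefinitionVa}:
\[
N = \rho\,\mc{O}(1)\,\partial_\rho - \rho f\,\mc{O}(1)\,\ms{D}^\sharp\eta, \qquad E = \rho f\,(\mf{E}\eta)\,\partial_\rho + \rho\,\mf{E}.
\]
Since $\Rm$ is antisymmetric in its two vector arguments, the $\partial_\rho\wedge\partial_\rho$ contribution vanishes, and $\Rm_{NE}[\ms{A}]$ reduces to a sum of (i) a purely vertical commutator with prefactor $\rho^2 f$, namely $\Rm_{\ms{D}^\sharp\eta,\mf{E}}[\ms{A}]$, and (ii) two $\partial_\rho$-involving commutators carrying respective prefactors $\rho^2$ and $\rho^2 f^2(\mf{E}\eta)$.

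The next step is to exploit, for each of the three pieces, the formulas from \cite{Arick3} that express the mixed curvature $\Rm$ acting on a vertical tensor of rank $(k,l)$ in terms of the intrinsic vertical Riemann tensor $\ms{R}$ and correction terms built from $\mc{L}_\rho\ms{g}$, $\ms{D}(\mc{L}_\rho\ms{g})$, and $\mc{L}_\rho\ms{\Gamma}$. For the purely tangential commutator, one obtains the schematic bound $|\ms{R}|_{\mf{h}}\,|\ms{D}^\sharp\eta|_{\mf{h}}\,|\mf{E}|_{\mf{h}}\,|\ms{A}|_{\mf{h}}$, with an overall factor $(k+l)$ from counting the tensor indices of $\ms{A}$ on which $\ms{R}$ acts; since $\ms{R}$ converges to the boundary Riemann tensor of $\mf{g}$ as $\rho\searrow 0$, this piece is bounded by $(k+l)\,\mc{O}(1)\,|\mf{E}|_{\mf{h}}\,|\ms{A}|_{\mf{h}}$. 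For the mixed $\partial_\rho$-commutators $\Rm_{\partial_\rho,\ms{Y}}[\ms{A}]$, every term in the decomposition carries a factor of $\mc{L}_\rho\ms{g}$, $\ms{D}(\mc{L}_\rho\ms{g})$, or $\mc{L}_\rho\ms{\Gamma}$, all of which are $\mc{O}(\rho)$ by the Fefferman--Graham expansion \eqref{metricdefiniton2} of Assumption \ref{ass.aads}; this yields
\[
|\Rm_{\partial_\rho,\ms{Y}}[\ms{A}]|_{\mf{h}} \lesssim (k+l)\,\rho\,|\ms{Y}|_{\mf{h}}\,|\ms{A}|_{\mf{h}}.
\]

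Combining the three estimates, and using that $\mf{E}\eta$ is bounded by $|\mf{E}|_{\mf{h}}$ and that $\ms{D}^\sharp\eta$ is locally bounded, produces contributions of orders $(k+l)\rho^3|\mf{E}|_{\mf{h}}|\ms{A}|_{\mf{h}}$, $(k+l)\rho^3 f^2|\mf{E}|_{\mf{h}}|\ms{A}|_{\mf{h}}$, and $(k+l)\rho^2 f\,|\mf{E}|_{\mf{h}}|\ms{A}|_{\mf{h}}$ respectively. Because $\rho=f\eta$ with $\eta$ bounded on $\bar{\mc{D}}$, we have $\rho\lesssim f$, so $\rho^3\lesssim\rho^2 f$ and the first two contributions are absorbed into the third, which yields the claimed \eqref{VerticalCurvatureEstimates}. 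The main obstacle, and the reason the final estimate scales like $\rho^2 f$ rather than merely $\rho^2$, is extracting the extra $\mc{O}(\rho)$ smallness in the mixed $\partial_\rho$-commutator; this is available precisely because the Fefferman--Graham normalisation forces $\mc{L}_\rho\ms{g}$ to vanish to first order at the boundary, and once this is invoked the remainder is routine bilinear bookkeeping.
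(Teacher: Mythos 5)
Your proposal follows essentially the same route as the paper: decompose $N$ and $E$ via \eqref{Normal} and \eqref{Va}, expand $\Rm_{NE}[\ms{A}]$ bilinearly, invoke the bounds $|\Rm_{ab}[\ms{A}]| \lesssim (k+l)|\ms{A}|$ and $|\Rm_{\rho a}[\ms{A}]| \lesssim (k+l)\rho\,|\ms{A}|$ (which the paper cites from \cite[Lemma 5.24]{Arick3} and you sketch from the Fefferman--Graham normalisation of $\mc{L}_\rho\ms{g}$), and absorb the subleading $\rho^3$ and $\rho^3 f^2$ contributions via $\rho \lesssim f$. The only difference is that you flesh out the source of the $\mc{O}(\rho)$ gain in the mixed commutator rather than citing it, which is a matter of exposition, not method.
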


\begin{proof}
Fix a compact coordinate system $( U, \varphi )$ on $\mc{I}$.
The main observations are the bounds
\begin{equation} \label{VerticalCurvaturePre}
| \Rm_{ab} [ \ms{A} ] |_{ 0, \varphi } \lesssim_\varphi ( k + l ) \, | \ms{A} |_{ 0, \varphi} \text{,} \qquad	| \Rm_{\rho a} [ \ms{A} ] |_{ 0, \varphi } \lesssim_\varphi (k+l) \rho \, | \ms{A} |_{ 0, \varphi } \text{,}
\end{equation}
stated in terms of $\varphi$-coordinates; for their derivations, see the proof of \cite[Lemma 5.24]{Arick3}, which is directly applicable to our current setting.
\footnote{The essential formulas for expanding $\Rm [ \ms{A} ]$ can be found in \cite[Proposition 2.32]{Arick3}.}
Now, by \eqref{Normal} and \eqref{Va},
\[
\Rm_{N E} [ \ms{A} ] = \Rm [ \ms{A} ] ( \mc{O} ( \rho ) \, \partial_\rho + \mc{O} ( \rho f ) \, \ms{D}^\sharp \eta, \, \rho f ( \mf{E} \eta ) \, \partial_\rho + \rho \mf{E} ) \text{.}
\]
The bound \eqref{VerticalCurvatureEstimates} now follows from \eqref{VerticalCurvaturePre} and the above.
\end{proof}

\begin{lemma}[Asymptotics for $\mf{h}$] \label{EstimatesVerticalRiemannianMetrich}
Let $\mf{E}$ be a vector field on $\mc{D}$, and let $E = \mc{P} \mf{E}$.
Then,
\begin{align} \label{Estimates_h}
\nablam_\rho \mf{h} = \mc{O} ( \rho ) \text{,} &\qquad \nablam_N \mf{h} = \mc{O} ( \rho f ) \text{,} \qquad \nablam_E \mf{h} = \mc{O} ( \rho ) ( \mf{E} ) \text{,} \qquad \Boxm \mf{h} = \mc{O} ( \rho^2 ) \text{.}
\end{align}
\end{lemma}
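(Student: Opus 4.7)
The proof will proceed by computing each of the four quantities directly from the formulas for the extended vertical connection (Definition \ref{VerticalConnection}) and the mixed connection (Definition \ref{MixedTensor}), then invoking the Fefferman--Graham asymptotics \eqref{metricdefiniton2} to extract the claimed decay rates. The essential observation is that $\mathfrak{h}$, being a $\rho$-independent vertical tensor field of rank $(0,2)$, satisfies $\mathcal{L}_\rho \mathfrak{h} = 0$, so all $\rho$-derivatives of $\mathfrak{h}$ arise solely from connection corrections built out of $\mathcal{L}_\rho \mathsf{g}$.

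First, for $\Dm_\rho \mathfrak{h}$ I would apply the formula in Definition \ref{VerticalConnection} directly; since $\mathcal{L}_\rho \mathfrak{h} = 0$, only the two covariant-index correction terms survive, giving
\[
( \Dm_\rho \mathfrak{h} )_{bc} = -\tfrac{1}{2} \mathsf{g}^{de} \mathcal{L}_\rho \mathsf{g}_{be} \, \mathfrak{h}_{dc} - \tfrac{1}{2} \mathsf{g}^{de} \mathcal{L}_\rho \mathsf{g}_{ce} \, \mathfrak{h}_{bd} \text{.}
\]
Since $\mathcal{L}_\rho \mathsf{g} = \mathcal{O}(\rho)$ by \eqref{metricdefiniton2}, and the other factors are locally bounded, this yields $\nablam_\rho \mathfrak{h} = \Dm_\rho \mathfrak{h} = \mathcal{O}(\rho)$. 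For the tangential derivatives, I would then expand $E = \mathcal{P} \mathfrak{E} = \rho [f (\mathfrak{E}\eta) \partial_\rho + \mathfrak{E}]$ using Definition \ref{DefinitionVa}, combine with \eqref{vertical_connection_extend}, and estimate $\Dm_{\mathfrak{E}} \mathfrak{h}$. The latter is an intrinsic vertical covariant derivative on the level set $\{\rho = \text{const}\}$; although $\mathscr{D}$ is compatible with $\mathsf{g}$ rather than $\mathfrak{h}$, the difference tensor $\mathscr{D} - \mathfrak{D}$ is controlled by $\mathsf{g} - \mathfrak{g} = \mathcal{O}(\rho^2)$, so $\Dm_{\mathfrak{E}} \mathfrak{h}$ is simply $\mathcal{O}(1)(\mathfrak{E})$ on $\bar{\mathcal{D}}$ by compactness. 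Combining the $\rho$-prefactor from $\mathcal{P}$ with the $\mathcal{O}(\rho)$ bound on $\Dm_\rho \mathfrak{h}$ then gives $\nablam_E \mathfrak{h} = \mathcal{O}(\rho)(\mathfrak{E})$.

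For $\nablam_N \mathfrak{h}$, I would insert the explicit expansion \eqref{Normal} for $N$:\ the $\partial_\rho$ piece contributes $\rho \cdot \mathcal{O}(\rho) = \mathcal{O}(\rho^2) \subseteq \mathcal{O}(\rho f)$ (using $\rho \lesssim f$ on $\mathcal{D}(f_\star)$), while the $f \, \mathsf{D}^\sharp \eta$ piece contributes $\rho f \cdot \mathcal{O}(1) = \mathcal{O}(\rho f)$, yielding the stated bound. Finally, for the wave operator, I would write
\[
\Boxm \mathfrak{h} = g^{\rho\rho} \nablam^2_{\rho\rho} \mathfrak{h} + g^{bc} \nablam^2_{bc} \mathfrak{h} = \rho^2 \nablam^2_{\rho\rho} \mathfrak{h} + \rho^2 \mathsf{g}^{bc} \nablam^2_{bc} \mathfrak{h} \text{,}
\]
and expand each Hessian component using the standard formula $\nablam^2_{\mu\nu} \mathfrak{h} = \nablam_\mu \nablam_\nu \mathfrak{h} - \Gamma^\lambda_{\mu\nu} \nablam_\lambda \mathfrak{h}$ together with the Christoffel symbols \eqref{expansionGamma}. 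Both components of the Hessian turn out to be $\mathcal{O}(1)$: for $\nablam^2_{\rho\rho} \mathfrak{h}$ one differentiates the $\mathcal{O}(\rho)$ expression for $\Dm_\rho \mathfrak{h}$ and cancels against the $-\Gamma^\rho_{\rho\rho} = \rho^{-1}$ correction; for $\nablam^2_{bc}\mathfrak{h}$ the dangerous $\rho^{-1} \mathsf{g}_{bc} \nablam_\rho \mathfrak{h}$ term from $\Gamma^\rho_{bc}$ is $\rho^{-1}\cdot \mathcal{O}(\rho) = \mathcal{O}(1)$, and all remaining terms are bounded. Multiplying by the $\rho^2$ prefactors delivers $\Boxm \mathfrak{h} = \mathcal{O}(\rho^2)$.

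The main obstacle is purely bookkeeping:\ in the $\Boxm \mathfrak{h}$ calculation one must verify that no $\rho^{-2}$ or $\rho^{-1}$ singular terms survive after combining the Christoffel symbol expansions with the connection-correction terms hidden inside iterated applications of $\Dm$. This is the direct analogue of \cite[Lemma 5.25]{Arick3}, and the same cancellation structure applies here; the only difference is the replacement of the previous metric by our general $\mathfrak{h}$, whose $\rho$-independence as a vertical field is precisely what ensures $\mathcal{L}_\rho \mathfrak{h} = 0$ and therefore allows the estimates to close at the stated orders.
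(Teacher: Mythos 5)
Your proposal is correct and takes essentially the same approach as the paper: observe $\mathcal{L}_\rho \mathfrak{h} = 0$, compute $\Dm_\rho \mathfrak{h}$ from the connection-correction terms in Definition \ref{VerticalConnection} to get $\mathcal{O}(\rho)$, then propagate this through \eqref{Normal}, \eqref{Va}, and the Christoffel symbols \eqref{expansionGamma} for the remaining three bounds, with the $\rho^{-1}$ Christoffel terms rendered harmless by the $\mathcal{O}(\rho)$ bound on $\Dm_\rho\mathfrak{h}$. One small terminological quibble: in the $\nablam^2_{\rho\rho}\mathfrak{h}$ step there is no actual cancellation between $\Dm_\rho(\Dm_\rho\mathfrak{h})$ and the $\rho^{-1}\Dm_\rho\mathfrak{h}$ correction --- both terms are separately $\mathcal{O}(1)$ --- and in the $\nablam_E\mathfrak{h}$ estimate the dominant contribution comes from the $\rho\,\mathscr{D}_{\mathfrak{E}}\mathfrak{h}$ piece (giving $\mathcal{O}(\rho)$), not from $\Dm_\rho\mathfrak{h}$ (which only contributes $\mathcal{O}(\rho^2 f)$); your phrasing is slightly imprecise on this point but the underlying estimate is right.
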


\begin{proof}
First, we clearly have that
\begin{equation} \label{eql.Estimates_h_1}
\mf{h} = \mc{O} (1) \text{,} \qquad \Dm \mf{h} = \mc{O} (1) \text{,} \qquad \Dm^2 \mf{h} = \mc{O} (1) \text{.}
\end{equation}
For convenience, in the upcoming computations, we will state quantities with respect to a compact coordinate system $( U, \varphi )$ on $\mc{I}$.
Since $\mc{L}_\rho \mf{h} = 0$ by definition, we derive that
\begin{equation} \label{eql.Estimates_h_2}
\nablam_\rho \mf{h}_{ab} = - \frac{1}{2} \ms{g}^{dc} \mc{L}_\rho \ms{g}_{ac} \mf{h}_{db} - \frac{1}{2} \ms{g}^{dc} \mc{L}_\rho \ms{g}_{bc} \mf{h}_{ad} \text{.}
\end{equation}
The first part of \eqref{Estimates_h} follows from Lemma \ref{LemmaMetric} and the above.
Next, by \eqref{Normal} and \eqref{Va},
\[
\nablam_N \mf{h} = \mc{O} ( \rho ) \, \nablam_\rho \mf{h} + \mc{O} ( \rho f ) \, \ms{D}_{ \ms{D}^\sharp \eta } \mf{h} \text{,} \qquad \nablam_E \mf{h} = \rho f ( \mf{E} \eta ) \, \nablam_\rho \mf{h} + \rho \ms{D}_{ \mf{E} } \mf{h} \text{.}
\]
The second and third parts of \eqref{Estimates_h} now follow from the first part and from the above.

Now, we unwind the definitions of $\smash{\nablam}^2$ and $\ms{D}^2$ (see Definitions \ref{MixedTensor} and \ref{MixedTensorOps}) to obtain
\begin{align} \label{eql.Estimates_h_3}
\nablam^2_{ab} \mf{h} &= \ms{D}_{ab} \mf{h} - \Dm_{ \nabla_a \partial_b - \ms{D}_a \partial_b } \mf{h} \\
\notag &= \ms{D}_{ab} \mf{h} + \left( \frac{1}{2} \mc{L}_\rho \ms{g}_{ab} - \rho^{-1} \ms{g}_{ab} \right) \Dm_\rho \mf{h} \\
\notag &= \mc{O} (1)_{ab} \text{,}
\end{align}
where we also used the first part of \eqref{Estimates_h} and \eqref{eql.Estimates_h_1}.
Moreover, differentiating \eqref{eql.Estimates_h_2} yields
\begin{equation} \label{eql.Estimates_h_4}
\Dm_\rho ( \Dm_\rho \mf{h} )_{ a b } = - \frac{1}{2} \ms{g}^{dc} \, \Dm_\rho ( \mc{L}_\rho \ms{g}_{ac} \mf{h}_{db} + \mc{L}_\rho \ms{g}_{bc} \mf{h}_{ad} ) = \mc{O} (1)_{ a b } \text{,}
\end{equation}
where we again recalled Lemma \ref{LemmaMetric} and the first part of \eqref{Estimates_h}.
Unwinding the definitions of $\smash{\nablam}^2$ and $( \smash{\Dm}_\rho )^2$, and then applying the first part of \eqref{Estimates_h} and \eqref{eql.Estimates_h_4}, we obtain
\begin{align} \label{eql.Estimates_h_5}
\nablam^2_{ \rho \rho } \mf{h} &= \Dm_\rho ( \Dm_\rho \mf{h} ) - \nablam_{ \nabla_\rho \partial_\rho } \mf{h} \\
\notag &= \Dm_\rho ( \Dm_\rho \mf{h} ) + \rho^{-1} \nablam_\rho \mf{h} \\
\notag &= \mc{O} (1)
\end{align}
Finally, combining Lemma \ref{LemmaMetric}, \eqref{eql.Estimates_h_3}, and \eqref{eql.Estimates_h_5} yields
\[
\Boxm \mf{h} = g^{ \rho \rho } \nablam_{ \rho \rho } \mf{h} + g^{ a b } \nablam_{ a b } \mf{h} = \mc{O} ( \rho^2 ) \text{,}
\]
which is precisely the last part of \eqref{Estimates_h}.
\end{proof}

\begin{remark}
Lemma \ref{EstimatesVerticalRiemannianMetrich} represents one part of the proof of Theorem \ref{TheoremCarlemanEstimate} that differs from \cite{Arick3}.
This is due to our use of a different Riemannian metric to measure vertical tensor fields.
However, our metric $\mf{h}$ satisfies the same asymptotic estimates as its counterpart $\ms{h}$ in \cite{Arick3}, so this does not significantly affect other portions of the proof compared to \cite{Arick3}.
\end{remark}

\begin{lemma}[Expansion of $\mc{L}$] \label{LemmaEstimatesConjugatedWaveOperator}
The following formula holds for $\mc{L}$:
\begin{align}\label{DefinitionMathcalLBetter}
\mc{L} &= \Boxm + 2 F' f^{-(n-3)} \nablam_S + \mc{A} \text{,} \\
\notag \mc{A} &= [ ( F' )^2 + F'' ] \, g ( \nabla^\sharp f, \nabla^\sharp f ) + F' \Box_g f + \sigma \text{.}
\end{align}
In addition, we have the following asymptotic formulas:
\begin{align} \label{Asymptotics_A}
\mc{A} &= ( \kappa^2 - n \kappa + \sigma ) + ( 2 \kappa - n + p ) \lambda f^p + \lambda^2 f^{2p} + \lambda^2 \mc{O} ( f^2 ) \text{,} \\
\notag - \frac{1}{2} \div_g ( \mc{A} S ) &= ( \kappa^2 - n \kappa + \sigma ) f^{n-2} + \left( 1 - \frac{p}{2} \right) ( 2 \kappa - n + p ) \lambda f^{ n - 2 + p } \\
\notag &\qquad + (1-p) \lambda^2 f^{ n - 2 + 2p } + \lambda^2 \mc{O} ( f^n ) \text{.}
\end{align}
\end{lemma}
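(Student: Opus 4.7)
The plan is to establish the three formulas in turn, treating the conjugation identity first and then reducing the two asymptotic statements to direct substitution.

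For the first formula, I would compute $e^{-F}\Boxm(e^F \ms{\Phi})$ directly. Using the product rule \eqref{mixed_connection_product} and $\nablam_\mu(e^F) = F'\partial_\mu f \cdot e^F$, we get $\nablam_\mu(e^F\ms{\Phi}) = e^F[F'\partial_\mu f \cdot \ms{\Phi} + \nablam_\mu\ms{\Phi}]$; differentiating once more and contracting with $g^{\mu\nu}$ (recalling $\Boxm = g^{\mu\nu}\nablam^2_{\mu\nu}$ from \eqref{mixed_wave}) yields
\begin{equation*}
e^{-F}\Boxm(e^F\ms{\Phi}) = \Boxm\ms{\Phi} + 2F' \nablam_{\nabla^\sharp f}\ms{\Phi} + \bigl\{[(F')^2 + F'']\, g(\nabla^\sharp f,\nabla^\sharp f) + F'\Box_g f\bigr\}\ms{\Phi},
\end{equation*}
where the coefficient of $\ms{\Phi}$ comes from collapsing the non-derivative terms via $g^{\mu\nu}[F''\partial_\mu f\partial_\nu f + F'(\partial^2_{\mu\nu}f - \Gamma^\lambda_{\mu\nu}\partial_\lambda f)]$. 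Since $\nabla^\sharp f = f^{-(n-3)}S$ by \eqref{DefinitionS}, adding $\sigma$ gives \eqref{DefinitionMathcalLBetter}.

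For the asymptotic expansion of $\mc{A}$, I would simply substitute $F = \kappa\log f + \lambda p^{-1}f^p$, which gives $F' = \kappa f^{-1} + \lambda f^{p-1}$, $F'' = -\kappa f^{-2} + (p-1)\lambda f^{p-2}$, and hence
\begin{equation*}
(F')^2 + F'' = (\kappa^2-\kappa)f^{-2} + (2\kappa+p-1)\lambda f^{p-2} + \lambda^2 f^{2p-2}.
\end{equation*}
Multiplying by $g(\nabla^\sharp f,\nabla^\sharp f) = f^2 + \mc{O}(f^4)$ from \eqref{AsympGradientf} and adding $F'\Box_g f = -(n-1)\kappa - (n-1)\lambda f^p + \mc{O}(\kappa f^2) + \mc{O}(\lambda f^{p+2})$ (via \eqref{AsympWaveOperatorf}) and $\sigma$, the leading terms collect to the first three summands of \eqref{Asymptotics_A}. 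All remainder terms carry at least a factor of $f^2$ and at most a factor of $\lambda^2$ (after bounding $\kappa, \sigma, \lambda f^p$ by $\lambda$ using \eqref{carleman_ass_params}), so they are absorbed into $\lambda^2 \mc{O}(f^2)$.

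For the divergence identity, I would use $\div_g(\mc{A} S) = S(\mc{A}) + \mc{A}\,\div_g S$ and treat the leading part $\mc{A}_0(f) := (\kappa^2 - n\kappa + \sigma) + (2\kappa - n + p)\lambda f^p + \lambda^2 f^{2p}$ as a function of $f$; since $S(f) = f^{n-3}g(\nabla^\sharp f,\nabla^\sharp f) = f^{n-1} + \mc{O}(f^{n+1})$ by \eqref{DefinitionS} and \eqref{AsympGradientf}, we have $S(\mc{A}_0) = \mc{A}_0'(f)\bigl[f^{n-1} + \mc{O}(f^{n+1})\bigr]$, while \eqref{AsymptoticsS} gives $-\tfrac{1}{2}\mc{A}_0 \div_g S = \mc{A}_0 f^{n-2} + \lambda^2 \mc{O}(f^n)$. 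Combining and grouping by powers of $f^p$ produces the coefficients $1$, $1 - p/2$, and $1 - p$ in the three terms of the second line of \eqref{Asymptotics_A}. The contribution of the $\lambda^2\mc{O}(f^2)$ remainder in $\mc{A}$ to the divergence is controlled in the same way, producing only $\lambda^2\mc{O}(f^n)$ corrections.

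The main obstacle here is purely bookkeeping: one must be careful that the $f$-derivatives of the $\mc{O}(f^4)$ and $\mc{O}(f^3)$ remainders in Lemmas \ref{LemmaGradientf} and \ref{LemmaWaveOperatorf} remain of the right order (which follows from the fact that these bounds in Definition \ref{def.Limit} control up to $M = 2$ derivatives, hence $S(\mc{O}(f^4)) = \mc{O}(f^{n+1})$ and so on), and that the hierarchy $|\kappa|, |\sigma| \ll \lambda$ from \eqref{carleman_ass_params} is used consistently to pack every subleading term into the claimed $\lambda^2\mc{O}(f^2)$ or $\lambda^2\mc{O}(f^n)$ remainder.
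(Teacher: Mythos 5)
Your computation is correct and follows the same direct route the paper takes (it defers the details to the analogous Lemma 5.26 of \cite{Arick3}, but the method there is exactly the expansion you carry out). The conjugation identity, the substitutions $F' = \kappa f^{-1} + \lambda f^{p-1}$ and $F'' = -\kappa f^{-2} + (p-1)\lambda f^{p-2}$, the use of \eqref{AsympGradientf} and \eqref{AsympWaveOperatorf}, the product-rule treatment of $\div_g(\mc{A} S)$ via $S(f) = f^{n-1}+\mc{O}(f^{n+1})$ and $\div_g S = -2f^{n-2}+\mc{O}(f^n)$, and the absorption of subleading terms into $\lambda^2\mc{O}(f^2)$ and $\lambda^2\mc{O}(f^n)$ using $\lambda \gg |\kappa|+|\sigma|$ all check out and reproduce the stated coefficients.
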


\begin{proof}
The formulas \eqref{DefinitionMathcalLBetter} follow from direct computations and the definitions \eqref{DefinitionS}--\eqref{DefinitionOperators}.
The asymptotic relations \eqref{Asymptotics_A} are also direct computations; these formulas are the same as \cite[Lemma 5.26]{Arick3}, and their derivations are identical, hence we omit the details here.
\end{proof}

\subsubsection{The GNCC and Pseudoconvexity} \label{RolePseudoConvexity}

Next, we establish the crucial (and new compared to \cite{Arick3}) estimate in the proof of Theorem \ref{TheoremCarlemanEstimate}---that the GNCC implies the level sets of $f$ are pseudoconvex.
This is captured in the following lemma, which shows that the modified deformation tensor $\pi_\zeta$ is positive-definite in the directions tangent to the level sets of $f$:

\begin{lemma}[Lower bound for $\pi$] \label{LemmaLowerBoundModifiedDeformationTensor}
Given a spacetime tangent vector $Z \in T \mc{M}$, we write
\footnote{Note that $\mc{P}$ is invertible, so we can define $\mf{E} := \mc{P}^{-1} E$ in the above.}
\begin{equation} \label{Z_pseudoconvex}
Z := Z^N N + E \text{,} \qquad E := \mc{P} \mf{E} \text{,}
\end{equation}
where $E$ is normal to $N$ (and hence tangent to level sets of $f$).\footnote{Observe that $N$ is spacelike when $f$ is sufficiently small.}
Then, there exists a constant $K > 0$ such that the following inequality holds on $\mc{D} ( f_\star )$:
\begin{equation} \label{Carleman_pseudoconvex}
\pi (Z, Z) \geq - [ (n-1) f^{n-2} + \mc{O} ( f^n ) ] \, ( Z^N )^2 + K \rho^2 f^{n-2} \, \mf{h} ( \mf{E}, \mf{E} ) \text{.}
\end{equation}
\end{lemma}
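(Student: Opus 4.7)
My plan is to decompose $Z$ into its $N$ and $E$ components and expand $\pi_\zeta(Z,Z)$ as a sum of three pieces, then estimate each piece using Lemma \ref{Lemmapi} and the GNCC in its equivalent formulation \eqref{zeta}.

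First I would write
\[
\pi_\zeta(Z,Z) = (Z^N)^2 \, \pi_\zeta(N,N) + 2 Z^N \, \pi_\zeta(N,E) + \pi_\zeta(E,E),
\]
using that $g(N,E) = 0$. For the $NN$-term, Lemma \ref{Lemmapi} gives directly $\pi_\zeta(N,N) = -(n-1) f^{n-2} + \mc{O}(f^n)$, which is exactly the first term on the right-hand side of \eqref{Carleman_pseudoconvex}. For the $EE$-term, Lemma \ref{Lemmapi} yields
\[
\pi_\zeta(E,E) = \rho f^{n-1} (\mf{D}^2 \eta - \eta \bar{\mf{g}} - \zeta \mf{g})(\mf{E}, \mf{E}) + \mc{O}(\rho^2 f^{n-1})(\mf{E}, \mf{E}).
\]
Here is where I bring in the GNCC in the form \eqref{zeta}: applied to $\mf{E}$ it produces
\[
(\mf{D}^2 \eta - \eta \bar{\mf{g}} - \zeta \mf{g})(\mf{E}, \mf{E}) > c \, \eta \, \mf{h}(\mf{E}, \mf{E}).
\]
Multiplying by $\rho f^{n-1}$ and using the defining identity $\rho = f \eta$ to rewrite $\rho f^{n-1} \eta = \rho^2 f^{n-2}$, I get a clean lower bound $\pi_\zeta(E,E) \geq c\, \rho^2 f^{n-2}\, \mf{h}(\mf{E}, \mf{E}) + \mc{O}(\rho^2 f^{n-1})(\mf{E}, \mf{E})$. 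The error term is $\mc{O}(f) \cdot \rho^2 f^{n-2} |\mf{E}|_{\mf{h}}^2$, so for $f_\star$ small enough it is absorbed into half of the leading term, giving $\pi_\zeta(E,E) \geq \tfrac{c}{2} \rho^2 f^{n-2} \mf{h}(\mf{E}, \mf{E})$ on $\mc{D}(f_\star)$.

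For the cross term, Lemma \ref{Lemmapi} gives $|\pi_\zeta(N,E)| \leq C \rho f^n |\mf{E}|_{\mf{h}}$. I would apply the Young inequality $2ab \leq a^2 + b^2$ with $a = f^{n/2} |Z^N|$ and $b = C \rho f^{n/2} |\mf{E}|_{\mf{h}}$ to obtain
\[
|2 Z^N \pi_\zeta(N,E)| \leq f^n (Z^N)^2 + C^2 \rho^2 f^n |\mf{E}|_{\mf{h}}^2.
\]
Both error pieces are smaller by a factor of $f^2$ than the leading positive contributions already present: the first is of type $\mc{O}(f^n) (Z^N)^2$ and is absorbed into the analogous error of $\pi_\zeta(N,N)$, while the second is $\mc{O}(f^2) \cdot \rho^2 f^{n-2} |\mf{E}|_{\mf{h}}^2$ and is absorbed into the $\tfrac{c}{2} \rho^2 f^{n-2} \mf{h}(\mf{E},\mf{E})$ reserve from the $EE$-term, again by choosing $f_\star$ sufficiently small. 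Combining all three estimates yields \eqref{Carleman_pseudoconvex} with, for example, $K = c/4$.

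There is no single hard step; the substantive content is the identity $\rho = f \eta$, which is what converts the boundary-geometric inequality provided by the GNCC into the correctly-weighted spacetime inequality needed in the Carleman estimate. The only care required is book-keeping: ensuring that all error terms from Lemma \ref{Lemmapi} carry an extra factor of $f$ or $f^2$ relative to the main terms so that the smallness of $f_\star$ (depending on $\ms{g}$, $\mc{D}$, $X$, $k$, $l$ through the GNCC constant $c$ and the implicit constants in Lemma \ref{Lemmapi}) suffices to close the estimate.
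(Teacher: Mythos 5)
Your proposal is correct and matches the paper's proof essentially step for step: the same orthogonal decomposition of $Z$, the same appeal to Lemma \ref{Lemmapi} for the three blocks of $\pi_\zeta$, the same use of \eqref{zeta} together with $\rho = f\eta$ to produce the $c\,\rho^2 f^{n-2}\,\mf{h}(\mf{E},\mf{E})$ term, and the same Cauchy/Young absorption of the cross term into the two diagonal errors before shrinking $f_\star$. The only cosmetic difference is that you make the Young splitting and the final constant $K=c/4$ explicit, where the paper leaves these implicit.
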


\begin{proof}
First, we recall Lemma \ref{Lemmapi} and \eqref{Z_pseudoconvex}, which imply
\begin{align} \label{Z_pseudoconvex_1}
\pi ( Z, Z ) &= ( Z^N )^2 \, \pi ( N, N ) + 2 Z^N \, \pi ( N, E ) + \pi ( E, E ) \\
\notag &= - [ ( n - 1 ) f^{n-2} + \mc{O} ( f^n ) ] ( Z^N )^2 + \mc{O} ( \rho f^n ) ( \mf{E} ) \, Z^N \\
\notag &\qquad + \rho f^{n-1} ( \mf{D}^2 \eta - \eta \bar{\mf{g}} - \zeta \mf{g} ) ( \mf{E}, \mf{E} ) + \mc{O} ( \rho^2 f^{n-1} ) ( \mf{E}, \mf{E} ) \text{.}
\end{align}
We apply \eqref{zeta} to treat the third term on the right-hand side of \eqref{Z_pseudoconvex_1}:
\begin{align} \label{Z_pseudoconvex_2}
\rho f^{n-1} ( \mf{D}^2 \eta - \eta \bar{\mf{g}} - \zeta \mf{g} ) ( \mf{E}, \mf{E} ) &\geq c \rho f^{n-1} \eta \, \mf{h} ( \mf{E}, \mf{E} ) \\
\notag &= c \rho^2 f^{n-2} \, \mf{h} ( \mf{E}, \mf{E} ) \text{.}
\end{align}
Moreover, for the second term, the Cauchy inequality implies
\[
\mc{O} ( \rho f^n ) ( \mf{E} ) \, Z^N \gtrsim \mc{O} ( f^n ) \, ( Z^N )^2 + \mc{O} ( \rho^2 f^n ) ( \mf{E}, \mf{E} ) \text{.}
\]
Combining \eqref{Z_pseudoconvex_1} and \eqref{Z_pseudoconvex_2} with the above, we obtain
\[
\pi ( Z, Z ) \geq - [ ( n - 1 ) f^{n-2} + \mc{O} ( f^n ) ] ( Z^N )^2 + c \rho^2 f^{n-2} \, \mf{h} ( \mf{E}, \mf{E} ) + \mc{O} ( \rho^2 f^{n-1} ) ( \mf{E}, \mf{E} ) \text{,}
\]
and the desired \eqref{Carleman_pseudoconvex} follows from the above by taking $f_\ast$ small enough.
\end{proof}

\subsubsection{Pointwise Estimate for $\ms{\Psi}$}

We now establish a pointwise estimate for the conjugated operator $\mc{L}^\dagger$.
This will serve as a precursor for our main Carleman estimate \eqref{CarlemanEstimate}.
The first step is to derive an estimate comparing the operators $\mc{L}$ and $\mc{L}^\dagger$:

\begin{lemma}[Relation between $\mc{L}$ and $\mc{L}^\dagger$] \label{LemmaEstimatesCurrentJ}
Let $\Psi$ be as in \eqref{DefinitionauxiliaryF}, and define
\begin{equation} \label{DefinitionCurrentJ}
J ( \ms{\Psi} ) := \frac{1}{2} \langle \mc{L} ( \mf{h}^\star \ms{\Psi} ), \breve{S}_\zeta ( \ms{\Psi} ) \rangle + \frac{1}{2} \langle \mc{L} ( \ms{\Psi} ),\breve{S}_\zeta ( \mf{h}^\star \ms{\Psi} ) \rangle \text{.}
\end{equation}
Then, there exists $\mc{C}_0 > 0$ (depending only on $\ms{g}$, $\mc{D}$, $X$, $k$, $l$) such that on $\mc{D} ( f_\star )$,
\begin{align} \label{EquationCurrentJ}
| J ( \ms{\Psi} ) | &\leq \lambda^{-1} f^{n-2-p} | \mc{L}^\dagger \ms{\Psi} |_{ \mf{h} }^2 + f^{n-2} | \nablam_N \ms{\Psi} |_{ \mf{h} }^2 \left( \mc{C}_0 + \frac{1}{2} \lambda f^p \right) \\
\notag &\qquad + \frac{1}{4} K \rho^2 f^{n-2} \sum_{A=1}^n | \nablam_{ E_A } \ms{\Psi} |_{ \mf{h} }^2 + \lambda \, \mc{O} ( f^{n-1} ) ( | \nablam_N \ms{\Psi} |_{ \mf{h} }^2 + | \ms{\Psi} |_{ \mf{h} }^2 ) \text{,}
\end{align}
where the constant $K > 0$ is as in the statement of Lemma \ref{LemmaLowerBoundModifiedDeformationTensor}, and where $E_A := \mc{P} \mf{E}_A$ ($1 \leq A \leq n$) is a local orthonormal frame on the level sets of $f$ satisfying
\[
| \mf{E}_A |_{ \mf{h} } \simeq 1 \text{,} \qquad 1 \leq A \leq n \text{.}
\]
Furthermore, if $X = 0$ and $k = l = 0$, then \eqref{EquationCurrentJ} holds with $\mc{C}_0 = 0$.
\end{lemma}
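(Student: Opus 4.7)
The plan is to split $J(\ms{\Psi})$ into a principal piece built from $\mc{L}^\dagger$ and an error carrying the $\nablam_X$-correction. Since $e^{-F}\rho^2\nablam_X e^F\ms{\Psi} = \rho^2 X(F)\,\ms{\Psi} + \rho^2\nablam_X\ms{\Psi}$, the operator identity $\mc{L}^\dagger = \mc{L} + \rho^2 X(F) + \rho^2\nablam_X$ lets me write
\begin{equation*}
J(\ms{\Psi}) = J^\dagger(\ms{\Psi}) - \mc{E}_X(\ms{\Psi}),
\end{equation*}
where $J^\dagger$ has $\mc{L}^\dagger$ in place of $\mc{L}$ in \eqref{DefinitionCurrentJ}, and $\mc{E}_X$ collects the cross terms between $\breve{S}_\zeta(\cdot)$ and $[\rho^2 X(F) + \rho^2\nablam_X](\cdot)$. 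When $X = 0$ the error $\mc{E}_X$ vanishes outright, which is the first source of the ``$\mc{C}_0 = 0$'' conclusion.

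For the principal piece $J^\dagger(\ms{\Psi})$, I first use Lemma \ref{LemmaAsymptoticsS} to write $\breve{S}_\zeta \ms{\Psi} = f^{n-2}\,\nablam_N \ms{\Psi} + \mc{O}(f^n)(|\nablam_N\ms{\Psi}|_{\mf{h}} + |\ms{\Psi}|_{\mf{h}})$, noting also $|\mf{h}^\star\ms{\Psi}|_{\mf{h}} = |\ms{\Psi}|_{\mf{h}}$ by \eqref{vertical_cauchy}. A weighted Cauchy--Schwarz with weight tuned to $\alpha = \lambda f^p$ — pairing $\mc{L}^\dagger\ms{\Psi}$ against $f^{n-2}\nablam_N\ms{\Psi}$ as $|ab| \leq \tfrac{1}{2\alpha}a^2 + \tfrac{\alpha}{2}b^2$ — produces exactly the target pair $\lambda^{-1} f^{n-2-p}|\mc{L}^\dagger\ms{\Psi}|_{\mf{h}}^2$ and $\tfrac{1}{2}\lambda f^p \cdot f^{n-2}|\nablam_N\ms{\Psi}|_{\mf{h}}^2$. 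The remaining $\mc{O}(f^n)$ remainder from $\breve{S}_\zeta$ feeds, after a further AM--GM, into the $\lambda\cdot\mc{O}(f^{n-1})(|\nablam_N\ms{\Psi}|_{\mf{h}}^2 + |\ms{\Psi}|_{\mf{h}}^2)$ slot, using $\lambda \gtrsim 1$ from \eqref{carleman_ass_params}. The one subtlety is that $\mc{L}^\dagger$ and $\breve{S}_\zeta$ do \emph{not} commute with $\mf{h}^\star$; expanding $\mc{L}^\dagger(\mf{h}^\star\ms{\Psi}) - \mf{h}^\star\mc{L}^\dagger\ms{\Psi}$ and $\breve{S}_\zeta(\mf{h}^\star\ms{\Psi}) - \mf{h}^\star\breve{S}_\zeta\ms{\Psi}$ generates commutator terms of the schematic form $\Boxm\mf{h}\cdot\ms{\Psi}$ and $\nablam\mf{h}\cdot\nablam\ms{\Psi}$, which by Lemma \ref{EstimatesVerticalRiemannianMetrich} are $\mc{O}(\rho^2)$ and $\mc{O}(\rho)$ and hence swallowed by the $\mc{O}(f^{n-1})$ error. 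These commutators disappear when $k = l = 0$ (scalars), giving the second half of the $\mc{C}_0 = 0$ claim.

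For the error $\mc{E}_X(\ms{\Psi})$, I would decompose $X = X^N N + \sum_A X^A E_A$ in the frame of Lemma \ref{LemmaEstimatesCurrentJ}. The $N$-component pairs with $\breve{S}_\zeta\ms{\Psi} \simeq f^{n-2}\nablam_N\ms{\Psi}$ to yield a term of size $\rho^2 f^{n-2}|\nablam_N\ms{\Psi}|_{\mf{h}}^2 = \mc{O}(f^n)|\nablam_N\ms{\Psi}|_{\mf{h}}^2$, bounded by $\mc{C}_0 f^{n-2}|\nablam_N\ms{\Psi}|_{\mf{h}}^2$ once $f_\star$ is small. Each tangential piece $\rho^2 X^A\nablam_{E_A}\ms{\Psi}$ pairs against $f^{n-2}\nablam_N\ms{\Psi}$ and is split by a Cauchy inequality $|ab| \leq \tfrac{\epsilon}{2}a^2 + \tfrac{1}{2\epsilon}b^2$ with $\epsilon$ chosen so the $|\nablam_{E_A}\ms{\Psi}|_{\mf{h}}^2$ side sums to precisely $\tfrac{1}{4}K\rho^2 f^{n-2}\sum_A|\nablam_{E_A}\ms{\Psi}|_{\mf{h}}^2$; the reciprocal side contributes a further $\mc{O}(\rho^2 f^{n-2}) = \mc{O}(f^n)$ coefficient on $|\nablam_N\ms{\Psi}|_{\mf{h}}^2$, again absorbed into the $\mc{C}_0 f^{n-2}$ slot. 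Finally, the zeroth-order piece $\rho^2 X(F)\ms{\Psi}$ satisfies $\rho^2 X(F) = \mc{O}(\rho(|\kappa| + \lambda f^p))$ by direct computation from \eqref{DefinitionauxiliaryF} and Lemma \ref{LemmaDerivativesf}; using $\rho \lesssim f$ and $\lambda \gg |\kappa|$ from \eqref{carleman_ass_params}, a final AM--GM distributes this cross-term between the $\mc{C}_0 f^{n-2}|\nablam_N\ms{\Psi}|_{\mf{h}}^2$ slot and the $\lambda\cdot\mc{O}(f^{n-1})|\ms{\Psi}|_{\mf{h}}^2$ slot.

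The main obstacle is bookkeeping: I must preserve the precise form $\mc{C}_0 + \tfrac{1}{2}\lambda f^p$ of the coefficient of $|\nablam_N\ms{\Psi}|_{\mf{h}}^2$, meaning the $\lambda$-dependent contribution can only come from the principal Cauchy--Schwarz on $J^\dagger$, while every commutator and every $X$-cross term must be controlled by constants depending only on $\ms{g}$, $\mc{D}$, $X$, $k$, $l$ (never on $\lambda$) after invoking the smallness of $f_\star$ and the factor $\rho^2$ that multiplies $X$ in $\mc{L}^\dagger$. Verifying that the smallness condition on $f_\star$ in \eqref{carleman_ass_params} is indeed sufficient for all of these absorptions — in particular, that the constant $K > 0$ from Lemma \ref{LemmaLowerBoundModifiedDeformationTensor} is genuinely independent of $\lambda$ so that the weight $\tfrac{1}{4}K$ in front of $\rho^2 f^{n-2}\sum_A|\nablam_{E_A}\ms{\Psi}|_{\mf{h}}^2$ can be matched — is the final bookkeeping check that closes the estimate.
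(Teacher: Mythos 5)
Your overall strategy matches the paper's: both start from the operator relation $\mc{L}^\dagger = \mc{L} + \nablam_{\rho^2 X} + F'\cdot(\rho^2 X)(f)$, isolate a principal pairing $\hm(\mc{L}^\dagger\ms{\Psi}, \breve{S}_\zeta\ms{\Psi})$, and hit it with a Young inequality weighted by $\lambda f^p$ to produce the $\lambda^{-1}f^{n-2-p}|\mc{L}^\dagger\ms{\Psi}|_{\mf{h}}^2 + \tfrac{1}{2}\lambda f^{n-2+p}|\nablam_N\ms{\Psi}|_{\mf{h}}^2$ pair; the remaining work is distributed among the $X$-corrections and the $\mf{h}^\star$-commutators, both of which vanish in the scalar, $X=0$ case, explaining $\mc{C}_0 = 0$. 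The paper's nine-term expansion \eqref{DefinitionJBetter} is exactly a refinement of your two-group split $J = J^\dagger - \mc{E}_X$.

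The one place your sketch would not close as written is the treatment of the $\mf{h}^\star$-commutator terms. You claim that the pieces of schematic form $\nablam\mf{h}\cdot\nablam\ms{\Psi}$, being $\mc{O}(\rho)$, are ``swallowed by the $\lambda\,\mc{O}(f^{n-1})$ error.'' But $\nablam\ms{\Psi}$ in these commutators contracts against $g^{-1}$ and hence contains tangential derivatives $\nablam_{E_A}\ms{\Psi}$ as well as $\nablam_N\ms{\Psi}$: in the frame one gets $g^{\mu\nu}\nablam_\mu\mf{h}\cdot\nablam_\nu\ms{\Psi} = \mc{O}(\rho f)\,\nablam_N\ms{\Psi} + \sum_A\mc{O}(\rho)\,\nablam_{E_A}\ms{\Psi}$, so pairing with $\breve{S}_\zeta\ms{\Psi} \approx f^{n-2}\nablam_N\ms{\Psi}$ produces cross-terms $\mc{O}(\rho f^{n-2})\,\nablam_{E_A}\ms{\Psi}\cdot\nablam_N\ms{\Psi}$. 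The only slot in \eqref{EquationCurrentJ} that accommodates $|\nablam_{E_A}\ms{\Psi}|_{\mf{h}}^2$ is $\tfrac{1}{4}K\rho^2 f^{n-2}\sum_A|\nablam_{E_A}\ms{\Psi}|_{\mf{h}}^2$; the $\lambda\,\mc{O}(f^{n-1})(|\nablam_N\ms{\Psi}|_{\mf{h}}^2 + |\ms{\Psi}|_{\mf{h}}^2)$ slot cannot absorb them. After a weighted Cauchy inequality the $|\nablam_{E_A}\ms{\Psi}|^2$ side must therefore be charged to a fixed fraction of the $K$-budget (this is exactly what the paper's term $J_4$ does, with its $\tfrac{1}{16}K\rho^2 f^{n-2}\sum_A$), while the reciprocal side costs $\mc{O}(f^{n-2})|\nablam_N\ms{\Psi}|_{\mf{h}}^2$ — \emph{not} $\mc{O}(f^n)$ — and so must go into $\mc{C}_0 f^{n-2}$, contributing the $(k+l)^2$ part of $\mc{C}_0$. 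So you must leave a fixed portion of the $\tfrac{1}{4}K$ budget for the commutator terms, rather than spending it all on the tangential $X$-pieces, and correspondingly the commutators feed $\mc{C}_0$ and not just the $\mc{O}(f^{n-1})$ tail. The same fix also resolves the minor sizing errors (e.g.\ the $N$-component of $\rho^2 X$ is $\mc{O}(\rho)$, not $\mc{O}(\rho^2)$, so the quoted $\mc{O}(f^n)$ contributions should read $\mc{O}(f^{n-1})$ or $\mc{O}(f^{n-2})$ depending on the term — they still land in the $\mc{C}_0 f^{n-2}$ slot).
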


\begin{proof}
First, we obtain from \eqref{DefinitionOperators} the relation
\begin{equation} \label{DefinitionMathcalModifiedLBetter}
\mc{L}^\dagger = \mc{L} + \nablam_Y + F' (Y f) \text{,} \qquad Y := \rho^2 X \text{.}
\end{equation}
Expanding \eqref{DefinitionCurrentJ} using the product rule and the above, we obtain
\footnote{The notation $\smash{\hm}$ in $J_4$ was defined in \eqref{vertical_rmetric}. Note that we can think of $\smash{\hm}$ as a rank $( 2l, 2k )$ vertical tensor field, hence we can make sense of the mixed derivatives of $\hm$ in \eqref{DefinitionJBetter}.}
\begin{align} \label{DefinitionJBetter}
J ( \ms{\Psi} ) &= \hm ( \mc{L}^\dagger \ms{\Psi}, \breve{S}_\zeta \ms{\Psi} ) - F' (Yf) \, \hm ( \ms{\Psi}, \breve{S}_\zeta \ms{\Psi} ) - \hm ( \nablam_Y \ms{\Psi}, \breve{S}_\zeta \ms{\Psi} ) \\
\notag &\qquad + g^{ \mu \nu } \nablam_\mu \hm ( \nablam_\nu \ms{\Psi}, \breve{S}_\zeta \ms{\Psi} ) + F' f^{-(n-3)} \, \nablam_S \hm ( \ms{\Psi}, \breve{S}_\zeta \ms{\Psi} ) + \frac{1}{2} \Boxm \hm ( \ms{\Psi}, \breve{S}_\zeta \ms{\Psi} ) \\
\notag &\qquad + \frac{1}{2} \nablam_S \hm ( \mc{L}^\dagger \ms{\Psi}, \ms{\Psi} ) - \frac{1}{2} \nablam_S \hm ( \ms{\Psi}, \nablam_Y \ms{\Psi} ) - \frac{1}{2} F' (Yf) \, \nablam_S \hm ( \ms{\Psi}, \ms{\Psi} ) \\
\notag &:= J_1 + \dots + J_9 \text{,}
\end{align}
where the indices $\mu$, $\nu$ in $J_4$ are with respect to an arbitrary coordinate system on $\mc{M}$. 

The next step is to estimate each of the $J_i$'s in \eqref{DefinitionJBetter}, using Lemmas \ref{LemmaAsymptoticsS} and \ref{EstimatesVerticalRiemannianMetrich}.
While this is a lengthy process, the estimates obtained are analogous to those found in the proof of \cite[Lemma 5.27]{Arick3}.
\footnote{Strictly speaking, the estimates here and in \cite{Arick3} are not quite identical, as different Riemannian metrics are used. However, the estimates for $\mf{h}$ in Lemma \ref{EstimatesVerticalRiemannianMetrich} are identical to those in \cite[Lemma 5.25]{Arick3}, so the same proof holds in both settings. Moreover, note our frames $\{ E_A \}$ correspond to the frames $\{ V, E_A \}$ in \cite{Arick3}.}
Hence, here we omit the details and list the resulting bounds:
\begin{align*}
| J_1 | &\leq \lambda^{-1} \left[ \frac{1}{2} f^{n-2-p} + \mc{O} (f^n) \right] | \mc{L}^\dagger \ms{\Psi} |_{ \mf{h} }^2 + \frac{1}{2} \lambda f^{n-2+p} | \nablam_N \ms{\Psi} |_{ \mf{h} }^2 \\
&\qquad + \lambda \, \mc{O} (f^n) \, ( | \nablam_N \ms{\Psi} |_{ \mf{h} }^2 + | \ms{\Psi} |_{ \mf{h} }^2 ) \text{,} \\
| J_2 | &\leq \lambda \, \mc{O} ( \rho f^{n-2} ) \, ( | \nablam_N \ms{\Psi} |_{ \mf{h} }^2 + | \ms{\Psi} |_{ \mf{h} }^2 ) \text{,} \\
| J_3 | &\leq ( | \mf{X}^\rho |^2 + | \mf{X} |_{ \mf{h} }^2 ) \, \mc{O} ( f^{n-2} ) | \nablam_N \ms{\Psi} |_{ \mf{h} }^2 + \frac{1}{16} K \rho^2 f^{n-2} \sum_{ A = 1 }^n | \nablam_{ E_A } \ms{\Psi} |_{ \mf{h} }^2 \\
&\qquad + \mc{O} ( f^{n-1} ) ( | \nablam_N \ms{\Psi} |_{ \mf{h} }^2 + | \ms{\Psi} |_{ \mf{h} }^2 ) \text{,} \\
| J_4 | &\leq (k+l)^2 \mc{O} ( f^{n-2} ) \, | \nablam_N \ms{\Psi} |_{ \mf{h} }^2 + \frac{1}{16} K \rho^2 f^{n-2} \sum_{ A = 1 }^n | \nablam_{ E_A } \ms{\Psi} |_{ \mf{h} }^2 \\
&\qquad + \mc{O} ( f^n ) \, ( | \nablam_N \ms{\Psi} |_{ \mf{h} }^2 + | \ms{\Psi} |_{ \mf{h} }^2 ) \text{,} \\
| J_5 | &\leq \lambda \, \mc{O} ( \rho f^{n-1} ) \, ( | \nablam_N \ms{\Psi} |_{ \mf{h} }^2 + | \ms{\Psi} |_{ \mf{h} }^2 ) \text{,} \\
| J_6 | &\leq \mc{O} ( \rho^2 f^{n-2} ) \, ( | \nablam_N \ms{\Psi} |_{ \mf{h} }^2 + | \ms{\Psi} |_{ \mf{h} }^2 ) \text{,} \\
| J_7 | &\leq \mc{O} ( \rho f^{n-1} ) \, ( | \mc{L}^\dagger \ms{\Psi} |_{ \mf{h} }^2 + | \ms{\Psi} |_{ \mf{h} }^2 ) \text{,} \\
| J_8 | &\leq \mc{O} (\rho^3 f^{n-1}) \left( \| \nablam_N \ms{\Psi} |_{ \mf{h} }^2 + \sum_{ A = 1 }^n | \nablam_{ E_A } \ms{\Psi} |_{ \mf{h} }^2 \right) + \mc{O} ( \rho f^{n-1} ) \, | \ms{\Psi} |_{ \mf{h} }^2 \text{,} \\
| J_9 | &\leq \lambda \mc{O} ( \rho^2 f^{n-1} ) \, | \ms{\Psi} |_{ \mf{h} }^2 \text{.}
\end{align*}
Combining \eqref{DefinitionJBetter} and the above, while taking $f_\star$ sufficiently small, yields \eqref{EquationCurrentJ}, with
\[
\mc{C}_0 \lesssim_{ \ms{g}, \mc{D}, X, k, l } ( k + l )^2 + | \mf{X}^\rho |^2 + | \mf{X} |_{ \mf{h} }^2 \text{.}
\]
In particular, $\mc{C}_0 = 0$ whenever $X = 0$ and $k = l = 0$.
\end{proof}

From now on, we set $\mc{C}_0$ as in Lemma \ref{LemmaEstimatesCurrentJ}, so that \eqref{carleman_ass_kappa} holds with this choice of $\mc{C}_0$.

\begin{lemma}[Pointwise estimate for $\mc{L}^\dagger$] \label{LemmaPointwiseEstimateModifiedConjugatedWaveOperatorPsi}
There exists a constant $\mc{C} > 0$ (depending on $\ms{g}$, $\mc{D}$, $X$, $k$, $l$) such that the following bound holds on $\mc{D} ( f_\star )$:
\begin{align} \label{Pointwise_Psi}
\lambda^{-1} f^{n-2-p} | \mc{L}^\dagger \ms{\Psi} |_{ \mf{h} }^2 &\geq \mc{C} \rho^2 f^{n-2} \left( | \nablam_N \ms{\Psi} |_{ \mf{h} }^2 + \sum_{A=1}^n | \nablam_{ E_A } \ms{\Psi} |_{ \mf{h} }^2 \right) \\
\notag &\qquad + \frac{1}{4} \lambda^2 f^{n-2+2p} | \ms{\Psi} |_{ \mf{h} }^2 + \div_g P \text{,}
\end{align}
where $E_A$, $1 \leq A \leq n$, denotes the same local orthonormal frames as in Lemma \ref{LemmaEstimatesCurrentJ}, and where $P$ is the spacetime 1-form given, with respect to any coordinate system on $\mc{M}$, by
\begin{align} \label{Pointwise_P}
P_\beta &:= \frac{1}{2} [ \langle \nablam_S ( \mf{h}^\star \ms{\Psi} ), \nablam_\beta \ms{\Psi} \rangle + \langle \nablam_S \ms{\Psi}, \nablam_\beta ( \mf{h}^\star \ms{\Psi} ) \rangle - g_{ \alpha \beta } g^{ \mu \nu } S^\beta \langle \nablam_\mu ( \mf{h}^\star \ms{\Psi} ), \nablam_\nu \mathsf{\Psi} \rangle ] \\
\notag &\qquad + \frac{1}{2} v_\zeta [ \langle \nablam_\beta ( \mf{h}^\star \mathsf{\Psi} ), \ms{\Psi} \rangle + \langle \nablam_\beta \ms{\Psi}, \mf{h}^\star \ms{\Psi} \rangle ] - \frac{1}{2} \nabla_\beta v_\zeta \, |\mathsf{\Psi}|_{\breve{\mathsf{h}}(\rho)}^2 + \frac{1}{2} g_{ \alpha \beta } S^\beta \mc{A} \, | \ms{\Psi} |_{ \mf{h} }^2 \\
\notag &\qquad + ( 2 \kappa - n + 1 - \mc{C}_0 ) f^{n-3} \nabla_{\beta} f \, | \ms{\Psi} |_{ \mf{h} }^2 + \frac{1}{2} (2-p) \lambda f^{n-3+p} \nabla_{\beta} f \, | \ms{\Psi} |_{ \mf{h} }^2 \text{.}
\end{align}
Furthermore, there exists $\mc{C}_b > 0$ (depending on $\ms{g}$, $\mathcal{D}$, $X$, $k$, $l$) such that
\begin{equation} \label{Pointwise_Prho}
P ( \rho \partial_\rho ) \leq \mc{C}_b f^{n-2} \rho^2 \, ( | \Dm_\rho \ms{\Psi} |_{ \mf{h} }^2 + | \Dm \ms{\Psi} |_{ \mf{h} }^2 ) + \mc{C}_b \lambda^2 f^{n-2} | \ms{\Psi} |_{ \mf{h} }^2 \text{.}
\end{equation}
\end{lemma}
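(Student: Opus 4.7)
The plan is to derive \eqref{Pointwise_Psi} by establishing the lemma via the standard multiplier method adapted to our conjugated operator $\mc{L}$ and the modified deformation tensor $\pi_\zeta$, then combining the resulting identity with the $J$-bound from Lemma \ref{LemmaEstimatesCurrentJ}. The boundary bound \eqref{Pointwise_Prho} is a separate, direct calculation from the definition of $P$.

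First I would compute $2 J ( \ms{\Psi} )$ from below by a direct Bochner-type integration-by-parts computation. Using the decomposition \eqref{DefinitionMathcalLBetter} to replace $\mc{L}$ in \eqref{DefinitionCurrentJ}, the $\Boxm$-piece paired with $\breve{S}_\zeta \ms{\Psi} = \nablam_S \ms{\Psi} + v_\zeta \ms{\Psi}$ produces the classical Morawetz-type identity
\[
\langle \Boxm \ms{\Psi}, \nablam_S \mf{h}^\star \ms{\Psi} \rangle + \langle \Boxm \mf{h}^\star \ms{\Psi}, \nablam_S \ms{\Psi} \rangle = \pi_\zeta ( \nablam \ms{\Psi}, \nablam \mf{h}^\star \ms{\Psi} ) - \div_g ( \mc{A} S \, | \ms{\Psi} |^2_{ \mf{h} } ) + \div_g ( \cdots ) + \mathcal{E} \text{,}
\]
where the divergence terms collect into $P$ as written in \eqref{Pointwise_P} and $\mathcal{E}$ denotes error terms that either contain factors of $\nablam \mf{h}$ (controlled via Lemma \ref{EstimatesVerticalRiemannianMetrich}) or commutator contributions from $\smash{\Rm} [\ms{\Psi}]$ (controlled via Lemma \ref{LemmaVerticalCurvatureEstimates}). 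The first-order $2 F' f^{-(n-3)} \nablam_S$ piece contributes $2 F' f^{-(n-3)} | \nablam_S \ms{\Psi} |_{ \mf{h} }^2$ up to errors, while the zeroth-order $\mc{A}$-piece pairs with $\breve{S}_\zeta \ms{\Psi}$ to produce $-\frac{1}{2} \div_g ( \mc{A} S ) \, | \ms{\Psi} |_{ \mf{h} }^2$ modulo a further divergence.

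Next I would extract positivity from the bulk terms. Decomposing $\nablam \ms{\Psi}$ into its $N$-component and its tangential components $\{ E_A \}$, Lemma \ref{LemmaLowerBoundModifiedDeformationTensor} gives
\[
\pi_\zeta ( \nablam \ms{\Psi}, \nablam \mf{h}^\star \ms{\Psi} ) \geq - [ (n-1) f^{n-2} + \mc{O} ( f^n ) ] | \nablam_N \ms{\Psi} |_{ \mf{h} }^2 + K \rho^2 f^{n-2} \sum_A | \nablam_{ E_A } \ms{\Psi} |_{ \mf{h} }^2 \text{,}
\]
so the $N$-direction has a potentially bad coefficient but the tangential directions are uniformly positive. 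The deficit in the $N$-direction is exactly compensated by the first-order term $2 F' f^{-(n-3)} | \nablam_N \ms{\Psi} |_{ \mf{h} }^2 \simeq 2 \lambda f^{p-1} \cdot f^{-(n-3)} \cdot f^{2(n-2)} | \nablam_N \ms{\Psi} |_{ \mf{h} }^2$ once the asymptotics of $S$ from Lemma \ref{LemmaAsymptoticsS} are used. For the zeroth-order term, the second identity in \eqref{Asymptotics_A} gives
\[
- \tfrac{1}{2} \div_g ( \mc{A} S ) = ( \kappa^2 - n \kappa + \sigma ) f^{n-2} + ( 1 - \tfrac{p}{2} ) ( 2 \kappa - n + p ) \lambda f^{n-2+p} + (1-p) \lambda^2 f^{n-2+2p} + \lambda^2 \mc{O} ( f^n ) \text{,}
\]
and the condition \eqref{carleman_ass_kappa} on $\kappa$ (with the same $\mc{C}_0$ from Lemma \ref{LemmaEstimatesCurrentJ}) ensures that the leading $f^{n-2}$ and $\lambda f^{n-2+p}$ coefficients are non-negative, leaving the dominant positive $\lambda^2 f^{n-2+2p}$ as the effective potential.

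Finally I would combine the identity with the upper bound \eqref{EquationCurrentJ} from Lemma \ref{LemmaEstimatesCurrentJ}. The $N$-derivative term $( \mc{C}_0 + \tfrac{1}{2} \lambda f^p ) f^{n-2} | \nablam_N \ms{\Psi} |_{ \mf{h} }^2$ there is precisely calibrated to be absorbed by the positive contribution from $2 F'$; the tangential derivative term $\frac{1}{4} K \rho^2 f^{n-2} \sum_A | \nablam_{ E_A } \ms{\Psi} |_{ \mf{h} }^2$ is absorbed by the pseudoconvexity gain, leaving a remainder of the same size; and the remaining error $\lambda \mc{O} ( f^{n-1} ) ( | \nablam_N \ms{\Psi} |_{ \mf{h} }^2 + | \ms{\Psi} |_{ \mf{h} }^2 )$ is absorbed by choosing $f_\star$ sufficiently small and $\lambda$ sufficiently large (as in \eqref{carleman_ass_params}). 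The hard part will be bookkeeping: making sure every error term listed for $J_1, \dots, J_9$ is strictly dominated by one of the gained positive terms, with a constant independent of $\lambda$. Once this is arranged, \eqref{Pointwise_Psi} follows with $\mc{C}$ proportional to $K$. For \eqref{Pointwise_Prho}, I would plug $\rho \partial_\rho$ into the explicit formula \eqref{Pointwise_P}, expand $S = [ f^{n-2} + \mc{O} ( f^n ) ] N$ and use $N = \rho \partial_\rho + \mc{O} ( \rho f ) \ms{D}^\sharp \eta$ along with the asymptotics of $v_\zeta, \nabla v_\zeta$ from Lemma \ref{LemmaAsymptoticsS} and \eqref{Asymptotics_A} to estimate each term; the scaling $\rho \partial_\rho$ rather than $\partial_\rho$ produces the extra $\rho^2$ factor in front of $| \Dm_\rho \ms{\Psi} |_{ \mf{h} }^2 + | \Dm \ms{\Psi} |_{ \mf{h} }^2$, while the $\lambda^2$ factor in front of $| \ms{\Psi} |_{ \mf{h} }^2$ comes from the $\lambda f^{n-3+p} \nabla_\beta f$ term in \eqref{Pointwise_P}.
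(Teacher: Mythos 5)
Your overall architecture tracks the paper's: expand $J(\ms{\Psi})$ from \eqref{DefinitionCurrentJ} using the decomposition \eqref{DefinitionMathcalLBetter}, feed in the pseudoconvexity lower bound from Lemma \ref{LemmaLowerBoundModifiedDeformationTensor}, control the error terms involving $\nablam\mf{h}$ and $\smash{\Rm}[\ms{\Psi}]$, and combine with the upper bound \eqref{EquationCurrentJ}.

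However, there is a genuine gap: the absence of a Hardy-type inequality step. You claim that \eqref{carleman_ass_kappa} ensures the $f^{n-2}$ and $\lambda f^{n-2+p}$ coefficients of $|\ms{\Psi}|_{\mf{h}}^2$ (coming from $-\frac{1}{2}\div_g(\mc{A}S)$ in \eqref{Asymptotics_A}) are non-negative. This is not so: these coefficients are $\kappa^2-n\kappa+\sigma$ and $(1-\frac{p}{2})(2\kappa-n+p)$, whereas the second part of \eqref{carleman_ass_kappa} only gives $\kappa^2-(n-2)\kappa+\sigma-(n-1)-\mc{C}_0 \geq 0$. For instance, with $\mc{C}_0 = 0$, $n = 3$, $\kappa = 2$, $\sigma = 0$ the hypothesis holds while $\kappa^2-n\kappa+\sigma = -2 < 0$; similarly $2\kappa-n+p$ can be negative since \eqref{carleman_ass_kappa} only guarantees $2\kappa-n+p \geq p-1+\mc{C}_0$ and $0 < 2p < 1$. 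What rescues the argument is that, after combining the bulk identity with \eqref{EquationCurrentJ}, one has a surplus coefficient $(2\kappa-n+1-\mc{C}_0)f^{n-2} + \frac{3}{2}\lambda f^{n-2+p}$ on $|\nablam_N\ms{\Psi}|_{\mf{h}}^2$; the paper then expands
\[
0 \leq f^{q-2}\left|\nablam_{\nabla^\sharp f}\ms{\Psi} + \tfrac{1}{2}(q-n)f\,\ms{\Psi}\right|_{\mf{h}}^2
\]
to trade part of this surplus for an equal gain in $f^{q}|\ms{\Psi}|_{\mf{h}}^2$, at the cost of a divergence $\div_g P^{H,q}$ with $P^{H,q} := -\frac{1}{2}(q-n)f^{q-1}\nabla^\sharp f\,|\ms{\Psi}|_{\mf{h}}^2$. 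Applying this with $q=n-2$ (weighted by $2\kappa-n+1-\mc{C}_0$) and $q=n-2+p$ (weighted by $\lambda$) promotes the $f^{n-2}|\ms{\Psi}|_{\mf{h}}^2$ coefficient to exactly $\kappa^2-(n-2)\kappa+\sigma-(n-1)-\mc{C}_0$, which is precisely what \eqref{carleman_ass_kappa} controls. This also explains the last two terms $(2\kappa-n+1-\mc{C}_0)f^{n-3}\nabla_\beta f\,|\ms{\Psi}|_{\mf{h}}^2$ and $\frac{1}{2}(2-p)\lambda f^{n-3+p}\nabla_\beta f\,|\ms{\Psi}|_{\mf{h}}^2$ in $P$: your derivation does not produce them, since they are $(2\kappa-n+1-\mc{C}_0)P^{H,n-2} + \lambda P^{H,n-2+p}$.
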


\begin{proof}
Let $J ( \Psi )$ be as in \eqref{DefinitionCurrentJ}.
We then expand $J ( \Psi )$ using \eqref{DefinitionS} and Lemma \ref{LemmaEstimatesConjugatedWaveOperator}.
After some extensive computations (we omit the details here, but this mirrors the process shown within the proof of \cite[Lemma 5.28]{Arick3}), we obtain the following identity for $\Psi$,
\begin{align} \label{ExpressionJForLemma517}
J ( \ms{\Psi} ) &= \div_g ( P^S + P^Q ) + 2 F' f^{ -(n-3) } \, | \nablam_S \ms{\Psi} |_{ \mf{h} }^2 + g^{ \alpha \mu } g^{ \beta \nu } ( \pi_\zeta )_{ \alpha \beta } \, \hm ( \nablam_\mu \ms{\Psi}, \nablam_\nu \ms{\Psi} ) \\
\notag &\qquad + \left[ \mc{A} v_\zeta - \frac{1}{2} \div_g ( \mc{A} S ) + \frac{1}{2} \Boxm v_\zeta \right] \, | \ms{\Psi} |_{ \mf{h} }^2 + I_S + I_\zeta + I_R + I_\pi \text{,}
\end{align}
where the (spacetime) $1$-forms $P^S$, $P^Q$ and the scalar quantities $I_S$, $I_\zeta$, $I_R$, $I_\pi$ are given by
\begin{align} \label{ExpressionJ_error}
P_\nu^S &:= \frac{1}{2} g_{ \nu \mu } S^\mu \mc{A} \, | \ms{\Psi} |_{ \mf{h} }^2 \text{,} \\
\notag P_\nu^Q &:= Q_{ \mu \nu } S^\mu + \frac{1}{2} v_{\zeta} [ \langle \nablam_\nu ( \mf{h}^\star \ms{\Psi} ), \ms{\Psi} \rangle + \langle \nablam_\nu \ms{\Psi}, \mf{h}^\star \ms{\Psi} \rangle ] - \frac{1}{2} \nabla_\nu v_\zeta \, | \ms{\Psi} |_{ \mf{h} }^2 \text{,} \\
\notag Q_{ \mu \nu } &:= \frac{1}{2} [ \langle \nablam_\mu ( \mf{h}^\star \ms{\Psi} ), \nablam_\nu \ms{\Psi} \rangle + \langle \nablam_\mu \ms{\Psi}, \nablam_\nu ( \mf{h}^\star \ms{\Psi} ) \rangle - g_{ \mu \nu } g^{ \alpha \beta } \langle \nabla_\alpha ( \mf{h}^\star \ms{\Psi} ), \nablam_\beta \ms{\Psi} \rangle ] \text{,} \\
\notag I_S &:= 2 F' f^{ -(n-3) } \, \nablam_S \hm ( \ms{\Psi}, \nablam_S \ms{\Psi} ) \text{,} \\
\notag I_\zeta &:= F f^{ -(n-3) } v_\zeta \, [ \langle \nablam_S ( \mf{h}^\star \ms{\Psi} ), \ms{\Psi} \rangle + \langle \nablam_S \ms{\Psi}, \mf{h}^\star \ms{\Psi} \rangle ] \text{,} \\
\notag I_R &:= \frac{1}{2} g^{ \alpha \beta } S^\nu \, [ \langle \Rm_{\nu \alpha} [ \mf{h}^\star \ms{\Psi} ], \nablam_\beta \ms{\Psi} \rangle + \langle \Rm_{ \nu \alpha } [ \ms{\Psi} ], \nablam_\beta ( \mf{h}^\star \ms{\Psi} ) ] \text{,} \\
\notag I_\pi &:= g^{ \alpha \mu } g^{ \beta \nu } ( \pi_\zeta )_{ \alpha \beta } \, \nablam_\mu \hm ( \ms{\Psi}, \nablam_\nu \ms{\Psi}).
\end{align}

Next, we obtain asymptotic formulas for various terms in \eqref{ExpressionJForLemma517} and \eqref{ExpressionJ_error}.
Again, this process mirrors that in the proof of \cite[Lemma 5.28]{Arick3}, so the reader is referred there for further details.
\footnote{Again, while we use a different Riemannian metric than in \cite{Arick3}, both metrics satisfy the same estimates.
Furthermore, our frames $\{ E_A \}$ here again correspond to the frames $\{ V, E_A \}$ in \cite{Arick3}.}
For the second term in \eqref{ExpressionJForLemma517}, we use \eqref{DefinitionauxiliaryF} and \eqref{AsymptoticsS} to infer
\begin{equation} \label{ExpressionJ_1}
2 F' f^{ -(n-3) } | \nablam_S \ms{\Psi} |_{ \mf{h} }^2 = [ 2 \kappa f^{n-2} + 2 \lambda f^{n-2+p} + \mc{O} ( f^n ) ] \, | \nablam_N \ms{\Psi} |_{ \mf{h} }^2 \text{.}
\end{equation}
For the fourth term in \eqref{ExpressionJForLemma517}, we use \eqref{Asymptotics_vzeta} and \eqref{Asymptotics_A} to obtain
\begin{align} \label{ExpressionJ_2}
\mc{A} v_\zeta - \frac{1}{2} \div_g ( \mc{A} S ) + \frac{1}{2} \Boxm v_\zeta &= 
( \kappa^2 - n \kappa + \sigma ) f^{n-2} + \left( 1 - \frac{p}{2} \right) ( 2 \kappa - n  + p) \lambda f^{n-2+p} \\
\notag &\qquad + (1-p) \lambda^2 f^{n-2+2p} + \lambda^2 \, \mc{O} ( f^n ) \text{.}
\end{align}

For the third term in \eqref{ExpressionJForLemma517}, we expand the $g$-contractions in terms of the frame $\{ N, E_A \}$:
\[
g^{ \alpha \mu } g^{ \beta \nu } ( \pi_\zeta )_{ \alpha \beta } \, \hm ( \nablam_\mu \ms{\Psi}, \nablam_\nu \ms{\Psi} ) = \sum_{ Z_1, Z_2 \in \{ N, E_A \} } \mu_{ Z_1 } \mu_{ Z_2 } \, \pi_\zeta ( Z_1, Z_2 ) \, \hm ( \nablam_{ Z_1 } \ms{\Psi}, \nablam_{ Z_2 } \ms{\Psi} ) \text{,}
\]
where each $\mu_{ Z_i } = \pm 1$, depending on whether $Z_i$ is timelike or spacelike.
From the above, we fully expand the factors $\smash{ \hm ( \nablam_{ Z_1 } \ms{\Psi}, \nablam_{ Z_2 } \ms{\Psi} ) }$ using an $\mf{h}$-orthonormal frame, we apply Lemma \ref{LemmaLowerBoundModifiedDeformationTensor}, and we then recall the assumptions $| \mf{E}_A |_{ \mf{h} } \simeq 1$ in order to deduce
\begin{align} \label{ExpressionJ_3}
g^{ \alpha \mu } g^{ \beta \nu } ( \pi_\zeta )_{ \alpha \beta } \, \hm ( \nablam_\mu \ms{\Psi}, \nablam_\nu \ms{\Psi} ) &\geq - [ (n-1) f^{n-2} + \mc{O} ( f^n ) ] \, | \nablam_N \ms{\Psi} |_{ \mf{h} }^2 \\
\notag &\qquad + K \rho^2 f^{n-2} \sum_{ A=1 }^n | \nablam_{ E_A } \ms{\Psi} |_{ \mf{h} }^2 \text{,}
\end{align}
where $K > 0$ is as in the statement of Lemma \ref{LemmaLowerBoundModifiedDeformationTensor}.

The error terms $I_S$, $I_\zeta$, $I_R$ and $I_\pi$ contain no leading-order terms and are controlled using Lemmas \ref{LemmaAsymptoticsS}--\ref{EstimatesVerticalRiemannianMetrich} by following the process in the proof of \cite[Lemma 5.28]{Arick3}:
\footnote{There are no $N$-derivatives of $\ms{\Psi}$ in the estimate for $I_R$, since $\smash{\Rm}_{ N N } [ \ms{\Psi} ] = 0$.}
\begin{align} \label{ExpressionJ_4}
| I_S | &\leq \lambda \, \mc{O} ( \rho f^{n-1} ) \, ( | \nablam_N \ms{\Psi} |_{ \mf{h} }^2 + | \ms{\Psi} |_{ \mf{h} }^2 ) \text{,} \\
\notag | I_\zeta | &\leq \lambda \, \mc{O} ( f^n ) \, ( | \nablam_N \ms{\Psi} |_{ \mf{h} }^2 + | \ms{\Psi} |_{ \mf{h} }^2 ) \text{,} \\
\notag | I_R | &\leq \mc{O} ( \rho^3 f^{n-1} ) \sum_{ A=1 }^n | \nablam_{ E_A } \ms{\Psi} |_{ \mf{h} }^2 + \mc{O} ( \rho f^{n-1} ) \, | \ms{\Psi} |_{ \mf{h} }^2 \text{,} \\
\notag | I_\pi | &\leq \mc{O} ( \rho f^{n-1} ) \, ( | \nablam_N \ms{\Psi} |_{ \mf{h} }^2 + | \ms{\Psi} |_{ \mf{h} }^2 ) + \mc{O} ( \rho^3 f^{n-1} ) \sum_{ A=1 }^n | \nablam_{ E_A } \ms{\Psi} |_{ \mf{h} }^2 \text{.}
\end{align}
Combining \eqref{ExpressionJForLemma517}--\eqref{ExpressionJ_4} with the estimate \eqref{EquationCurrentJ} for $J ( \ms{\Psi} )$, we obtain
\begin{align} \label{ExpressionJ_10}
\lambda^{-1} f^{n-2-p} | \mc{L}^\dagger \ms{\Psi} |_{ \mf{h} }^2 &\geq \left[ ( 2 \kappa - n + 1 - \mc{C}_0 ) f^{n-2} + \frac{3}{2} \lambda f^{n-2+p} + \lambda \, \mc{O} ( f^{n-1} ) \right] \, | \nablam_N \ms{\Psi} |_{ \mf{h} }^2 \\
\notag &\qquad + \frac{1}{4} K \rho^2 f^{n-2} \sum_{ A=1 }^n | \nablam_{ E_A } \ms{\Psi} |_{ \mf{h} }^2 + ( \kappa^2 -n \kappa + \sigma ) f^{n-2} \, | \ms{\Psi} |_{ \mf{h} }^2 \\
\notag &\qquad + \left[ \left( 1 - \frac{p}{2} \right) ( 2 \kappa - n + p ) \lambda f^{n-2+p} + (1-p) \lambda^2 f^{n-2+2p} \right] \, | \ms{\Psi} |_{ \mf{h} }^2 \\
\notag &\qquad + \lambda^2 \mathcal{O}(f^{n-1}) \, | \ms{\Psi} |_{ \mf{h} }^2 + \div_g ( P^S + P^Q ) \text{.}
\end{align}

Now, for any $q \in \R$, we expand the inequality
\[
0 \leq f^{ q - 2 } \left| \nablam_{ \nabla^\sharp f } \ms{\Psi} + \frac{1}{2} ( q - n ) f \, \ms{\Psi} \right|_{ \mf{h} }^2 \text{,}
\]
and we recall the asymptotics from \eqref{AsympGradientf}, \eqref{AsympWaveOperatorf}, and Lemma \ref{EstimatesVerticalRiemannianMetrich} to obtain
\footnote{See the derivation of \cite[Equations (5.78)--(5.79)]{Arick3} for details.}
\begin{align} \label{Carleman_Hardy}
f^q \, | \nablam_N \ms{\Psi} |_{ \mf{h} }^2 &\geq \frac{1}{4} (q-n)^2 f^q \, | \ms{\Psi} |_{ \mf{h} }^2 + ( 1 + q^2 ) \, \mc{O} ( f^{q+2} ) \, ( | \nablam_N \ms{\Psi} |_{ \mf{h} }^2 + | \ms{\Psi} |_{ \mf{h} }^2 ) \\
\notag &\qquad + (q-n) \, \mc{O} ( \rho f^{q+1} ) \, | \ms{\Psi} |_{ \mf{h} }^2 + \div_g P^{H,q} \text{,}
\end{align}
where $P^{H,q}$ is defined as
\begin{equation} \label{Carleman_Hardy_terms}
P^{H,q} := - \frac{1}{2} (q-n) f^{q-1} \nabla^\sharp f \, | \ms{\Psi} |_{ \mf{h} }^2 \text{.}
\end{equation}

We then apply the inequality \eqref{Carleman_Hardy} twice, with $q := n-2$ and $q := n-2+p$, to the estimate \eqref{ExpressionJ_10}.
Noting in particular that $2\kappa - n + 1 - \mc{C}_0 \geq 0$ by \eqref{carleman_ass_kappa}, we then obtain
\begin{align} \label{ExpressionJ_20}
\lambda^{-1} f^{n-2-p} | \mc{L}^\dagger \ms{\Psi} |_{ \mf{h} }^2 &\geq \left[ \frac{1}{2} \lambda f^{n-2+p} + \lambda \, \mc{O} (f^{n-1}) \right] | \nablam_N \ms{\Psi} |_{ \mf{h} }^2 + \frac{1}{4} K \rho^2 f^{n-2} \sum_{A=1}^n | \nablam_{ E_A } \ms{\Psi} |_{ \mf{h} }^2 \\
\notag &\qquad + ( \kappa^2 - (n-2) \kappa + \sigma - (n-1) - \mc{C}_0 ) f^{n-2} \, | \ms{\Psi} |_{ \mf{h} }^2 \\
\notag &\qquad + \left( 1 - \frac{p}{2} \right) ( 2 \kappa - n + 1 + \frac{p}{2} ) \lambda f^{n-2+p} \, | \ms{\Psi} |_{ \mf{h} }^2 \\
\notag &\qquad + [ (1-p) \lambda^2 f^{n-2+2p} + \lambda^2 \, \mc{O} ( f^{n-1} ) ] \, | \ms{\Psi} |_{ \mf{h} }^2 + \div_g P \text{,}
\end{align}
where we have set
\begin{equation} \label{ExpressionJ_P}
P := P^S + P^Q + \lambda \, P^{H, n-2+p} + ( 2 \kappa - n + 1 - \mc{C}_0 ) \, P^{H, n-2} \text{.}
\end{equation}
Observe that \eqref{ExpressionJ_20} immediately implies the desired \eqref{Pointwise_Psi}, once we set $f_\star$ sufficiently small and $\lambda$ sufficiently large as in \eqref{carleman_ass_params}.
Furthermore, by the definitions \eqref{ExpressionJ_error} and \eqref{Carleman_Hardy_terms}, we see that the quantity $P$ defined in \eqref{ExpressionJ_P} precisely matches that of \eqref{Pointwise_P}.

It remains to prove the bound \eqref{Pointwise_Prho} for $P ( \rho \partial_\rho )$; as before, the steps mirror those in the end of the proof of \cite[Lemma 5.28]{Arick3}.
First, by Lemma \ref{LemmaDerivativesf}, \eqref{AsympGradientf}, \eqref{DefinitionS}, and the crude estimate $\mc{A} = \lambda^2 \, \mc{O} (1)$ from Lemma \ref{LemmaEstimatesConjugatedWaveOperator}, we obtain
\begin{align} \label{Prho_1}
| P^S ( \rho \partial_\rho ) | &\leq \lambda^2 \, \mc{O} ( f^{n-2} ) \, | \ms{\Psi} |_{ \mf{h} }^2 \text{,} \\
\notag | \lambda \, P^{H, n-2+p} ( \rho \partial_\rho ) + ( 2 \kappa - n + 1 - \mc{C}_0 ) \, P^{H, n-2} ( \rho \partial_\rho ) | &\leq \lambda \, \mc{O} ( f^{n-2} ) \, | \ms{\Psi} |_{ \mf{h} }^2 \text{.}
\end{align}

The bound $P^Q$ is similar, but there are more terms involved.
We begin by expanding
\begin{align} \label{Prho_2}
P^Q ( \rho \partial_\rho ) &= \frac{1}{2} \rho [ 2 \, \hm ( \nablam_\rho \ms{\Psi}, \nablam_S \ms{\Psi} ) + \nablam_\rho \hm  ( \ms{\Psi}, \nablam_S \ms{\Psi} ) + \nablam_S \hm ( \nablam_\rho \ms{\Psi}, \ms{\Psi}) ] \\ 
\notag &\qquad - \frac{1}{2} \rho g ( \partial_\rho, S ) \, g^{ \mu \nu } [ \hm ( \nablam_\mu \ms{\Psi}, \nablam_\nu \ms{\Psi} ) + \nablam_\mu \hm ( \ms{\Psi}, \nablam_\nu \ms{\Psi} ) ] \\
\notag &\qquad + \frac{1}{2} \rho v_{\zeta} [ 2 \, \hm ( \nablam_\rho \ms{\Psi}, \ms{\Psi} ) + \nablam_\rho \hm ( \ms{\Psi}, \ms{\Psi} ) ] - \frac{1}{2} \rho \nabla_\rho v_\zeta \, | \ms{\Psi} |_{ \mf{h} }^2 \\
\notag &:= B_1 + B_2 + B_3 + B_4 \text{.}
\end{align}
Following the proof of \cite[Lemma 5.28]{Arick3}, by applying \eqref{AsympGradientf} and Lemma \ref{EstimatesVerticalRiemannianMetrich}, we estimate
\begin{align} \label{Prho_3}
| B_1 | &\leq \mc{O} ( \rho^2 f^{n-2} ) \, | \Dm_\rho \ms{\Psi} |_{ \mf{h} }^2 + \mc{O} ( \rho^2 f^n ) \, | \ms{D} \ms{\Psi} |_{ \mf{h} }^2 + \mc{O} ( \rho^2 f^n ) \, | \ms{\Psi} |_{ \mf{h} }^2 \text{,} \\
\notag | B_3 + B_4 | &\leq \mc{O} ( \rho^2 f^n ) \, | \Dm_\rho \ms{\Psi} |_{ \mf{h} }^2 + \mc{O} ( f^n ) \, | \ms{\Psi} |_{ \mf{h} }^2 \text{.}
\end{align}
For $B_2$, we apply Lemma \ref{LemmaMetric}, \eqref{Derivativesf}, and Lemma \ref{EstimatesVerticalRiemannianMetrich}, which yields
\footnote{In contrast to \cite[Lemma 5.28]{Arick3}, it is easier to expand in terms of coordinates rather than frames.}
\begin{align} \label{Prho_4}
B_2 &\leq \frac{1}{2} f^{n-2} ( - \rho^2 | \Dm_\rho \ms{\Psi} |_{ \mf{h} }^2 + \rho^2 | \ms{D} \ms{\Psi} |_{ \mf{h}^2 } ) + \mc{O} ( f^{n-2} ) \, ( \rho^3 \, | \nablam_\rho \ms{\Psi} |_{ \mf{h} } + \rho^2 \, | \ms{D} \ms{\Psi} |_{ \mf{h} } ) \, | \ms{\Psi} |_{ \mf{h} } \\
\notag &\leq \mc{O} ( \rho^2 f^n ) \, | \Dm_\rho \ms{\Psi} |_{ \mf{h} }^2 + \mc{O} ( \rho^2 f^{n-2} ) \, | \ms{D} \ms{\Psi} |_{ \mf{h} }^2 + \mc{O} ( \rho^2 f^{n-2} ) \, | \ms{\Psi} |_{ \mf{h} }^2 \text{.}
\end{align}
Finally, combining \eqref{Prho_1}--\eqref{Prho_4} yields the desired estimate \eqref{Pointwise_Prho} for $P ( \rho \partial_\rho )$.
\end{proof}

\subsubsection{Pointwise Estimate for $\ms{\Phi}$}

We now convert Lemma \ref{LemmaPointwiseEstimateModifiedConjugatedWaveOperatorPsi} into a pointwise estimate for the wave operator $\smash{\Boxm + \sigma + \rho^2 \nablam_X}$ and for the original unknown $\ms{\Phi}$.

\begin{lemma}[Main pointwise estimate] \label{LemmaPointwiseEstimateBreveWaveOperator}
There exists a constant $\mc{C} > 0$ (depending on $\ms{g}$, $\mc{D}$, $X$, $k$, $l$) such that the following inequality holds on $\mc{D} ( f_\star )$,
\begin{align} \label{Pointwise_Phi}
\lambda^{-1} \mathcal{E} f^{-p} \, | ( \Boxm + \sigma + \rho^2 \nablam_X ) \ms{\Phi} |_{ \mf{h} }^2 &\geq \mc{C} \mc{E} \rho^4 \, ( | \Dm_\rho \ms{\Phi} |_{ \mf{h} }^2 + | \ms{D} \ms{\Phi} |_{ \mf{h} }^2 ) \\
\notag &\qquad + \frac{1}{8} \lambda^2 \mc{E} f^{2p} \, | \ms{\Phi} |_{ \mf{h} }^2 + \div_g P \text{,}
\end{align}
where $P$ is as in Lemma \ref{LemmaPointwiseEstimateModifiedConjugatedWaveOperatorPsi}, and where the weight $\mc{E}$ is given by
\begin{equation} \label{Pointwise_E}
\mc{E} := e^{-2F} f^{n-2} = e^{ -2 \lambda p^{-1} f^p } f^{ n - 2 - 2 \kappa } \text{.}
\end{equation}
Furthermore, there exists $\mc{C}_b> 0$ (depending on $\ms{g}$, $\mc{D}$, $X$, $k$, $l$) such that
\begin{equation} \label{Pointwise_PPrho}
\rho^{-n} \, P ( \rho \partial_\rho ) \leq \mc{C}_b \, [ | \Dm_\rho ( \rho^{-\kappa} \ms{\Phi} ) |_{ \mf{h} }^2 + | \ms{D} ( \rho^{-\kappa} \ms{\Phi} ) |_{ \mf{h} }^2 + \lambda^2 \, | \rho^{ -\kappa - 1 } \ms{\Phi} |_{ \mf{h} }^2 ] \text{.}
\end{equation}
\end{lemma}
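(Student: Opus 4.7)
The plan is to substitute $\ms{\Psi} = e^{-F}\ms{\Phi}$ into the pointwise estimate \eqref{Pointwise_Psi} of Lemma \ref{LemmaPointwiseEstimateModifiedConjugatedWaveOperatorPsi} and then convert the frame derivatives of $\ms{\Psi}$ on the right-hand side back into $\Dm_\rho$- and $\ms{D}$-derivatives of $\ms{\Phi}$. For the left-hand side, the conjugation identity from \eqref{DefinitionOperators} immediately gives $\mc{L}^\dagger\ms{\Psi} = e^{-F}(\Boxm + \sigma + \rho^2\nablam_X)\ms{\Phi}$, so $\lambda^{-1}f^{n-2-p}|\mc{L}^\dagger\ms{\Psi}|_{\mf{h}}^2$ collapses to $\lambda^{-1}\mc{E} f^{-p}|(\Boxm + \sigma + \rho^2\nablam_X)\ms{\Phi}|_{\mf{h}}^2$ after recognizing $\mc{E} = e^{-2F}f^{n-2}$. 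Similarly $\tfrac{1}{4}\lambda^2 f^{n-2+2p}|\ms{\Psi}|_{\mf{h}}^2$ becomes $\tfrac{1}{4}\lambda^2\mc{E}f^{2p}|\ms{\Phi}|_{\mf{h}}^2$, of which $\tfrac{1}{8}$ will supply the target in \eqref{Pointwise_Phi} and $\tfrac{1}{8}$ will absorb errors.

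For the gradient terms, the product rule gives $\nablam_Z\ms{\Psi} = e^{-F}(\nablam_Z\ms{\Phi} - F'(Zf)\ms{\Phi})$. Since $E_A$ is tangent to $\Sigma_c$ one has $E_A f = 0$, whence $|\nablam_{E_A}\ms{\Psi}|_{\mf{h}}^2 = e^{-2F}|\nablam_{E_A}\ms{\Phi}|_{\mf{h}}^2$; while for $Z = N$, the asymptotics $Nf = |g(\nabla^\sharp f, \nabla^\sharp f)|^{1/2} = f + O(f^3)$ from \eqref{AsympGradientf} yield $|F'(Nf)|^2 \lesssim \kappa^2 + \lambda^2 f^{2p}$. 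Applying the reverse Young inequality $|a-b|^2 \geq \tfrac{1}{2}|a|^2 - |b|^2$ then gives
\[
|\nablam_N\ms{\Psi}|_{\mf{h}}^2 \geq e^{-2F}\bigl[ \tfrac{1}{2}|\nablam_N\ms{\Phi}|_{\mf{h}}^2 - C(\kappa^2 + \lambda^2 f^{2p})|\ms{\Phi}|_{\mf{h}}^2 \bigr].
\]

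Next, I would invert \eqref{Normal} and \eqref{Va} to write $\rho\partial_\rho = [1 + O(f^2)]^{1/2}N + \rho f\ms{D}^\sharp\eta$ and $\rho\mf{E}_A = \rho^{-1}E_A - f(\mf{E}_A\eta)\rho\partial_\rho$. A further application of Young's inequality, combined with the fact that $\{\mf{E}_A\}$ is a non-degenerate basis of $T\mc{D}$ with $|\mf{E}_A|_{\mf{h}} \simeq 1$ (so that $\sum_A|\ms{D}_{\mf{E}_A}\ms{\Phi}|_{\mf{h}}^2 \gtrsim |\ms{D}\ms{\Phi}|_{\mf{h}}^2$), yields the lower bound $\tfrac{1}{2}|\nablam_N\ms{\Phi}|_{\mf{h}}^2 + \sum_A|\nablam_{E_A}\ms{\Phi}|_{\mf{h}}^2 \gtrsim \rho^2(|\Dm_\rho\ms{\Phi}|_{\mf{h}}^2 + |\ms{D}\ms{\Phi}|_{\mf{h}}^2)$, with $O(f^2)$ cross terms absorbed for $f_\star$ small. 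Multiplying by $\mc{C}\rho^2 f^{n-2}$ and using $\rho \lesssim f \leq f_\star$, the error contribution $C\mc{E}\rho^2 f^{n-2}(\kappa^2 + \lambda^2 f^{2p})|\ms{\Phi}|_{\mf{h}}^2$ is bounded by $Cf_\star^2(\kappa^2 + \lambda^2 f^{2p})\mc{E}|\ms{\Phi}|_{\mf{h}}^2$, which is dominated by $\tfrac{1}{8}\lambda^2\mc{E}f^{2p}|\ms{\Phi}|_{\mf{h}}^2$ thanks to $\lambda \gg |\kappa|$ and $f_\star \ll 1$ from \eqref{carleman_ass_params}, yielding \eqref{Pointwise_Phi}.

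The bound \eqref{Pointwise_PPrho} is more direct: writing $\ms{\Psi} = \eta^\kappa e^{-\lambda p^{-1} f^p}\rho^{-\kappa}\ms{\Phi}$ and differentiating via the product rule produces exactly $\Dm_\rho(\rho^{-\kappa}\ms{\Phi})$ and $\ms{D}(\rho^{-\kappa}\ms{\Phi})$ (up to the bounded multiplier $\eta^\kappa e^{-\lambda p^{-1} f^p}$) plus $(\kappa + \lambda f^p)\rho^{-\kappa}|\ms{\Phi}|$ remainders coming from differentiating through the weights; the combination $\rho^{-n}\cdot f^{n-2}\rho^2 = \eta^{-(n-2)}$ is $O(1)$ on $\bar{\mc{D}}$, and the $(\kappa, \lambda f^p)$ remainders are absorbed into $\lambda^2|\rho^{-\kappa-1}\ms{\Phi}|_{\mf{h}}^2$ using $\rho \lesssim f$ together with $\lambda \gg |\kappa|$. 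The main obstacle I anticipate is the coercivity estimate $\sum_A|\ms{D}_{\mf{E}_A}\ms{\Phi}|_{\mf{h}}^2 \gtrsim |\ms{D}\ms{\Phi}|_{\mf{h}}^2$, since $\{E_A\}$ is only $g$-orthonormal on $\Sigma_c$ with $g|_{T\Sigma_c}$ Lorentzian, whereas $|\ms{D}\ms{\Phi}|_{\mf{h}}^2$ is measured in the unrelated Riemannian metric $\mf{h}$; the positivity ultimately hinges on $\{\mf{E}_A\}$ spanning $T\mc{D}$ with a uniformly non-degenerate $\mf{h}$-Gram matrix on the compact set $\bar{\mc{D}}$.
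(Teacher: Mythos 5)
Your approach is essentially the same as the paper's: conjugate back from $\ms{\Psi}$ to $\ms{\Phi}$ in \eqref{Pointwise_Psi} using $\mc{L}^\dagger\ms{\Psi} = e^{-F}(\Boxm + \sigma + \rho^2\nablam_X)\ms{\Phi}$, invert the frame $\{N, E_A\}$ against $\{\rho\partial_\rho, \rho\mf{E}_A\}$ to translate the first-order terms, and absorb the $e^{-F}$-commutator errors into the zeroth-order term using $\lambda \gg |\kappa|$ and $f_\star$ small. The observation $E_A f = 0$ (so $\nablam_{E_A}$ commutes exactly with the conjugation) and the splitting $\tfrac{1}{4} = \tfrac{1}{8} + \tfrac{1}{8}$ are both exactly what the paper does. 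Your closing worry about the $\mf{h}$-Gram matrix is real but is handled precisely as you suspect: the $\mf{E}_A$ span $T\mc{D}$ because $\mc{P}$ is an isomorphism onto $T\Sigma_c$, and $|\mf{E}_A|_{\mf{h}} \simeq 1$ together with compactness of $\bar{\mc{D}}$ gives the uniform two-sided bound.

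Two small slips worth flagging, neither fatal. First, inverting \eqref{Va} gives $\mf{E}_A = \rho^{-1}E_A - f(\mf{E}_A\eta)\partial_\rho$ (equivalently $\rho\mf{E}_A = E_A - \rho f(\mf{E}_A\eta)\partial_\rho$), not the version with both a $\rho$ on the left and a $\rho^{-1}$ on the right. Second, and more importantly, in the proof of \eqref{Pointwise_PPrho} you claim that $\rho^{-n}\cdot f^{n-2}\rho^2 = \eta^{-(n-2)}$ is $O(1)$ on $\bar{\mc{D}}$; this is false, since $\eta \to 0$ on $\partial\mc{D}$ and $n \geq 3$, so $\eta^{-(n-2)}$ blows up. The actual bounded quantity is the \emph{product} of this weight with the square of your ``bounded multiplier'' $\eta^\kappa e^{-\lambda p^{-1}f^p}$, namely $\eta^{2\kappa - n + 2}e^{-2\lambda p^{-1}f^p}$, whose boundedness is exactly where the hypothesis $2\kappa \geq n - 1$ from \eqref{carleman_ass_kappa} enters---this is the paper's inequality $f^{n-2}\rho^{-n}|\ms{\Psi}|_{\mf{h}}^2 \leq |\rho^{-\kappa-1}\ms{\Phi}|_{\mf{h}}^2$. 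Keep the two factors together; separately neither is bounded.
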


\begin{proof}
The proof is analogous to that of \cite[Lemma 5.29]{Arick3}, so we again omit many of the common details.
First, applying the definitions \eqref{DefinitionauxiliaryF} and \eqref{DefinitionOperators} for $\Psi$ and $\mc{L}^\dagger$ to Lemma \ref{LemmaPointwiseEstimateModifiedConjugatedWaveOperatorPsi}, we see that there exists $\mc{C}> 0$ (depending on $\ms{g}$, $\mc{D}$, $X$, $k$, $l$) with
\begin{align} \label{Pointwise_Phi_1}
\lambda^{-1} \mc{E} f^{-p} \, | ( \Boxm + \sigma + \rho^2 \nablam_X ) \ms{\Phi} |_{ \mf{h} }^2 &\geq \mc{C} \rho^2 f^{n-2} \left( | \nablam_N \ms{\Psi} |_{ \mf{h} }^2 + \sum_{A=1}^n | \nablam_{ E_A } \ms{\Psi} |_{ \mf{h} }^2 \right) \\
\notag &\qquad + \frac{1}{4} \lambda^2 \mc{E} f^{2p} \, | \ms{\Phi} |_{ \mf{h} }^2 + \div_g P \text{,}
\end{align}
where the frames $\{ N, E_A \}$ are as in Lemma \ref{LemmaPointwiseEstimateModifiedConjugatedWaveOperatorPsi}.

From \eqref{Normal}, \eqref{Va}, and the definition of the $E_A$'s, we can estimate
\begin{align} \label{Pointwise_Phi_2}
\rho^4 | \Dm_\rho \ms{\Phi} |_{ \mf{h} }^2 &\leq \mc{O} ( \rho^2 ) \, | \nablam_N \ms{\Phi} |_{ \mf{h} }^2 + \mc{O} ( f^2 \rho^2 ) \sum_{ A = 1 }^n | \nablam_{ E_A } \ms{\Phi} |_{ \mf{h} }^2 \text{,} \\
\notag \rho^4 | \ms{D} \ms{\Phi} |_{ \mf{h} }^2 &\leq \mc{O} ( f^2 \rho^2 ) \, | \nablam_N \ms{\Phi} |_{ \mf{h} }^2 + \mc{O} ( \rho^2 ) \sum_{ A = 1 }^n | \nablam_{ E_A } \ms{\Phi} |_{ \mf{h} }^2 \text{.}
\end{align}
Moreover, by \eqref{AsympGradientf}, \eqref{Normal}, Lemma \ref{LemmaVa}, \eqref{carleman_ass_kappa}, and \eqref{DefinitionauxiliaryF},
\begin{equation} \label{Pointwise_Phi_3}
e^{-2F} | \nablam_N \ms{\Phi} |_{ \mf{h} }^2 \leq | \nablam_N \ms{\Psi} |_{ \mf{h} }^2 + \lambda^2 \, \mc{O} (1) \, | \ms{\Psi} |_{ \mf{h} }^2 \text{,} \qquad e^{-2F} | \nablam_{ E_A } \ms{\Phi} |_{ \mf{h} }^2 = | \nablam_{ E_A } \ms{\Psi} |_{ \mf{h} }^2 \text{.}
\end{equation}
From \eqref{Pointwise_Phi_2} and \eqref{Pointwise_Phi_3}, we then conclude
\[
\mc{E} \rho^4 ( | \Dm_\rho \ms{\Phi} |_{ \mf{h} }^2 + | \ms{D} \ms{\Phi} |_{ \mf{h} }^2 ) \leq \mc{O} ( \rho^2 f^{n-2} ) \, \left( | \nablam_N \ms{\Psi} |_{ \mf{h} }^2 + \sum_{A=1}^n | \nablam_{ E_A } \ms{\Psi} |_{ \mf{h} }^2 \right) + \mc{O} (f^2) \, \lambda^2 \mc{E} | \ms{\Phi} |_{ \mf{h} }^2 \text{.}
\]
Combining \eqref{Pointwise_Phi_1} and the above results in the estimate \eqref{Pointwise_Phi}.

It remains to show \eqref{Pointwise_PPrho}.
For this, we apply \eqref{Derivativesf} and \eqref{DefinitionauxiliaryF}, and recall that $\rho \leq f$ (by \eqref{Definitionf}) and that $2 \kappa \geq n - 1$ (by \eqref{carleman_ass_kappa}).
Putting all this together, we obtain
\begin{align*}
f^{n-2} \rho^{-n} | \ms{\Psi} |_{ \mf{h} }^2 &\leq | \rho^{-\kappa-1} \ms{\Phi} |_{ \mf{h} }^2 \text{,} \\
f^{n-2} \rho^{-n+2} | \Dm_\rho \ms{\Psi} |_{ \mf{h} }^2 &\leq | \Dm_\rho ( \rho^{-\kappa} \ms{\Phi} ) |_{ \mf{h} }^2 + \lambda^2 \, \mc{O} (1) \, | \rho^{-\kappa-1} \ms{\Phi} |_{ \mf{h} }^2 \text{,} \\
f^{n-2} \rho^{-n+2} | \ms{D} \ms{\Psi} |_{ \mf{h} }^2 &\leq | \ms{D} ( \rho^{-\kappa} \ms{\Phi} ) |_{ \mf{h} }^2 + \lambda^2 \, \mc{O} (1) \, | \rho^{-\kappa-1} \ms{\Phi} |_{ \mf{h} }^2 \text{.}
\end{align*}
The bound \eqref{Pointwise_PPrho} now follows immediately from \eqref{Pointwise_Prho} and the above.
\end{proof}

\subsubsection{Completion of the Proof} \label{CompletionProof}

The final step is to integrate our pointwise inequality over $\mc{D} ( f_\star )$.
\footnote{The steps are analogous to \cite[Section 5.3.4]{Arick3}; the current situation is in fact easier, since our version of $f$ is smooth, while \cite{Arick3} also had to deal with discontinuities in higher derivatives of $f$.}
Since $\mc{D} ( f_\star )$ has infinite volume, we first apply an additional cutoff to $\mc{D} ( f_\star )$:
\begin{equation} \label{Dfrho}
\mc{D} ( f_\star, \rho_\star ) := \mc{D} ( f_\star ) \cap \{ \rho > \rho_\star \} \text{,} \qquad 0 < \rho_\star \ll f_\star \text{.}
\end{equation}
Integrating \eqref{Pointwise_Phi} over $\mc{D} ( f_\star, \rho_\star )$ then yields
\begin{align} \label{Integral_Phi_1}
&\int_{ \mc{D} ( f_\star, \rho_\star ) } \mc{E} f^{-p} | ( \Boxm + \sigma + \rho^2 \nablam_X ) \ms{\Phi} |_{ \mf{h} }^2 \, d \mu_g - \lambda \int_{ \mc{D} ( f_\star, \rho_\star ) } \div_g P \, d \mu_g \\		
\notag &\quad \geq \lambda \mc{C} \int_{ \mc{D} ( f_\star, \rho_\star ) }	 \mc{E} \rho^4 ( | \Dm_\rho \ms{\Phi} |_{ \mf{h} }^2 + | \ms{D} \ms{\Phi} |_{ \mf{h} }^2 ) \, d \mu_g + \frac{1}{8} \lambda^3 \int_{ \mc{D} ( f_\star, \rho_\star ) } \mc{E} f^{2p} | \ms{\Phi} |_{ \mf{h} }^2 \, d \mu_g \text{,}
\end{align}
where $\mc{C}$ and $\mc{E}$ are as in the statement of Lemma \ref{LemmaPointwiseEstimateBreveWaveOperator}.

Observe that $\partial \mc{D} ( f_\star, \rho_\star )$ consists two components:
\begin{itemize}
\item The first is $\{ f = f_\star \}$, on which both $\ms{\Phi}$ and $\nablam \ms{\Phi}$ are assumed to vanish.

\item The second is $\mc{D} ( f_\star ) \cap \{ \rho = \rho_\star \}$; observe that $\rho^{-n} \, d \mu_{ \ms{g} }$ is the induced volume measure, and that $- \rho \partial_\rho$ is the outward-facing unit normal.
\end{itemize}
Thus, applying the divergence theorem and recalling \eqref{Pointwise_PPrho} yields
\begin{align*}
- \int_{ \mc{D} ( f_\star, \rho_\star ) }	\div_g P \, d \mu_g &= \int_{ \mc{D} ( f_\star ) \cap \{ \rho = \rho_\star \} } \rho^{-n} \, P ( \rho \partial_\rho ) \, d \mu_{ \ms{g} } \\
\notag &\quad \leq \mc{C}_b \lambda^2 \int_{ \mc{D} ( f_\star ) \cap \{ \rho = \rho_\star \} }	[ | \Dm_\rho ( \rho^{-\kappa} \ms{\Phi} ) |_{ \mf{h} }^2 + | \ms{D} ( \rho^{-\kappa} \ms{\Phi} ) |_{ \mf{h} }^2 + | \rho^{-\kappa-1} \ms{\Phi} |_{ \mf{h} }^2 ] \, d \mu_{ \ms{g} } \text{.}
\end{align*}
Finally, the Carleman estimate \eqref{CarlemanEstimate} follows from combining \eqref{Integral_Phi_1} with the above, taking $\rho_\star \searrow 0$, and then applying the monotone convergence theorem.

\raggedright

\bibliographystyle{plain}
\bibliography{ads}

\begin{thebibliography}{10}

\bibitem{alex_io_kl:hawking_anal}
S.~Alexakis, A.~Ionescu, and S.~Klainerman.
\newblock {Hawking's} local rigidity theorem without analyticity.
\newblock {\em Geom. Funct. Anal.}, 20(4):845--869, 2010.

\bibitem{alex_io_kl:unique_bh}
S.~Alexakis, A.~Ionescu, and S.~Klainerman.
\newblock Uniqueness of smooth stationary black holes in vacuum: small
  perturbations of the {Kerr} spaces.
\newblock {\em Comm. Math. Phys.}, 299(1):89--127, 2010.

\bibitem{alex_io_kl:rigid_bh}
S.~Alexakis, A.~Ionescu, and S.~Klainerman.
\newblock Rigidity of stationary black holes with small angular momentum on the
  horizon.
\newblock {\em Duke Math. J.}, 163(14):2603--2615, 2014.

\bibitem{alex_schl:time_periodic}
S.~Alexakis and V.~Schlue.
\newblock Non-existence of time-periodic vacuum spacetimes.
\newblock {\em J. Differential Geom.}, 108(1):1--62, 2018.

\bibitem{alex_schl_shao:uc_inf}
S.~Alexakis, V.~Schlue, and A.~Shao.
\newblock Unique continuation from infinity for linear waves.
\newblock {\em Adv. Math.}, 286:481--544, 2016.

\bibitem{alex_shao:nc_inf}
S.~Alexakis and A.~Shao.
\newblock On the geometry of null cones to infinity under curvature flux
  bounds.
\newblock {\em Class. Quantum Grav.}, 31:195012, 2014.

\bibitem{alex_shao:uc_global}
S.~Alexakis and A.~Shao.
\newblock Global uniqueness theorems for linear and nonlinear waves.
\newblock {\em J. Func. Anal.}, 269(11):3458--3499, 2015.

\bibitem{MR1363855}
S.~Alinhac and M.~S. Baouendi.
\newblock A nonuniqueness result for operators of principal type.
\newblock {\em Math. Z.}, 220(4):561--568, 1995.

\bibitem{breit_freedm:stability_sgrav}
P.~Breitenlohner and D.~Z. Freedman.
\newblock Stability in gauged extended supergravity.
\newblock {\em Annals Phys.}, 144:249--281, 1982.

\bibitem{12100890}
Alex {Buchel}, Luis {Lehner}, and Steven~L. {Liebling}.
\newblock {Scalar collapse in AdS spacetimes}.
\newblock {\em Phys. Rev. D}, 86(12):123011, December 2012.

\bibitem{deharo_sken_solod:holog_adscft}
S.~de~Haro, K.~Skenderis, and S.~N. Solodukhin.
\newblock Holographic reconstruction of spacetime and renormalization in the
  {AdS/CFT} correspondence.
\newblock {\em Comm. Math. Phys.}, 217:595--622, 2001.

\bibitem{enc_shao_verga:obs_swave}
A.~Enciso, A.~Shao, and B.~Vergara.
\newblock {Carleman} estimates with sharp weights and boundary observability
  for wave operators with critically singular potentials.
\newblock {\em in J. Eur. Math. Soc.}, 23(10):3459--3495, 2021.

\bibitem{MR837196}
C.~Fefferman and C.~R. Graham.
\newblock Conformal invariants.
\newblock In {\em {\'Elie Cartan} et les math\'ematiques d'aujourd'hui - Lyon,
  25-29 juin 1984}, Ast\'erisque, pages 95--116. Soci\'et\'e math\'ematique de
  France, 1985.

\bibitem{Simon}
S.~Guisset.
\newblock Construction of counterexamples for the wave's equation's unique
  continuation problem with a critically singular potential.
\newblock in preparation, 2022.

\bibitem{MR2551709}
Sean~A. Hartnoll.
\newblock Lectures on holographic methods for condensed matter physics.
\newblock {\em Classical Quantum Gravity}, 26(22):224002, 61, 2009.

\bibitem{Arick1}
Gustav Holzegel and Arick Shao.
\newblock Unique continuation from infinity in asymptotically anti-de {S}itter
  spacetimes.
\newblock {\em Comm. Math. Phys.}, 347(3):723--775, 2016.

\bibitem{Arick2}
Gustav Holzegel and Arick Shao.
\newblock Unique continuation from infinity in asymptotically anti--de {S}itter
  spacetimes {II}: {N}on-static boundaries.
\newblock {\em Comm. Partial Differential Equations}, 42(12):1871--1922, 2017.

\bibitem{HolzegelShaoEinsteinPreparation}
Gustav {Holzegel} and Arick {Shao}.
\newblock Unique continuation for the {Einstein} equations in asymptotically
  anti-de sitter spacetimes.
\newblock (in preparation), 2022.

\bibitem{hor:lpdo4}
L.~H{\"o}rmander.
\newblock {\em The Analysis of Linear Partial Differential Operators {IV:}
  {Fourier} Integral Operators}.
\newblock Springer-Verlag, 1985.

\bibitem{hor:uc_interp}
L.~{H\"ormander}.
\newblock On the uniqueness of the {Cauchy} problem under partial analyticity
  assumptions.
\newblock In {\em Geometric optics and related topics ({Cortona}, 1996)},
  volume~32 of {\em Progr. Nonlinear Differential Equations Appl.}, pages
  179--219. {Birkh\"auser Boston, Boston, MA}, 1997.

\bibitem{MR2108588}
Lars H\"{o}rmander.
\newblock {\em The analysis of linear partial differential operators. {II}}.
\newblock Classics in Mathematics. Springer-Verlag, Berlin, 2005.
\newblock Differential operators with constant coefficients, Reprint of the
  1983 original.

\bibitem{imbim_schwim_theis_yanki:diffeo_holog}
C.~Imbimbo, A.~Schwimmer, S.~Theisen, and S.~Yankielowicz.
\newblock Diffeomorphisms and holographic anomalies.
\newblock {\em Class. Quantum Grav.}, 17:1129--1138, 2000.

\bibitem{io_kl:unique_bh}
A.~Ionescu and S.~Klainerman.
\newblock On the uniqueness of smooth, stationary black holes in vacuum.
\newblock {\em Invent. Math.}, 175(1):35--102, 2009.

\bibitem{ler_robb:unique}
N.~Lerner and L.~Robbiano.
\newblock Unicit\'e de {Cauchy} pour des op\'erateurs de type principal par.
\newblock {\em J. Anal. Math.}, 44:32--66, 1984.

\bibitem{MR1705508}
J.~M. Maldacena.
\newblock The large {$N$} limit of superconformal field theories and
  supergravity.
\newblock {\em Int. J. Theor. Phys.}, 38:1113--1133, 1999.

\bibitem{MR1633016}
Juan Maldacena.
\newblock The large {$N$} limit of superconformal field theories and
  supergravity.
\newblock {\em Adv. Theor. Math. Phys.}, 2(2):231--252, 1998.

\bibitem{MR1730135}
Juan Maldacena.
\newblock The large {$N$} limit of superconformal field theories and
  supergravity [ {MR}1633016 (99e:81204a)].
\newblock In {\em Trends in theoretical physics, {II} ({B}uenos {A}ires,
  1998)}, volume 484 of {\em AIP Conf. Proc.}, pages 51--63. Amer. Inst. Phys.,
  Woodbury, NY, 1999.

\bibitem{Arick3}
Alex McGill and Arick Shao.
\newblock Null geodesics and improved unique continuation for waves in
  asymptotically anti-de {S}itter spacetimes.
\newblock {\em Classical Quantum Gravity}, 38(5):054001, 63, 2021.

\bibitem{09090518}
John {McGreevy}.
\newblock {Holographic duality with a view toward many-body physics}.
\newblock {\em arXiv e-prints}, page arXiv:0909.0518, September 2009.

\bibitem{peters:cpt_cauchy}
O.~L. Petersen.
\newblock Extension of {Killing} vector fields beyond compact {Cauchy}
  horizons.
\newblock {\em Adv. Math.}, 391:107953, 2021.

\bibitem{0501128}
Alfonso~V. {Ramallo}.
\newblock {Introduction to the AdS/CFT correspondence}.
\newblock {\em arXiv e-prints}, page arXiv:1310.4319, October 2013.

\bibitem{2020arXiv200807396S}
Arick {Shao}.
\newblock The near-boundary geometry of {Einstein}-vacuum asymptotically
  anti-de {S}itter spacetimes.
\newblock {\em arXiv e-prints}, page arXiv:2008.07396, August 2020.

\bibitem{tat:uc_hh}
D.~Tataru.
\newblock Unique continuation for solutions to {PDEs}; {Between}
  {H\"ormander's} theorems and {Holmgren's theorem}.
\newblock {\em Commun. Part. Diff. Eq.}, 20(5-6):855--884, 1995.

\bibitem{tat:uc_hh_2}
D.~Tataru.
\newblock Unique continuation for operators with partially analytic
  coefficients.
\newblock {\em J. Math. Pures Appl.}, 78(5):505--521, 1999.

\bibitem{MR1633012}
Edward Witten.
\newblock Anti de {S}itter space and holography.
\newblock {\em Adv. Theor. Math. Phys.}, 2(2):253--291, 1998.

\end{thebibliography}

\end{document}